\documentclass{lmcs} %%% last changed 2014-08-20
\pdfoutput=1

% LMCS Layouting Macros
\usepackage{lastpage}
\lmcsdoi{19}{4}{23}
\lmcsheading{}{\pageref{LastPage}}{}{}%
{Feb.~01,~2023}{Dec.~14,~2023}{}

%% mandatory lists of keywords
\keywords{linear logic, cut elimination, cost models, explicit substitutions}

%% read in additional TeX-packages or personal macros here:
%% e.g. \usepackage{tikz}
\usepackage{hyperref}
%%\input{myMacros.tex}
%% define non-standard environments BEYOND the ones already supplied
%% here, for example
\theoremstyle{plain} %\crefname{satz}{Satz}{S\"atze}
%% Do NOT replace the proclamation environments lready provided by
%% your own.

\def\eg{{\em e.g.}}

%% due to the dependence on amsart.cls, \begin{document} has to occur
%% BEFORE the title and author information:

\newcommand{\macrospath}{./macros}

\usepackage{ifthen}
 \usepackage{xspace}
 
%%%%%%%%%%%%%%%%
%%
%% 		Types of paper
%%
%%%%%%%%%%%%%%%%

\newboolean{talk}
\setboolean{talk}{false}
\newboolean{paper}
\setboolean{paper}{false}

%%%%%%%%%%%%%%%%
%%
%% 		Latex styles
%%
%%%%%%%%%%%%%%%%

\newboolean{LMCSstyle}
\setboolean{LMCSstyle}{false}
\newboolean{IEEEstyle}
\setboolean{IEEEstyle}{false}
\newboolean{lipicsstyle}
\setboolean{lipicsstyle}{false}
\newboolean{eptcsstyle}
\setboolean{eptcsstyle}{false}
\newboolean{entcsstyle}
\setboolean{entcsstyle}{false}
\newboolean{sigplanstyle}
\setboolean{sigplanstyle}{false}
\newboolean{easychairstyle}
\setboolean{easychairstyle}{false}
\newboolean{scrartclstyle}
\setboolean{scrartclstyle}{false}
\newboolean{lncsstyle}
\setboolean{lncsstyle}{false}
\newboolean{acmartstyle}
\setboolean{acmartstyle}{false}
%%%%%%%%%%%%%%%%
%%
%% 		Features to include
%%
%%%%%%%%%%%%%%%%

\newboolean{needstheorems}
\setboolean{needstheorems}{false}
\newboolean{withimages}
\setboolean{withimages}{false}
\newboolean{withproofs}
\setboolean{withproofs}{true}
\newboolean{complexity}
\setboolean{complexity}{false}

\newboolean{techr}
\setboolean{techr}{false}

%%%%%%%%%%%%%%%%
%%
%% 		Languages
%%
%%%%%%%%%%%%%%%%

\newboolean{french}
\setboolean{french}{false}

\newboolean{colored}
\setboolean{colored}{false}

\setboolean{LMCSstyle}{true}
%%%%%%%
%% PACKAGES
%%%%%%%
\ifthenelse{\boolean{talk}}{}{\usepackage[utf8]{inputenc}}
\ifthenelse{\boolean{french}}{\usepackage[french]{babel}}{
  \ifthenelse{\boolean{lipicsstyle}}{}{\usepackage[english]{babel}}}
\ifthenelse{\boolean{acmartstyle}}{}{
\usepackage{amssymb}	
}
\usepackage{amsmath}
\usepackage{amsfonts}
\usepackage{graphicx}
\usepackage{bussproofs}
% fancy commented for the work on IMLL
%\usepackage{fancy}
\usepackage{cmll}
\usepackage{stmaryrd}
 \usepackage{multirow}
\ifthenelse{\boolean{talk}}{\usepackage{color}}{
 	\ifthenelse{\boolean{lipicsstyle}}{\usepackage{color}}{
   		\ifthenelse{\boolean{acmartstyle}}{\usepackage{color}}{
			\ifthenelse{\boolean{LMCSstyle}}{}{
				\usepackage[usenames]{xcolor}
			}
		}
	}
}
 \usepackage{fancybox}
 \usepackage{hhline}
%  \ifthenelse{\boolean{eptcsstyle}}{}{
% 	\ifthenelse{\boolean{entcsstyle}}{}{
% \usepackage{txfonts}}}
\usepackage{nameref}
\ifthenelse{\boolean{LMCSstyle}}{}{
	\usepackage[basic]{complexity}
	}
\usepackage{appendix}
\usepackage[normalem]{ulem}
\usepackage{proof}
\usepackage{mathtools}
\ifthenelse{\boolean{lipicsstyle}}{}{\usepackage{tabls}}

\ifthenelse{\boolean{IEEEstyle}}{
 	\usepackage[bookmarks={false}]{hyperref}
	}{
	\ifthenelse{\boolean{acmartstyle}}{}{\usepackage{hyperref}}
	}

%\counterwithin*{equation}{section}
% \setcounter{toappendix}{2}
% \capturecounter{theorem}
% \capturecounter{theorem}
% \capturecounter{figure}

%%%%%%%%%%%
%% Loading of the environments 'theorem', 'definition', etc,
%% for the latex styles which do not include them
%%%%%%%%%%%
\ifthenelse{\boolean{IEEEstyle}}{
\setboolean{needstheorems}{true}}{}

\ifthenelse{\boolean{eptcsstyle}}{
\setboolean{needstheorems}{true}}{}

\ifthenelse{\boolean{scrartclstyle}}{
\setboolean{needstheorems}{true}}{}

\ifthenelse{\boolean{sigplanstyle}}{
\setboolean{needstheorems}{true}}{}

\ifthenelse{\boolean{easychairstyle}}{
\setboolean{needstheorems}{true}}{}

\ifthenelse{\boolean{lipicsstyle}}{
    }{}

%\ifthenelse{\boolean{talk}}{
%\setboolean{needstheorems}{true}}{}

\ifthenelse{\boolean{needstheorems}}{
\input{\macrospath/ben-macros-thm-environments}}{}

\newcommand{\myproof}[1]{
\ifthenelse{\boolean{withproofs}}{#1}{}
}

%\theoremstyle{definition}
%\usepackage{srcltx}
%\usepackage{graphicx}
%\usepackage{pstricks}
%\usepackage[matrix,arrow]{xy}
%\usepackage{graphicx,epic,eepic}
%\usepackage{fullpage}
%\usepackage{mathpartir}

%%%%%%%
%% TERMS AND VARIABLES
%%%%%%%

\newcommand{\la}[1]{\lambda #1.}

\newcommand{\tm}{t}
\newcommand{\tmtwo}{s}
\newcommand{\tmthree}{u}
\newcommand{\tmfour}{r}
\newcommand{\tmfive}{p}
\newcommand{\tmsix}{q}

\newcommand{\var}{x}
\newcommand{\vartwo}{y}
\newcommand{\varthree}{z}

\newcommand{\mvar}{m}
\newcommand{\mvartwo}{n}
\newcommand{\mvarthree}{o}

\newcommand{\dom}[1]{\symfont{dom}(#1)}

% normal form

%%%%%%%
%% REWRITING RELATIONS
%%%%%%%
\newcommand{\rootRew}[1]{\mapsto_{#1}}
\newcommand{\Rew}[1]{\rightarrow_{#1}}
\newcommand{\Rewn}[1]{\rightarrow_{#1}^*}

\renewcommand{\to}{\Rew{}}
\newcommand{\tob}{\Rew{\beta}}

\newcommand{\toone}{\Rew{1}}
\newcommand{\totwo}{\Rew{2}}

%left rewriting
\newcommand{\lRew}[1]{\; \mbox{}_{#1}{\leftarrow}}

%downward rewriting

%parallel reduction

%residuals

%\newcommand{\dresid}[1]{\rotatebox[origin=c]{-90}{\overset{\rotatebox[origin=c]{90}{#1}}{$\Rightarrow$}}}
%\newcommand{\dresid}[1]{{\Downarrow}{\tiny #1}}

%developments

%head reductions

\newcommand{\towh}{\Rew{wh}}

%rewriting modulo

% miscellaneous reductions

%sets of terms
\newcommand{\sn}[1]{SN_{#1}}

%%%%%%%
%% EQUIVALENCE RELATIONS
%%%%%%%

%%%%%%%
%% SYMBOLIZED LETTERS
%%%%%%%
\newcommand{\symfont}[1]{\mathtt{#1}}
\newcommand{\ssym}{\symfont{s}}

\newcommand{\lssym}{\symfont{ls}}
\newcommand{\ls}{\lssym}
\newcommand{\gcsym}{\symfont{gc}}
\newcommand{\gc}{\gcsym}
\newcommand{\db}{\symfont{dB}}

\newcommand{\esym}{\symfont{e}}
\newcommand{\wsym}{\symfont{w}}

\newcommand{\msym}{\symfont{m}}

\newcommand{\csym}{\symfont{c}}

%%%%%%%
%% CALL-BY-VALUE
%%%%%%%

\newcommand{\val}{v}
\newcommand{\valtwo}{\val'}
\newcommand{\valthree}{\val''}

% call-by-value root rules

% call-by-value contexts

%%%%%%%
%% CONTEXTS
%%%%%%%
%hole
\newcommand{\ctxholep}[1]{\langle #1\rangle}
\newcommand{\ctxhole}{\ctxholep{\cdot}}

\makeatletter
\newsavebox{\@brx}
\newcommand{\llangle}[1][]{\savebox{\@brx}{\(\m@th{#1\langle}\)}%
  \mathopen{\copy\@brx\kern-0.7\wd\@brx\usebox{\@brx}}}
\newcommand{\rrangle}[1][]{\savebox{\@brx}{\(\m@th{#1\rangle}\)}%
  \mathclose{\copy\@brx\kern-0.7\wd\@brx\usebox{\@brx}}}
\makeatother

\newcommand{\ctxholefp}[1]{\llangle #1\rrangle}
\newcommand{\ctxholef}{\ctxholefp{\cdot}}

%basic contexts
\newcommand{\ctx}{C}
\newcommand{\ctxtwo}{D}
\newcommand{\ctxthree}{E}
\newcommand{\ctxfour}{F}
\newcommand{\ctxp}[1]{\ctx\ctxholep{#1}}

\newcommand{\ctxtwop}[1]{\ctxtwo\ctxholep{#1}}
\newcommand{\ctxthreep}[1]{\ctxthree\ctxholep{#1}}
\newcommand{\ctxfourp}[1]{\ctxfour\ctxholep{#1}}

\newcommand{\ctxfp}[1]{\ctx\ctxholefp{#1}}
\newcommand{\ctxtwofp}[1]{\ctxtwo\ctxholefp{#1}}
\newcommand{\ctxthreefp}[1]{\ctxthree\ctxholefp{#1}}
\newcommand{\ctxfourfp}[1]{\ctxfour\ctxholefp{#1}}

% contexts with namesets

\newcommand{\hctx}{H}
\newcommand{\hctxp}[1]{\hctx\ctxholep{#1}}

%evaluation contexts

%applicative contexts

%non-blocking contexts

\newcommand{\nbvctxtwo}[1]{\nbvctxtwo{#1}}

%substitution contexts

%%%%%%%
%% EQUIVALENCES
%%%%%%%

\newcommand{\defeq}{:=}

\newcommand{\grameq}{::=}

%%%%%%%
%% SUBSTITUTIONS
%%%%%%%
\newcommand{\esub}[2]{[#1{\shortleftarrow}#2]}
\newcommand{\isub}[2]{\{#1{\shortleftarrow}#2\}}

\ifthenelse{\boolean{complexity}}{ }{
  
  }

%%%%%%%
%% SUBSTITUTION CALCULUS
%%%%%%%

\newcommand{\rtodb}{\rootRew{\db}}

\newcommand{\tos}{\Rew{\ssym}}
\newcommand{\todbp}[1]{\Rew{\db#1}}
\newcommand{\todb}{\todbp{}}%{\rightsquigarrow}

%%%%%%%
%% LINEAR SUBSTITUTION CALCULUS
%%%%%%%

\newcommand{\rtols}{\rootRew{\lssym}}
\newcommand{\rtogc}{\rootRew{\gcsym}}
\newcommand{\tols}{\Rew{\lssym}}
\newcommand{\togc}{\Rew{\gcsym}}

%% LINEAR HEAD REDUCTION

\newcommand{\tolh}{\Rew{\symfont{lh}}}

%% WEAK LINEAR HEAD REDUCTION

%%%%%%%
%% PI CALCULUS
%%%%%%%

\newcommand{\llbrace}{\{ \kern -0.27em \vert}
\newcommand{\rrbrace}{\vert \kern -0.27em \}}

\newcommand{\streq}{\equiv}
%free names

% reductions at a distance

%%%%%%%
%% ABBREVIATIONS
%%%%%%%
\renewcommand{\l}{\lambda}
\newcommand{\ie}{i.e.\xspace}
\ifthenelse{\boolean{LMCSstyle}}{	
\renewcommand{\eg}{{e.g.}\xspace}
}{
	\newcommand{\eg}{{e.g.}\xspace}
}
\newcommand{\ih}{{\textit{i.h.}}\xspace}
\newcommand{\fv}[1]{\symfont{fv}(#1)}
\newcommand{\dfv}[1]{\symfont{dv}(#1)}

\newcommand{\mfv}[1]{\symfont{mfv}(#1)}
\newcommand{\efv}[1]{\symfont{efv}(#1)}
\ifthenelse{\boolean{entcsstyle}}{}{
	}

%%%%%%%
%% TEXT FORMATTING
%%%%%%%

\newcommand{\red}[1]{{\color{red} {#1}}}

\definecolor{azure}{rgb}{0.0, 0.5, 1.0}
\definecolor{dcyan}{rgb}{0.0, 0.72, 0.92}
\definecolor{deepskyblue}{rgb}{0.0, 0.75, 1.0}
\definecolor{lightseagreen}{rgb}{0, 0.65, 0.65}

\newcommand{\lblue}[1]{{\color{lightseagreen} {#1}}}

\definecolor{dgreen}{rgb}{0.0, 0.5, 0.0}

%%%%%%%
%% LATEX SHORTCUTS
%%%%%%%
\newcommand{\ignore}[1]{}

\newcommand{\colspace}{@{\hspace{.5cm}}}
\newcommand{\myinput}[1]{\ifthenelse{\boolean{withimages}}{\input{#1}}{}}
\newcommand{\inputproof}[1]{\ifthenelse{\boolean{withproofs}}{\input{#1}}{}}

% for references
% for references

\newcommand{\reflemma}[1]{Lemma~\ref{l:#1}}
\newcommand{\reflemmap}[2]{Lemma~\ref{l:#1}.\ref{p:#1-#2}}

 %Giulio

 %Giulio
\newcommand{\reflemmaeq}[1]{{L.\ref{l:#1}}}
\newcommand{\reflemmaeqp}[2]{{L.\ref{l:#1}.\ref{p:#1-#2}}}
\newcommand{\reflemmapeq}[2]{\reflemmaeqp{#1}{#2}}

\newcommand{\refthm}[1]{Theorem~\ref{thm:#1}}

\newcommand{\reftm}[1]{Theorem~\ref{tm:#1}}

\newcommand{\refprop}[1]{Proposition~\ref{prop:#1}}
\newcommand{\refpropp}[2]{Prop.~\ref{prop:#1}.\ref{p:#1-#2}} %Giulio
\newcommand{\refsect}[1]{Sect.~\ref{sect:#1}}

\newcommand{\refapp}[1]{Appendix~\ref{sect:#1}}

\ifthenelse{\boolean{easychairstyle}}{}{
  \ifthenelse{\boolean{lncsstyle}}{}{	
	\renewcommand{\refeq}[1]{(\ref{eq:#1})}
  }
}
\newcommand{\reffig}[1]{Fig.~\ref{fig:#1}}
\newcommand{\refcoro}[1]{Corollary~\ref{coro:#1}}

\newcommand{\refdef}[1]{Definition~\ref{def:#1}}

\newcommand{\refrem}[1]{Remark~\ref{rem:#1}}

 %Giulio

%\newcommand{\reflemma}[1]{Lemma~\ref{l:#1}}
%\newcommand{\reftm}[1]{Theorem~\ref{tm:#1}}
%\newcommand{\refthm}[1]{Theorem~\ref{thm:#1}}
%\newcommand{\refth}[1]{Theorem~\ref{th:#1}}
%\newcommand{\refprop}[1]{Proposition~\ref{prop:#1}}
%\newcommand{\refsect}[1]{Section~\ref{sect:#1}}
%\newcommand{\refssect}[1]{Subsection~\ref{sect:#1}}
%\newcommand{\reftab}[1]{Table~\ref{tab:#1}}
%\ifthenelse{\boolean{easychairstyle}}{}{
%	\newcommand{\refeq}[1]{(\ref{eq:#1})}
%}
%\newcommand{\reffig}[1]{Figure~\ref{fig:#1}}
%\newcommand{\refcoro}[1]{Corollary~\ref{coro:#1}}
%\newcommand{\refpoint}[1]{Point~\ref{p:#1}}
%\newcommand{\refpointmute}[1]{\ref{p:#1}}
%\newcommand{\refremark}[1]{Remark~\ref{rem:#1}}

%%%%%%%
%% NAMES
%%%%%%%
\newcommand{\mellies}{{Melli{\`e}s}\xspace}
\newcommand{\levy}{{L{\'e}vy}\xspace}

%%%%%%%
%% CALCULI
%%%%%%%

%%%%%%%
%% LOGICS
%%%%%%%

\newcommand{\IMELL}{{IMELL}\xspace}
\newcommand{\imell}{\IMELL}

%formulas

\newcommand{\form}{A}
\newcommand{\formtwo}{B}
\newcommand{\formthree}{C}

\newcommand{\aform}{X}

\newcommand{\multiForm}{\Gamma}
\newcommand{\multiFormtwo}{\Delta}
\newcommand{\multiFormthree}{\Pi}
\newcommand{\multiform}{\multiForm}
\newcommand{\multiformtwo}{\multiFormtwo}
\newcommand{\multiformthree}{\multiFormthree}

% Input and Output Formulas

% Left and Right Formulas

%%%%%%%%%%%%%%
% CONNECTIVES
%%%%%%%%%%%%%

\newcommand{\exch}{\symfont{exch}}

% identity group
\newcommand{\ax}{\symfont{ax}}
\newcommand{\cut}{\symfont{cut}}

% multiplicatives

\newcommand{\tens}{\otimes}
\newcommand{\lolli}{\multimap}
\newcommand{\substr}{\varogreaterthan}

% additives

% exponentials
\newcommand{\bang}{{\mathsf{!}}}
\newcommand{\whyn}{{\mathsf{?}}}
\newcommand{\der}{{\mathsf{d}}}
\newcommand{\contr}{{\mathsf{c}}}
\newcommand{\weak}{{\mathsf{w}}}

% polarized variants

% various

% duality

%%%%%%%%%%%%%%
% Trunks and Arborescences (IMLL)
%%%%%%%%%%%%%

%%%%%%%%%%%%%%
% Deductive Rules Symbols
%%%%%%%%%%%%%

\newcommand{\rightRuleSym}{r}
\newcommand{\leftRuleSym}{l}

\newcommand{\tensRightRule}{\tens_\rightRuleSym}
\newcommand{\tensLeftRule}{\tens_\leftRuleSym}

\newcommand{\lolliRightRule}{\lolli_\rightRuleSym}
\newcommand{\lolliLeftRule}{\lolli_\leftRuleSym}

\newcommand{\bangRightRule}{\bang_\rightRuleSym}
\newcommand{\bangLeftRule}{\bang_\leftRuleSym}

\newcommand{\contrRule}{\contr}
\newcommand{\weakRule}{\weak}

%%%%%%%%%%%%%%
% PNs ADDITIONAL REWRITING RULES
%%%%%%%%%%%%%

%%%%%%%
%% SET THEORY
%%%%%%%
\newcommand{\set}[1]{\{#1\}}

\newcommand{\nat}{\mathbb{N}}

\newcommand{\settone}{\mathcal{S}}

\newcommand{\setecone}{\mathcal{E}}

%%%%%%%
%% PROOF NETS
%%%%%%%

%\newcommand{\laun}[1]{#1^e}

% directed paths

% undirected paths

% right links

% left links

\newcommand{\net}{P}

%%%%%%%
%% LINEAR LAMBDA CALCULUS (WITH BANG)
%%%%%%%

\newcommand{\tolsc}{\Rew{LSC}}
\newcommand{\rtom}{\mapsto_\msym}

\newcommand{\tom}{\Rew{\msym}}

\newcommand{\tow}{\Rew{\wsym}}

\newcommand{\rtow}{\rootRew{\wsym}}

%%%%%%%
%% SEQUENT CALCULUS
%%%%%%%

%%%%%%%
%% CALL-BY-VALUE
%%%%%%%

%calculi
%plotkin's calculus

\newcommand{\betav}{\beta_\val}

%\newcommand{\coltowbv}{\Rew{\cyan{w}\redbetav}}

%distance calculi

% Herbelin-zimmermann calculus

% translations

%\newcommand{\cbvwb}[1]{#1^{\symfont{\overline{v}}}}
%\newcommand{\cbvwbp}[1]{\cbvwb{(#1)}}
%\newcommand{\cbvn}[1]{#1^\symfont{e}}
%\newcommand{\cbvnp}[1]{\cbvn{(#1)}}
%\newcommand{\cbvwbn}[1]{#1^\symfont{m}}
%\newcommand{\cbvwbnp}[1]{\cbvwbn{(#1)}}

%reductions

%linear applicative weak reduction

%weak reduction

% stratified weak reduction

%%%%%%%
%% REDUCIBILITY
%%%%%%%

\newcommand{\snsubst}{\sn{\to}}

\newcommand{\redc}[1]{\llbracket#1\rrbracket}

\newcommand{\size}[1]{|#1|}
\newcommand{\sizep}[2]{|#1|_{#2}}

%\newcommand{\snA}{\sn{\vdash\form}}

%%%%%%%
%% PATHS
%%%%%%%

%%%%%%%
%% LAMBDA GRAPHS
%%%%%%%

%%%%%%%
%% UNIFICATION
%%%%%%%

%%%%%%%
%% KRIVINE ABSTRACT MACHINE
%%%%%%%

% code

% closure

% environments

% global environments

% stacks

% states
\ifthenelse{\boolean{acmartstyle}}{
	
	}{
	
}

% kamstate

% kam transitions

% decoding function
%\newcommand{\decode}[1]{\llbracket #1\rrbracket}

% propagations

% stacks

% heaps

% pointers

%%%%%%%%%%%%%%%%
%%
%%	DANOS-REGNIER CRITERION
%%
%%%%%%%%%%%%%%%%

%%%%%%%%%%%%%%%%%
%%  STANDARDIZATION		%%
%%%%%%%%%%%%%%%%%

%%%%%%%%%
%% USEFUL SHARING
%%%%%%%%%

%%%%%%%%%%%%
%% DISTILLING ABS MACHINES
%%%%%%%%%%%%%

 % abbreviated
% dumps

\newcommand{\pair}[2]{(#1,#2)}

% generic linear rewriting

% decoding of stacks

% structural equivalence
\newcommand{\eqstruct}{\equiv}

%%%%%%%%%%%%
%% ON THE VALUE OF VARIABLES
%%%%%%%%%%%%%

% generalized constants

% fireables

%%%%%%
% The Complexity of Abstract Machines
%%%%%%

%%%%%%%%%%%%%%%
%%
%% Macros from abstract machines
%%
%%%%%%%%%%%%%%

\newcommand{\withproofs}[1]{\ifthenelse{\boolean{withproofs}}{#1}{}}

\newcommand{\withoutproofs}[1]{\ifthenelse{\boolean{withproofs}}{}{#1}}

% macro for cases in proofs
\ifthenelse{\boolean{entcsstyle}}{}{
  \ifthenelse{\boolean{lncsstyle}}{}{	
    
  }
 }

%%%%%%%%%%%%%%%
%%
%% (In)Efficiency and Reasonable Cost Models
%%
%%%%%%%%%%%%%%

%\newcommand{\perpRuleAx}{(ax-\perp)}
%\newcommand{\perpRuleLambda}{($\l$-\perp)}
%\newcommand{\perpRuleAppR}{(@r-\perp)}
%\newcommand{\perpRuleAppL}{(@l-\perp)}
%\newcommand{\perpRuleGc}{(gc-\perp)}

\newcounter{numberone}

%\newenvironment{varenumerate}{\begin{enumerate}}{\end{enumerate}}

%%%%%%%%%%%%%%%
%%
%% Proof nets and the LSC
%%
%%%%%%%%%%%%%%

\newcommand{\bigo}{{\mathcal{O}}}

%%%%%%%%%%%%%%%
%%
%% Neutral terms
%%
%%%%%%%%%%%%%%

 %Giulio
 %Giulio

 %Giulio

% terms with prime

% Right-to-Left Evaluation Contexts

 %Giulio

 %Giulio
 %Giulio
 %Giulio
 %Giulio

 %Giulio
 %Giulio

%\newcommand{\proper}{clean\xspace}
%\newcommand{\Proper}{Clean\xspace}

 %Giulio

% programs
%\newcommand{\prog}{p}

%%%%%%%%%%%%%
%
% The Importance of Being External
%
%%%%%%%%%%%%%

\ifthenelse{\boolean{acmartstyle}}{}{
	
}

%\newcommand{\tonoth}{\Rew{\neg h}}

%%%%%%%%%%%%%
%
% If Then Else
%
%%%%%%%%%%%%%

\newcommand{\tderiv}{\pi}
\newcommand{\tderivtwo}{\rho}

%\newcommand{\stackitem}{\phi}

%\newcommand{\stackitemthree}{\theta}
%\newcommand{\pair}[2]{#1@#2}

%\newcommand{\undef}{\bot}

 %Giulio

%\renewcommand{\size}[1]{\mathsf{size}(#1)}

%\newcommand{\terms}{\mathsf{Terms}}

%\newcommand\Neutral{\ensuremath{\mathcal N}}

%%%%%
% CnNeed and LL
%%%%%

% cbneed contexts

\newcommand{\Id}{\symfont{I}}

\newcommand{\hastype}{\,{:}\,}

\makeatletter
\newcommand{\exder}{%
  \def\exderW[##1]{\triangleright_{##1}\ }%
  \def\exderWO{\triangleright\ }%
  \@ifnextchar[\exderW\exderWO%
  }
\makeatother

\newcommand\Deribbase[5]{{#3}\ {\pmb\vdash}_{#2}^{#1} {#4}\  {:}\  {#5}}

\makeatletter

\newcommand{\Deribase}[1]{%
  \def\DeribW[##1]{\Deribbase{##1}{#1}}%
  \def\DeribWO{\Deribbase{}{#1}}%
  \@ifnextchar[\DeribW\DeribWO%
  }

  \newcommand{\Deri}{%
  \def\DeriW_##1{\Deribase{##1}}%
  \def\DeriWO{\Deribase{}}%
  \@ifnextchar_\DeriW\DeriWO%
  }
\makeatother

\newcommand{\app}{\symfont{app}}

\makeatletter
\newcommand{\appresult}{%
  \def\appresultW<##1>{\app_\result^{##1}}%
  \def\appresultWO{\app_\result}%
  \@ifnextchar<\appresultW\appresultWO%
  }
\makeatother

\newcommand\mydots{\hbox to .6em{.\hss.}}

\newcommand{\pof}{\;\triangleright}

%%%%% IAM

% Linear IAM transitions

% IAM Red transitions

%\newcommand{\tomachvar}{\tomachhole{\varsym}}

% IAM Blue transitions

%\newcommand{\subpar}[2]{\cyan[#1{\parr}#2\cyan]}
%\newcommand{\subapp}[2]{\cyan[#1{\substr}#2\cyan]}

%\newcommand{\subi}[2]{\{#1{\shortrightarrow}#2\}}
%\newcommand{\esubi}[2]{\{#1{\Rightarrow}#2\}}
%\newcommand{\isube}[2]{\{#1{\Rightarrow}#2\}}

%\newcommand{\baxsym}{\bang_{\ax}}
%\renewcommand{\baxsym}{\meax}
%\newcommand{\bax}{\baxsym}
\newcommand{\bdersym}{\bang_{\der}}
\renewcommand{\bdersym}{\bang}
\newcommand{\bder}{\bdersym}

\newcommand{\bw}{\bwsym}

\newcommand{\axe}{\axesym}

\newcommand{\tobder}{\Rew{\bdersym}}
\newcommand{\tobw}{\Rew{\bwsym}}

\newcommand{\ems}{\esym^{\mssym}}

\newcommand{\ess}{\esym^{\sssym}}

\newcommand{\toems}{\Rew{\ems}}

\newcommand{\rtoess}{\rootRew{\ess}}

\newcommand{\rtobder}{\rootRew{\bdersym}}

\newcommand{\varmeas}[2]{\sizep{#1}{#2}}

\newcommand{\llterms}{\symfont{Terms}}
\newcommand{\lltermsp}[1]{\llterms_{#1}}

\newcommand{\elctxs}{\symfont{ElCtxs}}
\newcommand{\lltermsform}{\lltermsp{\vdash\form}}

\newcommand{\elctxsform}{\elctxs_{\form}}
\newcommand{\vars}{\symfont{Vars}}
\renewcommand{\vars}{\mathcal{V}}
\newcommand{\elvars}{\symfont{ElVars}}

\newcommand{\elctx}{E}

\newcommand{\elctxfp}[1]{\elctx\ctxholefp{#1}}
\newcommand{\elctxtwo}{\elctx'}

\newcommand{\elctxtwofp}[1]{\elctxtwo\ctxholefp{#1}}

\newcommand{\gctx}{G}
\newcommand{\gctxp}[1]{\gctx\ctxholep{#1}}

\newcommand{\gctxtwo}{\gctx'}

\newcommand{\potmul}[2]{\symfont{p}_{#2}(#1)}
\newcommand{\ltmeas}[1]{\symfont{ms}(#1)}

\newcommand{\Gsym}{\symfont{G}}
\newcommand{\tog}{\togp{}}
\newcommand{\togp}[1]{\Rew{\Gsym #1}}

\newcommand{\Bsym}{\symfont{B}}
\newcommand{\tobad}{\tobadp{}}
\newcommand{\tobadp}[1]{\Rew{\Bsym #1}}

\newcommand{\bctx}{B}
\newcommand{\bctxtwo}{\bctx'}

\newcommand{\bctxp}[1]{\bctx\ctxholep{#1}}

\newcommand{\bctxtwop}[1]{\bctxtwo\ctxholep{#1}}

\renewcommand{\outer}{\prec}

\newcommand{\cctx}{\mathbb{C}}
\newcommand{\cctxtwo}{\cctx'}

\newcommand{\cctxp}[1]{\cctx\ctxholep{#1}}

\newcommand{\mctx}{M}
\newcommand{\mctxp}[1]{\mctx\ctxholep{#1}}
\newcommand{\mctxfp}[1]{\mctx\ctxholefp{#1}}
\newcommand{\mctxtwo}{\mctx'}
\newcommand{\mctxtwop}[1]{\mctxtwo\ctxholep{#1}}
\newcommand{\mctxtwofp}[1]{\mctxtwo\ctxholefp{#1}}
\newcommand{\mctxthree}{\mctx''}
\newcommand{\mctxthreep}[1]{\mctxthree\ctxholep{#1}}

\newcommand{\vctx}{V}
\newcommand{\vctxp}[1]{\vctx\ctxholep{#1}}
\newcommand{\vctxfp}[1]{\vctx\ctxholefp{#1}}

\newcommand{\vmctx}{V_{\msym}}

\newcommand{\rcuteq}{\sim_\cut}
\newcommand{\cuteq}{\equiv_\cut}

\newcommand{\letin}[3]{{\sf let}\ #1=#2\ {\sf in}\ #3}

\newcommand{\vgctx}{V_{G}}

\newcommand{\lambdamutcalc}{\overline\lambda\mu\tilde{\mu}}

%\newcommand{\isym}{{\tt i}}

%\newcommand{\partobind}[1]{\overset{#1}{\Rightarrow}_{\beta}}
%\newcommand{\partobindlong}[1]{\overset{#1}{\parRew{\beta}}}

%%%%%%%
%%%% ESC
%%%%%%%

\newcommand{\cuta}[2]{\redbrackets{#1{\shortrightarrow}#2}}
\newcommand{\cutsub}[2]{\set{#1{\shortrightarrow}#2}}

\newcommand{\redbrackets}[1]{\red[#1\red]}

\newcommand{\bluebrackets}[1]{\lblue[#1\lblue]}

\newcommand{\para}[3]{\bluebrackets{#1{\parr}#2,#3}}
\newcommand{\suba}[3]{\bluebrackets{#1{\substr}#2,#3}}

\newcommand{\dera}[2]{\bluebrackets{#1{\whyn}#2}}

\newcommand{\axmsym}{\ax\msym}
\newcommand{\axesym}{\ax\esym}
\newcommand{\axmone}{\axmsym_{1}}
\newcommand{\axmtwo}{\axmsym_{2}}

\newcommand{\rtoaxmone}{\rootRew{\axmone}}
\newcommand{\rtoaxmtwo}{\rootRew{\axmtwo}}
\newcommand{\rtotens}{\rootRew{\tens}}
\newcommand{\rtololli}{\rootRew{\lolli}}

\newcommand{\toaxmone}{\Rew{\axmone}}
\newcommand{\toaxmtwo}{\Rew{\axmtwo}}
\newcommand{\totens}{\Rew{\tens}}
\newcommand{\tololli}{\Rew{\lolli}}

\newcommand{\axeone}{\axe_{1}}
\newcommand{\axetwo}{\axe_{2}}

\newcommand{\rtoaxeone}{\rootRew\axeone}
\newcommand{\toaxeone}{\Rew\axeone}
\newcommand{\rtoaxetwo}{\rootRew\axetwo}
\newcommand{\toaxetwo}{\Rew\axetwo}

\newcommand{\rtobang}{\rootRew{\bang}}
\newcommand{\tobang}{\Rew{\bang}}

\newcommand{\mssym}{\symfont{ms}}
\newcommand{\sssym}{\symfont{ss}}
\newcommand{\toms}{\Rew{\mssym}}
\newcommand{\toss}{\Rew{\sssym}}

\newcommand{\toess}{\Rew{\ess}}
\newcommand{\tomsnw}{\Rew{\mssym_{\neg\wsym}}}
\newcommand{\msnlolli}{\mssym_{\neg\lolli}}
\newcommand{\tomsnlolli}{\Rew{\msnlolli}}
\newcommand{\ssnlolli}{\sssym_{\neg\lolli}}
\newcommand{\tossnlolli}{\Rew{\ssnlolli}}

\newcommand{\lctx}{L}
\newcommand{\lctxtwo}{\lctx'}
\newcommand{\lctxthree}{\lctx''}

\newcommand{\lctxp}[1]{\lctx\ctxholep{#1}}

\newcommand{\lctxtwop}[1]{\lctxtwo\ctxholep{#1}}
\newcommand{\lctxthreep}[1]{\lctxthree\ctxholep{#1}}

\newcommand{\mvars}{\vars_{\msym}}
\newcommand{\evars}{\vars_{\esym}}

\newcommand{\evar}{e}
\newcommand{\evartwo}{f}
\newcommand{\evarthree}{g}

\newcommand{\exval}{\val_{\esym}}
\newcommand{\exvaltwo}{\exval'}

\newcommand{\mval}{\val_{\msym}}
\newcommand{\mvaltwo}{\mval'}

\newcommand{\cameratech}[2]{\ifthenelse{\boolean{techr}}{#2}{#1}}

\definecolor{LightGray}{gray}{.80}
\definecolor{DarkGray}{gray}{.60}

 %Giulio

%\newcommand{\alpone}{\Sigma}

\usepackage{tikz}
\usetikzlibrary{calc}
\usetikzlibrary{matrix,arrows}
\usetikzlibrary{decorations.shapes}
\usetikzlibrary{decorations.text}
\usetikzlibrary{decorations.pathmorphing}
\usetikzlibrary{decorations.markings}
\usetikzlibrary{positioning}

\tikzset{
node distance=1.3cm, auto,
every node/.style={font=\scriptsize },
ocenter/.style={baseline={([yshift=-.5ex, xshift=-.5ex]current bounding box)}},  
labelBeginAbove/.style={postaction={decorate,decoration={markings,mark=at position 0 with {\node[inner sep= 0.6pt, above=1pt]{\tiny #1};}} } },
labelBeginBelow/.style={postaction={decorate,decoration={markings,mark=at position 0 with {\node[inner sep= 0.6pt, below=1pt]{\tiny #1};}}}},
labelEndAbove/.style={postaction={decorate,decoration={markings,mark=at position 1 with {\node[inner sep= 0.6pt, above=2pt]{\tiny #1};}}}},
labelEndBelow/.style={postaction={decorate,decoration={markings,mark=at position 1 with {\node[inner sep= 0.6pt, below=2pt]{\tiny #1};}}}},
labelEndRight/.style={postaction={decorate,decoration={markings,mark=at position 1 with {\node[inner sep= 0.6pt, right=2pt]{\tiny #1};}}}},
labelEndLeft/.style={postaction={decorate,decoration={markings,mark=at position 1 with {\node[inner sep= 0.6pt, left=2pt]{\tiny #1};}}}}
}

%%
%% Macros done by pablo for the paper on abstract machines.
%%

\newcommand{\med}[2]{
($(#1)!.5!(#2)$)
}

%% tikz
\usepackage{tikz}
\usetikzlibrary{calc}
\usetikzlibrary{backgrounds}
\usetikzlibrary{decorations.shapes}
\usetikzlibrary{decorations.text}
\usetikzlibrary{decorations.pathmorphing}
\usetikzlibrary{decorations.markings}
\usetikzlibrary{matrix}
\usetikzlibrary{decorations.pathreplacing}
\usetikzlibrary{arrows}
\usetikzlibrary{positioning}
\usepackage[nofancy]{tikz-inet}

\setboolean{withimages}{true}

\tikzset{%
%
% tikzpictures
  ocenter/.style={baseline={([yshift=-.5ex, xshift=-.5ex]current bounding box)}},  
%
% NODE STYLES
  nospace/.style={inner sep= 0pt},
  etic/.style={inner sep= 0.5pt, fill=white, anchor= center},
  bigetic/.style={inner sep= 0.8pt, fill=white, anchor= center},
  letic/.style={inner sep= 0.7pt, fill=white, anchor= center,circle,draw=black},
%
% MELL nets  
  port/.style={inner sep= 0.3pt, anchor= center,circle,draw=black,fill=black, minimum size = 0.3pt},
%
% cbv nets  
  eport/.style={inner sep= 0.8pt, anchor= center,circle,draw=cyan,fill=white, minimum size = 0.8pt},
  mport/.style={inner sep= 0.8pt, anchor= center, circle,draw=white,fill=brown, minimum size = 0.8pt, solid, 
line width=0.10ex},
% cbn nets  
  eportn/.style={inner sep= 0.8pt, anchor= center,circle,draw=blue,fill=white, minimum size = 0.8pt},
  mportn/.style={inner sep= 0.8pt, anchor= center, circle,draw=red,fill=red, minimum size = 0.8pt, solid, 
line width=0.10ex},
%
%
% EDGE STYLES
  nopol/.style={->, shorten <=0.5pt, shorten >=0.5pt, draw=gray, line width=0.18ex},
  nopolrev/.style={<-, shorten <=1pt, shorten >=1pt, draw=gray, line width=0.18ex},
  nopolgen/.style={preaction={decorate},decoration={markings,mark=at position .5 with {\draw [shorten >=0pt, shorten <=0pt,draw=gray,-](0pt,3pt) -- (0pt,-3pt);}},->, shorten <=0.5pt, shorten >=0.5pt, draw=gray, line width=0.18ex},
  nopolrevgen/.style={preaction={decorate},decoration={markings,mark=at position .5 with {\draw [shorten >=0pt, shorten <=0pt,draw=gray,-](0pt,3pt) -- (0pt,-3pt);}},<-, shorten <=1pt, shorten >=1pt, draw=gray, line width=0.18ex},
  outEdge/.style={<-, shorten >=0.5pt, shorten <=0.5pt, draw=blue, line width=0.18ex},
  inpEdge/.style={->, densely dotted, shorten >=0.5pt, shorten <=0.5pt, draw=red, line width=0.18ex, overlay},
  inpEdgeGen/.style={preaction={decorate},decoration={markings,mark=at position .5 with {\draw [shorten >=0pt, shorten <=0pt,draw=red,-](0pt,3pt) -- (0pt,-3pt);}},->, densely dotted, shorten >=0.5pt, shorten <=0.5pt, draw=red, line width=0.18ex, overlay},
  %
  % cbv nets
    eprinc/.style={postaction={decorate},decoration={markings,mark=at position .4 with {\node [inner sep= 0.5pt, anchor= center,circle,draw=cyan, fill=cyan, minimum size = 0.8pt, solid, line width=0.10ex]{};}}},
    mprinc/.style={postaction={decorate},decoration={markings,mark=at position .4 with {\node [inner sep= 0.5pt, anchor= center,circle,draw=brown, fill=brown, minimum size = 0.8pt, solid, line width=0.10ex]{};}}},
    mprincpar/.style={postaction={decorate},decoration={markings,mark=at position .6 with {\node [inner sep= 0.5pt, anchor= center,circle,draw=brown, fill=brown, minimum size = 0.8pt, solid, line width=0.10ex]{};}}},
    pure/.style={->, shorten <=1pt, shorten >=1pt, draw=brown, line width=0.18ex},
    pureRev/.style={<-, shorten <=1pt, shorten >=1pt, draw=brown, line width=0.18ex},
    exppure/.style={->, shorten <=1pt, shorten >=1pt, draw=cyan, line width=0.18ex, densely dotted},
    exppureRev/.style={<-, shorten <=1pt, shorten >=1pt, draw=cyan, line width=0.18ex, densely dotted},    genexppure/.style={preaction={decorate},decoration={markings,mark=at position .5 with {\draw [shorten >=0pt, shorten <=0pt,draw=cyan,-](0pt,3pt) -- (0pt,-3pt);}},->, shorten <=1pt, shorten >=1pt, draw=cyan, line width=0.18ex, densely dotted},
    enaryedge/.style={preaction={decorate},decoration={markings,mark=at position .5 with {\draw [shorten >=0pt, shorten <=0pt,draw=cyan,-](0pt,3pt) -- (0pt,-3pt);}}},
  % cbn nets
    eprincn/.style={postaction={decorate},decoration={markings,mark=at position .4 with {\node [inner sep= 0.7pt, anchor= center,circle,draw=blue, fill=blue, minimum size = 0.8pt, solid, line width=0.10ex]{};}}},
    mprincn/.style={postaction={decorate},decoration={markings,mark=at position .6 with {\node [inner sep= 0.7pt, anchor= center,circle,draw=red, fill=white, minimum size = 0.8pt, solid, line width=0.10ex]{};}}},
    mprincnpar/.style={postaction={decorate},decoration={markings,mark=at position .4 with {\node [inner sep= 0.7pt, anchor= center,circle,draw=red, fill=white, minimum size = 0.8pt, solid, line width=0.10ex]{};}}},
    puren/.style={->, shorten <=1pt, shorten >=1pt, draw=red, line width=0.18ex},
    purenRev/.style={<-, shorten <=1pt, shorten >=1pt, draw=red, line width=0.18ex},
    exppuren/.style={->, shorten <=1pt, shorten >=1pt, draw=blue, line width=0.18ex, densely dotted},
    exppurenRev/.style={<-, shorten <=1pt, shorten >=1pt, draw=blue, line width=0.18ex, densely dotted},    genexppuren/.style={preaction={decorate},decoration={markings,mark=at position .5 with {\draw [shorten >=0pt, shorten <=0pt,draw=blue,-](0pt,3pt) -- (0pt,-3pt);}},->, shorten <=1pt, shorten >=1pt, draw=blue, line width=0.18ex, densely dotted},
    enaryedge/.style={preaction={decorate},decoration={markings,mark=at position .5 with {\draw [shorten >=0pt, shorten <=0pt,draw=blue,-](0pt,3pt) -- (0pt,-3pt);}}},
%
% BOX LINE STYLES
  jboxline/.style={draw= gray,rounded corners, line width=0.20ex, overlay},
  exboxline/.style={draw= gray,line width=0.20ex, overlay},
  noboxline/.style={draw= white,rounded corners, line width=0ex},
%
% NETS
  net/.style={draw=gray,inner sep=2pt,thick,ellipse, anchor=center, font=\scriptsize},
  inductiveTr/.style={draw=black!50, minimum size=0.9cm},
  inductiveTrSmall/.style={draw=black!50, minimum size=0.6cm},
%
% EVERY
  every label/.style={label distance = 1pt, font=\scriptsize, inner sep= 1pt},  
  every node/.style={font=\tiny }%\small
}
%\node[etic,fill=white]{{\tiny 0}}

\newcommand{\altax}{8pt}

\newcommand{\stlarb}{25pt}

\newcommand{\stalt}{22pt}
\newcommand{\stlar}{15pt}
\newcommand{\hstalt}{\stalt/2}
\newcommand{\hstlar}{\stlar/2}

\newcommand{\ilar}{12pt}

\newcommand{\pendingdist}{10pt}

%LINKS
%%%%%%%%%%%%%%%%
%% AXIOM
%%%%%%%%%%%%%%%

%%%%%%%%%%%%%%%%
%% CUT
%%%%%%%%%%%%%%%

%%%%%%%%%%%%%%%%
%% ZERARY
%%%%%%%%%%%%%%%

%%%%%%%%%%%%%%%%
%% UNARY
%%%%%%%%%%%%%%%

%%%%%%%%%%%%%%%%
%% BINARY
%%%%%%%%%%%%%%%

\newcommand{\lBinEdgesAbove}[5]{
% 1 = left auxiliary port	
% 2 = right auxiliary port	
% 3 = name of the link
% 4 = left edge type
% 5 = right edge type
\draw[#4, in=150, out=-90] (#1) to (#3);
\draw[#5, in=30, out=-90] (#2) to (#3);
}

%%%%%%%%%%%%%%%%
%% BOXES
%%%%%%%%%%%%%%%

\newcommand{\stboxlw}{6pt}
\newcommand{\stboxh}{20pt}

\newcommand{\stboxrwaux}{35pt}
\newcommand{\sepbox}{2pt}

\newcommand{\boxnodes}[4]{
% 1 = link to box
% 2 = left width
% 3 = right width 
% 4 = height
\node at (#1.center)[left = #2,nospace](#1so){};
\node at (#1.center)[right = #3,nospace](#1se){};
\node at (#1se.center)[above=#4,nospace](#1ne){};
%\node at (#1ne-|#1so)[nospace](#1no){}
\node at (#1so.center)[above=#4,nospace](#1no){};
}

\newcommand{\boxline}[2]{
% 1 = box link
% 2 = line style
\draw[#2](#1.center) -- (#1se.center) -- (#1ne.center) -- (#1no.center) -- (#1so.center)--(#1.center);}

\newcommand{\abox}[5]{
% 1 = link to box
% 2 = line style
% 3 = right width 
% 4 = left width
% 5 = height
\boxnodes{#1}{#3}{#4}{#5}
\boxline{#1}{#2line}}

\newcommand{\boxauxnodesright}[3]{
% 1 = link to box
% 2 = distance of the right conclusion from SE corner
% 3 = distance between the two conclusions
\node at (#1so.center)[right = #2, nospace](#1auxk){};
\node at (#1auxk.center)[right= #3, nospace](#1aux1){};
}

\newcommand{\boxauxnodeslab}[4]{
% 1 = link to box
% 2 = left label,
% 3 = right label,
% 4 = height delta
\node at (#1auxk.center)[below = #4,etic](#1auxktarget){\tiny #3};
\node at (#1aux1.center)[below = #4,etic](#1aux1target){\tiny #2};

\draw[nopol] (#1auxk) to (#1auxktarget);
\draw[nopol] (#1aux1) to (#1aux1target);
}

%%%%%%%%%%%%%%%%
%% CONTRACTION AND WHYNOT TREES
%%%%%%%%%%%%%%%

%%%%%%%%%%%%%%%%
%% BOX BORDERS
%%%%%%%%%%%%%%%

%%%%%%%%
%% MULTIPLICATIVES
%%%%%%%%

%%%%%%%%
%% EXPONENTIALS
%%%%%%%%

%%%%%%%%
%% MULTIPLICATIVES
%%%%%%%%

%
%%%%%%%%%
%%% EXPONENTIALS
%%%%%%%%%
%
%\newcommand{\lder}[3]{
%% parameters
%% 1 = principal port	
%% 2 = auxiliary port	
%% 3 = name of the link
%\lunary{#1}{#2}{#3}{$\der$}{nopol}
%}
%
%\newcommand{\lbang}[3]{
%% parameters
%% 1 = principal port	
%% 2 = auxiliary port	
%% 3 = name of the link
%\lunary{#1}{#2}{#3}{$\bang$}{nopol}
%}
%
%\newcommand{\lweak}[2]{
%% 1 = one principal port	
%% 2 = name of the link
%\lzeroary{#1}{#2}{$\weak$}{1.3*\hstalt}{nopol}}
%
%\newcommand{\lcontr}[4]{
%% 1 = principal port	
%% 2 = left auxiliary port	
%% 3 = right auxiliary port	
%% 4 = name of the link
%\lbinary{#1}{#2}{#3}{#4}{$\contr$}{nopol}}
%
%

%SEPARATORI

\newcommand{\gdots}[3]{
% 1 = left node
% 2 = right node
% 3 = eventual parameters
\node at ($(#1)!.5!(#2)$) [#3]{\tiny $\ldots$};
}

\setboolean{techr}{true}
% !TEX root = main.tex
%\usepackage{txfonts}
\usepackage{appendix}
\setcounter{toappendix}{2}
\capturecounter{thm}
\capturecounter{figure}

\usepackage{microtype}
\usepackage{version}

\renewcommand{\l}{\lambda}

\renewcommand{\net}{G}

\renewcommand{\mvar}{m}
\renewcommand{\mvartwo}{n}
\renewcommand{\mvarthree}{o}

\newcommand{\proper}{proper\xspace}

\renewcommand{\ctxtwo}{\ctx'}
\renewcommand{\ctxthree}{\ctx''}
\renewcommand{\ctxfour}{\ctx'''}

\renewcommand{\mctx}{M}

\renewcommand{\aform}{X_{\msym}}

\renewcommand{\refprop}[1]{Prop.~\ref{prop:#1}}

\renewcommand{\tobw}{\tow}
\renewcommand{\bw}{\wsym}

\renewcommand{\refsect}[1]{Section~\ref{sect:#1}}

\newcommand{\lhs}{\symfont{lhs}}
\newcommand{\rtolhs}{\rootRew{\lhs}}
\newcommand{\tolhs}{\Rew{\lhs}}
\renewcommand{\tolh}{\Rew{\symfont{lh}}}

%\input{\macrospath/ben-macros-diagrams}
%\usepackage{nameref}
%\makeatletter
%\newcommand{\@chapapp}{\relax}%
%\makeatother
\setboolean{withproofs}{true}

\begin{document}

\title[Exponentials as Substitutions]{Exponentials as Substitutions,\texorpdfstring{\\}{} and the Cost of Cut Elimination in Linear Logic}
\author[B. Accattoli]{Beniamino Accattoli\lmcsorcid{0000-0003-4944-9944}}

% affiliation 1 (automatically numbered a)
\address{Inria \& LIX, \'Ecole Polytechnique, France}	%optional
% write emails for all authors having that affiliation
\email{beniamino.accattoli@inria.fr}  %optional

%% etc.

%% required for running head on odd and even pages, use suitable
%% abbreviations in case of long titles and many authors:

%%%%%%%%%%%%%%%%%%%%%%%%%%%%%%%%%%%%%%%%%%%%%%%%%%%%%%%%%%%%%%%%%%%%%%%%%%%

%% the abstract has to PRECEDE the command \maketitle:
%% be sure not to issue the \maketitle command twice!

\begin{abstract}
 % !TeX spellcheck = en_US
% !TEX root = main.tex
%%%%%%%%%%%%%%%%%%%%%%
%%%%%%%%%%%%%%%%%%%%%%
%%%%%%%%%%%%%%%%%%%%%%
 \noindent This paper introduces the \emph{exponential substitution calculus} (ESC), a new presentation of cut elimination for IMELL based on proof terms and building on the idea that exponentials can be seen as explicit substitutions. The idea in itself is not new, but here it is pushed to a new level, inspired by Accattoli and Kesner's linear substitution calculus (LSC). 

One of the key properties of the LSC is that it naturally models the sub-term property of abstract machines, which is the key ingredient for the study of reasonable time cost models for the $\lambda$-calculus. The new ESC is then used to design a cut elimination strategy with the sub-term property, providing the first polynomial  cost model for cut elimination with unconstrained exponentials. 

For the ESC, we also prove untyped confluence and typed strong normalization, showing that it is an alternative to proof nets for an advanced study of cut elimination.

\end{abstract}

\maketitle

% !TeX spellcheck = en_US
% !TEX root = main.tex
%%%%%%%%%%%%%%%%%%%%%%
%%%%%%%%%%%%%%%%%%%%%%
%%%%%%%%%%%%%%%%%%%%%%
\section{Introduction}
\label{sect:intro}
Two key aspects of linear logic are  its  resource-awareness and that it  models the evaluation of $\l$-terms via cut elimination, even of untyped $\l$-terms, if recursive formulas are allowed. One would then expect that, given a $\l$-term $\tm$ represented as a linear proof $\pi_{\tm}$, the length of cut elimination in $\pi_{\tm}$ could provide estimates about the time complexity of $\tm$. Surprisingly, this is not (yet) the case. 

\paragraph{Linear Logic and Complexity Classes} Linear logic is often used in \emph{implicit computational complexity}, a field that aims at characterizing complexity classes with no explicit references to machine models. In this line of work, classes are characterized by seeing program execution as cut elimination in fragments of, and variations on, linear logic, usually obtained by constraining the exponential connectives in some way. Some representative papers are \cite{DBLP:journals/tcs/GirardSS92,DBLP:journals/iandc/Girard98,DBLP:journals/iandc/DanosJ03,DBLP:journals/tcs/Lafont04,DBLP:conf/ictcs/MairsonT03}. %The simplest instance is  \emph{elementary linear logic} \cite{DBLP:journals/iandc/Girard98,DBLP:journals/iandc/DanosJ03}, which is obtained from linear logic by removing the exponential principles of \emph{dereliction} and \emph{digging}, and that characterizes the elementary time class (cut elimination takes at most elementary time and every elementary time Turing machine can be represented). 

A first limitation of these results is that the bound is given on \emph{whole} fragments of linear logic, without saying how to compute the cost of a fixed proof, which might be much lower than the bound for the fragment. A second limitation is that, as soon as one steps out of the fragment that characterizes the class, nothing is known. %And this does not even require stepping out of the class: for instance, there are many proofs for which cut elimination has elementary cost and that do not belong to elementary linear logic because they use dereliction or digging. 
%At present, for a proof of MELL---which is the smallest fragment where the exponentials are  \emph{unconstrained}---there is no abstract way to extimate the cost of cut elimination. 

\paragraph{The Cut Elimination Clock} The underlying problem is that there are no known complexity measures for cut elimination in linear logic. The use of linear logic for implicit complexity is so implicit that, not only it avoids machine models, it also never states what is the underlying logical unit for time, that is, what are the \emph{ticks} of the cut elimination \emph{clock}. 

Formally, what is missing is a \emph{polynomial time cost model} for cut elimination. That is, a cut elimination strategy of which \emph{the number of steps} is a bound, up to a polynomial overhead, to the time complexity of implementing cut elimination according to that strategy. But be careful, it is not the number of steps that has to be polynomial: it would not be possible, linear logic cut elimination can take way more than polynomially many steps (in the size of the initial proof). It is the cost of implementing the sequence of steps (on a random access machine, or any other reasonable framework) that must be polynomial---ideally linear---in their number (which can be whatever) and in the size of the proof. 
Then, the number of steps taken by the strategy on a proof $\pi$ provides a reliable measure for the time cost of $\pi$. Roughly, the steps of such a strategy become the ticks of a cut elimination clock.

One might wonder why not any strategy would do. The point is that cut elimination strategies often suffer of \emph{size explosion}: one can build families of proofs $\{\pi_n\}_{n\in\nat}$ on which the strategy iterates duplications in malicious ways, leading to a growth of the proof size which is exponential in the number $k_n$ of cut elimination steps taken by the strategy on $\pi_n$. Therefore, $k_n$ cannot be taken as the time cost for implementing the cut elimination sequence (not even up to a polynomial), if the sequence has to be implemented as it is. Intuitively, taking such strategies as clocks would correspond to having irregular ticks of non-uniform length, some of which take a very long time. In \refsect{sub-term-size-explosion}, we shall show that cut elimination \emph{by levels},  the  strategy of reference in linear logic (especially for implicit computational complexity), suffers from size explosion.

The aim of this paper is to provide a clock of which the ticks are regular enough to serve as a reliable time measure for unconstrained exponentials. That is, we shall provide a new strategy of which the number of steps is a polynomial time cost model. In our pursuit, we are inspired by developments in the study of cost models for the $\l$-calculus and in the modern theory of $\l$-calculi with explicit substitutions. 

\paragraph{The Linear Substitution Calculus} In the last decade, the study of \emph{reasonable} time cost models, that is, of polynomial cost models that \emph{additionally} are equivalent to the time cost model of Turing machines, has advanced considerably in the sister field of $\l$-calculus, starting with Accattoli and Dal Lago's result about the number of steps of the leftmost strategy \cite{DBLP:journals/corr/AccattoliL16}. The advances have been enabled and developed over Accattoli and Kesner's \emph{linear substitution calculus} (shortened to LSC)  \cite{DBLP:journals/entcs/Milner07,DBLP:conf/rta/Accattoli12,DBLP:conf/popl/AccattoliBKL14}, which is a neat and compact refinement of the $\l$-calculus that  probably can be considered the answer to the quest for the canonical $\l$-calculus with explicit substitutions (shortened to ESs). The LSC refines ESs with ideas from both proof nets, namely using \emph{contextual} rewriting rules on terms---also called  \emph{at a distance}---to avoid commuting constructors (taken from Accattoli and Kesner  \cite{DBLP:conf/csl/AccattoliK10}), and the $\pi$-calculus/bigraphs, in modeling duplication as replication in $\pi$ (following Milner \cite{DBLP:journals/entcs/Milner07}). 

\paragraph{The Sub-Term Property} The relevance of the LSC for cost models is due to its natural modeling of strategies having the \emph{sub-term property} often found in abstract machines: 
\begin{center}
\emph{All sub-terms duplicated  or erased along an evaluation sequence from $\tm$ are sub-terms of $\tm$}.
\end{center}
When a strategy $\Rew{\symfont{st}}$ has the sub-term property, the cost of implementing a sequence $\tm \Rew{\symfont{st}}^{k} \tmtwo$  is polynomial, and in general \emph{linear}, in $k$ and in the size $\size\tm$ of $\tm$. Therefore, the sub-term property implies that the number of steps is a polynomial cost model for time. Intuitively, it states that the steps of the strategy are the \emph{regular ticks} of a reliable clock. It represents, for the study of cost models, what the sub-formula property is for proof-search, or the cut-elimination theorem for sequent calculi.

Typically, important strategies of the $\l$-calculus such as weak head, head, or leftmost-outermost reduction do not have the sub-term property (actually, we show in \refsect{sub-term-size-explosion} that no strategy in the $\l$-calculus has the property), while their analogous in the LSC does.  One of the key points is that the sub-term property requires \emph{micro-steps} evaluation, that is, performing one variable replacement at a time (as in the LSC), rather than small-step evaluation, that is, using rules resting on meta-level substitution such as $\beta$. The sub-term property and the related degeneracy of size explosion are discussed in detail in \refsect{sub-term-size-explosion}.

\paragraph{Linear Logic and the Sub-Term Property} In linear logic, the problem is not just that there are no known cut elimination strategies providing polynomial cost models. What is worse, is that there are no known micro-step strategies with the sub-term property (as we discuss below), despite the clear linear logical flavor of the property. Therefore, the study of cost models is, at present, simply \emph{out of reach}.

How can one recover the sub-term property in linear logic? This is the challenge addressed by this paper. There are two natural possible routes. One is working inside linear logic as it is usually presented, and try to recover the sub-term property. This would inevitably mean working with proof nets, as cut elimination in the sequent calculus requires to deal with too many commutative cases. 
Another one is to develop an alternative, \emph{commutation-free} presentation of linear logic akin to the LSC, and use it to design a strategy with the sub-term property. We choose the second option, for three reasons. 
\begin{enumerate}
\item \emph{Terms are more easily manageable}: graphical syntaxes are too hard to manage for the long and delicate proofs typical of the study of cost models. 
\item \emph{Better rewriting}: the rewriting theory of the LSC is better behaved than the one of proof nets. 
\item \emph{Novelty}: we aim at a fresh look at linear logic, importing ideas from a sister field.
\end{enumerate}

\paragraph{This Paper} We provide three main contributions:
\begin{enumerate}
\item \emph{Design}: we introduce the \emph{exponential substitution calculus} (ESC), which is a new presentation of linear logic cut elimination based over the theory of ESs and the LSC and free from commutative cases. It is a design study, done in a principled way. It is also a \emph{stress test}, as linear logic has many more constructors and rewriting rules than the LSC. We shall see that, in order to guarantee some expected properties of ESs, one is forced to turn to \emph{intuitionistic} linear logic.

\item \emph{Sub-term strategy}: we define a new cut elimination strategy for the ESC and prove that it has the sub-term property. Therefore, we solve the issue mentioned at the beginning of the paper, providing the first polynomial time cost model for unconstrained exponentials.

\item \emph{Foundations}: we prove various key properties of the ESC, among which untyped confluence and typed strong normalization, thus providing solid foundations for our presentation. The non-trivial proofs of these properties are developed from scratch, using elegant proof techniques developed for the LSC.
\end{enumerate}
The next section provides overviews of these contributions.

\paragraph{Related Work} To our knowledge, there are no works in the literature studying cost models for linear logic. Many term and process calculi for or related to linear logic have been proposed, for instance \cite{DBLP:conf/lics/LincolnM92,DBLP:journals/tcs/Abramsky93,DBLP:conf/tlca/BentonBPH93,DBLP:conf/mfps/Wadler93,DBLP:journals/tcs/BellinS94,DBLP:conf/lics/BentonW96,DBLP:journals/sLogica/RoccaR97,DBLP:journals/tcs/MaraistOTW99,DBLP:conf/rta/Simpson05,DBLP:conf/rta/OhtaH06,DBLP:conf/concur/CairesP10,DBLP:conf/icfp/Wadler12,DBLP:conf/popl/CurienFM16,DBLP:conf/ppdp/EhrhardG16}. There are also proposals for $\l$-calculi with ESs with linear features such as  \cite{DBLP:journals/igpl/GhaniPR00,DBLP:journals/iandc/KesnerL07,DBLP:journals/logcom/FernandezS14}. None of these calculi employs rewriting rules at a distance as we do here. Mazza uses a natural deduction linear calculus at a distance \cite{Mazza:Hab}, and Kesner develops a $\l$-calculus with ESs reflecting proof nets cut elimination \cite{DBLP:journals/pacmpl/Kesner22} but only for the fragment representing the $\l$-calculus. Various authors consider terms for proof \emph{structures} \cite{DBLP:journals/tcs/Abramsky93,DBLP:conf/ppdp/FernandezM99,DBLP:journals/entcs/MackieS08,DBLP:conf/csl/Ehrhard14,DBLP:journals/lmcs/ChouquetA21} which then need correctness criteria, not required here.

The sub-term property is a folklore property first called as such by Accattoli and Dal Lago  \cite{DBLP:conf/rta/AccattoliL12}, who also show a surprising link with the standardization theorem in \cite{DBLP:journals/corr/AccattoliL16}.

\paragraph{Journal Version and Proofs} This paper is the journal version of the LICS 2022 conference paper with the same title. It adds explanations, in particular \refsect{sub-term-size-explosion} is new, and most proofs (together with the intermediary lemmas) that were omitted from the conference version. A few proofs are particularly long and tedious so they are still omitted or partially omitted, but can be found on Arxiv in the technical report \cite{DBLP:journals/corr/abs-lics} associated to the conference paper, which is accessible as version one ("v1", see the bibliography entry for the link) of the present paper on Arxiv (which is instead "v3").

% !TeX spellcheck = en_US
% !TEX root = main.tex
%%%%%%%%%%%%%%%%%%%%%%%%
%%%%%%%%%%%%%%%%%%%%%%%%
\section{Overview of the Contributions}
%%%%%%%%%%%%%%%%%%%%%%%%
%%%%%%%%%%%%%%%%%%%%%%%%
\subsection{Contribution 1: The Exponential Substitution Calculus}
By allowing duplication and erasure of
sub-proofs, the exponentials can be seen as a substitution device: it is the core of the simulation of the $\l$-calculus into linear logic due to Girard \cite{DBLP:journals/tcs/Girard87}. 
Historically, ESs became popular with a calculus by Abadi et al. \cite{DBLP:journals/jfp/AbadiCCL91}, later shown defective at the rewriting level by \mellies \cite{DBLP:conf/tlca/Mellie95}. Linear logic and the exponentials were thus considered as a more solid formalism to borrow from. Di Cosmo and Kesner were the first ones to do so \cite{DBLP:conf/lics/CosmoK97}, seeing ESs as exponential cuts.

Along the years, however, the theory of ESs made progresses of its own, and here   we reverse the transfer. The idea is enforcing and pushing to the extreme the slogan:
\begin{center}
\begin{tabular}{ccccc}
Exponentials &=& Explicit Substitutions
\end{tabular}
\end{center}
To continue, we need to recall the very basics of ESs. 

\paragraph{Basics of Explicit Substitutions} The idea is to extend the $\l$-calculus with a new term constructor $\tm\esub\var\tmtwo$
(which is just a compact notation for $\letin\var\tmtwo\tm$, but with no fixed order of evaluation between $\tmtwo$ and $\tm$) denoting a delayed or \emph{explicit} substitution, and decomposing the $\beta$-rule:
\begin{center}$
\begin{array}{ccccc}
(\la\var\tm)\tmtwo &\tob &\tm\isub\var\tmtwo
\end{array}
$\end{center}
where $\tm\isub\var\tmtwo$ is meta-level substitution, into two rules ($\symfont{e}$xplicit $\beta$ and $\symfont{s}$ubstitution):
\begin{center}$
\begin{array}{ccccc}
(\la\var\tm)\tmtwo &\Rew{\symfont{e}\beta}& \tm\esub\var\tmtwo &\tos& \tm\isub\var\tmtwo
\end{array}
$\end{center}
Now,  rule $\tos$ is usually further decomposed into various \emph{micro-step} rules. There are many possible sets of such substitution rules. What they all have in common, even the defective ones mentioned above, is that, when considered \emph{separately} from rule $\Rew{\symfont{B}}$, they are strongly normalizing even without types.

% !TEX root = main.tex
\begin{figure*}
$\begin{array}{cccccccc}
\pi  \defeq &
\begin{tikzpicture}[ocenter]
\node at (0,0) [etic](axRightConclusion){};
\node at (axRightConclusion.center) [etic, left = 1.4*\stlar](der){\scriptsize $\der$};
\node at \med{axRightConclusion}{der} [etic, above = \hstalt ](axSym){\scriptsize $\ax$};
\node at (axSym) [etic, below = 1.4*\stalt](contr){\scriptsize $\csym$};
\draw[nopol, out=0, in=45](axSym)to(contr);
\draw[nopol, out=180, in=90](axSym)to(der);
\draw[nopol, out = -90, in=135](der)to(contr);

\node at (contr) [nospace, below = .7*\stalt](bang){};
\abox{bang}{exbox}{18pt}{18pt}{56pt}
\node at (bang.center)[etic] (bangsym){$!$};
\draw[nopol](contr)to(bangsym);

\node at (axRightConclusion) [etic, below left = .7*\stalt and 2.6*\stlar](axRightConclusion2){};
\node at (axRightConclusion2.center) [etic, left = 1.4*\stlar](der2){\scriptsize $\der$};
\node at \med{axRightConclusion2}{der2} [etic, above = \hstalt ](axSym2){\scriptsize $\ax$};
\node at (axSym2) [etic, below = 1.4*\stalt](contr2){\scriptsize $\csym$};
\draw[nopol, out=0, in=45](axSym2)to(contr2);
\draw[nopol, out=180, in=90](axSym2)to(der2);
\draw[nopol, out = -90, in=135](der2)to(contr2);

\node at \med{bangsym}{contr2} [etic, below = \hstalt ](cut){\scriptsize $\cut$};
\draw[nopol, out=-90, in=0](bangsym)to(cut);
\draw[nopol, out = -90, in=180](contr2)to(cut);
\end{tikzpicture}
&
\to
&
\begin{tikzpicture}[ocenter]
\node at (0,0) [etic](axRightConclusion){};
\node at (axRightConclusion.center) [etic, left = 1.4*\stlar](der){\scriptsize $\der$};
\node at \med{axRightConclusion}{der} [etic, above = \hstalt ](axSym){\scriptsize $\ax$};
\node at (axSym) [etic, below = 1.4*\stalt](contr){\scriptsize $\csym$};
\draw[nopol, out=0, in=45](axSym)to(contr);
\draw[nopol, out=180, in=90](axSym)to(der);
\draw[nopol, out = -90, in=135](der)to(contr);

\node at (contr) [nospace, below = .7*\stalt](bang){};
\abox{bang}{exbox}{18pt}{18pt}{56pt}
\node at (bang.center)[etic] (bangsym){$!$};
\draw[nopol](contr)to(bangsym);

\node at (axRightConclusion) [etic, below left = 2*\stalt and 2.6*\stlar](axRightConclusion2){};
\node at (axRightConclusion2.center) [etic, left = 1.4*\stlar](der2){\scriptsize $\der$};
\node at \med{axRightConclusion2}{der2} [etic, above = \hstalt ](axSym2){\scriptsize $\ax$};
\draw[nopol, out=180, in=90](axSym2)to(der2);

\node at \med{bangsym}{axRightConclusion2} [etic, below = .5*\hstalt ](cut){\scriptsize $\cut$};
\draw[nopol, out=-90, in=0](bangsym)to(cut);
\draw[nopol, out = 0, in=180, looseness=1.5](axSym2)to(cut);

\node at (axRightConclusion) [etic, below right = \hstalt and 2.8*\stlar](axRightConclusion3){};
\node at (axRightConclusion3.center) [etic, left = 1.4*\stlar](der3){\scriptsize $\der$};
\node at \med{axRightConclusion3}{der3} [etic, above = \hstalt ](axSym3){\scriptsize $\ax$};
\node at (axSym3) [etic, below = 1.4*\stalt](contr3){\scriptsize $\csym$};
\draw[nopol, out=0, in=45](axSym3)to(contr3);
\draw[nopol, out=180, in=90](axSym3)to(der3);
\draw[nopol, out = -90, in=135](der3)to(contr3);

\node at (contr3) [nospace, below = .7*\stalt](bang3){};
\abox{bang3}{exbox}{18pt}{18pt}{56pt}
\node at (bang3.center)[etic] (bangsym3){$!$};
\draw[nopol](contr3)to(bangsym3);

\node at \med{bangsym3}{der2} [etic, below = 1.4*\hstalt ](cut2){\scriptsize $\cut$};
\draw[nopol, out=-60, in=180](der2)to(cut2);
\draw[nopol, out=-120, in=0](bangsym3)to(cut2);
\end{tikzpicture}
&
\to
&
\begin{tikzpicture}[ocenter]
\node at (0,0) [etic](axRightConclusion){};
\node at (axRightConclusion.center) [etic, left = 1.4*\stlar](der){\scriptsize $\der$};
\node at \med{axRightConclusion}{der} [etic, above = \hstalt ](axSym){\scriptsize $\ax$};
\node at (axSym) [etic, below = 1.4*\stalt](contr){\scriptsize $\csym$};
\draw[nopol, out=0, in=45](axSym)to(contr);
\draw[nopol, out=180, in=90](axSym)to(der);
\draw[nopol, out = -90, in=135](der)to(contr);

\node at (contr) [nospace, below = .7*\stalt](bang){};
\abox{bang}{exbox}{18pt}{18pt}{56pt}
\node at (bang.center)[etic] (bangsym){$!$};
\draw[nopol](contr)to(bangsym);

\node at (axRightConclusion) [etic, below left = 2*\stalt and 2.6*\stlar](axRightConclusion2){};
\node at (axRightConclusion2.center) [nospace, left = 1.4*\stlar](der2){};
\node at \med{axRightConclusion2}{der2} [etic, above = \hstalt ](axSym2){\scriptsize $\ax$};
\node at (der2.center) [nospace, above = 1 pt](der2ghost){};

\node at \med{bangsym}{axRightConclusion2} [etic, below = .5*\hstalt ](cut){\scriptsize $\cut$};
\draw[nopol, out=-90, in=0](bangsym)to(cut);
\draw[nopol, out = 0, in=180, looseness=1.5](axSym2)to(cut);

\node at (axRightConclusion) [etic, below right = \stalt and 2.6*\stlar](axRightConclusion3){};
\node at (axRightConclusion3.center) [etic, left = 1.4*\stlar](der3){\scriptsize $\der$};
\node at \med{axRightConclusion3}{der3} [etic, above = \hstalt ](axSym3){\scriptsize $\ax$};
\node at (axSym3) [etic, below = 1.4*\stalt](contr3){\scriptsize $\csym$};
\draw[nopol, out=0, in=45](axSym3)to(contr3);
\draw[nopol, out=180, in=90](axSym3)to(der3);
\draw[nopol, out = -90, in=135](der3)to(contr3);

\node at \med{contr3}{der2} [etic, below = 1.4*\hstalt ](cut2){\scriptsize $\cut$};
\draw[draw=gray, line width=0.18ex,out=180, in=90](axSym2)to(der2);
\draw[->, draw=gray, line width=0.18ex, out=-90, in=180](der2ghost)to(cut2);
\draw[nopol, out=-120, in=0](contr3)to(cut2);
\end{tikzpicture}
&\to\pi
\end{array}$
\caption{Untyped proof net of classical LL reducing to itself in 3 micro steps.}
\label{fig:classical-diverging}
\end{figure*}
\paragraph{Design Principles} Our \emph{exponentials as substitutions} design of the ESC is based on the following principles.

\emph{Principle 1: proofs as typing derivations}. Calculi with ESs are studied as \emph{untyped} calculi. Then we want the ESC to be an untyped calculus with good properties by itself, and see proofs of linear logic as typing derivations for it.

\emph{Principle 2: ESs terminates/intuitionism}. As we recalled above, the rules that manage ESs are always strongly normalizing without types. It is possible to define also \emph{untyped} linear logic proofs, as done \eg in de Carvalho et al. \cite{DBLP:journals/tcs/CarvalhoPF11}, obtaining a notion of untyped proof net. In the case of classical linear logic \emph{à la} Girard (with involutive negation), the exponentials of such proof nets can diverge by themselves, as in \reffig{classical-diverging}. In such a setting, then, exponentials are \emph{not} as ESs. We then switch to \emph{intuitionistic} linear logic, and prove that therein untyped exponentials are strongly normalizing (\refthm{local-termination}, p. \pageref{thm:local-termination}). To our knowledge, this is a new result and the first time that such a discrepancy between the classical and intuitionistic case is pointed out. The proof of the theorem requires long calculations but it is otherwise simple. Termination is obtained via a measure defined by induction over terms. Essentially, the result states that in the intuitionistic setting \emph{exponentials are as ESs}. It is one of the main contributions of the paper.

\emph{Principle 3: micro and small steps together}. The sub-term property requires a micro-step operational semantics. But studying micro-step ESs constantly requires to refer to meta-level substitution, which is small-step. For linear logic, Girard originally considered micro-step cut elimination \cite{DBLP:journals/tcs/Girard87}, but soon afterwards Regnier introduced a small-step variant, to study the $\l$-calculus \cite{Reg:Thesis:92}, and both are frequently used in the literature. In the ESC, we want small and micro steps to \emph{co-exist}, so as to provide a comprehensive framework. %For instance, we shall prove strong normalization at the small-step level and then transfer it to the micro one, but for confluence we shall go the other way around.

%\emph{Principle 4: the sequent calculus as is}. At the logical level, we want a standard sequent calculus which should not be tweaked in any way, only decorated with terms. On the one hand, to show that the LSC technology is flexible enough. On the other hand, for the results to be as solid as possible. In particular, there is the ambition of showing that our framework is an alternative to proof nets, usually considered \emph{the} tool for studying cut elimination in linear logic.

\paragraph{Left Splitting} The design of the ESC systematically exploits a simple fact about IMELL proofs: they can be seen as ending on a sequence of left rules following a right rule or an axiom. Representing a proof as a term $\tm$, one can thus uniquely split it as $\tm = \lctxp\val$, that is, a sub-proof ending on a right rule or an axiom, what we shall call a \emph{value} $\val$, and a possibly empty sequence of left rules $\lctx$ from the last sequent of $\val$ to the last sequent of $\tm$. Such a \emph{left splitting} is a basic property of sequent proofs that is not related to polarity or focussing---it is in fact simpler---and yet plays a crucial role in our rewriting rules at a distance for IMELL.

\paragraph{Additives} We do not consider the additive connectives, and so we deal with IMELL. The reason is that they are both trivial and challenging, depending on the approach. If one adopts \emph{additive slices} \cite{DBLP:journals/tcs/Girard87,LTdF04}, then our results lift smoothly. Without slices, instead, it is unclear how to combine additives and LSC-style micro steps while retaining all the good properties of the ESC.

\paragraph{Syntactical Variants} A number of alternative syntaxes could have been adopted, typically Curien \& Herbelin $\lambdamutcalc$ calculi \cite{DBLP:conf/icfp/CurienH00,DBLP:conf/popl/CurienFM16}, Benton's linear-non-linear approach \cite{DBLP:conf/csl/Benton94}, Pfenning \& Caires' processes \cite{DBLP:conf/concur/CairesP10}, and, of course, proof nets or natural deduction. We prefer avoiding them to show that their additional ingredients (the distinguished conclusion on the left of sequents \cite{DBLP:conf/icfp/CurienH00,DBLP:conf/popl/CurienFM16}, the linear-non-linear separation \cite{DBLP:conf/csl/Benton94}, and the graphical syntax) are not required for studying cost models for linear logic, while we avoided processes because their evaluation does not compute cut-free proofs, since they do not evaluate under prefixes. Finally, we prefer the sequent calculus to natural deduction because it is more commonly used for presenting linear logic, and because in this way we provide a novel use of the LSC technology, given that the LSC is based on natural deduction. Our study can however be adapted to any of these settings. %Perhaps also to classical linear logic.

\subsection{Contribution 2: the Sub-Term Strategy}
The proof nets strategy of reference is the notion of \emph{least level} (or \emph{by levels}) strategy $\Rew{ll}$ (introduced---we believe---by Girard \cite{DBLP:journals/iandc/Girard98} and studied for instance by de Carvalho et al. \cite{DBLP:journals/tcs/CarvalhoPF11}), which reduces cuts of minimal level, where the level is the number of $\bang$-boxes surrounding a cut. 

Unfortunately, $\Rew{ll}$ is not a good candidate for a linear logic clock. Consider the following local confluence diagram made out of micro steps at level $0$ (the used rewriting rules are in \reffig{pn-rules}), which is an important sub-strategy $\Rew{l_0}$ of $\Rew{ll}$:
\begin{figure}
% !TEX root = main.tex
\begin{center}
\hspace*{-15pt}
\begin{tabular}{c|c}
\textsc{Promotion / contraction} & \textsc{Promotion / auxiliary port}
\\
% !TEX root = main.tex
\begin{tabular}{ccc}
\scalebox{1.1}{
\begin{tikzpicture}[ocenter]
 %contraction
\node at (0,0) [etic](parlink){$\contr$};
\node at (parlink.center) [above right=\hstalt and \hstlar, etic](parauxr){\hspace*{5pt}\tiny$\whyn\form^\bot$}; % inserted space so the letters are not cut off
\node at (parlink.center) [above left=\hstalt and \hstlar, etic](parauxl){\tiny$\whyn\form^\bot$};
\lBinEdgesAbove{parauxl}{parauxr}{parlink}{nopol}{nopol}
\node at (parlink.center) [below left=1.3*\altax and \stlar, etic](cutlink){\tiny$\cut$};
\node at (cutlink.center) [above left=1.3*\altax and \stlar, etic](poslink){};
\node at (parlink.center) [nospace, below right=6pt and -1pt](label){\tiny$\whyn\form^\bot$};
\node at (poslink.center) [nospace, below right=3pt and 3pt](label){\tiny$\bang\form$};
\boxnodes{poslink}{0.7*\stboxrwaux}{2*\stboxlw/3}{0.6*\stboxh}
\boxline{poslink}{exboxline}
\node at (poslink.center)[etic](possym){$!$};
%\boxlabel{poslink}{{ $\net$}}
\boxauxnodesright{poslink}{\stboxlw/2}{3*\stlar/4}
\boxauxnodeslab{poslink}{$\whyn\formtwo_1$}{$\whyn\formtwo_k$\hspace*{5pt}}{12pt}
\gdots{poslinkaux1}{poslinkauxk}{below =.5pt}
\draw[nopol, in=0, out=-90] (parlink) to (cutlink);
\draw[nopol, in=180, out=-90] (possym) to (cutlink);
\end{tikzpicture}}
&
\hspace*{-5pt}$\rightarrow_{\contr}$\hspace*{-5pt}
&
%&
%\boxinit
\scalebox{1.1}{
\begin{tikzpicture}[ocenter]
 %contraction
  \node [etic](contrpaxl){\hspace*{5pt}\tiny$\whyn\form^\bot$}; 
 \node at (contrpaxl.center) [nospace, above=15pt](spacingup){};
 \node at (contrpaxl.center) [left= \stlar, etic](contrpaxr){\tiny$\whyn\form^\bot$};
%posL
\node at (contrpaxr.center) [left= \stlarb, etic](pos){};
%\posboxlinkmfake{pospal}{pos}{}
\boxnodes{pos}{0.6*\stboxrwaux}{2*\stboxlw/3}{0.6*\stboxh}
\boxline{pos}{exboxline}
\node at (pos.center)[etic](possym){$!$};
\boxauxnodesright{pos}{\stboxlw/2}{3*\stlar/4}
%\boxlabel{pos}{{ $\net$}}
\gdots{posaux1}{posauxk}{below=1pt}
\node at \med{contrpaxr}{pos}[below=0.8*\hstalt, etic](cutlink){\tiny$\cut$};
\draw[nopol, in=0, out=-90] (contrpaxr) to (cutlink);
\draw[nopol, in=180, out=-90]  (possym) to (cutlink);
%\lcutnpdistfixp{contrpaxr}{pospal}{cut}{7pt};
%bangR
\node at (pos.center) [left= 2.2*\stlar, etic](posr){};
%\posboxlinkmfake{posrpal}{posr}{}
\boxnodes{posr}{0.6*\stboxrwaux}{2*\stboxlw/3}{0.6*\stboxh}
\boxline{posr}{exboxline}
\node at (posr.center)[etic](posrsym){$!$};
\boxauxnodesright{posr}{\stboxlw/2}{3*\stlar/4}
%\boxlabel{posr}{{ $\net$}}
\gdots{posraux1}{posrauxk}{below=1pt}
\node at (pos.center) [nospace, below right=2pt and 3pt](label){\tiny$\bang\form$};
\node at (posr.center) [nospace, below right=2pt and 3pt](label){\tiny$\bang\form$};
%contractions
\node at \med{posraux1}{posaux1}[etic, below right=\stalt and 2pt] (contrl){$\contr$};
\node at (contrl.center)[below =\hstalt, etic](contrlpal){\tiny$\whyn\formtwo_1$};

\draw[nopol, out =-75, in =135](posraux1)to(contrl);
\draw[nopol, out =-105, in =45](posaux1)to(contrl);
\draw[nopol](contrl)to(contrlpal);

\node at \med{posrauxk}{posauxk}[etic, below left=\stalt and 2pt](contrr){$\contr$};
\node at (contrr.center)[below =\hstalt, etic] (contrrpal){\tiny$\whyn\formtwo_k$};

\draw[nopol, out =-75, in =135](posrauxk)to(contrr);
\draw[nopol, out =-105, in =45](posauxk)to(contrr);
\draw[nopol](contrr)to(contrrpal);
\gdots{contrlpal}{contrrpal}{above =1pt}
% \node at (contrlpal.center) [nospace, left=3pt](label){};
% \node at (contrrpal.center) [nospace, right=3pt](label){\tiny$\whyn\formtwo_k$};
%CUT
\node at (cutlink.center)[etic, below=.5*\hstalt](cut2){\tiny$\cut$};
%\lcutnpdistfixp{contrpaxl}{bangrpal}{cut2}{16pt};
\draw[nopol, out=-90, in=0](contrpaxl) to(cut2);
\draw[nopol, in=180, out=-75](posrsym)to (cut2);
\end{tikzpicture}}
\end{tabular}\hspace*{-8pt}
&
% !TEX root = main.tex
\begin{tabular}{ccc}
%\boxinit
\scalebox{1.1}{
\begin{tikzpicture}[ocenter]%Hole
%bang
\node at (0,0) [ etic](pospal){};

%box
\abox{pospal}{exbox}{0.7*\stboxrwaux}{\stboxlw-2pt}{\stboxh-3pt}
\node at (pospal) [ etic](possym){$!$};
\node at (possym) [below=1.3*\pendingdist, etic](ghostsource){\tiny $!\form$};

\draw[nopolrev](ghostsource)to(possym);

%net g
\node at (pospal.center)[above left=0.4*\stalt and 0.8*\stlar, net](netg){$\net$};

%auxiliary ports
\node at (ghostsource-|netg)[etic](delta_node){\tiny$\whyn\Delta$};
\draw[nopolgen](netg)to(delta_node);

\node at (delta_node.center)[left= 1.3*\stlar, etic](cutlink){\tiny $\cut$};

\draw[nopol, in=0, out=-150](netg)to(cutlink);
\draw[nopol, in =90, out=0](netg)to(possym);

%pos r
\node at (pospal-|cutlink) [left= 0.9*\ilar, etic](posr){};
\boxnodes{posr}{\stlar}{\stboxlw-1pt}{\stboxh-9pt}
\boxline{posr}{exboxline}
\node at (posr.center)[etic](posrsym){$!$};
%\boxlabel{posr}{{$\nettwo$}}

\node at (posr.center) [left=1.7*\hstlar-3pt, nospace](aux){};
\node at (cutlink-|aux) [etic](auxlabel){\tiny $\whyn\Gamma$};
\draw[nopolgen](aux)to(auxlabel);

\draw[nopolrev, in =-90, out=180](cutlink)to(posrsym);

\node at (posr.center) [nospace, below right=3pt and 3pt](label){\tiny $\bang\formtwo$};

\end{tikzpicture}}
&

\hspace*{-5pt}$\Rew{\boxbox}$\hspace*{-5pt}
&

%\boxinit
\scalebox{1.1}{
\begin{tikzpicture}[ocenter]%Hole
%bang
\node at (0,0) [ etic](pospal){};

\node at (pospal.center) [nospace, above=33pt](spacingup){};

%box
\node at (pospal) [ etic](possym){$!$};
\node at (possym) [below=1.3*\pendingdist, etic](ghostsource){\tiny $!\form$};
\draw[nopolrev](ghostsource)to(possym);

%net g
\node at (pospal.center)[above left=0.5*\stalt and 0.8*\stlar, net](netg){$\net$};
\draw[nopol, in =90, out=0](netg)to(possym);

%auxiliary ports
\node at (ghostsource-|netg)[etic](delta_node){\tiny$\whyn\Delta$};

\draw[nopolgen](netg)to(delta_node);

%cut
\node at (netg.center)[below left=\hstalt/2 and 1.3*\stlar, etic](cutlink){\tiny $\cut$};
\draw[nopol, in=0, out=210](netg)to(cutlink);

%pos r
\node at (cutlink) [above left=1.3*\altax and 1.1*\ilar, etic](posr){};
\boxnodes{posr}{\stlar}{\stboxlw-1pt}{\stboxh-9pt}
\boxline{posr}{exboxline}
\node at (posr.center)[etic](posrsym){$!$};
%\boxlabel{posr}{{$\nettwo$}}

\node at (posr.center) [left=1.7*\hstlar-3pt, nospace](aux){};
\node at (delta_node-|aux) [etic](auxlabel){\tiny $\whyn\Gamma$};
\draw[nopolgen](aux)to(auxlabel);

\draw[nopol, out=-90, in=180](posrsym)to(cutlink);

\node at (posr.center) [nospace, below right=3pt and 3pt](label){\tiny $\bang\formtwo$};

%\node at (posraux1.center)[ nospace, below left=2pt and 5pt](dummy2){};
\node at (pospal.center)[ nospace, above=1pt](dummy3){};
\abox{pospal}{exbox}{\stboxrwaux+1.9*\stlar}{\stboxlw-2pt}{1.5*\stboxh}

\node at (pospal) [ etic](possym){$!$};

\end{tikzpicture}}

\end{tabular}\hspace*{-10pt}
\end{tabular}
\end{center}
\caption{The two rewriting rules that are used in the discussions on the sub-term property and size explosion for proof nets.}
\label{fig:pn-rules}
\end{figure}
\begin{center}
% !TEX root = main.tex
\begin{tikzpicture}[ocenter]
\node at (0,0)[etic](origin){
	\begin{tikzpicture}[ocenter]
	\node at (0,0) [etic](contr){\tiny$\csym$};
	\node at (contr) [etic, above left = \hstalt and \hstlar](ghost1){};
	\node at (contr) [etic, above right = \hstalt and \hstlar](ghost2){};
	\draw[nopol, out = -90, in = 135](ghost1)to(contr);
	\draw[nopol, out = -90, in = 45](ghost2)to(contr);

	% left cut
	\node at (contr)[below right= .7*\hstalt and .8*\stlar, etic](cutlink){\tiny $\cut$};
	\draw[nopol, in=180, out=-90](contr)to(cutlink);

	% box 2
	\node at (contr-|cutlink) [right= .8*\stlar, etic](posr){};
	\abox{posr}{exbox}{\stboxlw-2pt}{0.2*\stboxrwaux}{\stboxh-5pt}
	\node at (posr.center)[etic](posrsym){$!$};
	\node at (posr) [above = .7*\hstalt,etic](label2){1};
	\draw[nopolrev, in =-90, out=0](cutlink)to(posrsym);

	% right cut
	\node at (posr.center)[right=.4*\stlar, nospace](aux2){};
	\node at (aux2)[below right= .7*\hstalt and .8*\stlar, etic](cutlink2){\tiny $\cut$};
	\draw[nopol, in=180, out=-90](aux2)to(cutlink2);

	% box 2
	\node at (posr-|cutlink2) [right= .8*\stlar, etic](pos3){};
	\abox{pos3}{exbox}{\stboxlw-2pt}{0.2*\stboxrwaux}{\stboxh-5pt}
	\node at (pos3.center)[etic](pos3sym){$!$};
	\node at (pos3) [above = .7*\hstalt,etic](label3){2};
	\draw[nopolrev, in =-90, out=0](cutlink2)to(pos3sym);
	\end{tikzpicture}
};

\node at (origin.center) [etic, right = 300pt](origin-right){
	\begin{tikzpicture}[ocenter]%Hole
	%box 1
	\node at (0,0) [etic](contr){\tiny$\csym$};
	\node at (contr) [etic, above left = \hstalt and \hstlar](ghost1){};
	\node at (contr) [etic, above right = \hstalt and \hstlar](ghost2){};
	\draw[nopol, out = -90, in = 135](ghost1)to(contr);
	\draw[nopol, out = -90, in = 45](ghost2)to(contr);

	% left cut
	\node at (contr.center)[right=0.4*\stlar, nospace](aux1){};
	\node at (aux1)[below right= .7*\hstalt and .5*\stlar, etic](cutlink){\tiny $\cut$};
	\draw[nopol, in=180, out=-90](contr)to(cutlink);

	% box 2
	\node at (contr-|cutlink) [right= .9*\stlar, etic](posr-r){};
	\abox{posr-r}{exbox}{\stboxlw-2pt}{.7*\stboxrwaux}{1.3*\stboxh}
	\node at (posr-r.center)[etic](posrsym){$!$};
	\node at (posr-r) [above = .7*\hstalt,etic](label2){1};
	\draw[nopolrev, in =-90, out=0](cutlink)to(posrsym);

	% box 3
	\node at (posr-r) [above right= .5*\hstalt and .7*\stlar, etic](pos3-r){};
	\abox{pos3-r}{exbox}{\stboxlw-2pt}{0.2*\stboxrwaux}{\stboxh-5pt}
	\node at (pos3-r.center)[etic](pos3sym){$!$};
	\node at (pos3-r) [above = .7*\hstalt,etic](label3){2};
	\end{tikzpicture}
};

%%%%%%%%%
%%%% ORIGIN-DOWN
\node at (origin.center) [etic, below = 40pt](origin-down){
	\begin{tikzpicture}[ocenter]
	% box 2
	\node at (0,0) [etic](posr1){};
	\abox{posr1}{exbox}{\stboxlw-2pt}{0.2*\stboxrwaux}{\stboxh-5pt};
	\node at (posr1.center)[etic](posrsym){$!$};
	\node at (posr1) [above = .7*\hstalt,etic](label2){1};

	% box 2 bis
	\node at (posr1.center) [right= 1.3*\stlar, etic](posr2){};
	\abox{posr2}{exbox}{\stboxlw-2pt}{0.2*\stboxrwaux}{\stboxh-5pt}
	\node at (posr2.center)[etic](posrsym2){$!$};
	\node at (posr2) [above = .7*\hstalt,etic](label22){1};

	%contr
	\node at (posr1.center)[right=.4*\stlar, nospace](aux3){};
	\node at (posr2.center)[right=.4*\stlar, nospace](aux22){};
	\node at \med{aux3}{aux22} [etic, below=\hstalt](contr){\tiny$\csym$};
	\draw[nopol, out = -90, in = 135](aux3)to(contr);
	\draw[nopol, out = -90, in = 45](aux22)to(contr);

	% left cut
	\node at (contr)[below right= .7*\hstalt and .8*\stlar, etic](cutlink){\tiny $\cut$};
	\draw[nopol, in=180, out=-90](contr)to(cutlink);

	% box 2
	\node at (contr) [right= 1.6*\stlar, etic](pos3){};
	\abox{pos3}{exbox}{\stboxlw-2pt}{0.2*\stboxrwaux}{\stboxh-5pt}
	\node at (pos3.center)[etic](pos3sym){$!$};
	\node at (pos3) [above = .7*\hstalt,etic](label3){2};
	\draw[nopolrev, in =-90, out=0](cutlink)to(pos3sym);
	\end{tikzpicture}
};

%%%%%%%%%
%%%% TARGET
\node at (origin-right|-origin-down) [etic](target){
	\begin{tikzpicture}[ocenter]%Hole
	% box 2
	\node at (0,0) [right= .7*\stlar, etic](posr){};
	\abox{posr}{exbox}{\stboxlw-2pt}{.7*\stboxrwaux}{1.3*\stboxh}
	\node at (posr.center)[etic](posrsym){$!$};
	\node at (posr) [above = .7*\hstalt,etic](label2){1};
	% box 3
	\node at (posr) [above right= .5*\hstalt and .7*\stlar, etic](pos3){};
	\abox{pos3}{exbox}{\stboxlw-2pt}{0.2*\stboxrwaux}{\stboxh-5pt}
	\node at (pos3.center)[etic](pos3sym){$!$};
	\node at (pos3) [above = .7*\hstalt,etic](label3){2};

	% box 2
	\node at (posr) [right= 2.5*\stlar, etic](posr2){};
	\abox{posr2}{exbox}{\stboxlw-2pt}{.7*\stboxrwaux}{1.3*\stboxh}
	\node at (posr2.center)[etic](posrsym2){$!$};
	\node at (posr2) [above = .7*\hstalt,etic](label22){1};
	% box 3
	\node at (posr2) [above right= .5*\hstalt and .7*\stlar, etic](pos32){};
	\abox{pos32}{exbox}{\stboxlw-2pt}{0.2*\stboxrwaux}{\stboxh-5pt}
	\node at (pos32.center)[etic](pos3sym){$!$};
	\node at (pos32) [above = .7*\hstalt,etic](label32){2};
	\end{tikzpicture}
};

\node at \med{origin-down.center}{target.center}(ghost){};

%%%%%%%%%
%%%% ORIGIN-DOWN-RIGHT
\node at \med{origin-down.east}{ghost.center}[right=10pt, anchor = center](origin-down-right){ 
	\begin{tikzpicture}[ocenter]%Hole
	% box 2
	\node at (0,0) [right= .7*\stlar, etic](posr){};
	\abox{posr}{exbox}{\stboxlw-2pt}{0.2*\stboxrwaux}{\stboxh-5pt}
	\node at (posr.center)[etic](posrsym){$!$};
	\node at (posr) [above = .7*\hstalt,etic](label2){1};

	% box 2 bis
	\node at (posr.center) [right= 1.3*\stlar, etic](posr2){};
	\abox{posr2}{exbox}{\stboxlw-2pt}{0.2*\stboxrwaux}{\stboxh-5pt}
	\node at (posr2.center)[etic](posrsym2){$!$};
	\node at (posr2) [above = .7*\hstalt,etic](label22){1};

	%contr
	\node at (posr.center)[right=.4*\stlar, nospace](aux3){};
	\node at (posr2.center)[right=.4*\stlar, nospace](aux22){};
	\node at \med{aux3}{aux22} [etic, below=\hstalt](contr){};
	%\draw[nopol, out = -90, in = 135](aux3)to(contr);
	%\draw[nopol, out = -90, in = 45](aux22)to(contr);

	% below cut
	\node at (contr)[below right= .7*\hstalt and .3*\stlar, etic](cutlink){\tiny $\cut$};
	\draw[nopol, in=170, out=-90](aux3)to(cutlink);
	% above cut
	\node at (cutlink)[right= 1.5*\stlar, etic](cutlink2){\tiny $\cut$};
	\draw[nopol, in=170, out=-90](aux22)to(cutlink2);

	% box 2
	\node at (contr) [above right= .5*\hstalt and 1.5*\stlar, etic](pos3){};
	\abox{pos3}{exbox}{\stboxlw-2pt}{0.2*\stboxrwaux}{\stboxh-5pt}
	\node at (pos3.center)[etic](pos3sym){$!$};
	\node at (pos3) [above = .7*\hstalt,etic](label3){2};
	\draw[nopolrev, in =-90, out=0](cutlink)to(pos3sym);

	\node at (pos3.center) [right= 1.3*\stlar, etic](pos4){};
	\abox{pos4}{exbox}{\stboxlw-2pt}{0.2*\stboxrwaux}{\stboxh-5pt}
	\node at (pos4.center)[etic](pos3sym){$!$};
	\node at (pos4) [above = .7*\hstalt,etic](label3){2};
	\draw[nopolrev, in =-90, out=0](cutlink2)to(pos3sym);
	\end{tikzpicture}
};

%%%%%%%%%
%%%% TARGET-LEFT
\node at \med{ghost.center}{target.west}[left=10pt, anchor = center](target-left){ 
	\begin{tikzpicture}[ocenter]%Hole
	\node at (0,0) [right= .7*\stlar, etic](posr){};
	\abox{posr}{exbox}{\stboxlw-2pt}{0.2*\stboxrwaux}{\stboxh-5pt}
	\node at (posr.center)[etic](posrsym){$!$};
	\node at (posr) [above = .7*\hstalt,etic](label2){1};

	\node at (posr) [right= 1.3*\stlar, etic](posr2){};
	\abox{posr2}{exbox}{\stboxlw-2pt}{.7*\stboxrwaux}{1.3*\stboxh}
	\node at (posr2.center)[etic](posrsym2){$!$};
	\node at (posr2) [above = .7*\hstalt,etic](label22){1};
	% box 3
	\node at (posr2) [above right= .5*\hstalt and .7*\stlar, etic](pos32){};
	\abox{pos32}{exbox}{\stboxlw-2pt}{0.2*\stboxrwaux}{\stboxh-5pt}
	\node at (pos32.center)[etic](pos3sym){$!$};
	\node at (pos32) [above = .7*\hstalt,etic](label32){2};

	\node at (posr2.center) [right= 2.5*\stlar, etic](pos4){};
	\abox{pos4}{exbox}{\stboxlw-2pt}{0.2*\stboxrwaux}{\stboxh-5pt}
	\node at (pos4.center)[etic](pos3sym){$!$};
	\node at (pos4) [above = .7*\hstalt,etic](label3){2};

	\node at (posr.center)[right=.4*\stlar, nospace](aux3){};
	\node at \med{aux3}{pos4}[below= \hstalt, etic](cutlink){\tiny $\cut$};
	\draw[nopol, in=180, out=-90](aux3)to(cutlink);
	\draw[nopolrev, in =-90, out=0](cutlink)to(pos3sym);
	\end{tikzpicture}
};

\draw[->, shorten <=3pt, shorten >=3pt](origin) to node[above] {\scriptsize $\boxbox $} (origin-right);
\draw[->, shorten <=3pt, shorten >=3pt](origin) to node[left] {\scriptsize $\csym $} (origin-down);

\draw[->, dotted, shorten <=3pt, shorten >=3pt](origin-down) to node[above] {\scriptsize $\csym $} (origin-down-right);
\draw[->, dotted, shorten <=3pt, shorten >=3pt](origin-down-right) to node[above] {\scriptsize $\boxbox $} (target-left);
\draw[->, dotted, shorten <=3pt, shorten >=3pt](target-left) to node[above] {\scriptsize $\boxbox $} (target);
\draw[->, dotted, shorten <=3pt, shorten >=3pt](origin-right) to node[right] {\scriptsize $\csym $} (target);
\end{tikzpicture}
%\end{array}$
\end{center} 
The right-down path shows that $\Rew{l_0}$, and thus $\Rew{ll}$, lacks the sub-term property, as the box duplicated by the second step is not a sub-proof of the initial one. Moreover, iterating such a pattern gives size explosion for $\Rew{l_0}$, as shall we show in \refsect{sub-term-size-explosion}. The down-left path has the sub-term property but it shows that different $\Rew{l_0}$ paths can have different lengths---that is, $\Rew{l_0}$ is not \emph{diamond}---forbidding to take its number of steps as a measure, because the number of its steps to normal form is an ambiguously defined quantity.

\paragraph{The Linear Head Strategy} There actually is a micro-step strategy in the linear logic literature which is both diamond and with the sub-term property. It is Mascari \& Pedicini's and Danos \& Regnier's \emph{linear head strategy} $\tolh$ \cite{DBLP:journals/tcs/MascariP94,Danos04headlinear} (which is a sub-strategy of level 0 cut elimination) but it is defined only on the fragment representing the $\l$-calculus and it does not compute cut-free proofs. 

The linear head strategy is naturally modelled by the LSC, used by Accattoli and co-authors for proving properties of $\tolh$, design variants and extensions \cite{DBLP:conf/rta/Accattoli12,DBLP:conf/popl/AccattoliBKL14,DBLP:conf/icfp/AccattoliBM14,DBLP:conf/aplas/AccattoliBM15}, as well as to prove that it provides a reasonable time cost model \cite{DBLP:conf/rta/AccattoliL12}. 

\paragraph{The Good Strategy} The ESC strategy introduced here is simply called \emph{the good strategy} $\tog$, because it avoids \emph{bad} steps breaking the sub-term property. It is the generalization of the linear head strategy to the whole of IMELL and extended as to compute cut-free proofs. 
The good strategy is micro-step, diamond, and it has the sub-term property, thus providing the first polynomial cost model for IMELL\footnote{Is it a \emph{reasonable} cost model? Roughly, yes. For time reasonability, the subtle part is polynomiality of the cost model, that is, the polynomial simulation of the ESC on Turing machines/RAM. The complementary part---here missing---is simulating Turing machines in the untyped ESC within polynomial overhead. It is true, but for very minor reasons it does not follow from results in the literature and it has to be (tediously) reproved from scratch. Moreover, the missing part is not relevant/cannot hold for the \emph{typed} ESC, as IMELL is not Turing-complete (IMELL cut-elimination being strongly normalizing, it cannot model diverging computations).}. Its design is based on the \emph{creation of cuts}, a notion due to the intuitionistic arrow $\lolli$ and invisible in classical linear logic. 
 
\subsection{Overview of Contribution 3: Foundations}
We provide a foundational study of the ESC, ensuring that the new cut elimination rules are well behaved, but also to promote it as an alternative to the use of proof nets for IMELL. We prove three important properties:
\begin{enumerate}
\item \emph{Untyped confluence}: the \emph{untyped} case is possibly divergent and thus more difficult than the typed case, as one cannot exploit termination in the proof technique. We prove confluence using the Hindley-Rosen method, and exploit termination nonetheless by building on the strong normalization of the exponentials, following an approach pushed forward by Accattoli in the study of the LSC \cite{DBLP:conf/rta/Accattoli12}. 
\item \emph{Untyped PSN}: preservation of untyped strong normalization (shortened to PSN) is a property typical of calculi with ESs, stating that if a term is SN for the small-step rules (in the untyped setting where terms can also be divergent) then it is SN for the micro-step rules. Essentially, it states that the decomposition of  small steps in micro steps does not introduce degeneracies. The proof is based on a technique due to Kesner \cite{DBLP:journals/corr/abs-0905-2539}. This theorem is notoriously technical for $\l$-calculi with ESs. We adapt a simple proof by Accattoli and Kesner \cite{DBLP:conf/csl/AccattoliK10}, where the simplicity is enabled by adopting rules at a distance. Our adaptation is even simpler than the proof in \cite{DBLP:conf/csl/AccattoliK10}. 
\item \emph{Typed SN}: we prove strong normalization (SN) of the small-step rules in the typed case, using the reducibility method. This theorem too is notoriously technical to prove for fragments of linear logic including the exponentials. We here provide what is probably the simplest and cleanest proof of SN in the literature, improving the already simplified approach of Accattoli \cite{DBLP:conf/rta/Accattoli13} based on rules at a distance. We then use PSN to transport SN to micro steps.
\end{enumerate}

Confluence and strong normalization are not usually studied for sequent calculi with traditional  cut elimination, that is, with both principal and commutative cases, because of the following two facts:
\begin{itemize}
\item \emph{No SN}: cut elimination is \emph{not} SN, because cut commutes with itself, leading to non-termination:
\[
% !TEX root = main.tex
\begin{tabular}{c\colspace c\colspace c}
	\AxiomC{$\Gamma \vdash   \formtwo$}
		\AxiomC{$\Pi \vdash   \form$}
		\AxiomC{$\Delta,  \form,  \formtwo \vdash \formthree$}

		\RightLabel{cut}
		\BinaryInfC{$ \Delta, \Pi,  \formtwo \vdash \formthree$}

		\RightLabel{cut}
		\BinaryInfC{$\Delta, \Pi,\Gamma, \vdash \formthree$}			
		\DisplayProof 
&
$\rightarrow$
&
	\AxiomC{$\Pi \vdash   \form$}
		\AxiomC{$\Gamma \vdash   \formtwo$}
		\AxiomC{$\Delta,  \form,  \formtwo \vdash \formthree$}

		\RightLabel{cut}
		\BinaryInfC{$\Delta,   \form, \Gamma \vdash \formthree$}

		\RightLabel{cut}
		\BinaryInfC{$\Delta,\Pi , \Gamma \vdash \formthree$}			
		\DisplayProof 
\end{tabular}
\]
\item \emph{No confluence}: cut elimination is not confluent, not even in IMLL (!), because commutations affect the result, as shown by this counter-example, courtesy of Olivier Laurent:
\[
% !TEX root = main.tex
\small
\tablinesep=3pt
 \begin{tabular}{c}
		\AxiomC{}
		%\RightLabel{$\ax$}
		\UnaryInfC{$ \form \vdash  \form$}
		\AxiomC{}
		%\RightLabel{$\ax$}
		\UnaryInfC{$ \formtwo \vdash  \formtwo$}
		\RightLabel{$\lolli_l$}
		\BinaryInfC{$ \form,  \form\lolli\formtwo \vdash  \formtwo$}

		\AxiomC{}
		%\RightLabel{$\ax$}
		\UnaryInfC{$ \formtwo \vdash  \formtwo$}
		\AxiomC{}
		%\RightLabel{$\ax$}
		\UnaryInfC{$ \formthree \vdash \formthree$}
		\RightLabel{$\lolli_l$}
		\BinaryInfC{$ \formtwo,   \formtwo\lolli \formthree \vdash  \formthree$}
		\RightLabel{cut}
		\BinaryInfC{$  \form,  \form\lolli\formtwo,\formtwo\lolli \formthree \vdash  \formthree$}			
		\DisplayProof  

			\\[6pt]
		$\mbox{}_+{\swarrow}\ \ \ \ \searrow_+$
		\\[6pt]
		\begin{tabular}{c@{\hspace{30pt}}c}
		\AxiomC{}
		%\RightLabel{$\ax$}
		\UnaryInfC{$ \form \vdash  \form$}
		\AxiomC{}
		%\RightLabel{$\ax$}
		\UnaryInfC{$ \formtwo \vdash  \formtwo$}
		\AxiomC{}
		%\RightLabel{$\ax$}
		\UnaryInfC{$ \formthree \vdash  \formthree$}
		\RightLabel{$\lolli_l$}
		\BinaryInfC{$\formtwo,  \formtwo\lolli\formthree \vdash  \formthree$}
		\RightLabel{$\lolli_l$}
		\BinaryInfC{$ \form,  \form\lolli\formtwo, \formtwo\lolli\formthree \vdash  \formthree$}
		\DisplayProof  
&
	\AxiomC{}
		%\RightLabel{$\ax$}
		\UnaryInfC{$ \form \vdash \form$}
		\AxiomC{}
		%\RightLabel{$\ax$}
		\UnaryInfC{$ \formtwo \vdash  \formtwo$}
		\RightLabel{$\lolli_l$}
		\BinaryInfC{$ \form,  \form\lolli\formtwo \vdash  \formtwo$}
		\AxiomC{}
		%\RightLabel{$\ax$}
		\UnaryInfC{$ \formthree \vdash  \formthree$}
		\RightLabel{$\lolli_l$}
		\BinaryInfC{$\form,  \form\lolli\formtwo,  \formtwo\lolli\formthree \vdash  \formthree$}
		\DisplayProof  
		\end{tabular}	
		\end{tabular}
\]
\end{itemize}
Retrieving confluence and SN requires cut elimination \emph{modulo commutations}, or via proof nets, or adding some rigidity to proofs such as focalization. In stark contrast, the cut elimination \emph{at a distance} of the ESC, being free from commutations, achieves them without rewriting modulo nor modifying the deductive system.

Our study thus lies a new foundation for the study of cut elimination. It also sums up ten years of research on the rewriting of the LSC, at the same time generalizing the developed techniques to the considerably more general setting of IMELL.
% !TeX spellcheck = en_US
% !TEX root = main.tex
%%%%%%%%%%%%%%%%%%%%%%%%
%%%%%%%%%%%%%%%%%%%%%%%%
\section{Sub-Term Property and Size Explosion}
\label{sect:sub-term-size-explosion}
%%%%%%%%%%%%%%%%%%%%%%%%
%%%%%%%%%%%%%%%%%%%%%%%%
The sub-term property is used in many works on abstract machines, or the LSC, or reasonable cost models for the $\l$-calculus. Here we aim at providing evidence of its relevance. We first discuss it in the $\l$-calculus and then see an example of size explosion in proof nets. 

We give  a new and more accurate definition of the property than the one in the introduction. We actually define three variants of the property.

\begin{defi}[Sub-term property]
A rewriting system $(S,\to)$ has the sub-term property if, for every $\tm_0\in S$ and every sequence $\tm_0 \to^n \tm_n$, every step of the sequence either only involves a constant number of constructors or it duplicates or erases a term:
\begin{itemize}
\item \emph{Literal sub-term property}: which is a sub-term of $\tm_0$;
\item \emph{Structural sub-term property}: which is a sub-term of $\tm_0$ up to variables renaming;
\item \emph{Quantitative sub-term property}: of size bound by $\size{\tm_0}$.
\end{itemize}
In the case of duplications, the number of copies is also bound by $\size{\tm_0}$ (for all notions).
\end{defi}

The literal variant is the strongest formulation, it provides the intuition, and gives the name to the property. It is usually found in abstract machines with local environments such as the Krivine abstract machine (KAM) or the CEK. The literal sub-term property is crucial for studying the logarithmic space complexity of $\l$-terms, as it allows one to only duplicate a pointer (of logarithmic size) to the sub-term of $\tm_0$, rather than the sub-term itself (which would have linear size). See Accattoli et al. for more details \cite{DBLP:conf/lics/AccattoliLV22}. We shall not deal with the literal variant here, since we are not dealing with logarithmic space nor abstract machines (the adaptation of the KAM to our setting would have the literal sub-term property).

The structural variant is often found in strategies of the LSC and in abstract machines with global environments (such as the Milner abstract machine, or MAM), and it is the one used for time analyses. The \emph{up to variables renaming} weakening of the property is due to $\alpha$-equivalence, which might change names in sub-terms, but not their structure. For time analyses, in fact, one only needs the quantitative version of the property. The two formulations (structural and quantitative) are two sides of the same concept and most of the time we shall simply refer to the sub-term property, without further specification. 

\paragraph{$\l$-Calculus $vs$ the Sub-Term Property} In the $\l$-calculus, no evaluation strategy has the structural sub-term property---as we now show---because its operational semantics is somewhat too coarse. Let $\tau_{\tm} \defeq \la\vartwo\vartwo\tm\tm$ and $\Id \defeq \la\var\var$. Then:
\begin{equation}
\begin{array}{lllllllll}
\tmtwo &\defeq &(\la\var\var(\la\varthree\tau_{\varthree})\tau_{\var}) \Id
&\tob&
\Id(\la\varthree\tau_{\varthree})\tau_{\Id}
 &\tob
& 
(\la\varthree\tau_{\varthree})\tau_{\Id}
 &\tob&\tau_{\tau_{\Id}}
 \label{eq:no-subterm}
\end{array}\end{equation}
Note that the term $\tau_{\Id}$ duplicated by the third step is not a sub-term of $\tmtwo$. Moreover, $\tmtwo$ is closed and each term in the sequence has at most one $\beta$-redex, which is out of abstractions and the argument of which is a value. Therefore, the example affects also closed weak evaluation, and even its call-by-value variant.

What actually happens is that in the $\l$-calculus duplicated sub-terms are \emph{combinations} of sub-terms of the initial term, as it is the case for $\tau_\Id$ in the example above. The size of such combinations can grow exponentially with the number of $\beta$-steps, a degeneracy called \emph{size explosion} and affecting \emph{all} strategies in the $\l$-calculus, as we shall now recall, suggesting that the number of $\beta$-steps might \emph{not} be a polynomial cost model for \emph{any} strategy. 

\paragraph{Inevitable Size Exploding Family} In the $\l$-calculus, there is a size exploding family that explodes no matter the evaluation strategy. Its definition is in two steps. Firstly, define the following \emph{pre-family} $\{\tm_n\}_{1\leq n\in\nat}$, where, again, $\tau_{\tm} \defeq \la\vartwo\vartwo\tm\tm$:
\begin{center}
$\begin{array}{rcl@{\hspace{2cm}}rcl}
 \tm_1 & \defeq & \la{\var}\tau_\var
 &
  \tm_{n+1}  &\defeq &\la{\var}\tm_n \tau_\var
\end{array}$  
\end{center}

The size exploding family is 
actually given by $\{\tm_n \Id\}_{1\leq n\in\nat}$ , that is, it is obtained by applying the term $\tm_n$ 
of the pre-family to the identity $\Id \defeq \la\vartwo\vartwo$.  Note that it is a family of closed terms.
We also define the family of results $\{\tmfour_n\}_{n\in\nat}$ as follows:
$$\begin{array}{ccc@{\hspace{2cm}}ccc}
\tmfour_0  &\defeq&  \Id &
\tmfour_{n+1}  &\defeq&  \tau_{\tmfour_{n}}
\end{array}$$
Last, let $\towh$ be the weak head strategy, which reduces only head redexes out of abstractions and is the simplest meaningful reduction strategy in the $\l$-calculus.

\begin{prop}[Closed and strategy-independent size explosion]
\label{prop:abs-size-explosion}
  Let 
  $n \!>\! 0$. Then 
  \begin{enumerate}
  \item \emph{Size explosion for the weak head strategy}: $\tm_n \Id \towh^n  \tmfour_n$ and the $i$-th step of the sequence makes two copies of $\tmfour_{i-1}$. 
  \item \emph{Strategy independent}: any other reduction sequence $\tm_n \Id \towh^*  \tmfour_n$ has length $n$, and all redexes in these sequences are $\betav$ redexes.
  \end{enumerate}
  Moreover, $\size{\tm_n \Id} = \bigo(n)$, $\size{\tmfour_n} = \Omega(2^n)$, 
$\tm_n \Id$ is closed, and $\tmfour_n$ is a $\beta$ normal form.
\end{prop}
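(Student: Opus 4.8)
The plan is to dispatch items~1 and~3 by direct inductions, and to prove item~2 (\emph{strategy independence}) by a redex-counting argument resting on a structural invariant satisfied by all $\beta$-reducts of $\tm_n\Id$; that invariant is the only delicate point. First, the easy preliminaries, all by induction on $n$ (writing $\size{\cdot}$ for term size): $\tm_n$ is closed (in $\la\var\,\tm_{n-1}\tau_\var$ the free $\var$'s, which sit inside $\tau_\var$, are bound by $\la\var$, and $\tm_{n-1}$ is closed by the induction hypothesis), hence so is $\tm_n\Id$; $\tmfour_n$ is closed and a $\beta$ normal form (for $\tmfour_{n+1} = \tau_{\tmfour_n} = \la\vartwo\,\vartwo\,\tmfour_n\,\tmfour_n$ the body has the variable $\vartwo$ in head position and its two arguments $\tmfour_n$ are normal by the induction hypothesis); and for the sizes, $\size{\tm_{n+1}} = \size{\tm_n} + \bigo(1)$ gives $\size{\tm_n\Id} = \bigo(n)$, while $\size{\tmfour_{n+1}} = 2\,\size{\tmfour_n} + \bigo(1)$ gives $\size{\tmfour_n} = \Omega(2^n)$.

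For item~1, I would prove the generalization: \emph{for every $n\geq 1$ and $k\in\nat$, $\tm_n\,\tmfour_k \towh^n \tmfour_{n+k}$, and the $i$-th step ($1\le i\le n$) is the $\betav$-contraction of the weak head redex, substituting the value $\tmfour_{k+i-1}$ and producing exactly two copies of it.} For $n=1$, $\tm_1\tmfour_k = (\la\var\,\tau_\var)\,\tmfour_k \towh \tau_{\tmfour_k} = \tmfour_{k+1}$, and $\var$ occurs exactly twice in $\tau_\var$ (both times in argument position), so the step duplicates $\tmfour_k = \tmfour_{k+1-1}$. For $n+1$: $\tm_{n+1}\,\tmfour_k = (\la\var\,\tm_n\tau_\var)\,\tmfour_k \towh \tm_n\,\tau_{\tmfour_k} = \tm_n\,\tmfour_{k+1}$ (again two copies of $\tmfour_k$, as $\var$ occurs only within $\tau_\var$), and the induction hypothesis at $n$ applied to the value $\tmfour_{k+1}$ concludes, matching the bookkeeping. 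Instantiating $k=0$, since $\tmfour_0 = \Id$ one gets $\tm_n\Id \towh^n \tmfour_n$ with the $i$-th step duplicating $\tmfour_{i-1}$, as required.

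For item~2 I would establish, by induction along $\to$, the following invariant on every $\beta$-reduct $M$ of $\tm_n\Id$: \textbf{(a)} the argument of every $\beta$-redex in $M$ is a $\beta$-normal value; and \textbf{(b)} contracting any redex of $M$ strictly decreases the number of $\beta$-redexes by exactly one. Point~(b) follows from~(a) together with two facts carried by the invariant: the substituted value is normal, hence contains no redex, so no redex gets duplicated even though $\tau_{(-)}$ copies its hole; and no redex is \emph{created}, because in every reduct the only abstractions are $\Id$ and the $\tm_j$'s, the bound variable of each $\tm_j$ occurs in $\tau_\var$ only in argument position, $\Id$'s bound variable occurs unapplied, and the ``towers'' $\tau_{\tau_{\cdots\tau_\Id}}$ never migrate to function position. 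Granting the invariant, a direct count gives that $\tm_n\Id$ has exactly $n$ redexes (the outermost application plus one nested redex contributed by each of $\tm_n,\dots,\tm_2$), so by~(b) every maximal $\to$-sequence from $\tm_n\Id$ has length exactly $n$ and ends at a $\beta$-normal form; since $\tmfour_n$ is normal and reachable from $\tm_n\Id$ (item~1), every sequence $\tm_n\Id \tob^* \tmfour_n$ has length $n$, and by~(a) all its steps are $\betav$ (if item~2 is read with the deterministic $\towh$ instead of $\tob$, it is already contained in item~1). The main obstacle is precisely the verification of this invariant: one must pin down the shape of the $\beta$-reducts of $\tm_n\Id$ — morally, abstractions or applications in which every argument is a normal tower $\tau_{\tau_{\cdots\tau_\Id}}$ and no bound variable ever appears applied — and check closure under $\to$ together with clauses~(a)--(b). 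This is conceptually transparent but combinatorially fiddly, and is the only part deserving real care; everything else is a one-line induction.
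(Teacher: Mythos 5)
Your proof is correct. Item~1 and the ``moreover'' clauses follow exactly the paper's route: the same strengthened statement $\tm_n\,\tmfour_m \towh^n \tmfour_{n+m}$ proved by induction on $n$, and straightforward inductions for closedness, normality, and the size bounds (the paper's parenthetical ``$\tmfour_1=\Id$'' is a slip for $\tmfour_0=\Id$; your indexing with $k=0$ matches the actual definition). For item~2 you take a genuinely different, though closely related, route. The paper observes that the $n$ redexes of $\tm_n\Id$ are hereditarily independent and by-value, concludes that all local confluence diagrams from any reachable term are diamonds, and then invokes random descent (all reductions to normal form have the same length). You instead turn the same structural facts --- arguments of redexes are normal values, no redex is created, duplicated, or erased --- into a direct counting argument: each step removes exactly one redex, $\tm_n\Id$ has exactly $n$ of them, hence every maximal sequence has length $n$ and ends in normal form. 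Your version is slightly more elementary (no appeal to the diamond-implies-random-descent theorem) and yields a little more (every \emph{maximal} sequence, not just every sequence reaching $\tmfour_n$, has length $n$); the paper's version offloads the bookkeeping onto a standard rewriting lemma. Both hinge on the same invariant about the shape of reducts, which you rightly flag as the only point needing real care and which the paper likewise dispatches with ``it is easily seen.''
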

\pagebreak
\begin{proof}
The bounds on the sizes are straightforward inductions.
\begin{enumerate}
\item We prove a more general statement: $\tm_n \tmfour_m \towh^n  \tmfour_{n+m}$ (note that $\tmfour_1 = \Id$). By induction on $n$. If $n=1$ then $(\la\var\tau_\var) \tmfour_m \towh \tau_{\tmfour_m} = \tmfour_{m+1}$. For $n+1$, $\tm_{n+1} \tmfour_m = (\la\var\tm_n \tau_\var) \tmfour_m \towh \tm_n \tau_{\tmfour_m} = \tm_n \tmfour_{m+1}$. By \ih,  $\tm_n \tmfour_{m+1} \towh^n  \tmfour_{n+m+1}$.
\item It is easily seen that the $n$ redexes in $\tm_n \Id$ are all hereditarily independent and by-value: they cannot duplicate/erase each other, their arguments are abstractions, and they all produce terms with the same property. Then all local confluence diagrams from $\tm_n \Id$, as well as from the terms reachable from it, are diamonds and all reductions to normal form have the same length.\qedhere
\end{enumerate}
\end{proof}
Size explosion is the fact that the linear size term $\tm_n \Id$ reduces in $n$ steps to the exponential size result $\tmfour_n$, thus doing an exponential amount of work in a linear number of steps. The first point specifies that $\tm_n \Id$ explodes while using the weak head strategy, \emph{and} that in this case each step of the sequence (but the first one) breaks the structural sub-term property, as it duplicates $\tmfour_n$ which is \emph{not} a sub-term of the initial term $\tm_n \Id$ (up to variables renaming). Thus, breaking the sub-term property leads to size explosion for the weak head strategy, which breaks the \emph{quantitative} sub-term property, as, for $n$ big enough, the last duplication of the sequence involves a term bigger than the initial one.

The second point says that the family evaluates to the same result and in the same number of steps according to any other evaluation strategy, no matter whether strong or weak, call-by-name or call-by-value. It is easily seen that the family is also typable with simple types, although of types that grow in size exponentially in $n$. Since the family is also closed, size explosion affects all the main variants and dialects of the $\l$-calculus.

Let us show a final observation about size explosion. Consider the third element $\tm_3 \Id$ of the given family, and let us reduce it in a \emph{innermost way}:
\begin{equation}
\arraycolsep=4pt
\begin{array}{cccccccccccc}
\tm_3 \Id &  = & (\la{\var}(\la{\var}(\la{\var}\tau_{\var}) \tau_{\var}) \tau_{\var}) \Id
& \tob & (\la{\var}(\la{\var}\tau_{\tau_{\var}}) ) \tau_{\var}) \Id
& \tob &  (\la{\var}\tau_{\tau_{\tau_{\var}}}) \Id
& \tob & \tau_{\tau_{\tau_{ \Id}}}
\end{array}
\label{eq:innerm-size-exp}
\end{equation}
Note that the first two steps duplicate $\tau_\var$ and that the last step duplicates $\Id$: it looks like the structural sub-term property is not broken. Here it plays a role the final clause in the definition of the property:
\emph{in the case of duplications, the number of copies is also bound by $\size{\tm_0}$.}
Generalizing \refeq{innerm-size-exp}  to $n>3$, indeed, the number of copies of the duplicated sub-term grows exponentially (note that $\tau_{\tau_{\tau_{\var_3}}}$ has eight occurrences of $\var_3$), thus still breaking the sub-term property.

\paragraph{Small-Steps $vs$ Micro-Steps}
The reader might be aware of the fact that, despite what just explained, the number of steps of some strategies of the $\l$-calculus do provide polynomial cost models. How is that possible?

There is a way out, but it requires deviating from the $\l$-calculus. The lack of the sub-term property in the $\l$-calculus stems from its \emph{small-step} operational semantics, $\beta$-reduction, which is based on meta-level substitution. To retrieve the property, one has to add ESs (or some other form of sharing) to the $\l$-calculus  and decompose $\beta$-reduction in \emph{micro-steps} performing \emph{one} variable replacement at a time. Then, there are strategies for the refined setting that have the sub-term property, of which the number of steps thus is a  polynomial cost model. The micro-step refinement can be formulated in various ways,  the LSC being an elegant one. Without going into too much detail, the result is then transferred to the $\l$-calculus, by showing that there is a polynomial simulation (up to some form of sharing) of a small-step strategy by a micro-step one with the sub-term property. Therefore, small-step polynomiality is inherited from the micro-step formalism (as in Blelloch and Greiner \cite{DBLP:conf/fpca/BlellochG95}, Sands et al. \cite{DBLP:conf/birthday/SandsGM02}, Dal Lago and Martini \cite{DBLP:journals/corr/abs-1208-0515}, Accattoli and Dal Lago \cite{DBLP:conf/rta/AccattoliL12,DBLP:journals/corr/AccattoliL16}, Accattoli et al. \cite{DBLP:conf/lics/AccattoliCC21}, Biernacka et al. \cite{DBLP:conf/ppdp/BiernackaCD21}). 

Summing up, the sub-term property seems to be \emph{mandatory} for the study of cost models, because even when the cost model strategy does not have the property, the proof of its polynomiality rests on an auxiliary strategy that does indeed have it. Moreover, the micro-step aspect seems \emph{unavoidable} for unlocking polynomial time cost models. 

\paragraph{Size Explosion in Linear Logic}
We now explain how to obtain a size exploding family for linear logic proof nets, taking as notion of evaluation \emph{cut elimination at level 0} (the level of a link is the number of $!$ boxes in which it is contained), noted $\Rew{l_0}$, which is, roughly, the analogous of head reduction for proof nets (all head reduction redexes are redexes at level 0, when $\l$-terms are translated to proof nets). We shall need only the two reduction rules for proof nets in \reffig{pn-rules} (page \pageref{fig:pn-rules}). We shall use $\Rew{\contr 0}$ and $\Rew{\boxbox 0}$ for when those rules are applied to cuts at level $0$.

\paragraph{Size Exploding Family.} For the time being, we consider untyped proof nets. We now define various families of proof nets. The $n$-th net $\tau_{n}$ has the following shape:
\begin{center}
\begin{tikzpicture}[ocenter]
\node at (0,0) [nospace](bang){};
\node at (bang.center) [etic, above left = 6pt and 11pt](name){$\tau_{n}$};
\node at (bang.center) [etic, left = \stlar](auxr){};
\node at (auxr.center) [etic, left = 1.4*\hstlar](auxl){};
\node at (auxr.center) [etic, below = 0.7*\hstalt](auxrghost){};
\node at (auxl.center) [etic, below = 0.7*\hstalt](auxlghost){};
\node at \med{auxl}{auxr} [etic, below = 0.4*\hstalt](dots){$...$};
\node at \med{auxlghost}{auxrghost}[below=\sepbox, rotate=90, nospace, anchor = center](bracket){\huge \{};
\node at (bracket.center)[ below=\sepbox+2pt, nospace](bracket2){\footnotesize $2^{n}$};

\abox{bang}{exbox}{30pt}{5pt}{15pt}
\node at (bang.center)[etic] (bangsym){$!$};
\draw[nopol](auxl)to(auxlghost);
\draw[nopol](auxr)to(auxrghost);

\node at (bang.center) [etic, below = 0.7*\hstalt](bangghost){};
\draw[nopol](bangsym)to(bangghost);
\end{tikzpicture}
\end{center}
And it is defined as follows:
\begin{center}
\begin{tabular}{ccccc}
% !TEX root = main.tex
$\tau_{1} \defeq$
\begin{tikzpicture}[ocenter]
\node at (0,0) [etic, left = 2*\stlar](belowax){\scriptsize $\ax$};
\node at (belowax.center) [etic, above = \hstalt ](axSym){\scriptsize $\ax$};
\node at (axSym) [etic, below right = 1*\stalt and \stlar](tens){\scriptsize $\tens$};
\draw[nopol, out=0, in=45](axSym)to(tens);
\draw[nopol, out = 0, in=135](belowax)to(tens);

\node at (tens) [etic, left = 2*\stlar](derright){$\der$};
\draw[nopol, out = 180, in=90](belowax)to(derright);
\node at (derright) [etic, left = \stlar](derleft){$\der$};
\draw[nopol, out = 180, in=90](axSym)to(derleft);

\node at (derleft.center) [etic, below = .8*\stalt](derleftghost){};
\node at (derright.center) [etic, below = .8*\stalt](derrightghost){};
\draw[nopol](derleft)to(derleftghost);
\draw[nopol](derright)to(derrightghost);

\node at (tens) [nospace, below = \hstalt](bang){};
\abox{bang}{exbox}{55pt}{10pt}{42pt}
\node at (bang.center)[etic] (bangsym){$!$};
\draw[nopol](tens)to(bangsym);

\node at (bang.center) [etic, below = \hstalt](bangghost){};
\draw[nopol](bangsym)to(bangghost);

%\node at (axRightConclusion) [etic, below left = .7*\stalt and 3.5*\stlar](axRightConclusion2){};
%\node at (axRightConclusion2.center) [etic, left = 1.4*\stlar](der2){\scriptsize $\der$};
%\node at \med{axRightConclusion2}{der2} [etic, above = \hstalt ](axSym2){\scriptsize $\ax$};
%\node at (axSym2) [etic, below = 1.4*\stalt](contr2){\scriptsize $\csym$};
%\draw[nopol, out=0, in=45](axSym2)to(contr2);
%\draw[nopol, out=180, in=90](axSym2)to(der2);
%\draw[nopol, out = -90, in=135](der2)to(contr2);
%
%\node at \med{bangsym}{contr2} [etic, below = \hstalt ](cut){\scriptsize $\cut$};
%\draw[nopol, out=-90, in=0](bangsym)to(cut);
%\draw[nopol, out = -90, in=180](contr2)to(cut);
\end{tikzpicture}
&&&
% !TEX root = main.tex
$\tau_{n+1} \defeq$
\begin{tikzpicture}[ocenter]
\node at (0,0) [etic, left = 2*\stlar](belowax){\scriptsize $\ax$};
\node at (belowax.center) [etic, above = \hstalt ](axSym){\scriptsize $\ax$};
\node at (axSym) [etic, below right = \stalt and \stlar](tens){\scriptsize $\tens$};
\draw[nopol, out=0, in=45](axSym)to(tens);
\draw[nopol, out = 0, in=135](belowax)to(tens);

\node at (tens) [etic, left = 2*\stlar](derright){$\der$};
\draw[nopol, out = 180, in=90](belowax)to(derright);
\node at (derright) [etic, left = \stlar](derleft){$\der$};
\draw[nopol, out = 180, in=90](axSym)to(derleft);

\node at (derleft.center) [left = 1.6*\stlar] (bang){};
\node at \med{derleft}{bang} [etic, below = 0.8*\hstalt](cut){$\cut$};
\node at (bang.center) [etic, above left = 6pt and 11pt](name){$\tau_{n}$};
\node at (bang.center) [etic, left = \stlar](auxr){};
\node at (auxr.center) [etic, left = 1.4*\hstlar](auxl){};
\node at (auxr.center) [etic, below = 1.8*\stalt](auxrghost){};
\node at (auxl.center) [etic, below = 1.8*\stalt](auxlghost){};
\node at \med{auxl}{auxr} [etic, below = 0.4*\hstalt](dots){$...$};
\node at \med{auxlghost}{auxrghost}[below=\sepbox, rotate=90, nospace, anchor = center](bracket){\huge \{};
\node at (bracket.center)[ below=\sepbox+2pt, nospace](bracketlab){\footnotesize $2^{n}$};

\abox{bang}{exbox}{30pt}{5pt}{15pt}
\node at (bang.center)[etic] (bangsym){$!$};
\draw[nopol](auxl)to(auxlghost);
\draw[nopol](auxr)to(auxrghost);

\node at (bang.center) [etic, below = 0.7*\hstalt](bangghost){};
\draw[nopol, out = -90, in=0](derleft)to(cut);
\draw[nopol, out = -90, in=180](bangsym)to(cut);

\node at (bang.center) [nospace, left= 2.6*\stlar](bang2){};
\node at (bang2.center) [etic, above left = 6pt and 11pt](name2){$\tau_{n}$};
\node at (bang2.center) [etic, left = \stlar](auxr2){};
\node at (auxr2.center) [etic, left = 1.4*\hstlar](auxl2){};
\node at (auxr2.center) [etic, below = 1.8*\stalt](auxrghost2){};
\node at (auxl2.center) [etic, below = 1.8*\stalt](auxlghost2){};
\node at \med{auxl2}{auxr2} [etic, below = 0.4*\hstalt](dots){$...$};
\node at \med{auxlghost2}{auxrghost2}[below=\sepbox, rotate=90, nospace, anchor = center](bracket2){\huge \{};
\node at (bracket2.center)[ below=\sepbox+2pt, nospace](bracketlab2){\footnotesize $2^{n}$};

\abox{bang2}{exbox}{30pt}{5pt}{15pt}
\node at (bang2.center)[etic] (bangsym2){$!$};
\draw[nopol](auxl2)to(auxlghost2);
\draw[nopol](auxr2)to(auxrghost2);

\node at \med{derright}{bang2} [etic, below = 2*\hstalt](cut2){$\cut$};
\draw[nopol, out = -90, in=0](derright)to(cut2);
\draw[nopol, out = -90, in=180](bangsym2)to(cut2);

\node at (tens) [nospace, below = 1.4*\stalt](bang3){};
\abox{bang3}{exbox}{150pt}{10pt}{62pt}
\node at (bang3.center)[etic] (bangsym3){$!$};
\draw[nopol](tens)to(bangsym3);

\node at (bang) [etic, below = 0.7*\hstalt](bangghost){};
\draw[nopol](bangsym)to(bangghost);

\end{tikzpicture}
\end{tabular}
\end{center}
Then we define the family $\pi_{n}$, which is obtained from $\tau_{n}$ by contracting all its $2^{n}$ auxiliary conclusions via a tree of contractions:
\begin{center}
$\pi_{n} \ \defeq\ \ $
\begin{tikzpicture}[ocenter]
\node at (0,0) [nospace](bang){};
\node at (bang.center) [etic, above left = 6pt and 11pt](name){$\tau_{n}$};
\node at (bang.center) [etic, left = \stlar](auxr){};
\node at (auxr.center) [etic, left = 1.4*\hstlar](auxl){};
\node at (auxr.center) [etic, below = 0.7*\hstalt](auxrghost){};
\node at (auxl.center) [etic, below = 0.7*\hstalt](auxlghost){};
\node at \med{auxl}{auxr} [etic, below = 0.4*\hstalt](dots){$...$};
\inetcell[at=\med{auxlghost}{auxrghost},inductiveTrSmall,anchor=center,below=-1pt](contr){$\contr$};
\node at (contr.pal) [etic, below = 0.7*\hstalt](contrghost){};
\draw[nopol](contr)to(contrghost);

\abox{bang}{exbox}{30pt}{5pt}{15pt}
\node at (bang.center)[etic] (bangsym){$!$};
\draw[nopol](auxl)to(auxlghost);
\draw[nopol](auxr)to(auxrghost);

\node at (bang.center) [etic, below = 0.7*\hstalt](bangghost){};
\draw[nopol](bangsym)to(bangghost);
\end{tikzpicture}
\ also noted \ 
\begin{tikzpicture}[ocenter]
\node at (0,0) [nospace](bang){};
\node at (bang.center) [etic, above left = 6pt and 5pt](name){$\pi_{n}$};
\node at (bang.center) [nospace, left = \stlar](auxr){};
\node at (auxr.center) [etic, below = 0.7*\hstalt](auxrghost){};

\abox{bang}{exbox}{20pt}{5pt}{15pt}
\node at (bang.center)[etic] (bangsym){$!$};
\draw[nopol](auxr)to(auxrghost);

\node at (bang.center) [etic, below = 0.7*\hstalt](bangghost){};
\draw[nopol](bangsym)to(bangghost);
\end{tikzpicture}
, for instance:
% !TEX root = main.tex
\begin{tabular}{cccc}

$\pi_{1} \ =$
&
\begin{tikzpicture}[ocenter]
\node at (0,0) [nospace](bang){};
\node at (bang.center) [etic, above left = 6pt and 11pt](name){$\tau_{1}$};
\node at (bang.center) [etic, left = \stlar](auxr){};
\node at (auxr.center) [etic, left = 1.4*\hstlar](auxl){};

\abox{bang}{exbox}{30pt}{5pt}{15pt}
\node at (bang.center)[etic] (bangsym){$!$};

\node at (bang.center) [etic, below = \hstalt](bangghost){};
\draw[nopol](bangsym)to(bangghost);

\node at \med{auxl}{auxr} [etic, below = 0.7*\hstalt](contr){$\csym$};
\node at (contr.center) [etic, below = \hstalt](contrghost){};
\draw[nopol, out = -90, in=135](auxl)to(contr);
\draw[nopol, out = -90, in=45](auxr)to(contr);
\draw[nopol](contr)to(contrghost);
\end{tikzpicture}
&
=
&
\begin{tikzpicture}[ocenter]
\node at (0,0) [etic, left = 2*\stlar](belowax){\scriptsize $\ax$};
\node at (belowax.center) [etic, above = \hstalt ](axSym){\scriptsize $\ax$};
\node at (axSym) [etic, below right = 1*\stalt and \stlar](tens){\scriptsize $\tens$};
\draw[nopol, out=0, in=45](axSym)to(tens);
\draw[nopol, out = 0, in=135](belowax)to(tens);

\node at (tens) [etic, left = 2*\stlar](derright){$\der$};
\draw[nopol, out = 180, in=90](belowax)to(derright);
\node at (derright) [etic, left = \stlar](derleft){$\der$};
\draw[nopol, out = 180, in=90](axSym)to(derleft);

\node at \med{derleft}{derright} [etic, below = .8*\stalt](contr){$\csym$};
\node at (contr.center) [etic, below = \hstalt](contrghost){};
\draw[nopol, out = -90, in=135](derleft)to(contr);
\draw[nopol, out = -90, in=45](derright)to(contr);
\draw[nopol](contr)to(contrghost);

\node at (tens) [nospace, below = \hstalt](bang){};
\abox{bang}{exbox}{55pt}{10pt}{42pt}
\node at (bang.center)[etic] (bangsym){$!$};
\draw[nopol](tens)to(bangsym);

\node at (bang.center) [etic, below = \hstalt](bangghost){};
\draw[nopol](bangsym)to(bangghost);

%\node at (axRightConclusion) [etic, below left = .7*\stalt and 3.5*\stlar](axRightConclusion2){};
%\node at (axRightConclusion2.center) [etic, left = 1.4*\stlar](der2){\scriptsize $\der$};
%\node at \med{axRightConclusion2}{der2} [etic, above = \hstalt ](axSym2){\scriptsize $\ax$};
%\node at (axSym2) [etic, below = 1.4*\stalt](contr2){\scriptsize $\csym$};
%\draw[nopol, out=0, in=45](axSym2)to(contr2);
%\draw[nopol, out=180, in=90](axSym2)to(der2);
%\draw[nopol, out = -90, in=135](der2)to(contr2);
%
%\node at \med{bangsym}{contr2} [etic, below = \hstalt ](cut){\scriptsize $\cut$};
%\draw[nopol, out=-90, in=0](bangsym)to(cut);
%\draw[nopol, out = -90, in=180](contr2)to(cut);
\end{tikzpicture}
\end{tabular}.
\end{center}
Finally, define the following family of nets:
\begin{center}
$\begin{array}{lllllllllll}
\rho_1 & \defeq & \pi_{1}
&&&&&
\rho_{n+1} & \defeq & % !TEX root = main.tex
\begin{tikzpicture}[ocenter]
\node at (0,0) [nospace](bang){};
\node at (bang.center) [etic, above left = 6pt and 5pt](name){$\pi_{1}$};
\node at (bang.center) [nospace, left = \stlar](auxr){};

\abox{bang}{exbox}{20pt}{5pt}{15pt}
\node at (bang.center)[etic] (bangsym){$!$};

\node at (bang.center) [etic, below = 0.7*\hstalt](bangghost){};
\draw[nopol](bangsym)to(bangghost);

\node at (auxr.center) [etic, left = 2*\stlar](bang2){};
\node at (bang2.center) [etic, above left = 6pt and 5pt](name2){$\rho_{n}$};
\node at (bang2.center) [etic, left = \stlar](auxr2){};
\node at (auxr2.center) [etic, below = 0.7*\hstalt](auxrghost2){};

\abox{bang2}{exbox}{20pt}{5pt}{15pt}
\node at (bang2.center)[etic] (bangsym2){$!$};
\draw[nopol](auxr2)to(auxrghost2);

\node at \med{bang2}{auxr} [etic, below = 0.7*\hstalt](cut){$\cut$};

\draw[nopol, out = -90, in=0](auxr)to(cut);
\draw[nopol, out = -90, in=180](bangsym2)to(cut);
\end{tikzpicture}
\end{array}$
\end{center}

\begin{prop}[Size explosion in MELL]
\label{prop:size-exp}
$\rho_{n} \Rew{l_0}^{3(n-1)} \pi_{n}$, $\size{\rho_{n}}= \bigo(n)$, and $\size{\pi_{n}}= \Omega(2^{n})$.
\end{prop}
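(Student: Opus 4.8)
The plan is to prove $\rho_n \Rew{0}^{3(n-1)} \pi_n$ by induction on $n$, via a three-step ``absorption'' pattern, and then settle the two size bounds by routine inductions. Throughout I write ``$\tau_1$'' also for the $\bang$-box that the net $\tau_1$ consists of, since the box underlying $\pi_1$ is literally $\tau_1$: recall that $\pi_1$ is $\tau_1$ with its two (dereliction) auxiliary conclusions collected by a single contraction $\csym$. Unfolding the definition, $\rho_n$ is a ``stack'' of $n$ copies of $\tau_1$: for $1\le i<n$, the two auxiliary conclusions of the $i$-th copy are collected by a contraction whose output is cut against the principal conclusion of the $(i+1)$-th copy; the contraction of the last copy is free; and the principal conclusion of the first copy is the principal conclusion of $\rho_n$. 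The base case $n=1$ is immediate, since $\rho_1=\pi_1$ and $3(1-1)=0$.

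The inductive core is an \emph{absorption} observation, which I would prove by just running the rules $\Rew{\contr}$ and $\Rew{\boxbox}$ above. Suppose a net contains, at level $0$, a copy of $\tau_1$ whose two dereliction conclusions are collected by a contraction $\csym$ whose output is cut against the principal conclusion of a $\bang$-box $C$ (also at level $0$) whose $k$ auxiliary conclusions are collected into a single edge $e$ by an arbitrary tree $T$ of contractions. Firing $\Rew{\contr 0}$ at that cut duplicates $C$ into $C'$ and $C''$, attaches their principal conclusions to the two edges that fed $\csym$ --- namely the two dereliction conclusions of the copy of $\tau_1$, which lie on its boundary and are hence at level $0$ --- and replaces $\csym$ by $k$ binary contractions grafted onto the leaves of $T$, producing a tree $T'$ over the $2k$ auxiliary conclusions of $C'$ and $C''$, still collecting them into $e$. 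The two resulting cuts are $\Rew{\boxbox 0}$ redexes: in each, the copy of $\tau_1$ is the ``outer'' box (its internal net has a $\whyn$-conclusion --- a dereliction output --- cut against a $\bang$-box) and $C'$ (resp.\ $C''$) is the promoted ``inner'' box. Firing them slides $C'$ and $C''$ inside the copy of $\tau_1$, cut against its two dereliction conclusions, turning their $2k$ auxiliary conclusions into auxiliary conclusions of the enlarged box, still collected by $T'$. So in $3$ level-$0$ steps this sub-net becomes an enlarged box containing two axioms, a tensor, two derelictions, and a copy of $C$ cut against each dereliction, with $2k$ auxiliary conclusions collected by $T'$. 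When $C$ is a copy of $\tau_m$, that content is exactly the definition of $\tau_{m+1}$; and by definition $\pi_m$ is $\tau_m$ with its $2^m$ auxiliary conclusions collected by a tree of contractions.

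For the inductive step, assume $\rho_n \Rew{0}^{3(n-1)} \pi_n$. Since $\rho_{n+1}$ is $\pi_1$ stacked on top of $\rho_n$ --- which only adds a $\tau_1$-box, a contraction and a cut alongside $\rho_n$ but nests nothing around it, so all links inside $\rho_n$ keep their level --- the same reduction runs inside $\rho_{n+1}$ at the same levels and never touches the top cut (the outermost box of $\rho_n$ merely grows, its principal conclusion staying cut as before), so $\rho_{n+1}$ reaches in $3(n-1)$ level-$0$ steps the net given by $\pi_1$ stacked on $\pi_n$. As $\pi_n$ is $\tau_n$ with its auxiliary conclusions collected by a tree and $\tau_n$ is a box, this net is exactly an instance of the absorption observation with $C$ the $\tau_n$-box and $k=2^n$; applying it yields $3$ further level-$0$ steps to a net whose outer box is a copy of $\tau_{n+1}$ and whose $2^{n+1}$ auxiliary conclusions are collected by a tree, i.e.\ to $\pi_{n+1}$. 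Hence $\rho_{n+1}\Rew{0}^{3n}\pi_{n+1}$, closing the induction. For the sizes, $\size{\tau_1}=\bigo(1)$ and $\size{\rho_{n+1}}=\size{\rho_n}+\bigo(1)$ give $\size{\rho_n}=\bigo(n)$, while $\pi_n$ contains a tree of contractions with $2^n-1$ nodes, so $\size{\pi_n}=\Omega(2^n)$.

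The subtle point --- and the reason the statement exhibits a \emph{specific} level-$0$ sequence rather than ``the'' normal form --- is that the absorption steps must be taken innermost cut first, which is what ``reducing inside the $\rho_n$ part'' does: each $\Rew{\boxbox 0}$ then relocates a duplicated box into the box of a genuine $\pi_1$, which sits at level $0$, and this works only because the contraction consumed by the preceding $\Rew{\contr 0}$ is the ``fresh'' one of that $\pi_1$, not one already buried in a contraction tree built by earlier steps. Firing the topmost cut first would instead push it down through the contraction tree and the boxes below it, duplicating work layer after layer and yielding an exponentially long sequence --- precisely the breakage of the sub-term property for $\Rew{ll}$ this family is meant to illustrate. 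I expect the careful bookkeeping of how the auxiliary conclusions and the contraction trees evolve --- together with checking that every fired redex stays at level $0$ --- to be the main obstacle; a minor caveat is that the construction produces a fixed binary contraction tree over the $2^n$ leaves, so the equality with $\pi_n$ holds up to the usual rearrangement of contraction trees (associativity and commutativity of $\csym$), unless $\pi_n$ is defined with exactly that tree.
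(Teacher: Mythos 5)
Your proof is correct and follows essentially the same route as the paper's: induction on $n$, reducing the $\rho_n$ sub-net of $\rho_{n+1}$ to $\pi_n$ by the induction hypothesis and then performing one $\Rew{\contr 0}$ step followed by two $\Rew{\boxbox 0}$ steps to absorb the two copies of the $\tau_n$ box into the $\pi_1$ box, yielding $\tau_{n+1}$ with its $2^{n+1}$ auxiliary conclusions contracted, i.e.\ $\pi_{n+1}$. Your additional remarks on the order of reduction and on contraction trees up to associativity/commutativity are sensible but not part of the paper's argument.
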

\begin{proof}
By induction on $n$. The bounds on the sizes follow immediately from the definition and the \ih
For $n=1$, we have $\rho_{1} = \pi_{1}$, so the statement holds. For $n+1$, we have that by \ih $\rho_{n} \Rew{l_0}^{3(n-1)} \pi_{n}$. Therefore, we obtain:
\begin{center}
\input{figure-size-exploding-family-proof} 
\end{center}
\end{proof}

The size exploding family of \refprop{size-exp} is a chain of what Roversi and Vercelli in \cite{DBLP:conf/fopara/RoversiV09} call \emph{spindles}, which they identify as the problematic configurations for representing polynomial time.

 \paragraph{Typing the Size Exploding Family} The proof net $\rho_{1} = \pi_{1}$ used to build the family can be typed by assigning type $\form$, for an arbitrary type $\form$, to the right conclusion of the axioms (thus having $\form^{\bot}$ on the left conclusion of the axioms), and $\form_{1} \defeq !(\form \tens\form)$ for the right conclusions of $\rho_{1}$ (and $?\form^\bot$ on the left conclusion of $\rho_{1}$). To type the second term of the family $\rho_{2}$, and assuming that the left occurrence of $\pi_{1}$ in $\rho_{2}$ is typed as just described, the right one needs to be typed using $\form_1$ as type of the axioms, thus obtaining as type of the right conclusion, $\form_{2} \defeq !(\form_1\tens \form_1) =  !(!(\form \tens\form)\tens !(\form \tens\form))$. It is then easily seen that the type $\form_{n+1}$ of the right conclusion of $\rho_{n+1}$ is given by $\form_{n+1} = !(\form_{n}\tens \form_{n})$, that is, the type size grows exponentially with $n$.
 
\paragraph{Size Explosion and Elementary Linear Logic} The proofs $\rho_{n}$ of the size exploding family actually belong to \emph{elementary linear logic} (ELL): they use derelictions---which are forbidden in ELL---but those derelictions can be seen as the auxiliary ports of functorial promotions associated to the $!$s, which are allowed in ELL. Therefore, size explosion affects ELL as well. We mention this fact because ELL has some interesting properties with respect to \levy's \emph{optimal reduction}. On the one hand, ELL provides a much simpler implementation of optimal reduction, because the technical \emph{oracle} part of the implementation is not needed in ELL, as  pointed out by Asperti \cite{DBLP:conf/lics/Asperti98}. On the other hand, the complex behaviour of duplication responsible for the \emph{unreasonability} of optimal reduction is already fully present in ELL, and it is thus independent of the oracle, as shown by Asperti et al.  \cite{DBLP:conf/popl/AspertiCM00}. For the study of time cost models, restricting to E(ME)LL does not seem to simplify the problem, as size explosion affects both MELL and EMELL with respect to reduction at level 0. This is rather  in line with the results of \cite{DBLP:conf/popl/AspertiCM00}.
% !TEX root = main.tex

\section{The Linear Substitution Calculus}
\label{sect:lsc}
Here we recall the basics of the linear substitution calculus, which is the footprint of the exponential substitution calculus that shall be introduced in the next section.

There are various ingredients. First of all, the syntax of the $\l$-calculus is extended with explicit substitutions $\tm\esub\var\tmtwo$ (shortened to ESs), which is a constructor binding $\var$ in $\tm$ (meant to be substituted by $\tmtwo$)---and a compact notation for $\letin\var\tmtwo\tm$---while we use $\tm\isub\var\tmtwo$ for  meta-level substitution. 
\begin{center}
$\begin{array}{r\colspace\colspace rlll}
\textsc{LSC Terms} & \tm,\tmtwo, \tmthree & \grameq &\var \mid \la\var\tm \mid \tm\tmtwo 
\mid \tm \esub\var\tmtwo 
\end{array}$
\end{center}
The LSC is based on the \emph{rejection} of commuting rules for ESs. In traditional $\l$-calculi with ESs, a term such as $(\tm\tmtwo)\esub\var\tmthree$ would rewrite to $\tm\esub\var\tmthree\tmtwo\esub\var\tmthree$ by commuting the ES and the application. In the LSC it does not: ESs interact directly with variable occurrences, using \emph{contexts} to specify a commutation-free rewriting system. 
Contexts, used pervasively in the study of the LSC, are terms with a \emph{hole} $\ctxhole$ intuitively standing for a removed sub-term. Here, we need (general) contexts $\ctx$ and the special case of substitution contexts $\lctx$ (standing for \emph{L}ist of substitutions).
\begin{center}
$\begin{array}{r\colspace\colspace rlll}
\textsc{Contexts} & \ctx & \grameq &\ctxhole \mid \ctx \tm \mid \tm \ctx \mid \la{\var}{\ctx} \mid %
\ctx \esub\var\tm \mid \tm \esub \var \ctx \\
\textsc{Substitution contexts} &\lctx & \grameq &\ctxhole \mid \lctx 
\esub\var\tm
\end{array}$
\end{center}
Replacing the hole of a context $\ctx$ with a term $\tm$ (or another context $\ctxtwo$) is called \emph{plugging} and noted $\ctxp\tm$ (resp. $\ctxp\ctxtwo$).
In general, plugging does capture free variables, that is, $(\la\var\ctxhole\var)\ctxholep{\var} = \la\var\var\var$. We use $\ctxfp\tm$ for a plugging with on-the-fly renaming of bound variables of $\ctx$ to avoid capture in $\tm$. For instance, $(\la\var\ctxhole\var)\ctxholefp{\var} = \la\vartwo\var\vartwo$.

The LSC has three rewriting rules. As usual, the root rules are extended to be applied everywhere in a term via a context closure. An unusual aspect is that two of the rules also use contexts in the definition of the root rules. 
\begin{center}
$\begin{array}{rrll}
\multicolumn{4}{c}{\textsc{LSC Root rewriting rules}}\\
    \textsc{Beta at a distance} & \lctxp{\la\var\tm}\tmtwo &  \rtodb  & \lctxp{\tm\esub{\var}{\tmtwo}} 
    \\
    \textsc{Linear Substitution}  & \ctxfp\var\esub\var{\tmtwo} &  \rtols  & \ctxfp\tmtwo\esub\var\tmtwo
    \\
    \textsc{Garbage Collection}  & \tm\esub\var\tmtwo &  \rtogc  & \tm \ \ \ \mbox{if }\var\notin\fv\tm

\end{array}$\smallskip
		
		\begin{tabular}{ccc}
\textsc{Contextual closure}:
			&
			\multirow{2}{*}{
			\AxiomC{$\tm \rootRew{a} \tm'$}
	\UnaryInfC{$  \ctxp{\tm} \Rew{a} \ctxp{\tm'}$}	
	\DisplayProof}
		\\
		($a \in \set{\db,\ls,\gc}$)
	\end{tabular}
\end{center}
For $\tols$, the notation means that given $\tm\esub\var\tmtwo$ with $\var\in\fv\tm$ we consider a decomposition $\tm=\ctxfp\var$ isolating a free occurrence of $\var$ (one among possibly many). The rule is inspired by replication in the $\pi$-calculus. Examples of steps: 
\begin{center}$\begin{array}{rcl}
(\la\var\var\vartwo)\esub\vartwo\tm \tmtwo \tmthree 
& \todb & (\var\vartwo)\esub\var\tmtwo \esub\vartwo\tm \tmthree
\\
\la\var((\la\vartwo\var\var)\esub\varthree\tm)
&\togc&
 \la\var\la\vartwo\var\var
 \\
 (\la\vartwo\var\var)\esub\var\tm \esub\varthree\tmtwo
 &\tols& 
 (\la\vartwo\var\tm)\esub\var\tm \esub\varthree\tmtwo
 \\
 (\la\vartwo\var\var)\esub\var\tm \esub\varthree\tmtwo
 &\tols& (\la\vartwo\tm\var)\esub\var\tm \esub\varthree\tmtwo
 \end{array}$\end{center}
 The LSC is a very well behaved rewriting system (see \cite{DBLP:conf/rta/Accattoli12,DBLP:conf/popl/AccattoliBKL14,DBLP:conf/rta/BarenbaumB17}), conservatively extending the $\l$-calculus, and improving it in various respects.
  
\paragraph{Micro-Step Simulates Small-Step} Let $\sizep\tm\var$ the number of free occurrences of $\var$ in $\tm$. The LSC simulates meta-level substitution as follows: 
\begin{center}
$\begin{array}{cccccccc}
\tm\esub\var\tmtwo &\tols^{\sizep\tm\var}& \tm\isub\var\tmtwo\esub\var\tmtwo &\togc& \tm\isub\var\tmtwo
\end{array}$
\end{center}
that is, by first replacing one by one the occurrences of $\var$ using $\tols$, and then collecting $\esub\var\tmtwo$ when none are left. Therefore, the LSC simulates $\beta$ reduction, as $(\la\var\tm)\tmtwo \todb \tm\esub\var\tmtwo \tols^{*}\togc \tm\isub\var\tmtwo$.

 \paragraph{Strong Normalization} We define strong normalization as it is relevant for the next property, and shall be used also later on. Given a reduction relation $\to$, the predicate \emph{$\tm$ is $\snsubst$}, also noted $\tm \in \snsubst$, is defined \emph{inductively} as follows:
\begin{itemize}
	\item $\tm$ is $\snsubst$ if $\tm$ is $\to$-normal, \ie, has no $\to$ redexes.
	\item $\tm$ is $\snsubst$ if $\tmtwo$ is $\snsubst$ for all $\tmtwo$ such that $\tm \to \tmtwo$.
\end{itemize}
We use $\snsubst$ for the set of terms that are $\sn{\to}$, and say that $\to$ is strongly normalizing (or $\sn{}$) if $\tm\in\sn\to$ for all terms $\tm$.
 
 \paragraph{Local Termination} While terms in the LSC can be divergent, because the LSC simulates the $\l$-calculus, an important property is that the rewriting rules of the LSC are strongly normalizing when considered separately.
 
 \begin{prop}[Local termination, \cite{DBLP:conf/rta/Accattoli12}]
 Let $\to \in \set{\todb, \tols, \togc}$. Then $\to$ is strongly normalizing.
 \end{prop}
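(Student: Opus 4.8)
\emph{Proof plan.}
The three rules are not on the same footing: for $\togc$ and $\todb$ a one-line decreasing measure works, and all the difficulty is in $\tols$. First I would dispatch $\togc$: a root step $\tm\esub\var\tmtwo\rtogc\tm$ removes the explicit substitution $\esub\var\tmtwo$ (and those inside $\tmtwo$), and the number of explicit substitutions of a term is unaffected by plugging into a context; hence every $\togc$ step strictly decreases this number, which is a natural, so $\togc$ is $\sn{}$. Similarly for $\todb$: a root step $\lctxp{\la\var\tm}\tmtwo\rtodb\lctxp{\tm\esub\var\tmtwo}$ consumes the abstraction $\la\var$ and leaves $\lctx$, $\tm$ and $\tmtwo$ untouched; as this count is again stable under plugging, every $\todb$ step strictly decreases the number of abstractions, so $\todb$ is $\sn{}$.

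The interesting case is $\tols$, where no size-like measure decreases, since reducing $\ctxfp\var\esub\var\tmtwo$ to $\ctxfp\tmtwo\esub\var\tmtwo$ replaces one occurrence of $\var$ by a whole copy of $\tmtwo$ while the substitution stays. I would therefore argue indirectly, in four moves. (1) \emph{$\tols$ is confluent}: since there is no composition of explicit substitutions, this is a routine Tait--Martin-L\"of argument on a suitable parallel-reduction relation. (2) \emph{Every term $\tm$ has a finite $\tols$-normal form}, given by the structural recursion $\mathsf{nf}(\tm\esub\var\tmtwo)=\mathsf{nf}(\tm)\isub\var{\mathsf{nf}(\tmtwo)}\esub\var{\mathsf{nf}(\tmtwo)}$ (and homomorphically on the other constructors, which carry no $\tols$-redex at their root); the point making this well defined is that $\tmtwo$ lies outside the scope of every variable bound inside $\tm$, so plugging $\mathsf{nf}(\tmtwo)$ into the occurrences of $\var$ in $\mathsf{nf}(\tm)$ creates no new $\tols$-redex, and $\tm\tols^{*}\mathsf{nf}(\tm)$. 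Since the size of a term is non-decreasing along $\tols$, move (1) then gives that every $\tols$-reduct of $\tm$ has size bounded by $\size{\mathsf{nf}(\tm)}$. (3) \emph{Finitely many reducts}: $\tols$ also never introduces free variables, so up to $\alpha$-equivalence $\tm$ has only finitely many $\tols$-reducts, and an infinite reduction out of $\tm$ would traverse a cycle $\tmtwo\tols^{+}\tmtwo$. (4) \emph{Cycles are impossible}: along such a cycle the size is constant, so every step has an argument of size $1$, i.e.\ is a renaming step $\ctxfp\varthree\esub\varthree\vartwo\tols\ctxfp\vartwo\esub\varthree\vartwo$; but the renaming fragment of $\tols$ is $\sn{}$ via the measure summing, over the explicit substitutions with a variable argument, the number of occurrences of the bound variable weighted by the number of enclosing substitutions --- a renaming step feeds only \emph{shallower} substitutions, so this natural strictly decreases. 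This contradicts the cycle, so $\tols$ is $\sn{}$.

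I expect the crux to be exactly the asymmetry just exploited: $\tols$ both replicates all the explicit substitutions occurring in $\tmtwo$ and multiplies the occurrences of the variables bound by substitutions \emph{enclosing} $\esub\var\tmtwo$, so that reducing an inner substitution ``feeds'' the outer ones, whose arguments may be arbitrarily larger than $\tmtwo$. Because of this, every direct additive or multiplicative size measure can strictly \emph{increase}, and a direct termination proof must instead weight each substitution so as to dominate everything it can feed into; this is the delicate point, and the detailed argument (in a form based on such a measure) is carried out in \cite{DBLP:conf/rta/Accattoli12}.
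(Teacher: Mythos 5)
Your arguments for $\togc$ and $\todb$ are correct and are the standard ones. For $\tols$ you take a genuinely different route from the paper's: the cited reference (and the generalization to the ESC in \refsect{local-termination}) proves termination \emph{directly}, via a decreasing measure in which each substitution is weighted by a multiplicative ``potential'' of its bound variable, precisely so as to dominate the outer substitutions it can feed; that measure is self-contained and additionally bounds the length of reductions. Your plan is instead indirect: confluence plus an explicitly computed normal form bound the size of every reduct, finitely many reducts force a cycle on any infinite reduction, and cycles are excluded because they consist only of renaming steps. Steps (2)--(4) are sound, up to one fixable slip: weighting an occurrence by the number of substitutions enclosing its \emph{binder} gives weight $0$ to a top-level substitution, so a renaming step such as $(\var\,\var)\esub\var\vartwo \tols (\vartwo\,\var)\esub\var\vartwo$ with $\vartwo$ free leaves your measure unchanged; you need weight $1+k$ where $k$ is the number of enclosing substitutions.

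The genuine gap is step (1). Confluence of $\tols$ is true but is not a routine Tait--Martin-L\"of exercise: a linear substitution step creates new redexes for \emph{enclosing} substitutions (the inserted copy of the argument carries occurrences of variables bound further out), so the obvious one-step full development does not close the parallel diamond, and one is driven into the residual/finite-developments theory of the LSC. Worse, the usual proofs of confluence for the substitution fragment --- including the one this paper gives for the exponential fragment of the ESC --- obtain confluence \emph{from} local termination via Newman's lemma, i.e.\ they derive your hypothesis from the very conclusion you are after. As written, your argument is therefore either circular or has relocated all the difficulty into a step declared routine. A repair that avoids full confluence is to prove directly that $\mathsf{nf}$ is invariant under single $\tols$-steps (a substitution lemma for $\mathsf{nf}$), which is the only consequence of confluence you actually use; but the bookkeeping needed for that is comparable to setting up the measure, which is why the direct proof is the economical one.
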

 The only non-trivial point is the strong normalization of $\tols$, which requires a measure. It easily extends to the one of $\tols\cup\togc$, that is, of the sub-system rewriting ESs. The ESC shall preserve this key property, as we shall prove in \refsect{local-termination}.

\paragraph{Structural Equivalence} The LSC is often enriched with the following equivalence.

\begin{defi}[Structural equivalence]
\label{def:streq} Define head contexts as follows:
\begin{center}
$\begin{array}{r\colspace\colspace rlll}
\textsc{Head contexts} & \hctx & \grameq &\ctxhole \mid \hctx \tm  \mid \la{\var}{\hctx} \mid \hctx \esub\var\tmtwo
\end{array}$
\end{center}
If $\var\notin\fv\hctx$ and $\hctx$ does not capture variables in $\fv\tmtwo$, set:
\begin{center}
$\begin{array}{cccccccc}
\hctxp\tm\esub\var\tmtwo & \sim  & \hctxp{\tm\esub\var\tmtwo}
\end{array}$
\end{center}
\emph{Structural equivalence} $\eqstruct$ is the closure of $\sim$ by reflexivity, symmetry, transitivity, and contexts $\ctx$.
\end{defi}

Its key property is that it commutes with evaluation in the following strong sense.

\begin{prop}[$\eqstruct$ is a strong bisimulation wrt $\tolsc$ \cite{DBLP:conf/popl/AccattoliBKL14}]
	\label{prop:bisimulation}
	Let $a\in\set{\db,\ls,\gc}$. If $\tm \eqstruct\tmtwo \Rew{a} \tmthree$ then exists $\tmfour$ such that $\tm \Rew{a}\tmfour \eqstruct\tmthree$.
\end{prop}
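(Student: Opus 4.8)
The plan is to prove directly that $\eqstruct$ is a strong bisimulation, by reducing to a single use of the generating axiom and then doing a case analysis on how the fired redex overlaps the explicit substitution that $\sim$ relocates. Recall that $\eqstruct$ is the reflexive, symmetric, transitive and contextual closure of the axiom $\hctxp{\tm}\esub{\var}{\tmtwo} \sim \hctxp{\tm\esub{\var}{\tmtwo}}$, with the provisos $\var \notin \fv\hctx$ and $\hctx$ not capturing $\fv\tmtwo$. So a derivation of $\tm \eqstruct \tmtwo$ is a finite chain $\tm = \tm_0, \dots, \tm_k = \tmtwo$ in which every consecutive pair differs by exactly one instance of this axiom plugged in some context, used in one direction or the other. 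I would first establish the \emph{key lemma}, which is the case $k = 1$ of the statement: if $\tm$ and $\tmtwo$ differ by one such step and $\tmtwo \Rew{a} \tmthree$, then $\tm \Rew{a} \tmfour$ for some $\tmfour \eqstruct \tmthree$. Since this ``one step'' relation is symmetric, the lemma also supplies a matching step when the other endpoint reduces. The full statement then follows by induction on $k$: given $\tmtwo \Rew{a} \tmthree$, apply the key lemma to the last link of the chain to obtain $\tm_{k-1} \Rew{a} \tmthree'$ with $\tmthree' \eqstruct \tmthree$; apply the \ih{} to the $(k-1)$-chain $\tm_0, \dots, \tm_{k-1}$ and to $\tm_{k-1} \Rew{a} \tmthree'$ to obtain $\tm \Rew{a} \tmfour$ with $\tmfour \eqstruct \tmthree'$; conclude by transitivity of $\eqstruct$. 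The case $k = 0$ is trivial.

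For the key lemma, let $R$ be the redex that is fired. By the left--right symmetry of the axiom it suffices to treat the case where $R$ is fired in the term whose relevant subterm has the shape $\hctxp{\tm'}\esub{\var}{\tm''}$; write this term as $\ctxp{s}$, with $s = \hctxp{\tm'}\esub{\var}{\tm''}$, $\var \notin \fv\hctx$, $\hctx$ not capturing $\fv{\tm''}$, and write $\ctxp{s'}$ for the other term, with $s' = \hctxp{\tm'\esub{\var}{\tm''}}$; we must produce a step from $\ctxp{s'}$ with $\eqstruct$-equal result. First I would dispatch the routine cases. If $R$ is disjoint from the distinguished occurrence of $s$ --- that is, it lies in $\ctx$ and neither contains nor is contained in that occurrence --- then the same rule fires at the same position in $\ctxp{s'}$; since $\tols$ is the only duplicating rule and it copies an explicit-substitution body wholesale, the occurrence of $s$ ends up untouched, erased (for $\togc$), or split into at most two whole copies, each of which is re-arranged to the corresponding copy of $s'$ by the same axiom, so the results are $\eqstruct$. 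Likewise, if $R$ lies entirely inside $\tm'$, inside $\tm''$, or inside a non-hole argument along the spine of $\hctx$ (the argument of an application node, the body of an explicit-substitution node), then $R$ commutes with the relocation: fire the analogous $R$ on the corresponding subterm of $\ctxp{s'}$, and the two results are again one instance of the axiom; the only point to check is that the axiom's free-variable provisos survive, which holds because no LSC rule creates free variables.

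The substantial part is the small number of critical overlaps, where the relocated substitution $\esub{\var}{\tm''}$, or an explicit substitution lying on the spine of $\hctx$, is precisely what $R$ acts on. Three representative cases: (1) $a = \gc$ and $R$ erases $\esub{\var}{\tm''}$ itself, that is, $\var \notin \fv{\hctxp{\tm'}}$; since also $\var \notin \fv\hctx$ this forces $\var \notin \fv{\tm'}$, so in $\ctxp{s'}$ the inner $\esub{\var}{\tm''}$ of $\hctxp{\tm'\esub{\var}{\tm''}}$ is garbage and $\togc$ on it returns the very same term. (2) $a = \ls$ and $R$ substitutes an occurrence of $\var$ via $\esub{\var}{\tm''}$; that occurrence lies in $\hctxp{\tm'}$, hence in $\tm'$ since $\var \notin \fv\hctx$, so performing the analogous replacement in $\ctxp{s'}$ via the inner $\esub{\var}{\tm''}$ and then sliding $\esub{\var}{\tm''}$ back with one further use of the axiom yields an $\eqstruct$-equal term. (3) $a = \db$ and $\esub{\var}{\tm''}$ lies on the substitution-context spine of the fired $\beta$-redex (dually, on the $\ctxp{s'}$ side $\esub{\var}{\tm''}$ has just been pushed below the abstraction of such a redex); here the point is that $\todb$ at a distance is invariant under sliding substitutions through head contexts --- which is exactly what $\sim$ does --- so the matching $\todb$-step exists, and its result differs from the one in $\ctxp{s}$ by at most one more application of the axiom, the relevant capture proviso ($\var$ not free in the $\beta$-argument) coming for free from the implicit $\alpha$-convention attached to the $\todb$ rule. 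A few further overlaps of the same flavour arise when the pattern of a $\db$- or $\ls$-redex reaches into the spine of $\hctx$; they are all settled by firing the analogous step and re-arranging at most once.

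I expect the main obstacle to be bookkeeping rather than any individual case: a $\db$-redex occupies not a single position but a whole substitution context, and an $\ls$-redex a whole general context ending in a variable occurrence, so one must enumerate every way such a pattern can straddle the site at which $\sim$ operates and, in each, verify that the capture-avoidance side conditions of the rule and of the axiom line up --- case (3) being the delicate one. The at-a-distance formulation of the rules is precisely what keeps this analysis finite and shallow, and the same bisimulation argument, with many more constructors and rules, will have to be carried out for the ESC.
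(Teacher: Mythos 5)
The paper does not prove this proposition: it is recalled from the literature and attributed to \cite{DBLP:conf/popl/AccattoliBKL14}, so there is no in-paper proof to compare against. Your sketch---reducing to a single instance of the generating axiom and then classifying how the fired redex overlaps the relocated substitution (disjoint, nested, or one of the critical overlaps with $\togc$, $\tols$, and $\todb$)---is the standard argument and essentially the one carried out in the cited reference, with the closure of $\eqstruct$ under contexts and transitivity correctly absorbing the at-most-two rearrangements needed after a duplicating $\tols$ step.
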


Essentially, $\equiv$ never creates redexes, it can be postponed, and vanishes on normal forms (that have no ESs). Accattoli shows that $\equiv$ is exactly the quotient induced by the (call-by-name) translation of $\l$-terms with ESs to proof nets \cite{DBLP:conf/ictac/Accattoli18}.

\paragraph{Linear Head Reduction} One of the key properties of the LSC is that it admits a neat definition of \emph{linear head reduction}, a notion first studied by Danos and Regnier \cite{Danos04headlinear} and Mascari and Pedicini \cite{DBLP:journals/tcs/MascariP94}. Before the introduction of the LSC, the presentations of linear head reduction (in \cite{Danos04headlinear,DBLP:journals/tcs/MascariP94}) were very technical and hard to manage. To define it in the LSC, we need the head contexts of \refdef{streq}. They are used twice. Firstly, head contexts are used to define the root linear head substitution rule:
\begin{center}
$\begin{array}{r\colspace\colspace rllll}
    \textsc{Linear head substitution}  & \hctx\ctxholefp\var\esub\var{\tmtwo} &  \rtolhs  & \hctx\ctxholefp\tmtwo\esub\var\tmtwo
   \end{array}$
\end{center}
Secondly, the root rules $\rtodb$, $\rtolhs$, and $\rtogc$ are closed by \emph{head} contexts:
\begin{center}
	\begin{tabular}{c}
	\begin{tabular}{cccc}
		\multicolumn{3}{c}{\textsc{Contextual closures}}
		\\
		\AxiomC{$\tm \rootRew{\db} \tm'$}
		\UnaryInfC{$  \hctxp{\tm} \Rew{\symfont{lh}\db} \hctxp{\tm'}$}	
		\DisplayProof
		&
		\AxiomC{$\tm \rootRew{\lhs} \tm'$}
		\UnaryInfC{$  \hctxp{\tm} \tolhs \hctxp{\tm'}$}	
		\DisplayProof
		&
		\AxiomC{$\tm \rootRew{\gc} \tm'$}
		\UnaryInfC{$  \hctxp{\tm} \Rew{\symfont{lh}\gc} \hctxp{\tm'}$}	
		\DisplayProof
	\end{tabular}
	\\[6pt]
	$\begin{array}{c\colspace ccc}
\textsc{Linear head reduction} & \tolh & \defeq &\Rew{\symfont{lh}\db} \cup \tolhs \cup \Rew{\symfont{lh}\gc}
	\end{array}$
	\end{tabular}
\end{center}
Linear head reduction has many interesting property. For the present paper, the most relevant one is the \emph{sub-term property}, which, for the given presentation, is proved by Accattoli and Dal Lago in \cite{DBLP:conf/rta/AccattoliL12} (it was already used by Danos and Regnier and Mascari and Pedicini, but not exploited for complexity analyses).

\begin{thm}[Sub-term property for $\tolh$, \cite{DBLP:conf/rta/AccattoliL12}]
All the sub-terms duplicated by $\tolhs$ along a linear head evaluation sequence $\tm \tolh^* \tmtwo$ are sub-terms of $\tm$ (up to variables renaming).
\end{thm}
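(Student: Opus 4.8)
The plan is to isolate the unique source of copying and then propagate a structural invariant along the reduction. Among the three root rules only $\rtolhs$ copies a sub-term: it rewrites $\hctx'\ctxholefp{\var}\esub\var{\tm_2}$ to $\hctx'\ctxholefp{\tm_2}\esub\var{\tm_2}$, creating one fresh copy of the ES body $\tm_2$ in place of the chosen occurrence of $\var$; instead $\rtodb$ moves the argument of a head redex into a new explicit substitution without copying it, and $\rtogc$ only erases. Hence it suffices to prove that, whenever a $\tolhs$ step along $\tm \tolh^{*} \tmtwo$ unfolds an explicit substitution $\esub\var{\tm_2}$, its body $\tm_2$ is, up to renaming of bound variables, a sub-term of $\tm$. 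I would obtain this from the following invariant, which is stronger than strictly needed for a reason that surfaces in the $\db$ case (namely, $\db$ steps turn arguments of applications into ES bodies): \emph{for every $\tmthree$ with $\tm \tolh^{*} \tmthree$, every sub-term of $\tmthree$ that occurs as the argument of an application or as the body of an explicit substitution is, up to renaming, a sub-term of $\tm$.}

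The invariant is proved by induction on the length of $\tm \tolh^{*} \tmthree$; the base case is immediate. For the inductive step, write $\tm \tolh^{*} \tmthree' \tolh \tmthree$ and assume the invariant for $\tmthree'$. The key structural remark is that a head context $\hctx$ never has its hole in the argument of an application nor in the body of an explicit substitution: the hole of a head context is reached only by descending to the left of applications, to the left of explicit substitutions, and under abstractions. Now analyse the last step. If it is a $\db$ step, then $\tmthree' = \hctxp{\lctxp{\la\var{\tm_1}}\tm_2}$ and $\tmthree = \hctxp{\lctxp{\tm_1\esub\var{\tm_2}}}$: the only ES body of $\tmthree$ not already present in $\tmthree'$ is $\tm_2$, which in $\tmthree'$ occurs as the argument of the displayed head application, hence is a sub-term of $\tm$ up to renaming by the \ih; since nothing has been copied, every remaining argument or ES body of $\tmthree$ already occurs in $\tmthree'$ and the \ih applies. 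If it is a $\tolhs$ step, then $\tmthree' = \hctxp{\hctx'\ctxholefp{\var}\esub\var{\tm_2}}$ and $\tmthree = \hctxp{\hctx'\ctxholefp{\tm_2}\esub\var{\tm_2}}$ for some head context $\hctx'$: by the structural remark the occurrence of $\var$ sits in $\hctx'$ at a position that is neither an argument nor an ES body, hence neither does the fresh copy of $\tm_2$ put in its place, so this copy is not itself a new argument or ES body of $\tmthree$; every argument or ES body strictly inside the copy is, up to the renaming performed by the plugging, an argument or ES body occurring inside $\tm_2$, which is part of $\tmthree'$, hence a sub-term of $\tm$ up to renaming by the \ih; finally, the surviving explicit substitution $\esub\var{\tm_2}$ and all other arguments and ES bodies of $\tmthree$ already occur in $\tmthree'$. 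If it is a $\gc$ step, $\tmthree$ is obtained from $\tmthree'$ by erasing an explicit substitution, so all arguments and ES bodies of $\tmthree$ already occur in $\tmthree'$ and the \ih applies directly.

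The theorem then follows at once: each $\tolhs$ step occurring in $\tm \tolh^{*} \tmtwo$ acts on some $\tmthree$ with $\tm \tolh^{*} \tmthree$, unfolds an explicit substitution $\esub\var{\tm_2}$ of $\tmthree$, and duplicates exactly its body $\tm_2$; by the invariant applied to $\tmthree$, this $\tm_2$ is, up to renaming of bound variables, a sub-term of $\tm$, which is the claim.

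The proof is not deep; the single point that genuinely requires care — and thus the main obstacle — is getting the invariant right. The naive version ``every ES body is a sub-term of $\tm$'' is not stable under $\db$ steps, which promote arguments of applications to ES bodies, and must be strengthened so as to also cover arguments of applications; together with this, one must observe that head contexts never descend into argument or ES-body positions, which is precisely what prevents the $\tolhs$ step from fabricating sub-terms not already present in the current term. The on-the-fly renamings of the $\ctxholefp{\cdot}$ pluggings are absorbed, as usual, by reading ``sub-term of $\tm$'' up to $\alpha$-equivalence throughout.
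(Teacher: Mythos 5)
Your proof is correct. A caveat on the comparison: the paper does not prove this theorem at all---it is imported from Accattoli and Dal Lago \cite{DBLP:conf/rta/AccattoliL12}---so the only thing to compare against is the paper's proof of the analogous property for the ESC's good strategy (\refprop{local-sub-tm} and \refthm{sub-term-prop}). Your argument follows exactly that scheme: you identify the class of sub-terms at risk of being copied (ES bodies, enlarged to application arguments because $\todb$ promotes arguments to ES bodies, just as the ESC must track subtraction values because $\tololli$ promotes them to cut values), you prove a one-step invariant by case analysis whose crux is that the redex position---a head context---never has its hole in an at-risk position, and you conclude by induction on the length of the sequence. The one genuine difference is how $\alpha$-renaming is handled: the paper's ESC invariant is deliberately stated quantitatively (preservation of the \emph{size} of bad values) precisely because strict sub-term identity is not stable under the on-the-fly renamings of plugging, whereas you keep the structural formulation and absorb the renamings by reading ``sub-term'' up to $\alpha$-equivalence. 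For the LSC statement as given (``up to variables renaming'') that is adequate, though if you ever need the quantitative version for a cost analysis, the size-based formulation is the more robust one to carry through the induction.
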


The good strategy for IMELL that we shall define in \refsect{strategy} generalizes linear head reduction to the whole of IMELL and preserves its sub-term property. For that, note that $\tolh$ never evaluates inside ESs. Similarly, the good strategy shall not evaluate inside cuts. Note also that $\tolh$ does not compute normal forms, as it never evaluates arguments. Since we want the good strategy to compute cut-free proofs, we shall have to evaluate the analogous of arguments in our framework (namely, the left premises of subtractions). This shall be obtained by entering into arguments only under some conditions, similarly to how $\tolh$ is generalized to enter inside arguments by Accattoli and co-authors \cite{DBLP:journals/corr/AccattoliL16,DBLP:conf/aplas/AccattoliBM15}, obtaining \emph{linear leftmost-outermost reduction}.

% !TEX root = main.tex
\section{Towards Exponentials as Substitutions}
\label{sect:towards}
This section is an informal and hopefully intuitive introduction to our cut elimination for IMELL. The logical rules we refer to are standard (but decorated with terms), see \reffig{typing}. 

\paragraph{Splitting Terms} For the moment, we prefer to avoid giving grammars, but we want nonetheless to fix some terminologies and intuitions. We distinguish variables into multiplicatives, noted $\mvar, \mvartwo, \mvarthree$, and exponentials, noted $\evar, \evartwo, \evarthree$, and use $\var,\vartwo,\varthree$ for variables of unspecified kind. Proof terms for axioms and right introduction rules shall be called \emph{values}, and noted $\val$. We borrow from the LSC the notation $\lctx$ (here standing for \emph{L}eft context) and generalize it to a context corresponding to (the term annotations induced by) a sequence of left rules of the sequent calculus. Exactly as IMELL proofs can be seen as ending on a sequence of left rules following a right rule or an axiom, every proof term $\tm$ shall uniquely (on-the-fly) decompose as $\lctxp\val$, called its \emph{splitting}.

\paragraph{Meta-Level Substitution} For the design of our system, the first step is understanding the subtle interplay between linearity and substitution. Consider a cut on an exponential formula, and a first attempt at decorating it with terms:
\begin{center}
\AxiomC{$\multiForm   \vdash \tmtwo \hastype \bang\form$}
	\AxiomC{$\multiFormtwo, \evar\hastype \bang\form \vdash \tm\hastype\formtwo$}
 	\RightLabel{$\cut$}
 	\BinaryInfC{$  \multiForm, \multiFormtwo\vdash \cuta\tmtwo\evar\tm\hastype\formtwo$} 	
	\DisplayProof	
\end{center}
Note that the notation $\cuta\tmtwo\evar\tm$ for cuts is similar but opposite to the one for ESs in the LSC, as the arrow goes from left to right and it binds on the right (in $\tm$), to better reflect the structure of sequent proofs. We seek a notion of meta-level substitution $\cutsub\tmtwo\evar\tm$ and a small-step rule such as:
\begin{equation}
\label{eq:naive-sub}
\begin{array}{cccccccc}
\cuta\tmtwo\evar\tm & \to& \cutsub\tmtwo\evar\tm
\end{array}
\end{equation}
Perhaps surprisingly, linear logic does not validate this rule. Despite the $\bang\form$ type, not all of $\tmtwo$, indeed, might be allowed to be duplicated. Let's refine the example and consider the case in which the last rule contributing to $\tmtwo$ is, say, the left rule for $\tens$, what we shall call a \emph{par} and annotate with $\para\mvar\var\vartwo$:
\begin{center}
\AxiomC{$\multiForm, \var\hastype\formtwo, \vartwo\hastype\formthree   \vdash \tmtwo' \hastype \bang\form$}
	\RightLabel{$ \tensLeftRule $}
\UnaryInfC{$\multiForm, \mvar\hastype\formtwo\tens\formthree   \vdash \para\mvar\var\vartwo\tmtwo' \hastype \bang\form$}
	\AxiomC{$\multiFormtwo, \evar\hastype \bang\form \vdash \tm\hastype\formtwo$}
 	\RightLabel{$\cut$}
 	\BinaryInfC{$  \multiForm, \mvar\hastype\formtwo\tens\formthree,\multiFormtwo\vdash \cuta{\para\mvar\var\vartwo\tmtwo'}\evar\tm\hastype\formtwo$} 	
	\DisplayProof	
\end{center}
Now we can see why \refeq{naive-sub} is not valid: the annotation $\para\mvar\var\vartwo$ being linear, it should not be duplicated. The proof nets approach to this issue is marking with a \emph{box} the sub-term/proof net of $\tmtwo'$ that can be duplicated. The usual sequent calculus approach, instead, is commuting the cut with $ \tensLeftRule $, which on proof terms corresponds to have a rule such as:
\begin{center}$
\begin{array}{ccccc}
\cuta{\para\mvar\var\vartwo\tmtwo'}\evar\tm &\to& \para\mvar\var\vartwo\cuta{\tmtwo'}\evar\tm
\end{array}
$\end{center}
and then keep commuting the cut with left rules until an introduction of a $\bang$ connective on the right---what we call \emph{an exponential value}, noted $\exval$---is reached, which can then trigger the substitution. But these commutative rules are a burden, and exactly what the LSC philosophy rejects.

The idea is to exploit the splitting of terms into values and left contexts, so as to import on terms the proof nets approach. A term $\tmtwo$ of exponential type, in fact, shall always be splittable as $\lctxp\exval$, that is, as a left context and an exponential value. We then refine \refeq{naive-sub} as follows:
\begin{center}
%\label{eq:split-sub}
$\begin{array}{ccccc}
\cuta\tmtwo\evar\tm &=& \cuta{\lctxp\exval}\evar\tm& \to& \lctxp{\cutsub\exval\evar\tm}
\end{array}
$\end{center}
Meta-level substitution thus concerns only exponential values $\exval$---playing the role of proof nets boxes---and \emph{not}  the surrounding left context $\lctx$. This same mechanism is used in Accattoli and Paolini's \emph{value substitution calculus} \cite{DBLP:conf/flops/AccattoliP12}.

\paragraph{Split Cuts} In fact, we adopt a further refinement. We restrict cuts to have only values as left sub-terms, that is, of being of shape $\cuta\val\var\tm$. Since we want to stick to the standard sequent calculus for IMELL, we do not restrict the typing rule, we do instead modify its decoration:
\begin{center}
\AxiomC{$\multiForm   \vdash \tmtwo =\lctxp\val \hastype \bang\form$}
	\AxiomC{$\multiFormtwo, \evar\hastype \bang\form \vdash \tm\hastype\formtwo$}
 	\RightLabel{$\cut$}
 	\BinaryInfC{$  \multiForm, \multiFormtwo\vdash \lctxp{\cuta\val\evar\tm} \hastype\formtwo$} 	
	\DisplayProof	
\end{center}
And we shall do the same for the left rule $\lolliLeftRule$ for $\lolli$ (see \reffig{typing}). This refinement simplifies various technical points, in particular it enables a simple definition of the good strategy.

\paragraph{Micro-Step Substitution} For micro-step substitution we simply borrow the LSC approach, adopting rules such as:
\begin{center}
$\begin{array}{ccccc}
\cuta\exval\evar\ctxfp\evar & \to&  \cuta\exval\evar\ctxfp\exval
\end{array}$
\end{center}
While in line with the LSC, this is \emph{not} how cut elimination is usually performed in both proof nets and the sequent calculus. There are three main differences:
\begin{enumerate}
\item \emph{LSC-style duplications}: in our case duplication happens also when there are no contractions involved, for instance we shall have $\cuta\exval\evar\evar \to \cuta\exval\evar\exval$, and there shall be a rule ($\tobder$) duplicating and interacting with derelictions at the same time (see the next section). Not only this approach is perfectly sound, it actually has better rewriting properties than the traditional one. 
%%%%
\item \emph{No commutations}: there are no commutations with $\bang$. Even in proof nets, where most commutative cuts vanish, micro-step cut elimination usually relies on a commutative rule for  $\bang$-boxes (namely $\Rew{\boxbox}$ in \reffig{classical-diverging}), that on terms looks as follows (when $\evar\in\fv\tm$):
\begin{center}
$\begin{array}{cccccccc}
\cuta\exval\evar\bang\tm &\to& \bang\cuta\exval\evar\tm
\end{array}$
\end{center}
The LSC philosophy rejects such a commutative rule. %Even without LSC-style duplication,  rejecting this rule leads to simpler proofs of strong normalization, see Accattoli \cite{DBLP:conf/rta/Accattoli13}.
%%%%
\item \emph{Axioms}: in the proof nets literature, cut elimination of exponential axioms is identical to the one for multiplicative axioms (there is in fact no distinction between the two kinds of axioms) and simply amounts to the removal of the cut and the axiom, in both the small-step (\emph{\`a la} Regnier) and the micro-step (\emph{\`a la} Girard) approaches. In (asymmetric) intuitionistic settings, axioms have two rewriting rules, depending on whether they are substituted, or substituted upon. In our syntax, the traditional approach to cut elimination for exponential axioms would take the following form:
\begin{center}
$\begin{array}{cccccccc}
\cuta\evartwo\evar\tm &  \to & \cutsub\evartwo\evar\tm
\end{array}$
\end{center}
When written with terms, one immediately notices that such a rule is small-step. At the end of  \refsect{confluence}, we shall show that such a small-step rule generates a closable but unpleasant local confluence diagram with the other micro-step exponential rules, which is avoided by adopting a LSC-duplication style also for exponential axioms/variables.
\end{enumerate}
% !TEX root = main.tex
\section{The Exponential Substitution Calculus}
\label{sect:calculus}
 \begin{figure*}[t]
	% !TEX root = main.tex
%\fbox{
  \begin{tabular}{c@{\hspace{.1cm}}cc}
% \begin{tabular}{c@{\hspace{.5cm}}c}
$\begin{array}{rrllll}
\textsc{Multiplicative vars} & \mvar,\mvartwo,\mvarthree & \in & \mvars
\\
\textsc{Exponential vars} & \evar,\evartwo,\evarthree & \in & \evars
\\
\textsc{Variables} & \var,\vartwo,\varthree & \in & \vars \defeq \mvars\uplus\evars
\end{array}$
\\[7pt]
$\begin{array}{rrllll}
\textsc{Multiplicative values} & \mval & \grameq & \mvar \in\mvars\mid \pair\tm\tmtwo \mid \la\var\tm
\\
\textsc{Exponential values} & \exval & \grameq & \evar \in\evars \mid  \bang\tm
\\
\textsc{Values} & \val & \grameq & \mval \mid  \exval
\end{array}$
%\end{tabular}
\\[7pt]
 \begin{tabular}{c}
$\begin{array}{rrllll}
\textsc{Terms} & \tm,\tmtwo,\tmthree & \grameq & %\star & 
 \val  \mid
\cuta\val\var\tm \mid \para\mvar\var\vartwo \tm \mid \suba\mvar\val\var \tm \mid \dera\evar\var \tm
\end{array}$
\\[7pt]
$\begin{array}{rrllll}
\textsc{Left ctxs} &\lctx & \grameq & \ctxhole %& \lctx\rsub{}\mvar
\mid \cuta\val\var\lctx \mid \para\mvar\var\vartwo \lctx \mid \suba\mvar\val\var\lctx  \mid \dera\evar\var\lctx
\\
\textsc{Value ctxs} & \vctx & \grameq & \ctxhole \mid \pair\ctx\tmtwo \mid \pair\tm\ctx \mid \la\var\ctx \mid \bang\ctx %& \ctx\rsub{}\mvar
\\
\textsc{General ctxs} & \ctx & \grameq & \vctx \mid \cuta\vctx\var \tm  \mid \suba\mvar\vctx\var\tm \mid \lctxp\ctx
\\
\textsc{Mult. value ctxs} & \vmctx & \grameq & \ctxhole  \mid \pair\mctx\tmtwo \mid \pair\tm\mctx  \mid \la\var\mctx  %& \ctx\rsub{}\mvar
\\
\textsc{Mul. ctxs} &\mctx & \grameq & \vmctx \mid\cuta\vmctx\var\tm \mid\suba\mvar\vmctx\var \tm \mid \lctxp\mctx
\end{array}$
\end{tabular}
\end{tabular}
%}
	\caption{Grammars of the exponential substitution calculus.}
\label{fig:grammars}
\end{figure*}
\paragraph{Values and Terms} The grammars of the ESC are in \reffig{grammars}. First of all, a disclaimer. The constructors of the calculus are  an untyped \emph{minimalist encoding} of the sequent calculus rules, and \emph{not} an intuitively readable calculus. For that aim, one would rather use $\symfont{let\mbox{-}in}$ notations. Variables are of two disjoint kinds, multiplicative and exponential, and we often refer to variables of unspecified kind using the notations $\var,\vartwo,\varthree$. Values are the proof terms associated to axioms or to the right rules and, beyond variables, are \emph{abstractions} $\la\var\tm$, \emph{tensor pairs} $\pair\tm\tmtwo$, and \emph{promotions} $\bang\tm$. The proof terms decorating left rules are \emph{pars} $\para\mvar\var\vartwo\tm$, \emph{subtractions} $\suba\mvar\val\var\tm$, \emph{derelictions} $\dera\evar\var\tm$,  and \emph{cuts} $\cuta\val\var\tm$, which is in red because of its special role.  Note that cuts and subtractions are \emph{split}, that is, have values (rather than terms) as left sub-terms. Back to readability, subtraction might also be decorated with $\letin \var {\mvar\val}\tm$, but note that this notation is less faithful to the logical rule because the two premises are represented asymmetrically, and that it uses an application constructor that rather belongs to natural deduction.

In $\la\var\tm$, $\suba\mvar\val\var\tm$, $\dera\evar\var\tm$, and $\cuta\val\var\tm$ we have that $\var$ is bound in $\tm$, and in $\para\mvar\var\vartwo\tm$ both $\var$ and $\vartwo$ are bound in $\tm$. We identify terms up to $\alpha$-renaming. Free variables (resp.  multiplicative/exponential variables) are defined as expected, and noted $\fv\tm$ (resp. $\mfv\tm$ and $\efv\tm$). We use $\sizep\tm\var$ for the number of free occurrences of $\var$ in $\tm$.
There is a notion of \emph{proper term} ensuring the linearity of multiplicative variables. The only relevant case is: \emph{$\bang\tm$ is \proper if $\tm$ is \proper and $\mfv\tm = \emptyset$}. 

\begin{defi}[Proper terms]
\label{def:proper-terms}
Proper terms are defined by induction on $\tm$ as follows:
\begin{itemize}
\item \emph{Variables}: $\mvar$ and $\evar$ are \proper.
\item \emph{Tensor}: $\pair\tm\tmtwo$ is \proper if $\tm$ is \proper, $\tmtwo$ is \proper, and $\mfv\tm \cap \mfv\tmtwo = \emptyset$
\item \emph{Par}: $\para\mvar\var\vartwo \tm$ is \proper if $\tm$ is \proper, $\mvar \notin(\mfv\tm\setminus{\var,\vartwo})$, $\var\neq \vartwo$, and if $\var$ (resp. $\vartwo$) is a multiplicative variable then $\var \in\mfv\tm$ (resp. $\vartwo \in\mfv\tm$).
\item \emph{Implication}: $\la\var\tm$ is \proper if $\tm$ is \proper and if $\var$ is a multiplicative variable then $\var \in\mfv\tm$.
\item \emph{Subtraction}: $\suba\mvar\val\var \tm$ is \proper if $\tm$ is \proper, $\val$ is \proper, $\mfv\val \cap (\mfv\tm\setminus\set{\var}) = \emptyset$, $\mvar \notin\mfv\tm$, and if $\var$ is a multiplicative variable then $\var \in\mfv\tm$.
\item \emph{Bang}: $\bang\tm$ is \proper if $\tm$ is \proper and $\mfv\tm = \emptyset$.
\item \emph{Dereliction}: $\dera\evar\var \tm$ is \proper if $\tm$ is \proper and if $\var$ is a multiplicative variable then $\var \in\mfv\tm$.
\item \emph{Cut}: $\cuta\val\var\tm$ is \proper if $\tm$ is \proper, $\val$ is \proper, $\mfv\val \cap (\mfv\tm\setminus\set\var) = \emptyset$, and if $\var$ is a multiplicative variable then $\var \in\mfv\tm$.
\end{itemize}
\end{defi}

 \begin{figure*}[t]
	% !TEX root = main.tex
%\fbox{
\begin{tabular}{c}	
\begin{tabular}{c}
$\begin{array}{rclll}
\textsc{Formulas} & \form,\formtwo,\formthree & \grameq & \aform \mid \form \tens \formtwo \mid \form \lolli \formtwo \mid \bang\form
\end{array}$

%%%%%%%
	\\[5pt]\hline\\[-7pt]
	%%%%%%%
\begin{tabular}{c}	
\textsc{Multiplicative right rules}
\\[9pt]
	\AxiomC{$\form \neq \bang \formtwo$}
	\RightLabel{$\ax_{\msym}$}
	\UnaryInfC{$  \mvar\hastype\form \vdash \mvar\hastype \form$}	
	\DisplayProof

	\\[9pt]

	\AxiomC{$  \multiForm\vdash \tm\hastype\form$}
 	\AxiomC{$  \multiFormtwo\vdash \tmtwo\hastype\formtwo$}
	\AxiomC{$\multiForm\#\multiFormtwo$}
 	\RightLabel{$ \tensRightRule $}
 	\TrinaryInfC{$  \multiForm, \multiFormtwo\vdash \pair\tm\tmtwo \hastype\form\otimes \formtwo$}
	\DisplayProof
	
	\\[9pt]

	\AxiomC{$  \var\hastype\form,\multiForm \vdash \tm\hastype\formtwo$}
	\RightLabel{$ \lolliRightRule $}
	\UnaryInfC{$  \multiForm \vdash \la\var\tm\hastype\form \lolli \formtwo$}
	\DisplayProof
\end{tabular}
	%%%%%%%
	\\[8pt]\hline\\[-7pt]
	%%%%%%%
\begin{tabular}{c}	
\textsc{Multiplicative left rules}
\\[9pt]
	\AxiomC{$\multiForm   \vdash \lctxp\val \hastype \form$}
	\AxiomC{$\multiFormtwo, \var\hastype \form \vdash \tm\hastype\formtwo$}
	\AxiomC{$\multiForm\#(\multiFormtwo, \var\hastype \form)$}
 	\RightLabel{$\cut$}
 	\TrinaryInfC{$  \multiForm, \multiFormtwo\vdash \lctxp{\cuta\val\var\tm} \hastype\formtwo$} 	
	\DisplayProof	

	\\[9pt]

	\AxiomC{$  \multiForm, \var\hastype\form, \vartwo\hastype\formtwo \vdash \tm\hastype\formthree$}
	\AxiomC{$\mvar$ fresh}
	\RightLabel{$ \tensLeftRule $}
	\BinaryInfC{$  \multiForm, \mvar\hastype\form \tens \formtwo \vdash \para\mvar\var\vartwo \tm \hastype \formthree$}	
	\DisplayProof
	
	\\[9pt]

	\AxiomC{$  \multiForm \vdash \lctxp\val \hastype\form$}
	\AxiomC{$  \multiFormtwo, \var\hastype\formtwo \vdash \tm\hastype\formthree$}
		\AxiomC{$\multiForm\#(\multiFormtwo, \var\hastype\formtwo)$, $\mvar$ fresh}
	\RightLabel{$ \lolliLeftRule $}
	\TrinaryInfC{$  \multiForm, \multiFormtwo, \mvar\hastype \form \lolli \formtwo \vdash \lctxp{\suba\mvar\val\var \tm} \hastype\formthree$}
	\DisplayProof

\end{tabular}
%%%%%%%
	\\[8pt]\hline\\[-7pt]
	%%%%%%%
\begin{tabular}{c@{\hspace*{.9cm}}c}	
\multicolumn{2}{c}{\textsc{Exponential rules}}
\\[9pt]
	&
	\AxiomC{}
	\RightLabel{$\ax_{\esym}$}
	\UnaryInfC{$  \evar\hastype\bang\form \vdash \evar\hastype \bang\form$}	
	\DisplayProof
	\\[9pt]
	%%%%%%%
	\AxiomC{$  \multiForm, \var\hastype\form \vdash \tm\hastype\formtwo$}
	\AxiomC{$\evar$ fresh}
	\RightLabel{$ \bangLeftRule$}
	\BinaryInfC{$  \multiForm, \evar\hastype\bang\form \vdash \dera\evar\var\tm \hastype\formtwo$}
	\DisplayProof
	&
		\AxiomC{$  \bang\multiForm \vdash \tm\hastype\form$}
	\RightLabel{$ \bangRightRule$}
	\UnaryInfC{$  \bang\multiForm \vdash \bang\tm\hastype\bang\form$}
	\DisplayProof
	%%%%%%%
	\\[9pt]
	%%%%%%%

	\AxiomC{$  \multiForm \vdash \tm\hastype\form$}
				\AxiomC{$\evar$ fresh}
	\RightLabel{$ \weakRule $}
	\BinaryInfC{$  \multiForm, \evar\hastype\bang\formtwo \vdash \tm\hastype \form$}
	\DisplayProof
&
		\AxiomC{$  \multiForm, \evar\hastype\bang\formtwo, \evartwo\hastype\bang\formtwo  \vdash \tm\hastype\form$}
	\RightLabel{$ \contrRule $}
	\UnaryInfC{$  \multiForm, \evar\hastype\bang\formtwo \vdash \cutsub\evar\evartwo \tm \hastype \form$}
	\DisplayProof
\end{tabular}
\end{tabular}
\end{tabular}
%}
	\caption{IMELL typing rules for the ESC, where $\multiForm\#\multiFormtwo$ is a shortcut for $\dom\multiForm\cap\dom\multiFormtwo = \emptyset$.}
\label{fig:typing}
\end{figure*}
	
\paragraph{Typing System} The typing rules are in \reffig{typing}. The formulas of IMELL and the deductive rules of the sequent calculus are standard, but for the decoration with proof terms and the side conditions about variable names of the form $\multiForm\#\multiFormtwo$, which is a shortcut for $\dom\multiForm\cap\dom\multiFormtwo = \emptyset$. Linear implication $\lolli$ is also referred to as \emph{lolli}. The only atomic formula that we consider, $\aform$, is multiplicative. There is no multiplicative unit because in presence of the exponentials $1$ can be simulated by $!(\aform \lolli\aform)$. We distinguish between multiplicative and exponential axioms, in order to decorate them with the corresponding kind of variable.

Note that the weakening and contraction rules do not add constructors to terms. This is crucial in order to keep the calculus manageable. Note also that the decorations of the cut and $\lolliLeftRule$ rules are \emph{split}, as explained in the previous section. Clearly, typed terms are proper.  We use $\tderiv \pof \multiForm \vdash \tm\hastype \form$ for a proof/typing derivation $\tderiv$ ending in that sequent.

\paragraph{Contexts and Plugging} The broadest notion of context that we consider is \emph{general contexts} $\ctx$, which simply allow the hole $\ctxhole$ to replace any sub-term in a term. Because of split cuts and subtractions, the definition relies on the auxiliary notion of \emph{value context} $\vctx$. The definition also uses \emph{left contexts} $\lctx$, which are contexts under left constructors (or, for binary left constructors, under the right sub-term) that play a key role in the system---their use in defining $\ctx$ is just to keep the grammar compact (to fit it in the figure). 

A fact used pervasively is that every term $\tm$ writes, or \emph{splits} uniquely as $\tm = \lctxp\val$. For instance $\dera\evar\mvar\cuta\val\mvartwo\pair\mvar\mvartwo$ splits as $\lctx = \dera\evar\mvar\cuta\val\mvartwo\ctxhole$ and $\val =\pair\mvar\mvartwo$. We also have \emph{multiplicative contexts} $\mctx$, that the proof nets literature would call \emph{level 0} contexts, that is, contexts with the hole out of all $\bang$, and which also rely on their own value contexts $\vmctx$.

Because of split cuts and subtractions, plugging is slightly tricky, as it has to preserve the split shape.  For instance, we have $(\cuta\ctxhole\mvarthree\tm)\ctxholep{\dera\evar\mvar\pair\mvar\mvartwo} = \dera\evar\mvar\cuta{\pair\mvar\mvartwo}\mvarthree\tm$.
 A similar approach to plugging is also used by Accattoli et al. in \cite{DBLP:conf/ppdp/AccattoliCGC19}. The full definition follows. It is mostly as expected, the only two subtle cases are for $\cuta\ctxhole\var\tm$ and $\suba\mvar\ctxhole\var$.
 
 \begin{defi}[Plugging]
\label{def:plugging}
Plugging $\ctxp\tm$ is defined as follows:
\begin{center}
$\begin{array}{rll@{\hspace{.4cm}}rlll}
\multicolumn{6}{c}{\textsc{Plugging}}
\\
\ctxhole\ctxholep\tm & \defeq & \tm
&
(\la\var\ctx)\ctxholep\tm& \defeq & \la\var\ctxp\tm
\\
\pair\ctx\tmtwo\ctxholep\tm & \defeq & \pair{\ctxp\tm}\tmtwo
&
\pair\tmtwo\ctx\ctxholep\tm & \defeq & \pair\tmtwo{\ctxp\tm} 
 \\
 (\bang\ctx)\ctxholep\tm& \defeq & \bang\ctxp\tm
 &
 (\dera\evar\var\ctx)\ctxholep\tm & \defeq & \dera\evar\var\ctxp\tm
 \\
(\para\mvar\var\vartwo \ctx)\ctxholep\tm& \defeq & \para\mvar\var\vartwo \ctxp\tm
&
(\suba\mvar\val\var\ctx)\ctxholep\tm& \defeq & \suba\mvar\val\var\ctxp\tm
\\
(\cuta\val\var\ctx)\ctxholep\tm & \defeq &  \cuta\val\var\ctxp\tm
\end{array}$\medskip

$\begin{array}{rlll|l|l|l|l|l|l|l|l|l}
(\suba\mvar\vctx\var\tmtwo)\ctxholep\tm& \defeq & 
\begin{cases}
\lctxp{\suba\mvar\val\var\tmtwo}
 & \mbox{if }\vctx = \ctxhole\mbox{ and }\tm = \lctxp\val
\\
\suba\mvar{\vctxp\tm}\var\tmtwo  & \mbox{otherwise}
\end{cases}
\\
(\cuta\vctx\var \tmtwo)\ctxholep\tm& \defeq & 
\begin{cases}
\lctxp{\cuta\val\var \tmtwo} & \mbox{if }\vctx = \ctxhole\mbox{ and }\tm = \lctxp\val
\\
\cuta{\vctxp\tm}\var \tmtwo & \mbox{otherwise}
\end{cases}
\end{array}$
\end{center}
\end{defi}

 Plugging $\ctxp\ctxtwo$ of contexts is defined similarly, and plugging is extended to all forms of contexts by seeing them as general contexts. As for the LSC, plugging can capture variables and we use $\ctxfp\tm$ when we want to prevent it.

With our notion of plugging, a left constructor in a term might be identified via plugging of a term in a context in various ways, \eg for $\tm \defeq \pair{\dera\evar\mvar\cuta\mvar\mvartwo\mvartwo}{\evartwo}$ we have $\tm = \ctxp{\dera\evar\mvar\mvar}$ with $\ctx \defeq \pair{\cuta\ctxhole\mvartwo\mvartwo}{\evartwo}$, and $\tm = \ctxtwop{\dera\evar\mvar\cuta\mvar\mvartwo\mvartwo}$ with $\ctxtwo \defeq\pair{\ctxhole}{\evartwo}$. For most uses, in particular for defining the rewriting rules, this is harmless. It shall play a role, however, for defining the good strategy, because it shall be important whether the hole of the context is inside or outside a cut. To this purpose, note that in the first decomposition the term $\dera\evar\mvar\mvar$ is not a sub-term of $\tm$. This suggest the following notion. 
\begin{defi}[Positions]
\label{def:position}
A \emph{position} in a term $\tm$ is a decomposition $\tm=\ctxp\tmtwo$ such that $\tmtwo$ is a sub-term of $\tm$ (as for $\ctxtwop{\dera\evar\mvar\cuta\mvar\mvartwo\mvartwo}$ above).
\end{defi}

We can now state the property ensured by properness. 
\begin{toappendix}
\begin{lem}[Structural linearity]
\label{l:proper-linearity}
Let $\tm$ be a \proper term and $\mvar\in\mfv\tm$. Then $\varmeas\tm\mvar = 1$ and $\tm = \mctxfp\mvar$ for some $\mctx$.
\end{lem}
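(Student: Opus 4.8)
The plan is to prove both statements simultaneously by induction on the structure of $\tm$, going through the clauses of Definition~\ref{def:proper-terms}. Two observations organise the argument. First, in each clause properness forces the free multiplicative variables of the immediate subterms of the outermost constructor to be pairwise disjoint, and disjoint from the subject variable introduced by a par or a subtraction; hence a fixed $\mvar\in\mfv\tm$ is free in at most one immediate subterm, or is exactly that subject variable, and in either case it is free nowhere else in $\tm$. This is what will make the count come out to $\varmeas\tm\mvar=1$: the inductive hypothesis supplies a single free occurrence in the relevant subterm, and there is at most the one further subject occurrence. Second, a subterm of the form $\bang\tmtwo$ has $\mfv{\bang\tmtwo}=\emptyset$ by properness, so the occurrence of $\mvar$ we track always lies at level $0$; this is why the witnessing context realising $\tm=\mctxfp\mvar$ can be kept multiplicative (hole out of every $\bang$), and why the case $\tm=\bang\tmtwo$ is vacuous.

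First I would handle the base case and the propagating inductive cases. If $\tm\in\vars$, then either $\tm=\mvar$, whence $\varmeas\tm\mvar=1$ and $\tm=\mctxfp\mvar$ with $\mctx=\ctxhole$, or $\tm=\evar$ and $\mfv\tm=\emptyset$, so there is nothing to prove. In every other clause---$\tm=\pair\tmtwo\tmthree$, $\tm=\la\vartwo\tmtwo$, $\tm=\cuta\val\vartwo\tmtwo$, $\tm=\para\mvartwo\vartwo\varthree\tmtwo$, $\tm=\suba\mvartwo\val\vartwo\tmtwo$, $\tm=\dera\evar\vartwo\tmtwo$---and provided $\mvar$ is not the subject variable of a par or a subtraction, properness (after renaming bound variables away from $\mvar$) places $\mvar$ free in a unique immediate subterm $\tmfour$; the inductive hypothesis then gives $\varmeas\tmfour\mvar=1$ and a multiplicative context $\mctx'$ into which plugging $\mvar$ produces $\tmfour$, so $\varmeas\tm\mvar=1$ and the required $\mctx$ is $\mctx'$ placed under the outermost constructor of $\tm$. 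The only thing to verify is that this context stays multiplicative, which follows from the grammar: multiplicative value contexts $\vmctx$ are closed under placing a multiplicative context in a tensor-pair component or an abstraction body; multiplicative contexts include $\cuta{\vmctx'}\vartwo\tmtwo$ and $\suba\mvartwo{\vmctx'}\vartwo\tmtwo$ (here one uses the easy auxiliary fact that a multiplicative context which fills to a value is already a multiplicative value context); and the left prefixes $\para\mvartwo\vartwo\varthree\ctxhole$, $\suba\mvartwo\val\vartwo\ctxhole$, $\cuta\val\vartwo\ctxhole$, $\dera\evar\vartwo\ctxhole$ get absorbed through the $\lctxp\mctx$ clause. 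Finally, $\tm=\bang\tmtwo$ is vacuous since $\mfv\tm=\emptyset$.

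The remaining cases are those in which $\mvar$ is the subject of a left rule, $\tm=\para\mvar\vartwo\varthree\tmtwo$ and $\tm=\suba\mvar\val\vartwo\tmtwo$. Here properness gives $\mvar\notin\mfv\tmtwo$ (and, for the subtraction, also $\mvar\notin\mfv\val$, from the disjointness and freshness conditions attached to the subject), so the subject is the only free occurrence of $\mvar$ and $\varmeas\tm\mvar=1$ is immediate. It then remains to exhibit a multiplicative context $\mctx$ with $\mctxfp\mvar=\tm$ whose hole sits precisely at that subject position, with no renaming involved since the subject of a left rule is not a binder. I expect this last point to be the main obstacle: it is the one step that genuinely relies on the positional bookkeeping of the ESC---a position being a decomposition $\tm=\ctxp\tmtwo$ with $\tmtwo$ a subterm, and the multiplicative-context grammar being read through its left-context clauses---rather than on any new idea, everything else being a routine structural recursion driven by the clauses of properness.
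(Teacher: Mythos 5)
Your induction on $\tm$ is exactly the paper's proof (which is recorded only as ``by induction on $\tm$''), and most of it is right: the disjointness conditions in the definition of properness do isolate $\mvar$ in a unique immediate sub-term, the count $\varmeas\tm\mvar=1$ follows as you say, the $\bang$ case is vacuous and is what keeps the witnessing context multiplicative, and your auxiliary observation that a multiplicative context filling to a value must already be a multiplicative value context is correct and is indeed what makes the cut and subtraction value positions fit the grammar of $\mctx$.

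The ``main obstacle'' you flag in your last paragraph is, however, a genuine gap as you have set the proof up, and it cannot be closed in the form you pose it: if $\tm=\para\mvar\vartwo\varthree\tmtwo$ or $\tm=\suba\mvar\val\vartwo\tmtwo$ with $\mvar$ the subject, there is \emph{no} context $\mctx$ with $\mctxfp\mvar=\tm$, because the subject of a par or a subtraction is an annotation on the constructor, not a sub-term, so plugging the term $\mvar$ into a hole can never produce it. The way out is to read the conclusion the way the paper uses it (see the proof of \reflemma{clashfree-implies-cutfree}): the unique free \emph{occurrence} of $\mvar$ --- occurrence in the paper's generalized sense, i.e.\ a sub-term $\tm_{\mvar}$ of shape $\mvar$, $\para\mvar\vartwo\varthree\tmthree$, or $\suba\mvar\val\vartwo\tmthree$ --- sits at a position given by a multiplicative context, i.e.\ $\tm=\mctxfp{\tm_{\mvar}}$. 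Under that reading your problematic cases become the easy base cases, with $\mctx=\ctxhole$ and $\tm_{\mvar}=\tm$, and the inductive cases are exactly your propagation argument carried out on the occurrence $\tm_{\mvar}$ instead of on the bare variable $\mvar$. So the fix is to strengthen/correct the induction invariant to track $\tm_{\mvar}$; as literally written, the claim $\tm=\mctxfp\mvar$ is false whenever the only free occurrence of $\mvar$ is as a par or subtraction subject, and your proof leaves precisely those cases open.
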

\end{toappendix}

\begin{proof}
By induction on $\tm$.
\end{proof}

 \begin{figure*}[t]
 \begin{center}
	\begin{tabular}{cc}
	% !TEX root = main.tex
%\fbox{
 \begin{tabular}{c}
$\begin{array}{rlll}
\multicolumn{3}{c}{\textsc{Root multiplicative rules}}
\\[9pt]
\cuta{\mval}\mvar\mctxfp\mvar & \rtoaxmone & \mctxfp\mval
\\
\cuta{\mvartwo}\mvar\tm & \rtoaxmtwo & \cutsub\mvartwo\mvar\tm
\\
%\ctxp{\tm\rsub{}\mvar}\esub\mvar{\lctxp{\star}}
%& \Rew{1} & 
%\lctxp{\ctxp{\tm}}
%\\
\cuta{\pair\tmtwo\tmthree}\mvar \mctxp{\para\mvar\var\vartwo\tm}
& \rtotens & 
\mctxp{\lctxp{\cuta\val\var \lctxtwop{\cuta\valtwo\vartwo \tm}}} 

\\&& \ \ \ \ \ \mbox{with }\tmtwo = \lctxp{\val}\mbox{ and }\tmthree = \lctxtwop{\valtwo}
\\[3pt]
\cuta{\la\vartwo\tmtwo}\mvar \mctxp{\suba\mvar\val\var  \tm}
& \rtololli & 
\mctxp{ \cuta\val\vartwo \lctxp{\cuta\valtwo\var \tm }} 
\\
&& \ \ \ \ \ \mbox{with }\tmtwo = \lctxp{\valtwo}
\end{array}$
%%%
%%%%%%%
\\[9pt]\hline\\[-6pt]
%%%%%%%
%%%
$\begin{array}{rlll}
\multicolumn{3}{c}{\textsc{Micro-step root exponential rules}}
\\[9pt]
\cuta{\exval} \evar \ctxfp{\evar}
& \rtoaxeone & 
\cuta{\exval}\evar\ctxfp{\exval}
\\
\cuta{\evartwo}\evar\ctxp{\dera\evar\var \tm}
& \rtoaxetwo &
\cuta{\evartwo}\evar\ctxp{\dera\evartwo\var \tm}
\\
\cuta{\bang\tmtwo}\evar\ctxp{\dera\evar\var \tm}
& \rtobder &
\cuta{\bang\tmtwo}\evar\ctxp{\lctxp{\cuta\val\var \tm}} 
\\
&& \ \ \ \ \ \mbox{with }\tmtwo = \lctxp{\val}
\\[3pt]
\cuta{\exval}\evar \tm
& \rtow & 
\tm \ \ \ \  \mbox{if }\evar\notin\fv\tm
\end{array}$
%%%
%%%%%%%
\\[9pt]\hline\\[-6pt]
%%%%%%%
%%%
\textsc{Small-step root exponential rule}
\\[9pt]
$\begin{array}{lll}
\cuta\exval\evar \tm & \rtoess & \cutsub\exval\evar\tm
\end{array}$
%%%
%%%%%%%
\\[9pt]\hline\\[-6pt]
%%%%%%%
%%%
	\begin{tabular}{ccc}
\multicolumn{2}{c}{\textsc{Contextual closure}}
\\[5pt]
\AxiomC{$\tm \rootRew{a} \tmtwo$}
\UnaryInfC{$\ctxp{\tm} \Rew{a} \ctxp{\tmtwo}$}
\DisplayProof
		& 
		for $a \in \left\{
  \begin{aligned}
  & \axmone,\axmtwo,\tens,\lolli,\\
  & \axeone,\axetwo,\bdersym,\wsym,\ess
  \end{aligned}
\right\}$
	\end{tabular}
%%%
%%%%%%%
\\[9pt]\hline\\[-6pt]
%%%%%%%
%%%
$\begin{array}{rlllc}
\multicolumn{4}{c}{\textsc{Notations}}
\\
%\toaxm  &\defeq & \toaxmtwo \cup \toaxmone
%\\
\textsc{Multiplicative} & \tom  &\defeq &  \toaxmone\cup \toaxmtwo \cup \totens \cup \tololli
\\
\textsc{Exponential micro-step} & \toems  &\defeq & \toaxeone \cup \toaxetwo \cup \tobder \cup \tow
\\
\textsc{Small-step} & \toss  &\defeq & \tom \cup \toess
\\
\textsc{Micro-step} &\toms  &\defeq & \tom \cup \toems
\\
\textsc{Non-erasing micro-step} &\tomsnw  &\defeq & \tom \cup \toaxeone \cup \toaxetwo \cup \tobder
\\
\textsc{Non-lolli micro-step} &\tomsnlolli  &\defeq & \toaxmone\cup \toaxmtwo \cup \totens \cup \toems
\\
\textsc{Non-lolli small-step} &\tossnlolli  &\defeq & \toaxmone\cup \toaxmtwo \cup \totens \cup \toess
	\end{array}$
\end{tabular}
%}
	\end{tabular}
\end{center}
%\vspace{-8pt}
\caption{Rewriting rules of the ESC.}
\label{fig:rewriting-rules}
\end{figure*}
\paragraph{Multiplicative Cut Elimination Rules} The rewriting rules are in \reffig{rewriting-rules}. The ESC has four multiplicative rules, in particular two for axioms, depending on whether they are acted upon ($\toaxmone$) or used to rename another multiplicative (thus linear) variable ($\toaxmtwo$). Rule $\toaxmone$ is expressed generically for multiplicative values $\mval$ (that is, multiplicative variables $\mvar$, abstractions $\la\mvar\tm$, and tensor pairs $\pair\tm\tmtwo$). For $\toaxmtwo$, we abuse notations and use a form of meta-level substitution $\cutsub\mvartwo\mvar\tm$ for the renaming operation. There is also a slight superposition between $\toaxmone$ and $\toaxmtwo$, for instance in $\cuta\mvar\mvartwo \la\evar\mvartwo \to \la\evar\mvar$, which can be both kinds of steps. To disambiguate, one should refine $\toaxmtwo$ into two rules:
\begin{center}$\begin{array}{rlll}
\cuta{\mvartwo}\mvar \mctxp{\para\mvar\var\vartwo \tm}
& \Rew{\axmtwo'} & 
\mctxp{\para\mvartwo\var\vartwo \tm}
\\[3pt]
\cuta{\mvartwo}\mvar \mctxp{\suba\mvar\val\var  \tm}
& \Rew{\axmtwo''} & 
\mctxp{\suba\mvartwo\val\var \tm}
\end{array}$\end{center}
But the ambiguity is harmless and the notation is convenient. Note that in $\toaxmone$, $\totens$, and $\tololli$ (and $\Rew{\axmtwo'/\axmtwo''}$), the cut acts on a sub-term inside a multiplicative context $\mctx$, that is, out of $!$. Moreover, it is silently assumed that $\mctx$ does not capture $\mvar$ in $\totens$ and $\tololli$ (and $\Rew{\axmtwo'/\axmtwo''}$), but it might capture other variables in $\tm$ and $\val$.

In both $\totens$ and $\tololli$ the rule has to respect split cuts, which is why, for writing the reduct, the sub-terms $\tmtwo$ and $\tmthree$ get split on-the-fly. We suggest to have a look at the proof of subject reduction on page \pageref{thm:sub-red}, which shows how some of the rewriting steps act on the decorated proof.
An example of $\tololli$ step follows:
\begin{center}$\begin{array}{rlll}
\cuta{\la\evar\dera\evar\mvar\mvar}\mvartwo\suba\mvartwo{\bang\evartwo}\mvarthree\mvarthree 
& \tololli &
\cuta{\bang\evartwo}\evar\dera\evar\mvar\cuta\mvar\mvarthree \mvarthree.
\end{array}$\end{center}

\paragraph{Micro-Step Exponential Rules} There are also four micro-step exponential rules, with again two rules for axioms. Replacement of variables ($\toaxeone$) and erasure ($\tobw$) are expressed generically for exponential values $\exval$ (that is, exponential variables $\evartwo$ and promotions $\bang\tm$), interaction with derelictions (in $\toaxetwo$ and $\tobder$) instead requires inspecting $\exval$. 

Rule $\tobder$ removes the dereliction, copies the promotion body, and puts it in a cut---in proof nets jargon, \emph{it opens the box}. To preserve the split shape, the body of the promotion is split and only the value is cut. An example: 
\begin{center}$\begin{array}{rlll}
\cuta{\bang\cuta\evartwo\evarthree \evarthree}\evar \la\mvar\dera\evar{\evar'} \pair{\evar'}\mvar 
& \tobder &  
\cuta{\bang\cuta\evartwo\evarthree \evarthree}\evar\la\mvar\cuta\evartwo\evarthree \cuta{\evarthree}{\evar'} \pair{\evar'}{\mvar}.
\end{array}$\end{center}
 Note that $\tobder$ entangles \emph{interaction with a dereliction} and \emph{duplication}, which is not what proof nets usually do, as mentioned in \refsect{towards}.  It is silently assumed that $\ctx$ does not capture $\evar$ in $\toaxetwo$ and $\tobder$ but it might capture other variables in $\tm$.

\paragraph{Meta-Level Substitution and the Small-Step Exponential Rule} For defining the small-step exponential cut elimination we need meta-level substitution $\cutsub\exval\evar\tm$, which is defined only for exponential values. The long definition is mostly as expected, but for the case of derelictions, explained after the definition. 
\begin{defi}[Meta-level exponential substitution]
\label{def:meta-sub}
The meta-level (exponential) substitution $\cutsub\exval\evar\tm$ of the exponential value $\exval$ for the free occurrences of $\evar$ in $\tm$ is defined by induction on $\tm$ as follows (assuming on-the-fly $\alpha$-renaming to avoid capture of free variables, omitted for ease of reading and manipulation):
\begin{center}
\begin{tabular}{c@{\hspace{.1cm}}|@{\hspace{.1cm}}c}
\multicolumn{2}{c}{\textsc{Meta-level substitution}}
\\[5pt]
$\begin{array}{r@{\hspace{.2cm}}l@{\hspace{.2cm}}ll}
\cutsub{\exval}\evar \mvar & \defeq & \mvar
\\
 \cutsub\exval\evar \evartwo & \defeq & \evartwo 
\\
\cutsub{\exval}\evar \evar & \defeq & \exval 
\\
\cutsub{\exval}\evar \pair\tm\tmthree & \defeq & \pair{\cutsub{\exval}\evar\tm}{\cutsub{\exval}\evar\tmthree}
\\
\cutsub{\exval}\evar \la\var\tm & \defeq & \la\var \cutsub{\exval}\evar \tm
\\
\cutsub{\exval}\evar \bang\tm& \defeq & \bang\cutsub{\exval}\evar\tm
\end{array}$

&

$\begin{array}{r@{\hspace{.2cm}}l@{\hspace{.2cm}}ll}
\cutsub{\exval}\evar\para\mvar\var\vartwo \tm & \defeq & \para\mvar\var\vartwo  \cutsub{\exval}\evar \tm
\\
\cutsub{\exval}\evar \suba\mvar\val\var \tm & \defeq & \suba\mvar {\cutsub{\exval}\evar\val}\var \cutsub{\exval}\evar\tm
\\
\cutsub{\exval}\evar \dera\evartwo\var \tm & \defeq & \dera\evartwo\var \cutsub{\exval}\evar \tm 
\\
\cutsub{\evartwo}\evar \dera\evar\var \tm & \defeq & \dera\evartwo\var \cutsub{\evartwo}\evar\tm 
\\
\cutsub{\bang\lctxp\val}\evar \dera\evar\var \tm & \defeq & \lctxp{\cuta\val\var \cutsub{\bang\lctxp\val}\evar\tm} 
\\
\cutsub{\exval}\evar \cuta\val\var\tm & \defeq & \cuta{\cutsub{\exval}\evar\val}\var \cutsub{\exval}\evar\tm
\end{array}$
\end{tabular}
\end{center}
\end{defi}
The key cases are the substitution of a value for a dereliction (the second and third to last in the right column), where the definition depends on the shape of the value. For promotions, the definition mimics the micro-step rule $\tobder$, making a copy of the content of the box and splitting it on-the-fly. An example: 
\begin{center}$\begin{array}{rlll}
\cutsub{\bang\cuta\evartwo\evarthree \evarthree}\evar \dera\evar{\evar'} \pair{\evar'}\evar 
& = & 
\cuta\evartwo\evarthree \cuta{\evarthree}{\evar'} \pair{\evar'}{\bang\cuta\evartwo\evarthree \evarthree}.\end{array}$\end{center}

Our definition verifies the expected properties of meta-level substitution. Some are given by the next two lemmas.
\begin{lem}[Basic properties of meta-level substitution]
\label{l:bang-subs-prop} % \reflemmap{bang-subs-prop}{zero}
\hfill
\begin{enumerate}
\item If $\exval$ and $\tm$ are proper then $\cutsub\exval\evar\tm$ is proper.
\item \label{p:bang-subs-prop-values} $\cutsub\exval\evar\exvaltwo$ is an exponential value and $\cutsub\exval\evar\mval$ is a multiplicatie value.

\item \label{p:bang-subs-prop-zero} If $\ctx$ does not capture free variables of $\exval$ then $\cutsub\exval\evar \ctxp\tm = (\cutsub\exval\evar \ctx)\ctxholep{\cutsub\exval\evar \tm}$. 

\item \label{p:bang-subs-prop-three} $\cutsub\exval\evar \tmtwo = \tmtwo$ if $\evar\notin\fv\tmtwo$. 

\item \label{p:bang-subs-prop-two} $\cutsub{\exval}\evartwo \cutsub{\exvaltwo}\evar \tm =  \cutsub{\cutsub{\exval}\evartwo\exvaltwo}\evar \cutsub{\exval}\evartwo\tm$.

\item \label{p:bang-subs-prop-four} $\cutsub\exval\evar \cutsub\exvaltwo\evartwo \tmthree = \cutsub\exvaltwo\evartwo \cutsub\exval\evar \tmthree$ if $\evar\notin\fv\tmtwo$ and $\evar\notin\fv\tmthree$.

\item \label{p:bang-subs-prop-four-mult} $\cutsub\exval\evar \cutsub\mvartwo\mvar \tmthree = \cutsub\mvartwo\mvar \cutsub\exval\evar \tmthree$.

\item \label{p:bang-subs-prop-one} $\cutsub{\bang\tmthree}\evar \cutsub{\bang\tmthree}\evartwo \tm = \cutsub{\bang\tmthree}\evar\cutsub\evar\evartwo\tm$.

\end{enumerate}
\end{lem}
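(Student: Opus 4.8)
The plan is to prove the eight identities by routine arguments: a case analysis on the shape of the value for item~2, an induction on the context for item~3, a structural induction on the term for items~1, 4, 5 and~7, and then items~6 and~8 as corollaries of the earlier ones. I would order them so that later items may cite earlier ones: first item~2 (two cases for $\exvaltwo$, three for $\mval$, read directly off Definition~\ref{def:meta-sub}), then item~4 (a one-line induction: the operation $\cutsub\exval\evar\tmtwo$ recurses homomorphically through every constructor of $\tmtwo$ and fixes $\mvar$ and every $\evartwo \neq \evar$), then item~1, then item~3 (using item~2), then items~5 and~7 (item~5 using items~2 and~3), then item~6 (immediate from items~4 and~5, or by a direct induction), and finally item~8 from items~4--6.

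For item~1 the key preliminary remark is that exponential values contain no free multiplicative variable, so $\mfv{\cutsub\exval\evar\tm} = \mfv\tm$; this makes every multiplicative-linearity side condition of Definition~\ref{def:proper-terms} automatic (for pars, abstractions, tensors, subtractions, derelictions, cuts, and bangs), and reduces the induction to the only clause of Definition~\ref{def:meta-sub} that changes the shape of the term: substituting a promotion $\bang\lctxp\val$ for a dereliction $\dera\evar\var\tmtwo$, which yields $\lctxp{\cuta\val\var\cutsub\exval\evar\tmtwo}$. For that clause I would use properness of $\bang\lctxp\val$ (hence $\lctxp\val$ proper with $\mfv{\lctxp\val} = \emptyset$, so $\mfv\val = \emptyset$ and $\val$ proper), properness of $\dera\evar\var\tmtwo$, and the induction hypothesis, to see that $\cuta\val\var\cutsub\exval\evar\tmtwo$ is proper, and then that plugging a proper term of empty multiplicative support back into the proper left context $\lctx$ keeps the result proper --- $\alpha$-renaming the bound variables of the promotion away from the free variables of $\tmtwo$ so that no clash occurs.

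The remaining items follow the same scheme, and in each of them the one delicate inductive case is again this box-opening clause. In item~3 it is handled by appealing to the unique splitting $\cutsub\exval\evar\tm = (\cutsub\exval\evar\lctx)\ctxholep{\cutsub\exval\evar\val}$ together with item~2, which guarantee that the split shape of cuts and subtractions is preserved by plugging. In item~5 the box-opening case is the one that genuinely uses items~2 and~3: one recomputes $\cutsub\exval\evartwo$ of the box-opened term via item~3, observes by item~2 that $\cutsub\exval\evartwo\val$ is again a value, unfolds $\cutsub\exval\evartwo\exvaltwo$ with item~3, and then matches the two sides. In item~7 one observes that multiplicative (variable) substitution and exponential substitution are homomorphic on the shared constructors and rename disjoint classes of variables, so they commute; the box-opening case reduces to the induction hypotheses on $\lctx$, $\val$, and $\tmtwo$. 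Item~8 is obtained by rewriting $\cutsub{\bang\tmthree}\evar\cutsub\evar\evartwo\tm$ with item~5 (outer value $\bang\tmthree$, inner value $\evar$, using item~4 on the inert pieces) into $\cutsub{\bang\tmthree}\evartwo\cutsub{\bang\tmthree}\evar\tm$ and then commuting the two substitutions with item~6, using that neither $\evar$ nor $\evartwo$ is free in the promotion $\bang\tmthree$, which holds by the $\alpha$-convention.

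The only real obstacle is bookkeeping: keeping the implicit $\alpha$-renaming conventions honest so that the unstated freshness side conditions genuinely hold ($\evar \neq \evartwo$ and $\evar$ not free in $\exval$ in item~5, no variable capture in item~3, $\evar$ and $\evartwo$ not free in $\bang\tmthree$ in item~8), and dealing with the box-opening clause, the single clause of Definition~\ref{def:meta-sub} that is not a plain homomorphism and that therefore forces each induction to pass through the splitting of terms and the value-preservation property of item~2. No step requires a new idea; the point of stating the lemma is to have the precise substitution identities available downstream for confluence, PSN, and strong normalization.
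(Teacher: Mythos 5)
Your plan is correct and matches what the paper does: these identities are proved by routine structural inductions (the paper omits the details precisely because they are routine), and you correctly isolate the only genuinely non-homomorphic case, the box-opening clause $\cutsub{\bang\lctxp\val}\evar \dera\evar\var \tm = \lctxp{\cuta\val\var \cutsub{\exval}\evar\tm}$, handling it via the value-preservation of item~2 and the splitting/plugging compatibility of item~3, exactly as intended. The implicit freshness side conditions you flag ($\evar\neq\evartwo$, the substituted variables not free in the substituends) are indeed the ones the paper leaves to the $\alpha$-convention.
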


\begin{lem}[Stability of steps under substitution]
\label{l:substitutivity-of-red}
\hfill
\begin{enumerate}
\item \label{p:substitutivity-of-red-four}
If $\tm \Rew{a} \tmtwo$ then $\cutsub{\exval}\evar\tm \Rew{a} \cutsub{\exval}\evar\tmtwo$ for $a\in\set{\axmone,\axmtwo,\tens,\lolli,\ess, \ems}$.

\item \label{p:substitutivity-of-red-fourb}
If $\exval \Rew{a} \exvaltwo$ then $\cutsub{\exval}\evar\tm \Rewn{a}\cutsub{\exvaltwo}\evar\tm$ for $a\in\set{\axmone,\axmtwo,\tens,\lolli,\ess, \axeone,\axetwo,\bdersym,\wsym}$.
\end{enumerate}
\end{lem}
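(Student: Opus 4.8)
The plan is to prove the two items separately; both say that reduction commutes with meta-level substitution. For item~\ref{p:substitutivity-of-red-four} I would factor the step through its context and reduce to the root rules; for item~\ref{p:substitutivity-of-red-fourb} I would induct on $\tm$. Throughout I would lean on Lemma~\ref{l:bang-subs-prop} --- that $\cutsub\exval\evar{-}$ maps values to values of the same kind, commutes with capture-free context plugging, is the identity on terms not containing $\evar$ free, and satisfies the substitution-composition law --- together with the easy fact that $\cutsub\exval\evar{-}$, extended to contexts by treating the hole as a constant, preserves every context grammar of \reffig{grammars} (in particular it maps left contexts to left contexts, so splittings are preserved); $\alpha$-renaming keeps the bound variables of redexes and contexts fresh for $\exval$ and for all relevant subterms.

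For item~\ref{p:substitutivity-of-red-four}, I would factor $\tm \Rew{a}\tmtwo$ as $\tm = \ctxp{r}$, $\tmtwo = \ctxp{r'}$ with $r \rootRew{a} r'$; then (after $\alpha$-renaming $\ctx$ to avoid capturing $\exval$) Lemma~\ref{l:bang-subs-prop} gives $\cutsub\exval\evar\tm = (\cutsub\exval\evar\ctx)\ctxholep{\cutsub\exval\evar r}$ and likewise for $\tmtwo$, with $\cutsub\exval\evar\ctx$ again a general context, so it suffices to show $\cutsub\exval\evar r \rootRew{a}\cutsub\exval\evar{r'}$ and close it under $\cutsub\exval\evar\ctx$. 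This reduces to a check of the nine root rules of \reffig{rewriting-rules}. For the four multiplicative rules, for $\rtow$, for $\rtoaxeone$, and for $\rtobder$, the substitution is homomorphic on all the constructors of the redex, multiplicative cuts stay multiplicative cuts (exponential substitution never touches multiplicative variables), the values involved stay values, the free-occurrence side conditions survive, and splittings are preserved, so the very same rule fires and lands exactly on $\cutsub\exval\evar{r'}$. For $\rtoess$, whose redex has shape $\cuta{w}{\vartwo}{\tmthree}$ with $w$ an exponential value, $\cutsub\exval\evar{-}$ sends it to $\cuta{\cutsub\exval\evar w}{\vartwo}{\cutsub\exval\evar\tmthree}$ (still an $\ess$-redex, as $\cutsub\exval\evar w$ is an exponential value), and the composition law identifies the reduct $\cutsub{\cutsub\exval\evar w}{\vartwo}{\cutsub\exval\evar\tmthree}$ with $\cutsub\exval\evar{\cutsub{w}{\vartwo}{\tmthree}} = \cutsub\exval\evar{r'}$. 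The only case in which the fired rule differs from $a$ is $\rtoaxetwo$ with the substituted variable $\evar$ equal to the left variable of the axiom and $\exval = \bang u$ a promotion: then $\cutsub{\bang u}\evar{-}$ turns the $\axetwo$-redex into a $\bdersym$-redex and, by the box-opening clause of Definition~\ref{def:meta-sub}, the ensuing $\rtobder$ step produces exactly $\cutsub{\bang u}\evar{r'}$. As $\bdersym\in\ems$, this matches the statement, which for the exponential micro-steps only claims preservation up to the union $\ems$.

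For item~\ref{p:substitutivity-of-red-fourb}, I would induct on $\tm$. If $\tm = \evar$ then $\cutsub\exval\evar\tm = \exval \Rew{a}\exvaltwo = \cutsub\exvaltwo\evar\tm$; if $\tm$ is any other variable both sides equal $\tm$. If $\tm$ is a compound constructor other than a dereliction on $\evar$, then $\cutsub\exval\evar{-}$ is homomorphic, and applying the induction hypothesis to the immediate subterms and composing the resulting reductions --- which take place inside general contexts, noting that a value reduces to a value, so the left positions of split cuts and subtractions are harmless --- yields $\cutsub\exval\evar\tm \Rewn{a}\cutsub\exvaltwo\evar\tm$. The only real case is $\tm = \dera\evar\var{\tm'}$: since $\exval$ contains an $a$-redex it is not a variable, so $\exval = \bang\tmfour$ and $\exvaltwo = \bang\tmfive$ with $\tmfour \Rew{a}\tmfive$; writing $\tmfour = \lctxp\val$ and $\tmfive = \lctxtwop\valtwo$, Definition~\ref{def:meta-sub} gives $\cutsub{\bang\tmfour}\evar\tm = \lctxp{\cuta\val\var{\cutsub{\bang\tmfour}\evar{\tm'}}}$ and $\cutsub{\bang\tmfive}\evar\tm = \lctxtwop{\cuta\valtwo\var{\cutsub{\bang\tmfive}\evar{\tm'}}}$. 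The induction hypothesis on $\tm'$ and contextual closure then give $\lctxp{\cuta\val\var{\cutsub{\bang\tmfour}\evar{\tm'}}}\Rewn{a}\lctxp{\cuta\val\var{\cutsub{\bang\tmfive}\evar{\tm'}}}$, and it remains to transport the step $\tmfour \Rew{a}\tmfive$ through the box opening, i.e.\ to show $\lctxp{\cuta\val\var s}\Rew{a}\lctxtwop{\cuta\valtwo\var s}$ with $s \defeq \cutsub{\bang\tmfive}\evar{\tm'}$.

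I expect this last transport step to be the main obstacle, and I would isolate it as a sub-lemma: \emph{if $u \Rew{a} u'$ with splittings $u = \lctxp\val$ and $u' = \lctxtwop\valtwo$, then $\lctxp{\cuta\val\var s}\Rew{a}\lctxtwop{\cuta\valtwo\var s}$ whenever the bound variables of $\lctx$ are fresh for $s$}. I would prove it by inspecting the root rule and the position of the redex relative to the splitting, using that every root redex of \reffig{rewriting-rules} has a cut at its top --- so in $u = \lctxp\val$ a redex lying above $\val$ has its top cut among the left constructors of $\lctx$ --- and that the right-rule-headed value $\val$ at the bottom of the left-skeleton is untouched by any root rewrite, which only rearranges or duplicates left-skeleton material above it; hence the rewrite can be carried out verbatim with $\cuta\val\var s$ in place of $\val$. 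The one subtlety is with the rules that also propagate an exponential substitution ($\rtoess$, $\rtoaxetwo$, $\rtoaxeone$): the substitution does reach $s$, but it is void, since the substituted variable is a bound variable of $\lctx$ and thus fresh for $s$ by hypothesis. Apart from this sub-lemma --- morally, that opening a box commutes with reducing its contents --- the rest of both items is a routine, if lengthy, verification against \reffig{rewriting-rules} and Lemma~\ref{l:bang-subs-prop}.
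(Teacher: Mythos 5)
Your proposal is correct and takes essentially the same route as the paper: item~1 by reducing to the root rules via the context-compatibility and composition clauses of \reflemma{bang-subs-prop} (correctly spotting that the only kind-mutation is $\axetwo$ becoming $\bdersym$ under a promotion, which is precisely why the statement only claims preservation of the union $\ems$), and item~2 by induction on $\tm$ with the dereliction-on-$\evar$ box-opening case as the only delicate one. The sole remark is that the ``transport'' sub-lemma you flag as the main obstacle needs no case analysis on root rules: by the split clause of plugging (\refdef{plugging}), $\lctxp{\cuta\val\var s} = (\cuta\ctxhole\var s)\ctxholep{\lctxp\val}$ with $\cuta\ctxhole\var s$ a general context, so the step on $\lctxp\val$ lifts by ordinary contextual closure, your freshness side-condition being exactly what ensures the reduct is $\lctxtwop{\cuta\valtwo\var s}$.
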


\withproofs{% !TEX root = ../../main.tex
\begin{proof}
\hfill
\begin{enumerate}
\item By induction on $\tm \Rew{a} \tmtwo$. Details are in \cite{DBLP:journals/corr/abs-lics}. The only subtle case is the one for $\rtoaxetwo$, which has a sub-case where the step might become a $\rtobder$ step after subtitution. Let $\tm = \cuta{\evarthree}\evartwo\ctxp{\dera\evartwo\var \tmthree}
 \rtoaxetwo 
\cuta{\evarthree}\evartwo\ctxp{\dera\evarthree\var \tmthree} = \tmtwo$,  $\evarthree = \evar$, and $\exval = \bang\lctxp\val$. Then:
\begin{center}
	$\begin{array}{rllll}
	\cutsub{\exval}\evarthree\tm 
	& = &
	\cutsub{\exval}\evarthree \cuta{\evarthree}\evartwo\ctxp{\dera\evartwo\var \tmthree}
	\\
	& = &
  	\cuta{\exval}\evartwo \cutsub{\exval}\evarthree \ctxp{\dera\evartwo\var \tmthree}
	\\
	& =_{\reflemmapeq{bang-subs-prop}{zero}} &
	\cuta{\exval}\evartwo (\cutsub{\exval}\evarthree \ctx)	\ctxholep{\dera\evartwo\var \cutsub{\exval}\evarthree \tmthree}
	\\
	& \rtobder & 
	 \cuta{\exval}\evartwo (\cutsub{\exval}\evarthree \ctx)\ctxholep{\lctxp{\cuta\val\var \cutsub{\exval}\evarthree \tmthree}}
  	\\
	& = &
  	\cuta{\exval}\evartwo (\cutsub{\exval}\evarthree \ctx)\ctxholep{\cutsub{\exval}\evarthree \dera{\evarthree}\var  \tmthree}
	\\
	&=_{\reflemmapeq{bang-subs-prop}{zero}} &
  	\cuta{\exval}\evartwo \cutsub{\exval}\evarthree \ctxp{\dera{\evarthree}\var \tmthree}
	\\
	 &=& 
	\cutsub{\exval}\evarthree \cuta{\evarthree}\evartwo\ctxp{\dera\evarthree\var \tmthree}
	
	 &= & 
	\cutsub{\exval}\evarthree \tmtwo
	\end{array}$
	\end{center}
\item By induction on $\tm$.\qedhere
\end{enumerate}
\end{proof}}

The small-step exponential rule $\toess$ then simply turns a cut $\cuta\exval\evar\tm$ into the corresponding meta-level substitution.

\paragraph{Linear Implication and Cut Creations} Rule $\tololli$ is special in that it is the only rule of the ESC that \emph{creates} cuts. Note indeed that in the root $\rtololli$ rule:
\begin{center}
$\begin{array}{rcl}
\cuta{\la\vartwo\lctxp\val}\mvar \mctxp{\suba\mvar\valtwo\var \tm}
& \rtololli & 
\mctxp{ \cuta\valtwo\vartwo \lctxp{\cuta\val\var \tm }}  
\end{array}$
\end{center} 
the cut $\cuta\valtwo\vartwo$ in the reduct is not a cut of the reducing term. One might argue that also $\rtotens$ and $\rtobang$ do create cuts, because in their root rules, recalled here:
\begin{center}
$\begin{array}{rcl}
\cuta{\pair{\lctxp\val}{\lctxtwop\valtwo}}\mvar \mctxp{\para\mvar\var\vartwo \tm}  
&\rtotens&
\mctxp{\lctxp{\cuta\val\var \lctxtwop{\cuta\valtwo\vartwo \tm}}}
\\[6pt]
\cuta{\bang\lctxp\val}\evar\ctxp{\dera\evar\var \tm}
& \rtobang &
\cuta{\bang\lctxp\val}\evar\ctxp{\lctxp{\cuta\val\var \tm}} 
\end{array}$
\end{center}
the cuts $\cuta\val\var$ and $\cuta\valtwo\vartwo$ of the reducts are not in the reducing terms. The difference is that $\val$ and $\valtwo$ \emph{do occur as sub-terms of cuts} in the reducing terms of $\rtotens$ and $\rtobang$, while in $\rtololli$ the value $\valtwo$ does not. We then say that $\cuta\valtwo\vartwo$ is \emph{created} in $\rtololli$.
This fact shall play a key role for \emph{local termination} (\refthm{local-termination} below) and for the good strategy. Being specific to $\tololli$, cut creations are a feature of the intuitionistic setting. That is, they are \emph{invisible} in classical linear logic.

\paragraph{Variable Occurrences and Redex Positions} For later defining the good strategy, we identify a redex with its position, which is a context, as it is done for the LSC by Accattoli and Dal Lago  \cite{DBLP:journals/corr/AccattoliL16}. Every step $\tm \tomsnw \tmtwo$ reduces a redex of shape $\tm=\ctxp{\cuta\val\var\ctxtwop{\tm_{\var}}}$ where $\tm_{\var}$ is an \emph{occurrence} of $\var$, \ie a sub-term of $\tm$ of shape $\var$, $\para\var\vartwo\varthree\tmtwo$, $\suba\var\val\vartwo\tmtwo$, or $\dera\var\vartwo\tmtwo$. The \emph{redex position of $\tomsnw$ steps} is the context $\ctxp{\cuta\exval\var\ctxtwo}$. The \emph{redex position of $\tobw$ and $\toess$ steps} is the context closing the root step. We write $\ctx:\tm\Rew{\mssym/\sssym}\tmtwo$ for a redex of position $\ctx$ in $\tm$, the reduction of which produces $\tmtwo$.\medskip

We conclude the section with the fact that proper terms are stable by reduction.

\begin{lem}
\label{l:proper-replacement}
Let $\ctxp\tm$ be a \proper term. Then:
\begin{enumerate}
\item $\tm$ is \proper.
\item If $\tmtwo$ is \proper and such that $\mfv\tmtwo = \mfv\tm$ then $\ctxp\tmtwo$ is \proper.
\end{enumerate}
\end{lem}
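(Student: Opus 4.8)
The plan is to prove items~1 and~2 together by structural induction on the context $\ctx$, reading off, for each production of $\ctx$, the matching clause of Definition~\ref{def:proper-terms}. Since general contexts are built out of value contexts and left contexts, and since plugging into a split cut or subtraction passes through the unique splitting $\tm=\lctxp\val$, I would set up the induction so that left contexts $\lctx$ are treated by their own nested induction; the only genuine base case is $\ctx=\ctxhole$, where $\ctxp\tm=\tm$ and $\ctxp\tmtwo=\tmtwo$, so both items are immediate. Throughout I would use, as holds in every application of the lemma (where $\ctxp\tm$ is a \emph{position}, i.e.\ $\tm$ is literally a sub-term of $\ctxp\tm$), that the displayed decomposition is non-capturing, so that plugging $\tmtwo$ into $\ctx$ does not turn free variables of $\tmtwo$ into bound ones.

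The engine of item~2 is the easy sub-observation, proved by the same induction: $\mfv{\ctxp\tm}$ depends on $\tm$ only through the set $\mfv\tm$, so that $\mfv\tm=\mfv\tmtwo$ implies $\mfv{\ctxp\tm}=\mfv{\ctxp\tmtwo}$ — indeed, the set of multiplicative variables bound by $\ctx$ at its hole and the set of free multiplicative variables contributed by $\ctx$ itself are both determined by $\ctx$ alone (in the split cut and subtraction clause this again uses non-capturing). Granting this, every side condition in Definition~\ref{def:proper-terms} is either the properness of an immediate sub-term, which the induction hypothesis supplies, or a purely $\mfv$-based constraint on immediate sub-terms — of the form ``$\var\in\mfv s$'' for $\var$ multiplicative, ``$\mfv s\cap\mfv{s'}=\emptyset$'', or ``$\mfv s=\emptyset$'' — hence invariant under replacing $\tm$ by an $\tmtwo$ with the same $\mfv$.

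The bulk of the argument is a case analysis on the outermost production of $\ctx$. For all productions except $\cuta\ctxhole\var\tmthree$ and $\suba\mvar\ctxhole\var\tmthree$ — that is, for $\pair\ctxtwo\tmthree$, $\pair\tmthree\ctxtwo$, $\la\var\ctxtwo$, $\bang\ctxtwo$, $\para\mvar\var\vartwo\ctxtwo$, $\dera\evar\var\ctxtwo$, $\cuta\val\var\ctxtwo$, $\suba\mvar\val\var\ctxtwo$, as well as $\cuta\vctx\var\tmthree$ and $\suba\mvar\vctx\var\tmthree$ with $\vctx\neq\ctxhole$ — plugging is the evident homomorphic one, $\ctxp\tm = c(\dots,\ctxtwop\tm,\dots)$, and Definition~\ref{def:proper-terms} says precisely that $c(\dots,\ctxtwop\tm,\dots)$ is \proper iff $\ctxtwop\tm$ together with the other immediate sub-terms are \proper and one $\mfv$-side-condition holds. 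Item~1 then follows by the induction hypothesis (item~1) applied to $\ctxtwo$; item~2 follows by the induction hypothesis (item~2) applied to $\ctxtwo$ and the sub-observation, which gives $\mfv{\ctxtwop\tmtwo}=\mfv{\ctxtwop\tm}$ so the side condition is preserved. The left-context productions $\cuta\val\var\lctx$, $\para\mvar\var\vartwo\lctx$, $\suba\mvar\val\var\lctx$, $\dera\evar\var\lctx$ have exactly this congruence shape and are handled identically in the nested induction on $\lctx$.

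The one delicate case — which I expect to be the main obstacle — is $\ctx=\cuta\ctxhole\var\tmthree$ (and symmetrically $\ctx=\suba\mvar\ctxhole\var\tmthree$), where plugging rearranges the term through splitting: writing $\tm=\lctxp\val$ and $\tmtwo=\lctxtwop\valtwo$ one has $\ctxp\tm=\lctxp{\cuta\val\var\tmthree}$ and $\ctxp\tmtwo=\lctxtwop{\cuta\valtwo\var\tmthree}$. (When $\ctxp\tm$ is a position this case is vacuous, since a value slot can only hold a value, forcing $\lctx=\ctxhole$ and no rearrangement; the work is in covering the general, non-capturing case.) Here I would first prove, as a claim by induction on $\lctx$ along the same congruence pattern, that $\lctxp{\cuta\val\var\tmthree}$ is \proper iff $\lctxp\val$ and $\tmthree$ are \proper, $\mfv\val\cap(\mfv\tmthree\setminus\set\var)=\emptyset$, and $\var\in\mfv\tmthree$ when $\var$ is multiplicative; non-capturing is what makes this an ``iff'', since then $\mfv{\lctxp{\cuta\val\var\tmthree}}=\mfv{\lctxp\val}\cup(\mfv\tmthree\setminus\set\var)$. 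Item~1 is then immediate: $\ctxp\tm$ \proper yields $\lctxp\val=\tm$ \proper. For item~2: $\tmthree$ is unchanged and \proper, and $\mfv{\lctxtwop\valtwo}=\mfv{\lctxp\val}$ by hypothesis, so the same characterization reassembles $\lctxtwop{\cuta\valtwo\var\tmthree}=\ctxp\tmtwo$ as \proper. (The structural linearity lemma, Lemma~\ref{l:proper-linearity}, is not needed here, though it could be used to shorten the $\mfv$-bookkeeping in the cut and subtraction clauses.)
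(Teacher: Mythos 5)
Your proof is correct and follows exactly the route the paper takes: the paper's entire proof is the single line ``By induction on $\ctx$,'' and your argument is that induction with the details spelled out, including the only genuinely delicate cases $\cuta\ctxhole\var\tmthree$ and $\suba\mvar\ctxhole\var\tmthree$ where plugging passes through the splitting $\tm=\lctxp\val$. Nothing to correct.
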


\begin{proof}
By induction on $\ctx$.
\end{proof}

\begin{lem}
Let $\tm$ be a \proper term and $\tm \toms \tmtwo$. Then $\tmtwo$ is \proper.
\end{lem}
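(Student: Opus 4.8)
\emph{Proof idea.} The plan is to push the statement down to the eight root rules by means of Lemma~\ref{l:proper-replacement}. By definition of $\toms$ (see \reffig{rewriting-rules}) there are a general context $\ctx$, an index $a\in\set{\axmone,\axmtwo,\tens,\lolli,\axeone,\axetwo,\bdersym,\wsym}$ (the small-step rule $\ess$ is not part of $\toms$), and terms $\tm_0\rootRew{a}\tmtwo_0$ with $\tm=\ctxp{\tm_0}$ and $\tmtwo=\ctxp{\tmtwo_0}$. By Lemma~\ref{l:proper-replacement}.1 applied to $\ctx$, the term $\tm_0$ is proper; by Lemma~\ref{l:proper-replacement}.2, to conclude that $\tmtwo=\ctxp{\tmtwo_0}$ is proper it suffices to show that $\tmtwo_0$ is proper and that $\mfv{\tmtwo_0}=\mfv{\tm_0}$ --- that is, that each root step preserves properness and the set of free \emph{multiplicative} variables (free exponential variables may legitimately shrink, e.g.\ under $\rtow$, which is harmless since Lemma~\ref{l:proper-replacement}.2 only constrains multiplicative variables). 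I would then do a case analysis on $a$.

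The axiom-like cases and the weakening case should be immediate from Definition~\ref{def:proper-terms}. For $\rtoaxmone$, i.e.\ $\cuta{\mval}\mvar\mctxfp\mvar\rtoaxmone\mctxfp\mval$, I would invoke structural linearity (Lemma~\ref{l:proper-linearity}): $\mvar$ has exactly one free occurrence in the body, so moving $\mval$ to that position neither duplicates nor deletes a multiplicative variable, and the support-disjointness hypothesis carried by the cut of $\tm_0$ is precisely what keeps $\mctxfp\mval$ proper. For $\rtoaxmtwo$ and $\rtoaxetwo$ one observes that the renaming clause preserves properness and multiplicative free variables --- for $\rtoaxmtwo$ this is the evident multiplicative analogue of Lemma~\ref{l:bang-subs-prop}.1, and properness of the cut of $\tm_0$ makes the new name available. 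For $\rtoaxeone$ one only replaces an occurrence of an exponential variable by an exponential value $\exval$, and $\mfv\exval=\emptyset$ (trivially if $\exval$ is a variable, and because promotions are proper only when their body has no free multiplicative variables otherwise), so nothing changes multiplicatively, and properness follows by a routine induction using that $\exval$ is proper and multiplicatively closed. For $\rtow$, i.e.\ $\cuta\exval\evar\tm\rtow\tm$, again $\mfv\exval=\emptyset$, hence $\mfv{\cuta\exval\evar\tm}=\mfv\tm$, and $\tm$ is proper since $\cuta\exval\evar\tm$ is (Cut clause of Definition~\ref{def:proper-terms}).

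The real work is in the three rules that relocate the left context produced by splitting a value: $\rtotens$, $\rtololli$, and $\rtobder$. In each of them some sub-term is written as $\lctxp\val$ (or $\bang{\lctxp\val}$), and one has to check that extracting $\lctx$ from inside the value and re-plugging it around the freshly created cut(s) --- while respecting the \emph{split} discipline of Definition~\ref{def:plugging} --- preserves both properness and $\mfv$. The ingredients I would use are: (i) $\lctxp\val$ proper implies $\val$ proper, by Lemma~\ref{l:proper-replacement}.1 with $\ctx=\lctx$; (ii) for $\rtobder$, properness of $\bang{\lctxp\val}$ forces $\mfv{\lctxp\val}=\emptyset$, so $\lctx$ together with the new cut $\cuta\val\var\cdot$ contributes no free multiplicative variable and one computes $\mfv{\lctxp{\cuta\val\var\tm}}=\mfv\tm\setminus\set\var=\mfv{\dera\evar\var\tm}$, which is exactly the needed equality; (iii) for $\rtotens$ and $\rtololli$, the disjointness and freshness side conditions of the new cuts $\cuta\val\var\cdot$ and $\cuta\valtwo\vartwo\cdot$ are inherited from the properness of the tensor pair (resp.\ of the abstraction and of the subtraction) of the redex, and the (silently assumed) fact that the multiplicative context $\mctx$ does not capture $\mvar$ ensures that the par/subtraction binder $\mvar$ disappears cleanly. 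I expect this last family of cases to be the main obstacle: it is where the whole book-keeping of Definition~\ref{def:proper-terms} --- linearity of multiplicative variables, freshness of cut/par/subtraction binders, disjointness of multiplicative supports --- has to be shown to survive the restructuring, whereas all the other cases follow directly from the definitions and from the auxiliary lemmas.
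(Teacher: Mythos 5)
Your proposal is correct and follows essentially the same route as the paper: reduce to the root rules via Lemma~\ref{l:proper-replacement} (part 1 for the redex, part 2 for re-plugging), then check that each root rule preserves properness and the set of free multiplicative variables — the paper dispatches this last check as a "straightforward induction", which your case analysis merely spells out. No gap.
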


\begin{proof}
Let $\tm = \ctxp\tmthree \toms \ctxp\tmfour = \tmtwo$. By \reflemma{proper-replacement}, $\tmthree$ is \proper, and to prove the statement it is enough to prove that if $\tmthree \toms \tmfour$ is a root step then $\tmfour$ is proper and  $\mfv\tmfour = \mfv\tmthree$. Note that all the  root  rules preserve the set of multiplicative variables, that is, $\mfv\tmfour = \mfv\tmthree$ holds. Proving that the root rules preserve being \proper is straightforward.
\end{proof}
Preservation of properness by $\toess$ follows from \emph{full composition} (\refprop{ms-simulates-ss}),  just below.

% !TEX root = main.tex
\section{Basic Properties}
\label{sect:basic-prop}
We now prove various basic properties of the ESC. An expected property that we do not prove is the simulation of the LSC by the ESC. It can be obtained via Girard's (call-by-name) encoding of intuitionistic logic into linear logic. It is not surprising and yet it is technical, for reasons (the encoding of applications via subtractions) that are intrinsic to the translation of natural deduction to sequent calculus and orthogonal to the features of the ESC, which is why it is omitted.

\paragraph{Full Composition.} First of all, micro-step exponential rules simulate the small-step one. The ESs literature calls this fact \emph{full composition}.

\begin{prop}[Full composition, or micro-step simulates small-step]
\label{prop:ms-simulates-ss}
If $\tm\toess \tmtwo$ then $\tm\toems^{+}\tmtwo$.
\end{prop}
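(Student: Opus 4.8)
The plan is to reduce the statement to its root instance and then induct on the number of free occurrences of the cut variable. By definition of $\toess$, the hypothesis $\tm\toess\tmtwo$ means $\tm=\ctxp{\cuta\exval\evar\tmthree}$ and $\tmtwo=\ctxp{\cutsub{\exval}{\evar}{\tmthree}}$ for some general context $\ctx$, exponential value $\exval$, and term $\tmthree$. Since each of $\toaxeone$, $\toaxetwo$, $\tobder$, and $\tow$ is defined by closure under general contexts, so is their union $\toems$, and hence so is $\toems^{+}$; it therefore suffices to prove the root statement $\cuta\exval\evar\tmthree\toems^{+}\cutsub{\exval}{\evar}{\tmthree}$, and then conclude by the context closure of $\toems^{+}$. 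Throughout, by $\alpha$-renaming I assume $\evar\notin\fv\exval$ and that no binder inside $\tmthree$ re-binds $\evar$.

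I would prove the root statement by induction on $k\defeq\sizep{\tmthree}{\evar}$. If $k=0$, then $\cutsub{\exval}{\evar}{\tmthree}=\tmthree$ by the basic properties of meta-level substitution (\reflemma{bang-subs-prop}), and $\cuta\exval\evar\tmthree\rtow\tmthree$ is a single $\toems$ step, so we are done. If $k>0$, fix a free occurrence of $\evar$ in $\tmthree$. By inspection of the grammar (\reffig{grammars}), since $\evar$ is exponential this occurrence is either a bare value occurrence, in which case $\tmthree=\ctxtwop{\evar}$ for some context $\ctxtwo$ not capturing $\evar$, or the subject of a dereliction, in which case $\tmthree=\ctxtwop{\dera\evar\var{\tm'}}$ for such a $\ctxtwo$ and some $\var,\tm'$. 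In the bare case $\cuta\exval\evar{\ctxtwop{\evar}}\rtoaxeone\cuta\exval\evar{\ctxtwop{\exval}}$; in the dereliction case $\cuta\evartwo\evar{\ctxtwop{\dera\evar\var{\tm'}}}\rtoaxetwo\cuta\evartwo\evar{\ctxtwop{\dera\evartwo\var{\tm'}}}$ when $\exval=\evartwo$, and $\cuta{\bang\lctxp\val}\evar{\ctxtwop{\dera\evar\var{\tm'}}}\rtobder\cuta{\bang\lctxp\val}\evar{\ctxtwop{\lctxp{\cuta\val\var{\tm'}}}}$ when $\exval=\bang\lctxp\val$. In each case let $\tmthree'$ denote the body of the cut in the reduct.

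In all three cases $\sizep{\tmthree'}{\evar}=k-1$: the chosen occurrence is consumed and nothing inserted is a free $\evar$, since $\sizep{\exval}{\evar}=0$ (as $\evar\notin\fv\exval$) and, in the $\rtobder$ case, also $\sizep{\lctx}{\evar}=\sizep{\val}{\evar}=0$ because $\lctx$ and $\val$ come from the promotion body of $\exval$. Moreover $\cutsub{\exval}{\evar}{\tmthree'}=\cutsub{\exval}{\evar}{\tmthree}$: using the property of \reflemma{bang-subs-prop} by which meta-substitution commutes with plugging, the equality amounts to $\cutsub\exval\evar\evar=\exval=\cutsub\exval\evar\exval$ in the bare case, and in the two dereliction cases to exactly the defining clauses of meta-level substitution on a dereliction with subject $\evar$ — which is precisely what $\rtoaxetwo$ and $\rtobder$ (body-splitting included) were designed to implement. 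Thus the induction hypothesis applies to $\cuta\exval\evar{\tmthree'}$, giving $\cuta\exval\evar{\tmthree'}\toems^{+}\cutsub{\exval}{\evar}{\tmthree'}=\cutsub{\exval}{\evar}{\tmthree}$; prefixing the single step displayed above yields $\cuta\exval\evar\tmthree\toems^{+}\cutsub{\exval}{\evar}{\tmthree}$, closing the induction. The only points requiring care are the standard $\alpha$-renaming conventions and the preservation of the meta-substitution invariant across each micro-step; I do not expect a real obstacle, since every micro-step exponential rule is by design one clause of the definition of meta-level substitution turned into a rewriting rule, with the base case corresponding to the weakening clause and rule $\rtow$.
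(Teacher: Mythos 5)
Your proof is correct and follows essentially the same route as the paper's: reduce to the root case, induct on the number $k$ of free occurrences of $\evar$ in the cut body, dispatch $k=0$ with $\tow$, and in the inductive case fire one of $\toaxeone$, $\toaxetwo$, $\tobder$ on a chosen occurrence while observing that the meta-level substitution of the reduct is unchanged and the occurrence count drops by one. The only cosmetic difference is that you spell out the $\exval=\evartwo$ sub-cases that the paper dismisses as analogous.
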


\begin{proof}
Let 
$\tm = \ctxp{\cuta\exval\evar\tmthree} \toess \ctxp{\cutsub\exval\evar\tmthree} = \tmtwo$. We treat the case $\exval = \bang\tmfour$, the case $\exval = \evartwo$ is analogous. The proof is by induction on $k \defeq \varmeas{\tmthree}\evar$. If $k = 0$ then $\ctxp{\cuta\exval\evar\tmthree} \tow \ctxp{\tmthree} = \ctxp{\cutsub\exval\evar\tmthree}$. Otherwise, $\tmthree$ writes has $\ctxtwop{\evar}$ or $\ctxtwop{\dera\evar\var \tmthree'}$. Consider the first case. Then:
\begin{center}
$\begin{array}{ccccccc}
\ctxp{\cuta{\bang\tmthree'}\evar\ctxtwop{\evar}} 
& \toaxeone &
\ctxp{\cuta{\bang\tmthree'}\evar\ctxtwop{\bang\tmthree'}} 
& =: & \tmtwo'
\end{array}$
\end{center} 
Now note that $\tmtwo' \toess \tmtwo$ and that in $\tmtwo'$ we have $\varmeas{\ctxp{\bang\tmthree'}}\evar < k$. Then we can apply the \ih, obtaining $\tmtwo' \toems^{+}\tmtwo$, that is, $\tm \toems^{+}\tmtwo$. The case $\tmthree=\ctxtwop{\dera\evar\var \tmthree'}$ is analogous, simply using $\tobder$ instead of $\toaxeone$.
\end{proof}

\begin{cor}
Let $\tm$ be a \proper term and $\tm \toss \tmtwo$. Then $\tmtwo$ is \proper.
\end{cor}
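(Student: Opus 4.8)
The plan is to reduce the claim to the already-established stability of properness under micro-step reduction. Recall that $\toss = \tom \cup \toess$. So I split into two cases according to which kind of step $\tm \toss \tmtwo$ is.

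First I would treat the multiplicative case $\tm \tom \tmtwo$. Since $\tom \subseteq \toms$, this is an instance of the lemma proved just above (``Let $\tm$ be a \proper term and $\tm \toms \tmtwo$; then $\tmtwo$ is \proper''), so $\tmtwo$ is \proper and there is nothing more to do.

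Next I would treat the small-step exponential case $\tm \toess \tmtwo$. Here I invoke full composition (\refprop{ms-simulates-ss}): $\tm \toess \tmtwo$ implies $\tm \toems^{+} \tmtwo$. Since $\toems \subseteq \toms$, this gives a nonempty sequence $\tm = \tm_0 \toms \tm_1 \toms \cdots \toms \tm_k = \tmtwo$. Starting from $\tm_0 = \tm$ \proper and applying the micro-step stability lemma $k$ times, each $\tm_{i}$ is \proper; in particular $\tmtwo = \tm_k$ is \proper. This concludes the proof.

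There is no real obstacle: the argument is a routine combination of full composition with the previously proven preservation of properness under $\toms$. The only point worth stating explicitly is the set inclusions $\tom \subseteq \toms$ and $\toems \subseteq \toms$, which are immediate from the definitions of these relations in \reffig{rewriting-rules}.
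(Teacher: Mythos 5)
Your argument is correct and is exactly the paper's intended route: the paper states that preservation of \properness by $\toess$ follows from full composition (\refprop{ms-simulates-ss}), combined with the already-proved stability of \properness under $\toms$, which is precisely your case split into $\tom \subseteq \toms$ and $\toess$ followed by iterating the micro-step lemma along the $\toems^{+}$ sequence. Nothing is missing.
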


\paragraph{Subject Reduction.} The next property is that the defined untyped cut elimination respects the typing system, that is, subject reduction holds. We first need a lemma.

\begin{lem}
\label{l:subj-red-aux}
Let $\tderiv \pof \multiform \vdash \tm \hastype \form$.  We have $\form = \bang\formtwo$ if and only if $\tm = \lctxp\exval$.
\end{lem}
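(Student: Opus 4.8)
The plan is to argue by induction on the typing derivation $\tderiv$, carrying along the unique \emph{splitting} $\tm = \lctxp\val$ of $\tm$ into a left context $\lctx$ and a value $\val$. The statement is then equivalent to the sharper claim that \emph{a typed term has a $\bang$-type if and only if the value $\val$ in its splitting is an exponential value}: by uniqueness of the splitting, $\tm = \lctxtwop\exval$ for some $\lctxtwo$ and $\exval$ precisely when the value occurring in $\tm$'s own splitting is exponential. So it is enough to check, rule by rule, that the conclusion type and the ``exponential-ness'' of the splitting value move together --- which, as we will see, means they change only at the leaves of the derivation.

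For the base cases, when the last rule is an axiom or a right rule, we have $\lctx = \ctxhole$, so $\tm = \val$, and the verification is immediate from the shapes in \reffig{typing}: $\ax_{\msym}$, $\tensRightRule$ and $\lolliRightRule$ introduce a multiplicative value of type an atom, a $\tens$, or a $\lolli$ respectively --- never a $\bang$; whereas $\ax_{\esym}$ and $\bangRightRule$ introduce an exponential value ($\evar$ and $\bang\tm$) of type $\bang\formtwo$. For the left rules $\cut$, $\tensLeftRule$, $\lolliLeftRule$ and $\bangLeftRule$, the conclusion term is obtained by plugging one premise term $\tm'$ --- which is typed with the very same conclusion type --- into a context built out of left constructors only (for $\cut$ and $\lolliLeftRule$ this context is $\lctx$ prefixed by $\cuta\val\var\ctxhole$, resp. $\suba\mvar\val\var\ctxhole$; for $\tensLeftRule$ and $\bangLeftRule$ it is just $\para\mvar\var\vartwo\ctxhole$, resp. $\dera\evar\var\ctxhole$). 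Since plugging a term into a left context only descends to the hole --- using the non-reshuffling clauses of the plugging definition, and the routine fact that left contexts compose --- the value in the splitting of $\tm$ is exactly the value in the splitting of $\tm'$, so the induction hypothesis applied to $\tm'$ closes the case. The rule $\weakRule$ changes neither the term nor the type, so we just reuse the induction hypothesis.

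The only case asking for a little care is $\contrRule$, which replaces the premise term $\tm'$ by the renaming $\cutsub\evar\evartwo{\tm'}$ of one exponential variable by another, leaving the type unchanged. Here I would invoke Lemma~\ref{l:bang-subs-prop}: by its clause~\ref{p:bang-subs-prop-zero} this operation commutes with plugging, and by its clause~\ref{p:bang-subs-prop-values} it sends values to values of the same kind, hence left contexts to left contexts; consequently the splitting of $\cutsub\evar\evartwo{\tm'}$ is the renaming of the splitting of $\tm'$, and its value is exponential iff that of $\tm'$ is, so the induction hypothesis transfers. The main obstacle is therefore entirely bookkeeping: making sure that plugging into a left context never disturbs the splitting value, and handling the renaming of the contraction rule. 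It is also worth noting that the induction should be run on derivations, not on terms, precisely because $\weakRule$ and $\contrRule$ can leave the underlying term (essentially) unchanged while shrinking the derivation.
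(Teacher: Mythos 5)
Your proof is correct, but it runs a different induction from the paper's. The paper inducts directly on the left context $\lctx$ of the unique splitting $\tm = \lctxp\val$: in the base case $\lctx = \ctxhole$ the term is a value, and a value of type $\bang\formtwo$ can only be $\evar$ or $\bang\tmtwo$ (and conversely), while for $\lctx \neq \ctxhole$ one peels off a left constructor and concludes by the induction hypothesis. You instead induct on the typing derivation $\tderiv$ and verify, rule by rule, that the conclusion type and the kind of the splitting value move together. The two arguments rest on the same observation --- left and structural rules change neither the type nor the kind of the splitting value, and only axioms and right rules fix both --- but they distribute the work differently. Your version is more explicit about $\weakRule$ and $\contrRule$: the paper's base-case claim that ``the last rule is necessarily $\ax_{\esym}$ or $\bangRightRule$'' silently ignores trailing weakenings and contractions, whereas you handle them as separate cases (and correctly note that this is why one must induct on derivations rather than on terms). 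The price you pay is the bookkeeping about plugging and renaming: your step ``values to values of the same kind, \emph{hence} left contexts to left contexts'' needs the small extra observation that renaming an exponential variable maps each left constructor to a left constructor of the same kind, which is not literally one of the clauses of \reflemma{bang-subs-prop} you cite, though it is immediate from the definition of meta-level substitution. The paper's $\lctx$-induction avoids this entirely because it never has to push a substitution through the splitting. Both routes are sound; yours is longer but leaves fewer cases implicit.
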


\begin{proof}
If $\form =  \bang\formtwo$, then consider $\tm = \lctxp\val$. The proof is by induction on $\lctx$. Clearly, if $\lctx=\ctxhole$ then the last rule of $\tderiv$ is necessarily an exponential axiom or a right introduction of $\bang$, that is, $\val = \evar$ or $\val = \bang\tmtwo$, that is, $\val$ is an exponential value. If $\lctx \neq \ctxhole$ then it follows from the \ih

The reasoning is analogous if $\form \neq \bang\formtwo$.
\end{proof}
%We suggest to have a look at the proof in \ben{the extended version}, to see how contextual cut elimination acts on proofs. 
%One case of the proof is in \reffig{mult-cut-on-proofs}.
\begin{toappendix}
\begin{thm}[Subject reduction]
\label{thm:sub-red}
Let $\multiform\vdash \tm\hastype \form$ and $\tm \toms \tmtwo$. Then $\multiform\vdash \tmtwo\hastype \form$.
\end{thm}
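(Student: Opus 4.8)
The plan is the standard two-stage argument: reduce to root steps, then inspect the rules one by one. A step $\tm\toms\tmtwo$ reduces a redex at a position $\ctx$, that is, $\tm = \ctxp\tmthree$, $\tmtwo = \ctxp\tmfour$ and $\tmthree\rootRew a\tmfour$ for one of the eight root rules $a$ in $\tom\cup\toems$. First I would prove, by a routine induction on $\ctx$, a \emph{typed replacement lemma}: from a derivation of $\multiform\vdash\ctxp\tmthree\hastype\form$ one extracts a sub-derivation of some sequent $\multiform'\vdash\tmthree\hastype\formtwo$, and replacing it by any derivation of $\multiform'\vdash\tmfour\hastype\formtwo$ yields a derivation of $\multiform\vdash\ctxp\tmfour\hastype\form$. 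The only slightly delicate cases of this induction are the split positions $\ctx = \cuta\vctx\var\tm$ and $\ctx = \suba\mvar\vctx\var\tm$, where plugging re-splits; but a redex is never itself the left sub-term of a cut or a subtraction (no value is a redex), so the hole sits properly inside $\vctx$, value contexts preserve being values, and the split shape --- hence the applicability of the cut and subtraction typing rules --- is maintained. Given this lemma, it suffices to check that each root rule $\tmthree\rootRew a\tmfour$ preserves the sequent, i.e.\ $\multiform'\vdash\tmthree\hastype\formtwo$ implies $\multiform'\vdash\tmfour\hastype\formtwo$.

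The four multiplicative rules are handled by inversion and reassembly. For $\toaxmone$, structural linearity (\reflemma{proper-linearity}) forces the occurrence of $\mvar$ in $\mctxfp\mvar$ to be typed by a bare multiplicative axiom, which one replaces by the left-premise derivation of the cut. For $\toaxmtwo$ one additionally needs the elementary fact that renaming a linear multiplicative variable preserves typing, namely that $\multiform,\mvar\hastype\form\vdash\tm\hastype\formtwo$ implies $\multiform,\mvartwo\hastype\form\vdash\cutsub\mvartwo\mvar\tm\hastype\formtwo$ (by induction on $\tm$). For $\totens$ and $\tololli$ one inverts the right-rule typing of the cut value ($\tensRightRule$, resp.\ $\lolliRightRule$) to obtain typings of the splittings of the relevant sub-terms (for instance $\tmtwo = \lctxp\val$; this is precisely why the reduct splits those sub-terms), reaches through the multiplicative context to the matching left rule ($\tensLeftRule$, resp.\ $\lolliLeftRule$) and inverts it, and reassembles the two nested cuts of the reduct; the context-splitting of the premises matches because the cut rule admits a left context in its left premise.

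The four micro-step exponential rules follow the same pattern, but with some weakening/contraction bookkeeping, because they duplicate ($\toaxeone$, $\toaxetwo$, $\tobder$) or erase ($\tow$) the cut's left sub-term $\exval$ together with its typing context. The crucial observation is that this context consists only of $\bang$-formulas --- a singleton exponential-axiom context if $\exval = \evartwo$, a context of shape $\bang\multiform$ (by inversion on $\bangRightRule$) if $\exval = \bang\tmtwo$ --- so the duplicated copies can be merged back by contraction, and an erased occurrence can be recovered by weakening, neither operation changing the term, so that the conclusion sequent stays literally unchanged. Concretely: for $\tow$ the occurrence-free $\evar$ was introduced by a weakening, which one removes and then re-weakens in the exponential context of $\exval$; for $\toaxeone$ one replaces the exponential-axiom leaf of the chosen occurrence of $\evar$ by the left-premise derivation and then contracts; for $\toaxetwo$ one retypes the dereliction with $\evartwo$ in place of $\evar$ (both of type $\bang\form$) and then contracts; for $\tobder$ one inverts $\bangRightRule$ on $\bang\tmtwo$ to type its body $\tmtwo = \lctxp\val$, inverts $\bangLeftRule$ on the dereliction to type $\tm$ with $\var$ in context, forms the cut $\lctxp{\cuta\val\var\tm}$ (which fits the cut rule, left context included), plugs it back avoiding capture of the box context, and contracts that (exponential) context with the copy supplied by the cut's left premise. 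In every case the result has the same context and the same formula. The companion statement for the small-step rule $\toess$ then either follows the same way from the properties of meta-level substitution (\reflemma{bang-subs-prop}) or is inherited via full composition (\refprop{ms-simulates-ss}). The main obstacle is not any individual rule but the uniform management of this exponential weakening/contraction bookkeeping together with the split-cut plugging, making sure the reassembled derivation proves \emph{exactly} the original sequent, not merely one equivalent to it up to weakening or structural rules.
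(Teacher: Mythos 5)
Your proof takes essentially the same route as the paper's: reduce to the eight root rules and perform derivation surgery case by case --- inversion and reassembly through the multiplicative context for the multiplicative rules, contraction/weakening bookkeeping on the all-$\bang$ typing context of the duplicated or erased cut value for the exponential ones, and full composition to extend the result to $\toess$. One small correction to your replacement lemma: the claim that the hole never sits at the value position of a cut or subtraction (on the grounds that no value is a redex) is not accurate, since general contexts do include $\cuta\ctxhole\var\tmtwo$ and plugging a root redex there re-splits it; the case is nonetheless harmless, precisely because the cut and subtraction typing rules take the \emph{unsplit} term $\lctxp\val$ as their left premise, so the sub-derivation being replaced is exactly the derivation of the re-split redex.
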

\end{toappendix}

\withproofs{% !TEX root = ../../main.tex
\begin{proof}
By induction on $\tderiv \pof \multiform \vdash \tm \hastype \form$. If $\tderiv$ is an axiom then $\tm$ cannot reduce. If the last rule is a cut then $\tm = \lctxp{\cuta\val\var\tmtwo}$. If it is the root cut that it is reduced there are a number of cases given next. Otherwise (that is, if the last rule is not a cut, or if the reduced cut is not the one of the last rule) the statement follows by the \ih 

By \reflemma{subj-red-aux}, if the cut formula is $\bang\formtwo$ for some $\formtwo$ then $\var = \evar$ and $\val = \exval$, and otherwise $\var = \mvar$ and $\val = \mval$. We show two representative cases, the $\tololli$ one, which shows how to interpret the splitting of terms into value and left contexts on proofs, and $\toaxetwo$, which shows how LSC-style duplication acts on proofs. The other cases are similar.
\begin{itemize}
\item $\rtololli$. 
\begin{center}
\begin{tabular}{ccc}
\multicolumn{2}{c}{
\AxiomC{$\multiForm \vdash  \val\hastype\formtwo$}
\doubleLine\dashedLine
 	\RightLabel{$\lctx$}
\UnaryInfC{$\multiForm', \vartwo\hastype\form \vdash  \lctxp\val\hastype\formtwo$}
 	\RightLabel{$\lolliRightRule$}
	\UnaryInfC{$\multiForm'   \vdash  \la\vartwo\lctxp\val\hastype \form{\lolli}\formtwo$}
\doubleLine\dashedLine
 	\RightLabel{$\lctxtwo$}
	\UnaryInfC{$\multiForm''   \vdash  \lctxtwop{\la\vartwo\lctxp\val}\hastype \form{\lolli}\formtwo$}%			
%	\AxiomC{$\multiFormthree,  \mvartwo\hastype \form, \mvarthree\hastype \formtwo \vdash \tm\hastype\formthree$}
%	 	\RightLabel{$\tensLeftRule$}
%		\UnaryInfC{$\multiFormthree,  \mvar\hastype \form{\tens}\formtwo \vdash \subpar\mvar{\mvartwo,\mvarthree}\tm\hastype\formthree$}
%\doubleLine\dashedLine
% 	\RightLabel{$\mctx$}
	
\AxiomC{$\multiFormtwo   \vdash  \valtwo\hastype\form$}
\doubleLine\dashedLine
 	\RightLabel{$\lctxthree$}
\UnaryInfC{$\multiFormtwo'   \vdash  \lctxthreep\valtwo\hastype\form$}

	\AxiomC{$  \multiFormthree, \var\hastype\formtwo \vdash \tm\hastype\formthree'$}
	\RightLabel{$ \lolliLeftRule $}
	
			\BinaryInfC{$\multiFormtwo', \multiFormthree,  \mvar\hastype \form{\lolli}\formtwo \vdash \lctxthreep{\suba\mvar{\valtwo,\var}\tm} \hastype\formthree'$}
			\doubleLine\dashedLine
 	\RightLabel{$\mctx$}
\UnaryInfC{$\multiFormthree',  \mvar\hastype \form{\lolli}\formtwo \vdash \mctxp{\lctxthreep{\suba\mvar{\valtwo,\var}\tm}} \hastype\formthree'$}

 	\RightLabel{$\cut$}
 	\BinaryInfC{$  \multiForm'', \multiFormthree'\vdash \lctxtwop{\cuta{\la\vartwo\lctxp\val}\mvar \mctxp{\lctxthreep{\suba\mvar{\valtwo,\var}\tm}}} \hastype\formthree'$} 	

	\DisplayProof
}
\\[6pt]
$\tololli$
 &
\AxiomC{$\multiFormtwo   \vdash  \valtwo\hastype\form$}
\doubleLine\dashedLine
 	\RightLabel{$\lctxthree$}
\UnaryInfC{$\multiFormtwo'   \vdash  \lctxthreep\valtwo\hastype\form$}

\AxiomC{$\multiForm \vdash  \val\hastype\formtwo$}
\doubleLine\dashedLine
 	\RightLabel{$\lctx$}
\UnaryInfC{$\multiForm', \vartwo\hastype\form \vdash  \lctxp\val\hastype\formtwo$}

	\AxiomC{$\multiFormthree,  \vartwo\hastype \form, \var\hastype \formtwo \vdash \tm\hastype\formthree$}

 	\RightLabel{$\cut$}
 	\BinaryInfC{$  \multiFormtwo', \vartwo\hastype \form,\multiFormthree\vdash \lctxp{\cuta\val\var \tm} \hastype\formthree'$} 	
 	\RightLabel{$\cut$}

 	\BinaryInfC{$  \multiForm',\multiFormtwo', \multiFormthree\vdash \lctxthreep{\cuta\valtwo\vartwo \lctxp{\cuta\val\var \tm}}$} 	
\doubleLine\dashedLine
 	\RightLabel{$\mctx$}
			\UnaryInfC{$\multiform',\multiFormthree' \vdash \mctxp{\lctxthreep{\cuta\valtwo\vartwo \lctxp{\cuta\val\var \tm}} }\hastype\formthree'$}
				\doubleLine\dashedLine
 	\RightLabel{$\lctxtwo$}
			\UnaryInfC{$\multiform'',\multiFormthree' \vdash \lctxtwop{\mctxp{\lctxthreep{\cuta\valtwo\vartwo \lctxp{\cuta\val\var \tm}} }} \hastype\formthree'$}

	\DisplayProof		
	\end{tabular}
	\end{center}
	
	\item $\toaxetwo$. The step is $\lctxp{\cuta{\evartwo}\evar\ctxp{\dera\evar\var \tm}} \hastype\formtwo \toaxetwo \lctxp{\cuta{\evartwo}\evar\ctxp{\dera\evartwo\var \tm}} \hastype\formtwo$. To avoid analyzing many similar (simple) sub-cases, let us assume that $\lctx$ does not capture $\evartwo$ and that $\evartwo\notin\fv\lctx$.
Now there still are two cases, depending on the multiplicity of $\evar$ in $\ctxp{\dera\evar\var \tm}$.
\begin{itemize}
\item $\evar\notin\fv\ctx$. Then:
\begin{center}
\begin{tabular}{ccc}
\multicolumn{2}{c}{
\AxiomC{}
 	\RightLabel{$\ax_{\esym}$}
	\UnaryInfC{$\evartwo\hastype \bang\form   \vdash  \evartwo\hastype \bang\form$}
\doubleLine\dashedLine
 	\RightLabel{$\lctx$}
	\UnaryInfC{$\multiForm,\evartwo\hastype \bang\form   \vdash  \lctxp{\evartwo}\hastype \form$}
	%%%%
	\AxiomC{$  \multiFormthree, \var\hastype\form \vdash \tm\hastype\formtwo$}
	\RightLabel{$ \bangLeftRule $}
			\UnaryInfC{$\multiFormthree,  \evar\hastype \bang\form \vdash \dera\evar\var\tm \hastype\formtwo$}
			\doubleLine\dashedLine
 	\RightLabel{$\ctx$}
\UnaryInfC{$\multiFormtwo, \evar\hastype \bang\form \vdash \ctxp{\dera\evar\var\tm} \hastype\formtwo$}
%%%
 	\RightLabel{$\cut$}
 	\BinaryInfC{$  \multiForm, \evartwo\hastype \bang\form, \multiFormtwo\vdash \lctxp{\cuta{\evartwo}\evar\ctxp{\dera\evar\var \tm}} \hastype\formtwo$} 	
	\DisplayProof	
}
\\[6pt]
$\toaxetwo$ 
&
\AxiomC{}
 	\RightLabel{$\ax_{\esym}$}
	\UnaryInfC{$\evartwo\hastype \bang\form   \vdash  \evartwo\hastype \bang\form$}
\doubleLine\dashedLine
 	\RightLabel{$\lctx$}
	\UnaryInfC{$\multiForm,\evartwo\hastype \bang\form   \vdash  \lctxp{\evartwo}\hastype \form$}
	%%%%
	\AxiomC{$  \multiFormthree, \var\hastype\form \vdash \tm\hastype\formtwo$}
	\RightLabel{$ \bangLeftRule $}
			\UnaryInfC{$\multiFormthree,  \evarthree\hastype \bang\form \vdash \dera\evarthree\var\tm \hastype\formtwo$}
			\doubleLine\dashedLine
 	\RightLabel{$\ctx$}
\UnaryInfC{$\multiFormtwo, \evarthree\hastype \bang\form \vdash \ctxp{\dera\evarthree\var\tm} \hastype\formtwo$} 	
\RightLabel{$\weakRule$}
\UnaryInfC{$\multiFormtwo, \evarthree\hastype \bang\form, \evar\hastype \bang\form \vdash \ctxp{\dera\evarthree\var\tm} \hastype\formtwo$}

%%%
 	\RightLabel{$\cut$}
 	\BinaryInfC{$  \multiForm, \evartwo\hastype \bang\form, \evarthree\hastype \bang\form, \multiFormtwo\vdash \lctxp{\cuta{\evartwo}\evar\ctxp{\dera\evarthree\var \tm}} \hastype\formtwo$} 	
	\RightLabel{$\contrRule$}
\UnaryInfC{$ \multiForm, \evartwo\hastype \bang\form, \multiFormtwo\vdash \lctxp{\cuta{\evartwo}\evar\ctxp{\dera\evartwo\var \tm}} \hastype\formtwo$}
	\DisplayProof	
	\end{tabular}
	\end{center}
%%%%%%%
\item $\evar\in\fv\ctx$. Then in the sequence of rules corresponding to the context $\ctx$ there is a first contraction acting on the dereliction which splits the sequence of rules in two, that is, there exists $\ctxtwo$ and $\ctxthree$ such that $\ctx=\ctxtwop\ctxthree$ and:
\begin{center}
\begin{tabular}{ccc}
\multicolumn{2}{c}{
\AxiomC{}
 	\RightLabel{$\ax_{\esym}$}
	\UnaryInfC{$\evartwo\hastype \bang\form   \vdash  \evartwo\hastype \bang\form$}
\doubleLine\dashedLine
 	\RightLabel{$\lctx$}
	\UnaryInfC{$\multiForm,\evartwo\hastype \bang\form   \vdash  \lctxp{\evartwo}\hastype \form$}
	%%%%
	\AxiomC{$  \multiFormthree, \var\hastype\form \vdash \tm\hastype\formtwo$}
	\RightLabel{$ \bangLeftRule $}
			\UnaryInfC{$\multiFormthree,  \evarthree\hastype \bang\form \vdash \dera\evarthree\var\tm \hastype\formtwo$}
			\doubleLine\dashedLine
 	\RightLabel{$\ctxthree$}
\UnaryInfC{$\multiFormtwo', \evar'\hastype \bang\form, \evarthree\hastype \bang\form \vdash \ctxthreep{\dera\evarthree\var\tm} \hastype\formtwo$}
\RightLabel{$\contrRule$}
\UnaryInfC{$\multiFormtwo', \evar'\hastype \bang\form\vdash \ctxthreep{\dera{\evar'}\var\tm} \hastype\formtwo$}
 	\RightLabel{$\ctxtwo$}
	\UnaryInfC{$\multiFormtwo, \evar\hastype \bang\form \vdash \ctxtwop{\ctxthreep{\dera\evar\var\tm}} \hastype\formtwo$}

%%%
 	\RightLabel{$\cut$}
 	\BinaryInfC{$  \multiForm, \evartwo\hastype \bang\form, \multiFormtwo\vdash \lctxp{\cuta{\evartwo}\evar\ctxp{\dera\evar\var \tm}} \hastype\formtwo$} 	
	\DisplayProof	
	}
\\[6pt]

$\toaxetwo$ & \AxiomC{}
 	\RightLabel{$\ax_{\esym}$}
	\UnaryInfC{$\evartwo\hastype \bang\form   \vdash  \evartwo\hastype \bang\form$}
\doubleLine\dashedLine
 	\RightLabel{$\lctx$}
	\UnaryInfC{$\multiForm,\evartwo\hastype \bang\form   \vdash  \lctxp{\evartwo}\hastype \form$}
	%%%%
\AxiomC{$  \multiFormthree, \var\hastype\form \vdash \tm\hastype\formtwo$}
	\RightLabel{$ \bangLeftRule $}
			\UnaryInfC{$\multiFormthree,  \evarthree\hastype \bang\form \vdash \dera\evarthree\var\tm \hastype\formtwo$}
			\doubleLine\dashedLine
 	\RightLabel{$\ctxthree$}
\UnaryInfC{$\multiFormtwo', \evar'\hastype \bang\form, \evarthree\hastype \bang\form \vdash \ctxthreep{\dera\evarthree\var\tm} \hastype\formtwo$}
 	\RightLabel{$\ctxtwo$}
	\UnaryInfC{$\multiFormtwo, \evar\hastype \bang\form,\evarthree\hastype \bang\form   \vdash \ctxtwop{\ctxthreep{\dera\evarthree\var\tm}} \hastype\formtwo$}

%%%
 	\RightLabel{$\cut$}
 	\BinaryInfC{$  \multiForm, \evartwo\hastype \bang\form, \evarthree\hastype \bang\form, \multiFormtwo\vdash \lctxp{\cuta{\evartwo}\evar\ctxp{\dera\evarthree\var \tm}} \hastype\formtwo$} 	
	\RightLabel{$\contrRule$}
\UnaryInfC{$ \multiForm, \evartwo\hastype \bang\form, \multiFormtwo\vdash \lctxp{\cuta{\evartwo}\evar\ctxp{\dera\evartwo\var \tm}} \hastype\formtwo$}
	\DisplayProof	
	\end{tabular}
	\end{center}\vspace*{-\baselineskip}\qedhere
\end{itemize}

\end{itemize}
\end{proof}}

Full composition extends subject reduction to $\toess$.

\paragraph{Clashes.} The presence of many constructors in an untyped setting gives rise to \emph{clashes}, that is, irreducible cuts.
\begin{defi}[Clash, clash-free terms]
A \emph{clash} is a term of the form $\cuta\mval\evar\tm$, $\cuta\exval\mvar\tm$, $\cuta{\pair\tm\tmtwo}\mvar \mctxp{\suba\mvar\val\vartwo\tmthree}$, or $\cuta{\la\var\tm}\mvar \mctxp{\para\mvar\vartwo\varthree \tmthree}$ and in these cases we also say that the root cut is \emph{clashing}. A term $\tm$ is \emph{clash-free} if, co-inductively,
\begin{enumerate}
	\item No sub-terms of $\tm$ are clashes, and
	\item If $\tm \toms \tmtwo$ then $\tmtwo$ is clash-free.
\end{enumerate}
\end{defi}
Note that there are no purely exponential clashes. The second point of the next lemma uses subject reduction.

\begin{lem}
\label{l:clashfree-implies-cutfree}
Let $\tm$ be a proper term.
\begin{enumerate}
\item If $\tm$ has no clashes and $\tm \not\toms$ then $\tm$ is cut-free.
\item If $\tm$ is typable then it is clash-free.
\end{enumerate}
\end{lem}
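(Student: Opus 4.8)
The plan is to prove the two points separately: the first by a finite case analysis on an arbitrary cut occurring in $\tm$, the second by a co-induction on the definition of clash-freeness resting on subject reduction.

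For the first point, I would assume toward a contradiction that the \proper, clash-free, $\toms$-normal term $\tm$ contains a cut sub-term $\cuta\val\var\tmthree$; by \reflemma{proper-replacement} this sub-term --- hence also $\tmthree$ --- is \proper. I then split on the kind of $\var$. If $\var=\mvar$ is multiplicative, \properness of the cut gives $\mvar\in\mfv\tmthree$, so \reflemma{proper-linearity} localises the unique free occurrence of $\mvar$ in $\tmthree$ at level~$0$; by the grammar this occurrence $\tm_\mvar$ is a leaf $\mvar$, a par head $\para\mvar\vartwo\varthree\tmtwo$, or a subtraction head $\suba\mvar\val\vartwo\tmtwo$ (it cannot head a dereliction, which binds an exponential name), and accordingly $\tmthree$ writes as a multiplicative context $\mctx$ (not capturing $\mvar$) filled with $\tm_\mvar$. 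Inspecting the shape of $\val$ now closes the case: if $\val$ is an exponential value then $\cuta\exval\mvar\tmthree$ is a clash; if $\val=\mvartwo$ the cut is a $\toaxmtwo$-redex; if $\val$ is a pair then $\tm_\mvar=\mvar$ gives a $\toaxmone$-redex, $\tm_\mvar$ a par head a $\totens$-redex, and $\tm_\mvar$ a subtraction head a clash; dually, if $\val$ is an abstraction then $\tm_\mvar=\mvar$ gives a $\toaxmone$-redex, a subtraction head a $\tololli$-redex, and a par head a clash. Every branch contradicts either $\toms$-normality or clash-freeness. If $\var=\evar$ is exponential, a multiplicative $\val$ makes $\cuta\mval\evar\tmthree$ a clash; otherwise $\val=\exval$, and if $\evar\notin\fv\tmthree$ the cut is a $\tow$-redex, while if $\evar\in\fv\tmthree$ the grammar forces some sub-term $\tm_\evar$ of $\tmthree$ to be an occurrence of $\evar$, necessarily of shape $\evar$ or $\dera\evar\vartwo\tmtwo$ (an exponential variable cannot head a par or a subtraction), yielding a $\toaxeone$-, $\toaxetwo$- or $\tobder$-redex according to whether $\tm_\evar$ is a variable, a dereliction with $\exval$ an exponential variable, or a dereliction with $\exval$ a promotion. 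Again a contradiction, so $\tm$ has no cut sub-term, i.e.\ it is cut-free.

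For the second point, since typed terms are \proper I may use \properness freely. I would first record the routine invariant, proved by induction on typing derivations by tracking how a variable enters a context: in any derivation an exponential variable in a context has a $\bang$-type and a multiplicative one a non-$\bang$-type (exponential variables enter contexts only via $\ax_{\esym}$, $\bangLeftRule$ and $\weakRule$, always with a $\bang$-type, $\contrRule$ preserving the type, while multiplicative ones come only from $\ax_{\msym}$, $\tensLeftRule$ and $\lolliLeftRule$, never with a $\bang$-type). Using it, I claim no sub-term of a typable term is a clash. Write such a clash as a cut $\cuta\val\var\tmthree$; the $\cut$ rule typing it equates the type $\form$ of the left value with the type of $\var$ in the body, and each of the four clash shapes forces incompatible top connectives. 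If $\val=\mval$ is a multiplicative value and $\var=\evar$: since $\mval$ splits as $\lctxp\mval$ with $\lctx=\ctxhole$, it is not of the form $\lctxp\exval$, hence by \reflemma{subj-red-aux} $\form$ is not a $\bang$-type, contradicting the invariant for $\evar$. If $\val=\exval$ and $\var=\mvar$: $\exval$ is of the form $\lctxp\exval$, so \reflemma{subj-red-aux} makes $\form$ a $\bang$-type, contradicting the invariant for $\mvar$. If $\val$ is a pair and $\var=\mvar$ heads a subtraction in $\tmthree$: the pair can only be typed by $\tensRightRule$, so $\form$ is a $\tens$-type, while the subtraction head is produced by a $\lolliLeftRule$ demanding a $\lolli$-type for $\mvar$ --- which by linearity (\reflemma{proper-linearity}) is the unique occurrence of $\mvar$ in the body --- a contradiction; the case where $\val$ is an abstraction and $\var=\mvar$ heads a par is symmetric, with $\lolli$ and $\tens$ exchanged. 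Finally, ``being typable'' satisfies both clauses of the co-inductive definition of clash-freeness --- a typable term has no clash sub-term (just shown), and if $\tm$ is typable and $\tm\toms\tmtwo$ then $\tmtwo$ is typable by subject reduction (\refthm{sub-red}) --- so every typable term is clash-free.

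The hard part will be the bookkeeping in the first point: in each non-clash branch one must check that the left-hand side of the relevant root rule is genuinely matched, which for multiplicative cuts hinges on the level-$0$ localisation and the uniqueness of the cut-variable occurrence granted by \reflemma{proper-linearity}, and for exponential cuts on the grammatical fact that a free exponential variable occurs only as a leaf or as a dereliction head. The type-theoretic part of the second point is comparatively routine once the variable/type invariant is recorded.
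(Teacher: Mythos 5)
Your proof is correct and follows essentially the same route as the paper's: contradiction plus a case analysis driven by structural linearity (\reflemma{proper-linearity}) for the first point, and a co-induction using subject reduction for the second. The only divergences are cosmetic: for exponential cuts the paper simply notes that a $\toess$ redex always exists and invokes full composition (\refprop{ms-simulates-ss}) to obtain a $\toems$ redex, where you do the occurrence case analysis by hand, and you spell out the ``clashes are not typable'' claim that the paper leaves as easily seen.
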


\begin{proof}
\hfill
\begin{enumerate}
\item Suppose by contradiction that $\tm$ has a cut. Then $\tm = \ctxp{\cuta\val\var\tmtwo}$. If $\var$ is exponential then, since there are no clashes, $\val$ is also exponential. Then there is a $\toess$ redex, and, by full composition, a $\toems$ redex, absurd. If $\var=\mvar$ is multiplicative then so is $\val$. By structural linearity (\reflemma{proper-linearity}), $\tmtwo$ has one of the following shapes: 
\begin{itemize}
\item $\mctxp\mvar$. Then there is a $\toaxmone$ redex, absurd.
\item $\mctxp{\para\mvar\vartwo\varthree\tmtwo}$. Since $\tm$ has no clashes, $\val$ is either a multiplicative variable, and then there is a $\toaxmtwo$ redex, or a tensor pair, and then there is a $\totens$ redex, absurd.
\item $\mctxp{\suba\mvar\val\vartwo\tmtwo}$. Since $\tm$ has no clashes, $\val$ is either a multiplicative variable, and then there is a $\toaxmtwo$ redex, or an abstraction, and then there is a $\tololli$ redex, absurd.
\end{itemize}
%%%%%%%%%

\item By co-induction on the definition of being clash-free. It is easily seen that clashes are not typable. So if $\tm$ is typable then it has no clashes. Now, if $\tm \toms \tmtwo$ then by subject reduction (\refthm{sub-red}) $\tmtwo$ is typable. By the co-inductive hypothesis, $\tmtwo$ is clash-free. Therefore, $\tm$ is clash-free.\qedhere
\end{enumerate}
\end{proof}

\paragraph{Postponement of Garbage Collection} The micro-step rule $\tow$, that models garbage collection, can be postponed. The same happens in the LSC,
 but \emph{not} in the $\l$-calculus: for instance, the first step in the sequence $(\la\var\la\vartwo\vartwo)\tmtwo\tmthree \tob (\la\vartwo\vartwo)\tmthree \tob \tmthree$ is erasing but cannot be postponed.

\begin{lem}[Local postponement of garbage collection]
\label{l:local-post-gc}
 If $\tm \tow \tomsnw \tmtwo$ then $\tm\tomsnw \tow^{+}\tmtwo$.
\end{lem}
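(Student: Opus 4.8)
\noindent The plan is to argue by a case analysis on the relative positions of the two contracted redexes. Write the first step as $\tm = \ctxp{\cuta{\exval}{\evar}{\tm_0}} \tobw \ctxp{\tm_0}$ with $\evar \notin \fv{\tm_0}$, and call $R$ this garbage-collection redex; write the second step as $\ctxp{\tm_0} \tomsnw \tmtwo$ and call $S$ its redex. The first point I would establish is that the cut of $R$ is never the cut reduced by $S$: a step of $\tom$ acts on a cut binding a multiplicative variable, and, additionally, a step $\toaxmone$ acts on a cut whose bound variable \emph{occurs} in the body, while a step $\toaxeone$, $\toaxetwo$, or $\tobder$ acts on a cut whose bound \emph{exponential} variable occurs in the body; in all cases this contradicts either the exponential kind of $\evar$ or the side condition $\evar \notin \fv{\tm_0}$ of $R$. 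Moreover $\tomsnw$ is non-erasing (only $\tobw$ erases), so $S$ does not delete the cut of $R$ either. Hence every subterm of $\tm$ other than the removed occurrence of the cut together with the value $\exval$ persists unchanged in $\ctxp{\tm_0}$, and in particular the two ``halves'' of $S$---the value it acts upon and the occurrence of the bound variable it reaches---already occur in $\tm$.

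Using this, the second step is to lift $S$ to a redex $S'$ of $\tm$: one obtains $S'$ by re-inserting the constructor $\cuta{\exval}{\evar}{\ctxhole}$ at the point of the context of $S$ corresponding to the hole of $\ctx$ (if $S$ is disjoint from $R$ there is nothing to do). This is legitimate because $\cuta{\exval}{\evar}{\ctxhole}$ is a left context, and plugging a left context into a general context yields a general context, and into a multiplicative context yields a multiplicative context; thus the shape requirements on the contexts appearing in $\tomsnw$-redexes (general for $\toaxeone$, $\toaxetwo$, $\tobder$; multiplicative for $\toaxmone$, $\toaxmtwo$, $\totens$, $\tololli$) are preserved. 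Contracting $S'$ gives $\tm \tomsnw \tm_1$.

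It then remains to follow the residuals of $R$ through $S'$ and close the diagram. Since $S'$ does not act on the cut of $R$, $R$ has at least one residual in $\tm_1$, and at most two: a second copy arises only when $S'$ is a duplicating step ($\toaxeone$ or $\tobder$) that copies a value inside which the cut of $R$ sits, and then exactly one extra copy is produced. Each residual has the form $\cuta{\exval'}{\evartwo}{\tm_0'}$ with $\exval'$ an $\alpha$-variant of $\exval$ and $\tm_0'$ obtained from $\tm_0$ by the (possibly empty) action of $S'$ inside its body; since $\tomsnw$-steps never enlarge the set of free variables and since any material that $S'$ may insert into the body of $R$ comes from outside the scope of $\evartwo$---hence, up to $\alpha$-renaming, does not have $\evartwo$ free---we still have $\evartwo \notin \fv{\tm_0'}$, so each residual is again a garbage-collection redex. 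Contracting them all yields $\tm_1 \tobw^{+} \tmtwo$, where the $+$ is justified because there is always at least one residual. Composing, $\tm \tomsnw \tm_1 \tobw^{+} \tmtwo$. I expect the main obstacle to be the bookkeeping: checking that the side condition $\evartwo \notin \fv(\cdot)$ of garbage collection survives the step $S'$ (which rests on the scoping/$\alpha$ argument together with the fact that $\tomsnw$ creates no free variables), and that the context-shape conditions of the various redex formats are respected when the cut of $R$ is removed from, or re-inserted into, a connecting context.
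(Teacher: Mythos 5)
Your proof is correct and takes essentially the same route as the paper's: a case analysis on the relative positions of the two redexes, lifting the non-erasing step across the erased cut (left contexts compose with multiplicative and general contexts, so the redex shapes and non-capture conditions are preserved) and observing that the garbage-collected cut has one or two residuals, each still a $\tobw$-redex because $\tomsnw$ can neither act on it nor (re)introduce free occurrences of its bound variable. The only slip is cosmetic: your opening claim that $S$ does not reduce the cut of $R$ is vacuous as stated, since that cut is already absent from $\ctxp{\tm_0}$, but the observation is correctly deployed where it actually matters, namely for the lifted redex $S'$ in $\tm$.
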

\withproofs{% !TEX root = ../../main.tex
\begin{proof}
 By induction on $\tm \tow \tmthree$ and case analysis of $\tmthree \tomsnw \tmtwo$. Cases:
 \begin{itemize}
 	\item \emph{Root}: $\tm = \cuta{\exval}\evar \tmthree \tow \tmthree \tomsnw \tmtwo$. Then $\cuta{\exval}\evar \tmthree \tomsnw \cuta{\exval}\evar \tmtwo \tow \tmtwo$.
	
	\item \emph{Inductive cases different from cut}: we treat the case of tensor, the others are similar. Let $\tm = \pair\tmfour\tmfive \tow \pair{\tmfour'}\tmfive = \tmthree$. If the $\tomsnw$ step takes place in $\tmfour'$ then we apply the \ih, otherwise $\pair{\tmfour'}\tmfive \tomsnw \pair{\tmfour'}{\tmfive'} = \tmtwo$ and we simply have $\pair\tmfour\tmfive \tomsnw \pair{\tmfour}{\tmfive'} \tow \pair{\tmfour'}{\tmfive'}$.
	
		\item \emph{Inductive case left of cut}: $\tm = \cuta\val\var\tmfive \tow \cuta{\valtwo}\var\tmfive = \tmtwo$. If the $\tomsnw$ step takes place in $\valtwo$ we use the \ih, and if it is in $\tmfive$ then the two steps are disjoint and swap. The last possibility is that the root cut itself is the active one in the $\tomsnw$ redex. Now, the key observation is that swapping the step $\tmtwo \tomsnw \tmthree$ before the $\tow$ one amounts to add a cut (the erased one) to both $\tmthree$ and $\tmtwo$, and to remove it with the postponed $\tow$ step, which can be done smoothly because the rewriting rules do not depend on the cuts surrounding the sub-terms of the redex nor those surrounding the redex. The only cases where something interesting happens are those for $\toaxeone$ and $\tobder$ when the $\tow$ step is in the duplicated exponential value. For instance, if $\evartwo\notin\fv\tm$ we have the following diagram.
		 \begin{center}
\begin{tikzpicture}[ocenter]
		\node at (0,0)[align = center](source){\normalsize$  \cuta{\vctxp{\cuta\exval\evartwo\tm}} \evar \ctxfp{\evar}$};
		\node at (source.center)[right = 260pt](source-right){\normalsize $\cuta{\vctxp\tm} \evar \ctxfp{\evar}$};	
		\node at (source.center)[below = 25pt](source-down){\normalsize $ \cuta{\vctxp{\cuta\exval\evartwo\tm}} \evar \ctxfp{\vctxp{\cuta\exval\evartwo\tm}}$};
		
		\node at (source-right|-source-down)(target){\normalsize $ \cuta{\vctxp\tm} \evar \ctxfp{\vctxp\tm}$};
		\node at \med{source-down.east}{target.west}(fifthnode){\normalsize $ \cuta{\vctxp\tm} \evar \ctxfp{\vctxp{\cuta\exval\evartwo\tm}}$};
		
		\draw[->](source) to node[above] {\scriptsize $\wsym $} (source-right);
		\draw[->](source-right) to node[right] {\scriptsize $\axeone $}(target);
		
		\draw[->, dotted](source) to node[left] {\scriptsize $\axeone $}(source-down);
		\draw[->, dotted](source-down) to node[above] {\scriptsize $\wsym $} (fifthnode);
		\draw[->, dotted](fifthnode) to node[above] {\scriptsize $\wsym $} (target);
	\end{tikzpicture}
\end{center}
			
		\item \emph{Inductive case right of cut}: $\tm = \cuta\val\var\tmfive \tow \cuta{\val}\var\tmfive' = \tmtwo$.  If the $\tomsnw$ step takes place in $\tmfive'$ we use the \ih, and if it is in $\val$ then the two steps are disjoint and swap. The last possibility is that the root cut itself is the active one in the $\tomsnw$ redex. As in the previous case the swap happens smoothly because the rewriting rules do not depend on the cuts surrounding the subterms of the redex nor those surrounding the redex. There are no interesting cases.\qedhere

 \end{itemize}
 \end{proof}
}

\begin{prop}[Postponement of garbage collection]
\label{prop:postp-gc} % \refpropp{postp-gc}{global}
if $\tm \toms^{*} \tmtwo$ then $\tm \tomsnw^{*} \tow^{*} \tmtwo$.
\end{prop}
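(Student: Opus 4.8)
The plan is a routine postponement argument whose only real content is \reflemma{local-post-gc}; everything else is iteration and composition. Recall that $\toms = \tomsnw \cup \tobw$, so a reduction $\tm \toms^{*} \tmtwo$ is just a finite interleaving of non-erasing micro-steps and garbage-collection steps, and what we want is to reorganize such an interleaving so that all the $\tobw$ steps come after all the $\tomsnw$ steps. The one subtlety to keep in mind is that postponing a single $\tobw$ step past a single $\tomsnw$ step can produce \emph{several} $\tobw$ steps --- this is exactly the $\tobw^{+}$ in the conclusion of \reflemma{local-post-gc}, and it happens when the non-erasing step duplicates the sub-term that the garbage-collection step would remove (\eg when $\tobder$ or $\toaxeone$ copies a promotion whose body still contains a weakened cut). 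So throughout the argument I would carry around \emph{blocks} $\tobw^{*}$ rather than single $\tobw$ steps.

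First I would upgrade \reflemma{local-post-gc} to the commutation $\tobw^{*}\tomsnw^{*} \subseteq \tomsnw^{*}\tobw^{*}$ by two straightforward inductions. By induction on $h$ one proves $\tobw^{h}\tomsnw \subseteq \tomsnw\tobw^{*}$: for $h{+}1$, split off the innermost $\tobw$, rewrite the resulting $\tobw\,\tomsnw$ as $\tomsnw\,\tobw^{+}$ via \reflemma{local-post-gc}, then apply the induction hypothesis to the prefix $\tobw^{h}\tomsnw$ and absorb the trailing $\tobw$-blocks. Then, by induction on $j$, one gets $\tobw^{*}\tomsnw^{j} \subseteq \tomsnw^{j}\tobw^{*}$ by pulling one $\tomsnw$ at a time to the front using the previous fact.

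With this in hand, the proposition follows by induction on the length $n$ of the reduction $\tm \toms^{n} \tmtwo$. The case $n = 0$ is immediate. For $n{+}1$, write $\tm \to \tm' \toms^{n} \tmtwo$ and apply the induction hypothesis to the tail, getting $\tm' \tomsnw^{*}\tobw^{*}\tmtwo$. If the first step is a $\tomsnw$ step, simply prepend it. If it is a $\tobw$ step, then $\tm \tobw \tm' \tomsnw^{*}\tobw^{*}\tmtwo$, and applying $\tobw^{*}\tomsnw^{*} \subseteq \tomsnw^{*}\tobw^{*}$ to the leading $\tobw$ and the following block of $\tomsnw$ steps yields $\tm \tomsnw^{*}\tobw^{*}\tobw^{*}\tmtwo = \tomsnw^{*}\tobw^{*}\tmtwo$, as desired. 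I do not expect a genuine obstacle, since the hard work is encapsulated in \reflemma{local-post-gc} (and the overall pattern mirrors the analogous postponement of garbage collection in the LSC); the only point requiring a little care is threading the $\tobw^{*}$ blocks correctly through the iteration of the local lemma.
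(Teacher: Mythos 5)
Your proof is correct and takes essentially the same route as the paper: the paper's one-line proof just observes that the local swap of \reflemma{local-post-gc} is an instance of Hindley's strong postponement property and cites the standard consequence, whereas you unfold that standard argument explicitly. Your two auxiliary inductions (on the number of leading $\tobw$ steps, which terminates precisely because the local swap yields exactly \emph{one} $\tomsnw$ step, and then on the length of the sequence) are exactly the textbook proof of that property, so nothing is missing.
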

\begin{proof}
It is an instance of a well-known rewriting property: the local swap in \reflemma{local-post-gc} is an instance of Hindley's strong postponement property, which implies postponement.
\end{proof}

\paragraph{Cut Equivalence} Similarly to the structural equivalence of the LSC, we consider a notion of cut equivalence, which is a strong bisimulation and it is postponable.

\begin{defi}[Cut equivalence]
If $\var\notin\fv\mctx$ and $\mctx$ does not capture variables in $\fv\val$, define:
\begin{center}
$\begin{array}{ccccccc}
\cuta\val\var \mctxp\tm
& \rcuteq & 
\mctxp{\cuta\val\var \tm}. 
\end{array}$
\end{center}
\emph{Cut equivalence} $\cuteq$ is the closure of $\rcuteq$ by reflexivity, symmetry, transitivity, and general contexts.
\end{defi}

To prove that cut equivalence is a strong bisimulation in the small-step case, we need the following substitutivity property.

\begin{lem}[Stability by substitution of $\cuteq$]
\label{l:cuteq-subs} % \reflemmap{cuteq-subs}{}
\hfill
\begin{enumerate}
\item \label{p:cuteq-subs-one}
If $\tm \cuteq \tmtwo$ then $\cutsub\exval\evar\tm \cuteq \cutsub\exval\evar\tmtwo$ for all $\exval$.
\item \label{p:cuteq-subs-two}
If $\exval \cuteq \exvaltwo$ then $\cutsub\exval\evar\tm \cuteq \cutsub\exvaltwo\evar\tm$.
\end{enumerate}
\end{lem}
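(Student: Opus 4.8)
The plan is to prove the two points separately: point~\ref{p:cuteq-subs-one} by induction on the definition of $\cuteq$ as the reflexive, symmetric, transitive, and contextual closure of the root relation $\rcuteq$, and point~\ref{p:cuteq-subs-two} by induction on $\tm$. Both arguments would rest on three routine facts, each by a straightforward induction: \textbf{(a)} $\cutsub\exval\evar{\cdot}$ sends left contexts, multiplicative contexts, and general contexts to left contexts, multiplicative contexts, and general contexts respectively --- the only nonobvious point being that opening a box at a dereliction on the spine of a multiplicative context just prepends a left context followed by a cut, which is again part of a multiplicative context; \textbf{(b)} $\cutsub\exval\evar{\cdot}$ commutes with plugging, which is exactly point~\ref{p:bang-subs-prop-zero} of \reflemma{bang-subs-prop}; and \textbf{(c)} $\rcuteq$ is never applicable at the root of a value (a value is never a cut, nor does it have a cut at its root), so $\cuteq$ preserves the outermost constructor of values: $\evar\cuteq\exval$ forces $\exval=\evar$, and $\bang\tm\cuteq\exval$ forces $\exval=\bang\tmtwo$ with $\tm\cuteq\tmtwo$.

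For point~\ref{p:cuteq-subs-one}, the closure cases are immediate: reflexivity, symmetry, and transitivity are trivially preserved, and for $\ctxp\tm\cuteq\ctxp\tmtwo$ fact~(b) gives $\cutsub\exval\evar{\ctxp\tm}=(\cutsub\exval\evar\ctx)\ctxholep{\cutsub\exval\evar\tm}$, and likewise for $\tmtwo$, so one concludes by fact~(a), the \ih, and the closure of $\cuteq$ under general contexts. The only real case is the root one, $\cuta\val\var{\mctxp\tm}\rcuteq\mctxp{\cuta\val\var\tm}$ with $\var\notin\fv\mctx$ and $\mctx$ not capturing $\fv\val$: after $\alpha$-renaming $\var$ and the binders of $\mctx$ fresh for $\exval$, unfolding the definition of meta-level substitution and using fact~(b), the left-hand side rewrites to $\cuta{\cutsub\exval\evar\val}\var{(\cutsub\exval\evar\mctx)\ctxholep{\cutsub\exval\evar\tm}}$ and the right-hand side to $(\cutsub\exval\evar\mctx)\ctxholep{\cuta{\cutsub\exval\evar\val}\var{\cutsub\exval\evar\tm}}$; by fact~(a) the context $\cutsub\exval\evar\mctx$ is multiplicative, with $\var\notin\fv{\cutsub\exval\evar\mctx}$ and not capturing $\fv{\cutsub\exval\evar\val}$, so the two terms are joined by a single $\rcuteq$ step.

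For point~\ref{p:cuteq-subs-two}, I would induct on $\tm$. By fact~(c) we may assume $\exval=\bang{\tm_1}$ and $\exvaltwo=\bang{\tm_2}$ with $\tm_1\cuteq\tm_2$ (if $\exval=\evartwo$ then $\exvaltwo=\evartwo$ and the two substitutions coincide). For every constructor other than $\evar$ itself and $\dera\evar\var{\cdot}$ the claim follows by unfolding the definition of meta-level substitution, applying the \ih to the immediate sub-terms and sub-values, and using that $\cuteq$ is a congruence; for $\tm=\evar$ the two sides are $\bang{\tm_1}\cuteq\bang{\tm_2}$, true by hypothesis. The delicate case is $\tm=\dera\evar\var{\tm'}$: writing $\tm_i=\lctx_i\ctxholep{\val_i}$ for the unique splittings, the definition gives $\cutsub{\bang{\tm_i}}\evar\tm=\lctx_i\ctxholep{\cuta{\val_i}\var{\cutsub{\bang{\tm_i}}\evar{\tm'}}}$, so by the \ih on $\tm'$ and congruence it suffices to prove the auxiliary claim: \emph{if $\tm_1\cuteq\tm_2$ with splittings $\tm_i=\lctx_i\ctxholep{\val_i}$, then $\lctx_1\ctxholep{\cuta{\val_1}\var\tmfour}\cuteq\lctx_2\ctxholep{\cuta{\val_2}\var\tmfour}$ for every term $\tmfour$.} (This claim is itself the instance of point~\ref{p:cuteq-subs-two} with $\tm=\dera\evar\var\tmfour$ and $\evar\notin\fv\tmfour$, so it cannot simply be taken from the induction on $\tm$.)

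The auxiliary claim I would prove by induction on the length of the derivation $\tm_1\cuteq\tm_2$, reducing to a single $\rcuteq$ step somewhere inside $\tm_1$. If it occurs off the spine of $\tm_1$ (below the first value constructor met along the spine), then $\lctx_1=\lctx_2$ and $\val_1\cuteq\val_2$ by fact~(c), and congruence closes it; if it occurs strictly inside $\lctx_1$ without crossing the hole, the same step still applies after attaching the trailing cut; and if the permuted cut passes the position of the bottom value, one inspects the finitely many shapes the multiplicative context of the $\rcuteq$ redex can take and checks --- using fact~(a), so the context stays multiplicative once $\cuta{\cdot}\var{\tmfour}$ is attached, and $\alpha$-renaming that cut's bound variable away from $\fv\tmfour$ --- that the two sides are again joined by one $\rcuteq$ step. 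The main obstacle is exactly this auxiliary claim: it is the only place where one must analyse how a cut-permutation interacts with the canonical left-context/value decomposition of terms, and where the side conditions of $\rcuteq$ together with the $\alpha$-renaming need genuine care; everything else is bookkeeping made possible by the commutation of substitution with plugging.
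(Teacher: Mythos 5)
Your proof is correct, and its skeleton is the expected one: point~\ref{p:cuteq-subs-one} by induction on the derivation of $\cuteq$, resting on the commutation of substitution with plugging (\reflemmap{bang-subs-prop}{zero}) and on the fact that substitution sends multiplicative contexts to multiplicative contexts (your fact~(a), including the correct observation about opened boxes on the spine); point~\ref{p:cuteq-subs-two} by induction on $\tm$, with the dereliction case as the only non-congruence case. Where you genuinely diverge is in the treatment of that dereliction case: the ``auxiliary claim'' you single out as the main obstacle, and propose to prove by a separate induction on the derivation length with a case analysis on where the permuted cut sits relative to the splitting, is in fact immediate. By the very definition of plugging for split cuts, $\lctx_1\ctxholep{\cuta{\val_1}\var\tmfour} = (\cuta\ctxhole\var\tmfour)\ctxholep{\tm_1}$ and $\lctx_2\ctxholep{\cuta{\val_2}\var\tmfour} = (\cuta\ctxhole\var\tmfour)\ctxholep{\tm_2}$, because plugging a term into $\cuta\ctxhole\var\tmfour$ re-splits it on the fly; since $\cuta\ctxhole\var\tmfour$ is a general context and $\cuteq$ is by definition closed under general contexts, $\tm_1\cuteq\tm_2$ gives the claim in one line (modulo the $\alpha$-renaming of the new cut's bound variable away from $\fv\tmfour$, which you already flag). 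Your hand-made case analysis is not wrong --- I checked that the delicate case where the permuted cut crosses the bottom value does close with a single $\rcuteq$ step --- but it re-proves, for one particular context, exactly what the split-aware definition of plugging was designed to give for free. So the only ``genuine care'' left in the lemma is the bookkeeping of side conditions in the root $\rcuteq$ case of point~\ref{p:cuteq-subs-one}, which you handle correctly.
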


\withproofs{
% !TEX root = ../../main.tex
\begin{proof}
Easy inductions on $\tm \cuteq \tmtwo$ and $\tm$. Details in \cite{DBLP:journals/corr/abs-lics}.
\end{proof}
}

\begin{prop}
\label{prop:cuteq-bisim} % \refpropp{cuteq-bisim}{sn}
Let $\to\in\set{\toss,\toms}$. 
\begin{enumerate}
\item \emph{Strong bisimulation}: %\label{p:cuteq-bisim-sb} 
 $\cuteq$ is a strong bisimulation with respect to $\to$, preserving also the kind of step.
\item \label{p:cuteq-bisim-sn} 
\emph{Structural stability of SN}: if $\tm \cuteq \tmtwo$ and $\tm\in\snsubst$ then $\tmtwo\in\snsubst$.
\item \emph{Postponement of $\cuteq$}: if $\tm (\cuteq\to\cuteq)^k \tmtwo$ then $\tm \to^k \cuteq \tmtwo$.
\end{enumerate}
\end{prop}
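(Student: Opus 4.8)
The plan is to prove Point~1 (the strong bisimulation property) by a direct rewriting analysis, and then obtain Points~2 and~3 as standard consequences. For Point~1 I would first reduce the problem to a single reconfiguration of $\rcuteq$. Since the class of strong bisimulations respecting the kind of step is closed under identity, inverse, composition, and reflexive--transitive closure, and since $\cuteq$ is the reflexive--transitive closure of the contextual closure $S$ of the symmetric relation ${\rcuteq}\cup{\rcuteq^{-1}}$, it suffices to prove that $S$ is a strong bisimulation with respect to $\to\in\set{\toss,\toms}$; and, $S$ being symmetric, it suffices to prove only the matching direction: whenever $\tm = \ctxp{\cuta\val\var\mctxp{(\cdot)}}$ and $\tmtwo = \ctxp{\mctxp{\cuta\val\var{(\cdot)}}}$ for the same body (with $\var\notin\fv\mctx$ and $\mctx$ not capturing $\fv\val$; the reverse orientation is symmetric) and $\tm\Rew{a}\tmthree$, there is $\tmfour$ with $\tmtwo\Rew{a}\tmfour$ and $\tmthree\mathrel{S}\tmfour$, hence $\tmthree\cuteq\tmfour$.

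I would then do a case analysis on the position of the $\Rew{a}$-redex relative to the moved cut $\cuta\val\var{(\cdot)}$ and its components $\val$, $\mctx$, and the cut body. When the two redexes are disjoint, or the $\Rew{a}$-redex lies strictly inside $\val$, inside $\mctx$, or inside the part of the body below $\mctx$, the two steps commute: firing the same rule on $\tmtwo$ yields a term still related to $\tmthree$ by $S$, the only points to check being that $\Rew{a}$ neither affects $\var$ nor destroys the fact that $\mctx$ is a multiplicative context (reduction only replaces a variable by a value or reshapes left/cut constructors, all admissible in $\mctx$), capture being avoided by $\alpha$-renaming. The interesting case is when the $\Rew{a}$-redex \emph{is} the moved cut: since $\var\notin\fv\mctx$ the acted-upon occurrence lies in the body below $\mctx$, so the very same rule applies to the relocated cut on the $\tmtwo$-side; for the rules consuming the cut (the multiplicative rules, $\toess$, and the erasing micro-step) the two reducts are \emph{literally equal}, and for the exponential micro-steps leaving a residual cut on the same value the reducts are again $\rcuteq$-related, using point~\ref{p:bang-subs-prop-zero} of \reflemma{bang-subs-prop} to push the residual duplication/substitution through $\mctx$. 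Finally, when $\Rew{a}$ is a step whose effect reaches into $\val$ or into the cut body --- in particular a $\toess$ step from a cut of $\ctx$ that substitutes an exponential value across the region containing the moved cut --- commutation still holds, now also invoking \reflemma{cuteq-subs} (stability of $\cuteq$ under meta-level substitution, in both its value and its body argument) together with point~\ref{p:bang-subs-prop-zero} of \reflemma{bang-subs-prop}. In every case the rule label $a$ is unchanged, so the bisimulation preserves the kind of step; and the analysis is uniform in the choice $\to\in\set{\toss,\toms}$.

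Point~2 is then the classical fact that strong bisimulations preserve strong normalization: assuming $\tm\cuteq\tmtwo$ and $\tm\in\snsubst$, I would prove $\tmtwo\in\snsubst$ by induction on the inductive derivation of $\tm\in\snsubst$, using that $\cuteq$ is symmetric and Point~1 to lift any step $\tmtwo\to\tmthree$ to a step $\tm\to\tmfour$ with $\tmfour\cuteq\tmthree$ and $\tmfour$ carrying a strictly smaller $\snsubst$-derivation. Point~3 is diagram chasing: from Point~1 and symmetry of $\cuteq$ one gets that a $\cuteq$-step followed by a $\to$-step can be replaced by a $\to$-step followed by a $\cuteq$-step, i.e.\ ${\cuteq}\,{\to}\ \subseteq\ {\to}\,{\cuteq}$; combined with reflexivity and transitivity of $\cuteq$, an induction on $k$ then gives $({\cuteq}\,{\to}\,{\cuteq})^{k}\ \subseteq\ {\to}^{k}\,{\cuteq}$, which is the statement.

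The main obstacle is entirely within Point~1, and within it the overlap cases --- especially the interaction of $\rcuteq$ with the substitution-driven rule $\toess$ --- where one has to carefully track the freshness and non-capture side conditions of $\rcuteq$ under reduction and appeal to the substitution lemmas \reflemma{bang-subs-prop} and \reflemma{cuteq-subs}. This is routine but tedious, and is the ESC counterpart of the corresponding proof for the LSC (\refprop{bisimulation}).
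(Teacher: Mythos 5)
Your proposal takes essentially the same route as the paper: reduce to a single $\rcuteq$-reconfiguration, analyse the position of the $\Rew{a}$-redex relative to the moved cut, and handle the $\toess$ overlaps via the substitutivity of $\cuteq$ (\reflemma{cuteq-subs}, which the paper introduces for exactly this purpose) together with \reflemmap{bang-subs-prop}{zero}; Points 2 and 3 are then the standard consequences of Point 1, as you say.

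One imprecision is worth flagging. The one-step relation $S$ (the contextual closure of $\rcuteq\cup\rcuteq^{-1}$) is \emph{not} itself a strong bisimulation: if the $\Rew{a}$-redex lies in the outer context $\ctx$ and duplicates the sub-term containing the reconfigured cut (an $\toaxeone$ or $\tobder$ step copying a box around it, or an $\toess$ step substituting such a box at several occurrences), the two reducts differ by \emph{several} reconfigurations, so they are related by $S^{n}\subseteq\cuteq$ but not by $S$. Your own appeal to \reflemma{cuteq-subs} already betrays this, since that lemma only yields $\cuteq$-relatedness of the substituted terms. The repair is routine and does not change the structure of the argument: prove the one-step property in the form ``if $\tm\mathrel{S}\tmtwo$ and $\tm\Rew{a}\tmthree$ then $\tmtwo\Rew{a}\tmfour$ with $\tmthree\cuteq\tmfour$'', and then lift to $\cuteq$ by induction on the number of $S$-steps, using transitivity of $\cuteq$ to compose the closing legs. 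With that adjustment the proof is correct and matches the paper's.
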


\withproofs{% !TEX root = ../../main.tex
\begin{proof}
\hfill
\begin{enumerate}
\item It is a very long but otherwise straightforward check of all possible diagrams, spelled out in \cite{DBLP:journals/corr/abs-lics}. The following two diagrams for $\toess$ are where \reflemma{cuteq-subs} is used.
\begin{center}
\begin{tabular}{c\colspace\colspace\colspace c}
\begin{tikzpicture}[ocenter]
		\node at (0,0)[align = center](source){\normalsize$ \cuta\val\vartwo\tmtwo $};
		\node at (source.east)[right = 35pt](source-right){\normalsize $\cutsub\val\vartwo\tmtwo$};	
		\node at (source.center)[below = 20pt](source-down){\normalsize $\cuta\val\vartwo\tmtwo'$};
		
		\node at (source-right|-source-down)(target){\normalsize $\cutsub\val\vartwo\tmtwo'$};

		\node at \med{source.center}{source-down.center}(cuteq-left){\normalsize $\cuteq$};
		\node at \med{source-right.center}{target.center}(cuteq-right){\normalsize $\cuteq$};
		\draw[|->](source) to node[above] {\scriptsize $\ess $} (source-right);
		\draw[|->, dotted](source-down) to node[above] {\scriptsize $\ess $} (target);
	\end{tikzpicture}
	&
	\begin{tikzpicture}[ocenter]
		\node at (0,0)[align = center](source){\normalsize$ \cuta\val\vartwo\tmtwo $};
		\node at (source.east)[right = 35pt](source-right){\normalsize $\cutsub\val\vartwo\tmtwo$};	
		\node at (source.center)[below = 20pt](source-down){\normalsize $\cuta{\val'}\vartwo\tmtwo$};
		
		\node at (source-right|-source-down)(target){\normalsize $\cutsub{\val'}\vartwo\tmtwo$};

		\node at \med{source.center}{source-down.center}(cuteq-left){\normalsize $\cuteq$};
		\node at \med{source-right.center}{target.center}(cuteq-right){\normalsize $\cuteq$};
		\draw[|->](source) to node[above] {\scriptsize $\ess $} (source-right);
		\draw[|->, dotted](source-down) to node[above] {\scriptsize $\ess $} (target);
	\end{tikzpicture}
\end{tabular}
\end{center}
	
	\item By induction on $\tm\in\snsubst$. One simply uses the fact that $\cuteq$ is a strong bisimulation, given by the previous point, and the \ih
	\item By induction on $k$, using the strong bisimulation property.\qedhere
\end{enumerate}
\end{proof}}

Cut equivalence shall play a role in the following sections. Note that its postponement property means that it is \emph{never needed} for cut elimination to progress. And on cut-free terms, obviously, it vanishes.

There is also a larger \emph{left equivalence} $\streq_{\symfont{left}}$ defined as $\cuteq$ but including every left constructor (cut, par, subtraction, and dereliction), which is also a strong bisimulation. As mentioned in \refsect{lsc}, $\equiv$ is the quotient induced by proof nets on the LSC. Intuitively, $\streq_{\symfont{left}}$ is the quotient induced by proof nets on the ESC. There are, however, no notions of proof nets realizing such a quotient, as weakenings require non-canonical \emph{jumps} for IMELL, that can instead be avoided for the LSC, as shown by Accattoli \cite{DBLP:conf/ictac/Accattoli18}.

A possibly interesting point is that we shall need cut equivalence, in particular for the proof of strong normalization, while there is no need of left equivalence for any of the properties studied in this paper.

\paragraph{Stability Under Renaming} The last basic property that we prove is the stability of the rules and of their strong normalization under renaming.
\begin{lem}[Renaming]
\label{l:deformations} % \reflemmap{deformations}{split}
Let $\to\in\set{\toss,\toms}$ and $\var$ and $\vartwo$ be two variables of the same 
multiplicative/exponential kind such that  $\cutsub\var\vartwo\tm$ is proper.
\begin{enumerate}
%%%%%%
 %%%%%%%%%%
 \item \label{p:deformations-rename} \emph{Stability of steps under renaming}:
if $\ctx:\tm \Rew{a} \tmtwo$ then $\cutsub\vartwo\var\ctx:\cutsub\vartwo\var\tm \Rew{a} \cutsub\vartwo\var\tmtwo$ for $a\in\set{\axmone,\axmtwo,\tens,\lolli, \axeone,\axetwo,\bdersym,\wsym,\ess}$.

  \item \label{p:deformations-sn} \emph{Stability of SN under renaming}:
  if $\tm \in \sn\to$  then $\cutsub\var\vartwo\tm \in \sn\to$.
\end{enumerate}
\end{lem}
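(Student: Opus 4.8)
The plan is to prove the two items in order, the second building on the first, around a single guiding observation: up to the $\alpha$-renaming convention, renaming a variable to another one of the same kind is a \emph{transparent} operation on reductions. It transports every redex of $\tm$ to a redex of the renamed term, and conversely every redex of the renamed term is the image of a redex of $\tm$; in particular it neither creates nor destroys redexes, it merely relabels positions.

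For the first point I would split on the kind of $\var$. If $\var$ is exponential, then the renaming is literally the meta-level exponential substitution of the exponential value $\vartwo$: its effect on the \emph{root} rules is an immediate inspection of \reffig{rewriting-rules} (and is essentially already contained in \reflemma{substitutivity-of-red}.\ref{p:substitutivity-of-red-four}), and it lifts to the contextual closure because meta-level substitution commutes with plugging (\reflemma{bang-subs-prop}.\ref{p:bang-subs-prop-zero}), once $\ctx$ is $\alpha$-renamed so as to capture neither $\var$ nor $\vartwo$; that commutation is exactly what yields $\cutsub\vartwo\var\ctx$ as the new position. If $\var$ is multiplicative, then the properness assumption together with structural linearity (\reflemma{proper-linearity}) forbids a collision between $\var$ and $\vartwo$, so the renaming is a \emph{bijective} relabeling of occurrences (after $\alpha$-renaming any bound $\vartwo$), and since every root rule inspects only the constructor skeleton and the equality pattern of variables — both invariant under a bijective relabeling — the statement, positions included, is immediate.

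For the second point I would argue by induction on the derivation of $\tm \in \sn\toss$: since $\sn\toss$ is defined inductively, it suffices to prove the claim for $\tm$ assuming it for every $\tmtwo$ with $\tm \toss \tmtwo$ and every legal renaming of such $\tmtwo$. Write $r$ for the renaming at hand and suppose $r(\tm) \toss s'$; I must show $s' \in \sn\toss$. By the transparency observation, the cut fired in $r(\tm)$ — together with the occurrence it acts upon, or, for $\tobw$/$\toess$, the closing context — is the $r$-image of a corresponding redex of $\tm$; firing that redex produces $\tmtwo$ with $\tm \toss \tmtwo$, and then $s' = r(\tmtwo)$ by the first point (a redex being determined by its position). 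Now $\tmtwo \in \sn\toss$ because $\tm \in \sn\toss$ and $\tm \toss \tmtwo$, and $s' = r(\tmtwo)$ is proper since $r(\tm)$ is and reduction preserves properness (the corollary of full composition, \refprop{ms-simulates-ss}); hence the induction hypothesis applies to $\tmtwo$ and to the renaming producing $s'$, giving $s' \in \sn\toss$. Since every one-step reduct of $r(\tm)$ lies in $\sn\toss$, so does $r(\tm)$. The argument does not depend on which rule is fired, so it covers $\to \in \set{\toss,\toms}$ uniformly.

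The delicate point — and the one place where the $\alpha$-convention genuinely does work — is the transparency claim in the exponential case: renaming $\vartwo$ to $\var$ may \emph{merge} a pre-existing free occurrence of $\var$ with the renamed occurrences of $\vartwo$, which is more than a relabeling. One has to check that such a merge still produces no new redex: a redex always acts on a cut-bound occurrence, and the $\alpha$-convention forces cut-bound (exponential) variables to be chosen fresh, so a cut on $\var$ never captures an occurrence of $\var$ coming from $\vartwo$; the merged occurrences remain free and hence inert. Glossing over this bookkeeping would make the transparency claim, and with it the second point, outright false, so I expect it to be the main obstacle in turning this plan into a full proof.
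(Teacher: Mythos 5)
Your proposal is correct and follows essentially the same route as the paper: point 1 by inspection/induction on the rewriting step (renaming commutes with each root rule and with plugging), and point 2 by induction on $\tm\in\sn\toss$ using the fact that renaming neither creates, erases, duplicates, nor changes the kind of redexes, so every reduct of the renamed term is the renaming of a reduct of $\tm$. The paper states this in two lines; your version merely fills in the details, and your closing remark on occurrence-merging correctly identifies the one point the paper's "renaming cannot create redexes" claim silently relies on.
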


\begin{proof}
\hfill
\begin{enumerate}
\item Straightforward induction on the rewriting step.
\item By induction on $\tm\in\sn\to$. Since renaming cannot create, erase, duplicate, or change the kind of redexes, 
if $\cutsub\var\vartwo\tm \to \tmtwo$ then there exists $\tmthree$ such that $\tm \to \tmthree$ and 
$\cutsub\var\vartwo\tmthree = \tmtwo$. Then by \ih on $\tmthree$ we obtain $\cutsub\var\vartwo\tmthree=\tmtwo\in\sn\to$.
\qedhere
\end{enumerate}
\end{proof}

% !TEX root = main.tex
%%%%%%%%%%%%%
%%%%%%%%%%%%%
\section{Local termination}
\label{sect:local-termination} 
%%%%%%%%%%%%%
%%%%%%%%%%%%%
Here we are going to prove that the ESC has the same local termination property of the LSC, thus showing that one can claim that it realizes the slogan \emph{exponentials as substitutions}. 

\paragraph{Divergence} Let us first show that evaluation in the untyped ESC can diverge. We adapt the usual looping combinator $\Omega$ of the $\l$-calculus. Let $\delta \defeq \la\evar\dera\evar\mvar\suba\mvar\evar\mvartwo\mvartwo$ and consider:
\begin{center}$
\begin{array}{lllllll}
\Omega &\defeq &\cuta\delta\mvarthree \suba\mvarthree{\bang\delta}{\mvarthree'}\mvarthree' 
& \tololli &
\cuta{\bang\delta}\evar \dera\evar\mvar\suba\mvar\evar\mvartwo\cuta\mvartwo{\mvarthree'}\mvarthree'
\\
&&& \toaxeone &
\cuta{\bang\delta}\evar \dera\evar\mvar\suba\mvar{\bang\delta}\mvartwo\cuta\mvartwo{\mvarthree'}\mvarthree'
\\
&&& \tobder &
\cuta{\bang\delta}\evar \cuta\delta\mvar \suba\mvar{\bang\delta}\mvartwo\cuta\mvartwo{\mvarthree'}\mvarthree'
\\
&&& \tow &
\cuta\delta\mvar \suba\mvar{\bang\delta}\mvartwo\cuta\mvartwo{\mvarthree'}\mvarthree'
\\
&&& \toaxmone &
\cuta\delta\mvar \suba\mvar{\bang\delta}\mvartwo\mvartwo
&=_\alpha & \Omega
\end{array}
$\end{center}

\paragraph{Local Termination} A key feature of the untyped ESC (inherited from the LSC) is \emph{local termination}, that is, every rewriting rule is strongly normalizing (SN) when considered separately (not valid in the $\l$-calculus which has only one rule). Additionally, the groups of multiplicative rules $\tom$ and of micro-step exponential rules $\toems$ are also SN separately, which is instead not valid in untyped classical proof nets, where the exponential rule can diverge as shown by the untyped classical proof net in \reffig{classical-diverging} (page \pageref{fig:classical-diverging}).

Even the union of \emph{all} the micro-step rules except $\tololli$, noted $\tomsnlolli$, is SN. This is due to the fact that $\tololli$ is the only rule that \emph{creates} cuts, as explained in \refsect{calculus}. Strong normalization of $\tomsnlolli$ is a deep property, pointed out here for the first time. It shows that, for expressivity, rule $\tololli$ is as important as duplication, and thus, non-linearity. It also shows that $\lolli$ and $\tens$ have completely different computational properties. This is remarkable because in classical linear logic \emph{à la} Girard, the two are collapsed and managed by a single cut elimination rule, the one for $\parr/\tens$, and creations are not visible. Together with (non-)termination of untyped exponentials, it is a sign that the intuitionistic setting is sharper than the classical one for studying cut elimination.

\paragraph{Proving Local Termination.} The proofs of local termination and of the termination of $\tomsnlolli$ require various definitions. In particular, we need an auxiliary notion of \emph{variable potential}, which is used to define a \emph{measure}. The measure is then showed to decrease with every micro-step rewriting rule but $\tololli$. A number of auxiliary lemmas are also required. 

The intuition behind the potential $\potmul\tm\var$ of a variable $\var$ in $\tm$ is that it is the maximum number of occurrences of $\var$ that can appear during the cut elimination of $\tm$, plus the same number for all the variables that are recursively 'nested' under $\var$, where for instance $\var$ nests under $\evar$ in $\dera\evar\var\tm$. The key clause is the one for cuts $\cuta\val\var\tm$, which multiplies the potential in $\val$ for the one of $\var$ in $\tm$.
\begin{defi}[Variable potential]
The potential $\potmul\tm\var$ of a variable $\var$ in $\tm$ is given by:
\begin{center}
\begin{tabular}{c}
\begin{tabular}{cc}
$\begin{array}{rll}
\potmul \vartwo \var & \defeq & 0  
\\
\potmul \var \var & \defeq & 1 
\\\potmul{ \pair\tm\tmtwo}\var & \defeq & \potmul\tm\var + \potmul\tmtwo\var
\\
 \potmul{ \bang \tm}\var& \defeq & \potmul{  \tm}\var
\\
\potmul{ \la\vartwo\tm}\var & \defeq &  \potmul\tm\var

 \end{array}$
 &
 $\begin{array}{rlll}
\potmul{\para\mvar\vartwo\varthree \tm}\var & \defeq & 
 \begin{cases}
		1+ \potmul\tm\vartwo + \potmul\tm\varthree & \mbox{if }\var=\mvar\\
		\potmul{  \tm}\var & \mbox{otherwise.}
		\end{cases}
\\
  \potmul{\suba\mvar\val\vartwo  \tm}\var & \defeq & \
  		\begin{cases}
		1 & \mbox{if }\var=\mvar\\
		\potmul\val\var + \potmul\tm\var & \mbox{otherwise.}
		\end{cases}
 \\
 \potmul{\dera\evar\vartwo \tm}\var & \defeq & 
 		\begin{cases}
		1 + \potmul\tm\var + \potmul\tm\vartwo & \mbox{if }\var=\evar\\
		\potmul\tm\var & \mbox{otherwise.}
		\end{cases}
\end{array}$
\end{tabular}
\\
$\begin{array}{rll@{\hspace{1.3cm}}rll}
 \potmul{\cuta\val\vartwo\tm}\var & \defeq & \potmul\tm\var + \potmul\val\var\cdot(\potmul\tm\vartwo +1)
 \end{array}$
 \end{tabular}
\end{center}
The potential is extended to contexts by considering $\ctxhole$ as a (fresh) exponential variable, that is, having $\potmul\ctxhole\var \defeq 0$, $\potmul\var\ctxhole \defeq 0$, and $\potmul\ctxhole\ctxhole \defeq 1$, plus all the expected inductive cases for contexts.
\end{defi}

In the definition of the potential, the clause for cut multiplies $\potmul\val\var$ for $\potmul\tm\vartwo +1$ where, for exponential variables, the $+1$ accounts for the garbage copy of $\val$ that survives the replacement of all the occurrences of $\var$ with $\val$. Such a $+1$ is not needed in the case of multiplicative cuts, but the clause does not distinguish between cuts on multiplicative or exponential variables, treating all variables as if they were exponential. Such an over-estimate is harmless and provides a more compact definition of the potential.

Note also the clause for subtractions: when $\var=\mvar$, one could define it as $0$, because anyway cuts on subtractions are ignored, given that we shall prove termination of all rules \emph{but} $\tololli$. But it simpler to set it to $1$, as to obtain the useful property of \reflemmap{potmul-prop}{onebis} below.

\begin{defi}[Termination measure]
\label{def:termination-meas}
The termination measure $\ltmeas\tm$ of a term $\tm$ is defined as follows (and extended to contexts by defining it also for $\ctxhole$):
\begin{center}
     \begin{tabular}{cc}
$\begin{array}{rll\colspace \colspace rll}
\ltmeas\var & \defeq & 1
&
\ltmeas\ctxhole & \defeq & 0
\\
\ltmeas{\para\mvar\var\vartwo \tm} & \defeq & \ltmeas{  \tm} +1
&
\ltmeas{ \pair\tm\tmtwo} & \defeq & \ltmeas\tm + \ltmeas\tmtwo
\\
  \ltmeas{\suba\mvar\val\var \tm} & \defeq & \ltmeas\val + \ltmeas\tm +1
&
\ltmeas{ \la\var\tm} & \defeq &  \ltmeas\tm
  \\
 \ltmeas{\dera\evar\var \tm} & \defeq & \ltmeas\tm+1 
 & 
  \ltmeas{ \bang \tm}& \defeq & \ltmeas{  \tm}
\\
 \ltmeas{\cuta{\val}\var\tm} & \defeq & \ltmeas\val\cdot(\potmul\tm\var +1) + \ltmeas\tm
\end{array}$

\end{tabular}
\end{center}
\end{defi}

The idea behind the measure is that it counts all variable occurrences (that is, variables, pars, derelictions, and subtractions), multiplying for the potential of $\var$ the occurrences appearing in a value cut on $\var$, as for the potential. Each rewriting step removes a variable occurrence, which makes the measure decrease. An exponential rewriting step can also duplicate a sub-term and increase the number of occurrences, but this increment is anticipated by the measure by means of the potential.

For proving that the measure decreases, we need some basic properties of the potential and of the measure plus some lemmas about their decomposition with respect to contexts.

\begin{lem}[Basic properties]
\label{l:potmul-prop} % \reflemmaeqp{potmul-prop}{four}
\hfill
\begin{enumerate}
	\item \label{p:potmul-prop-one} if $\var\notin\fv\tm$ then $\potmul\tm\var =0 $.
	\item \label{p:potmul-prop-onebis} if $\var\in\fv\tm$ then $\potmul\tm\var \geq 1$.
	\item \label{p:potmul-prop-two} $\potmul\ctx\ctxhole \geq 1$. 
	\item \label{p:potmul-prop-four} $\ltmeas{\tm}\geq 1$.
	\item \label{p:potmul-prop-five} 
	$\ltmeas{\tm} = \ltmeas{\cutsub\mvartwo\mvar\tm}$.
\end{enumerate}
\end{lem}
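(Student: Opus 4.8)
The plan is to prove each of the five items by a direct induction — on $\tm$ for \reflemmap{potmul-prop}{one}, \reflemmap{potmul-prop}{onebis}, \reflemmap{potmul-prop}{four} and \reflemmap{potmul-prop}{five}, and on $\ctx$ (or, more slickly, by reduction to \reflemmap{potmul-prop}{onebis}) for \reflemmap{potmul-prop}{two}. A common, immediate preliminary — itself a one-line induction — is that $\potmul\tm\var$ and $\ltmeas\tm$ are always non-negative; I will use this tacitly in all the $\geq 1$ estimates. Throughout, by the usual $\alpha$-convention I assume every bound variable (the $\var$ of $\la\var\tm$, $\dera\evar\var\tm$, $\cuta\val\var\tm$, and the two bound variables of a par or a subtraction) is distinct from the variable currently under consideration, and, in \reflemmap{potmul-prop}{five}, also from $\mvar$ and $\mvartwo$. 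The one structural point to keep in mind is that the \emph{subject} variables — the $\mvar$ of $\para\mvar\var\vartwo\tm$ and of $\suba\mvar\val\var\tm$, and the $\evar$ of $\dera\evar\var\tm$ — are \emph{free}, not bound.

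For \reflemmap{potmul-prop}{one}, assuming $\var\notin\fv\tm$ and inducting on $\tm$: then $\var$ does not occur free in any immediate sub-term either (using the $\alpha$-convention at the binders), so each \ih forces the corresponding summand or factor of the defining clause to be $0$; moreover no case-splitting clause (par, subtraction, dereliction) can be in its ``$\var$ equals the subject'' branch, since that subject is free and would otherwise put $\var$ into $\fv\tm$. Hence $\potmul\tm\var=0$. For \reflemmap{potmul-prop}{onebis}, assuming $\var\in\fv\tm$ and inducting on $\tm$: either $\var$ is the subject of a par/subtraction/dereliction, and then the matching clause is literally $1+\cdots$ or $1$; or $\var$ occurs free in one of the immediate sub-terms, and the \ih yields a summand $\geq 1$ while non-negativity handles the remaining summands and factors — in the cut clause $\cuta\val\vartwo\tm$ one additionally uses $\potmul\tm\vartwo+1\geq 1$, so that $\potmul\val\var\cdot(\potmul\tm\vartwo+1)\geq 1$ whenever $\var\in\fv\val$.

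For \reflemmap{potmul-prop}{two}, I would pick a fresh exponential variable $z$; since the potential of a context treats $\ctxhole$ exactly like such a $z$, one has $\potmul\ctx\ctxhole=\potmul{\ctxp z}z$, and $z\in\fv{\ctxp z}$ because a context has exactly one hole, so \reflemmap{potmul-prop}{onebis} gives $\potmul\ctx\ctxhole\geq 1$. For \reflemmap{potmul-prop}{four}, induction on $\tm$: every clause but the cut one is a sum of non-negative quantities containing at least one summand that is a literal $1$ or is $\geq 1$ by \ih; for $\cuta\val\var\tm$ we have $\ltmeas\val\geq 1$ by \ih and $\potmul\tm\var+1\geq 1$, hence $\ltmeas\val\cdot(\potmul\tm\var+1)\geq 1$, and $\ltmeas\tm\geq 0$.

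Finally, for \reflemmap{potmul-prop}{five} I would first isolate the auxiliary fact that $\potmul{\cutsub\mvartwo\mvar\tm}\var=\potmul\tm\var$ for every $\var\notin\{\mvar,\mvartwo\}$ — a routine induction on $\tm$, the point being that the renaming only turns subjects equal to $\mvar$ into $\mvartwo$, so every case-split stays in the same branch, and the cut clause is settled by applying the \ih to the cut variable, which the $\alpha$-convention keeps outside $\{\mvar,\mvartwo\}$. Then \reflemmap{potmul-prop}{five} follows by induction on $\tm$: renaming commutes with every constructor, so the \ih preserves each clause of $\ltmeas$ term by term, and in the cut clause the factor $\potmul\tm\var+1$ is preserved by the auxiliary fact (again $\var\neq\mvar,\mvartwo$). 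None of these inductions is genuinely hard; the only places deserving (minimal) care are the interaction of the $\alpha$-convention with the free-ness of subject variables in \reflemmap{potmul-prop}{one}--\reflemmap{potmul-prop}{onebis}, and the auxiliary potential-renaming fact feeding the cut case of \reflemmap{potmul-prop}{five} — this last being the main, though still light, obstacle.
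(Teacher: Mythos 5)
Your proof is correct and takes essentially the same route as the paper: items 1, 2, 4 and 5 by direct induction on $\tm$, and item 3 by viewing the hole as a fresh (exponential) variable occurring free in $\ctx$ and invoking item 2. The additional care you take with the $\alpha$-convention on bound versus subject variables, and the auxiliary renaming-invariance of the potential feeding the cut case of item 5, is precisely the content the paper leaves implicit when it calls these ``easy inductions''.
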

\begin{proof}
Points 1, 2, 4, and 5 are easy inductions on $\tm$. Point 3 follows from Point 2, since, when looking at the hole $\ctxhole$ as a variable, one has $\ctxhole\in \fv\ctx$ for every context $\ctx$.
\end{proof}

\begin{lem}[Properties for the $\toaxeone$ and $\toaxmone$ cases]
\label{l:expmeas-is-contextual} % \reflemmaeqp{expmeas-is-contextual}{one}
\hfill
\begin{enumerate}
	\item \label{p:expmeas-is-contextual-one} $\potmul{\ctxfp\tm}\var = \potmul\ctx\var + \potmul\ctx{\ctxhole}\cdot\potmul\tm\var$.
	\item \label{p:expmeas-is-contextual-two} $\ltmeas{\ctxfp\tm} = \ltmeas\ctx + \potmul\ctx{\ctxhole}\cdot\ltmeas\tm$.
\end{enumerate}
\end{lem}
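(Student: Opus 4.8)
The plan is to prove the two equalities by structural induction on $\ctx$, establishing Point~\ref{p:expmeas-is-contextual-one} in full first and then Point~\ref{p:expmeas-is-contextual-two} by a second induction that freely invokes Point~\ref{p:expmeas-is-contextual-one} (its inductive step for cuts needs the potential identity at the \emph{same} context $\ctx$, not at a smaller one). As usual we reason up to $\alpha$-equivalence, so we may assume that the bound variables of $\ctx$ are disjoint from $\var$ and from $\fv\tm$; then $\ctxfp\tm$ is ordinary plugging, and every variable bound by a constructor of $\ctx$ has potential $0$ in $\tm$ by \reflemmap{potmul-prop}{one}, a fact used throughout. The base case $\ctx=\ctxhole$ is immediate: $\ctxhole\ctxholep\tm=\tm$, while $\potmul\ctxhole\var=0$, $\ltmeas\ctxhole=0$ and $\potmul\ctxhole\ctxhole=1$, so both right-hand sides collapse to $\potmul\tm\var$ and $\ltmeas\tm$ respectively.

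For the inductive step one cases on the top constructor of $\ctx$, unfolds the definitions of plugging, of $\potmul{\cdot}{\cdot}$ and of $\ltmeas{\cdot}$ on that constructor, and applies the induction hypothesis to the immediate sub-context --- for $\var$, for the hole $\ctxhole$ when needed, and, when the top constructor binds, also for the bound variable(s), which then contribute $0$ since they have potential $0$ in $\tm$. For the ``commutative'' constructors $\pair\ctxtwo\tmtwo$, $\pair\tmtwo\ctxtwo$, $\la\vartwo\ctxtwo$, $\bang\ctxtwo$ the rearrangement is trivial, and for $\para\mvar\vartwo\varthree\ctxtwo$ and $\dera\evar\vartwo\ctxtwo$ it is only slightly longer. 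The representative case is $\ctx=\cuta\val\vartwo\ctxtwo$ (hole in the body): from $\potmul{\cuta\val\vartwo\ctxtwo}\var=\potmul\ctxtwo\var+\potmul\val\var\cdot(\potmul\ctxtwo\vartwo+1)$, the induction hypothesis on $\ctxtwo$ applied at $\var$ and at $\vartwo$, the equalities $\potmul\tm\vartwo=0$ and $\potmul\val\ctxhole=0$ (whence $\potmul{\cuta\val\vartwo\ctxtwo}\ctxhole=\potmul\ctxtwo\ctxhole$, since $\val$ has no hole), Point~\ref{p:expmeas-is-contextual-one} follows by distributing $\cdot$ over $+$; Point~\ref{p:expmeas-is-contextual-two} follows analogously from $\ltmeas{\cuta\val\vartwo\ctxtwo}=\ltmeas\val\cdot(\potmul\ctxtwo\vartwo+1)+\ltmeas\ctxtwo$ together with Point~\ref{p:expmeas-is-contextual-one}. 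None of this is surprising: it is the routine algebra of pushing contributions through the ``$(\potmul{\tm}{\vartwo}+1)$'' factor that the potential and the measure attach to a cut.

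The one delicate point --- the step I expect to be the real obstacle --- is the case where the hole of $\ctx$ sits in the left (value) slot of a cut or of a subtraction, i.e.\ $\ctx=\cuta\vctx\vartwo\tmtwo$ or $\ctx=\suba\mvar\vctx\vartwo\tmtwo$. By the definition of plugging, when $\vctx=\ctxhole$ the plugging $\ctxp\tm$ does \emph{not} simply drop $\tm$ into the hole: it splits $\tm=\lctxp\val$ and \emph{reorganizes} the term into split shape, producing $\lctxp{\cuta\val\vartwo\tmtwo}$ (resp.\ $\lctxp{\suba\mvar\val\vartwo\tmtwo}$), so $\ctx$ and $\tm$ need not combine in the naive ``$\potmul\ctx{\cdot}+\potmul\ctx\ctxhole\cdot\potmul\tm{\cdot}$'' fashion for an arbitrary $\tm$. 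The way out is that the lemma is needed only for the two rules $\toaxeone$ and $\toaxmone$, whose redexes are $\cuta\exval\evar{\ctxfp\evar}$ and $\cuta\mval\mvar{\mctxfp\mvar}$: there the object plugged into the context is always a \emph{value} ($\evar$, $\exval$, $\mvar$, $\mval$), and a value $\val$ has trivial splitting $\lctx=\ctxhole$, so $\ctxp\val$ performs no reorganization and equals $\cuta\val\vartwo\tmtwo$ (resp.\ $\suba\mvar\val\vartwo\tmtwo$). Under this standing assumption on $\tm$ these two cases close by exactly the same computation as the body-of-cut case above (noting that $\potmul\tmtwo\ctxhole=0$, so $\potmul\ctx\ctxhole=\potmul\vctx\ctxhole$ via the IH, reduces to $1$ when $\vctx=\ctxhole$); and when $\vctx\neq\ctxhole$ plugging simply recurses into $\vctx$ as a value context and the ordinary inductive argument applies verbatim. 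This completes the induction, and hence the proof.
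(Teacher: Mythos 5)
Your proof is correct and takes the same route the paper does: structural induction on $\ctx$, proving Point 1 first and feeding it into Point 2 at the same context, with the cut constructor as the only case where real algebra happens — namely pushing the contributions of $\tm$ through the factor $(\potmul\ctxtwo\vartwo+1)$, using $\potmul\tm\vartwo=0$ for the bound $\vartwo$ and $\potmul\val\ctxhole=0$ for the hole-free value. The computations you give for $\cuta\val\vartwo\ctxtwo$ are exactly the required ones.

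The point you flag as delicate is a genuine one and your resolution is the right one. For $\ctx=\cuta\ctxhole\vartwo\tmtwo$ and a non-value $\tm=\lctxp\val$ with $\lctx\neq\ctxhole$, the reorganizing clause of plugging really does break the identity as literally stated: taking $\lctx=\cuta\valtwo\evartwo\ctxhole$, one finds that $\ltmeas{\ctxp\tm}$ and $\ltmeas\ctx+\potmul\ctx\ctxhole\cdot\ltmeas\tm$ differ by $\potmul\tmtwo\vartwo\cdot\ltmeas\valtwo$, which is nonzero whenever $\vartwo\in\fv\tmtwo$ (and $\ltmeas\valtwo\geq 1$ always). So the lemma must be read under the standing assumption that the plugged object is a value — which is all that $\toaxeone$ and $\toaxmone$ ever require, since they plug $\evar$, $\exval$, $\mvar$, $\mval$, whose splitting is trivial — and your explicit justification of this scoping, consistent with the lemma's title and with its sole use in \refprop{meas-decreases}, is a welcome precision rather than a gap.
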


\withproofs{% !TEX root = ../../main.tex
\begin{proof}
Both points are by induction on $\ctx$, and the second one uses the first one in the cut case. Details are in \cite{DBLP:journals/corr/abs-lics}.
\end{proof}
}

\begin{lem}[Properties for the $\totens$, $\toaxetwo$, $\tobder$ cases]
\label{l:expmeas-auxone} % \reflemmaeqp{expmeas-auxone}{two}
\hfill
\begin{enumerate}
		\item \label{p:expmeas-auxone-two} If $\ctx$ does not capture variables in $\fv\lctx$ and $\var\notin\fv\lctx$ then $\potmul{\ctxp{\lctxp\tm}}\var = \potmul{\ctxp{ \tm}}\var$.	
	\item %\label{p:expmeas-auxone-one} 
	If $\ctx$ does not capture $\evar$ and $\evartwo$ then: 
	\begin{enumerate}
		\item \label{p:expmeas-auxone-der-pm} \emph{Potential}: 
		$\potmul{\ctxp{\dera\evar\var \tm}}\evar = \potmul{\ctxp{ \tm}}\evar + (1+\potmul\tm\var)\cdot\potmul\ctx\ctxhole$ and $\potmul{\ctxp{\dera\evartwo\var \tm}}\evar < \potmul{\ctxp{\dera\evar\var \tm}}\evar$.	
		\item \label{p:expmeas-auxone-der-meas} \emph{Measure}: 
		$\ltmeas{\ctxp{\dera\evar\var \tm}}>\ltmeas{\ctxp{\tm}}$ and $\ltmeas{\ctxp{\dera\evar\var \tm}} = \ltmeas{\ctxp{\dera\evartwo\var \tm}}$.
	\end{enumerate}
	\item %\label{p:expmeas-auxone-par} 
	If $\mctx$ does not capture $\mvar$ then: 
	\begin{enumerate}
		\item \label{p:expmeas-auxone-par-pm} \emph{Potential}: 
		$\potmul{\mctxp{\para\mvar\var\vartwo \tm}}\mvar = (1+ \potmul\tm\var + \potmul\tm\vartwo)\cdot\potmul\mctx\ctxhole$.	
		\item \label{p:expmeas-auxone-par-meas} \emph{Measure}: 
		$\ltmeas{\mctxp{\para\mvar\var\vartwo \tm}}> \ltmeas{\mctxp{\tm}}$.	
	\end{enumerate}
\end{enumerate}
\end{lem}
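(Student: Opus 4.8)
The plan is to reduce all five items to two results already available: the context-decomposition identities \reflemmap{expmeas-is-contextual}{one} (for the potential) and \reflemmap{expmeas-is-contextual}{two} (for the measure), together with the elementary facts $\potmul\ctx\ctxhole \geq 1$ (\reflemmap{potmul-prop}{two}) and \reflemmap{potmul-prop}{one} (if $\var \notin \fv\tm$ then $\potmul\tm\var = 0$). The non-capture side conditions in the statement are exactly what makes \reflemma{expmeas-is-contextual} applicable; any capture of $\fv\tm$ by the outer context occurs identically on both sides of each comparison and is thus harmless (alternatively, every appeal to \reflemma{expmeas-is-contextual} can be replaced by a direct induction on the context). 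What then remains is only unfolding the potential and measure clauses for derelictions and pars and performing elementary arithmetic, so I would prove the lemma as a single block of bookkeeping.

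\emph{Point~1.} First I would settle the base case $\ctx = \ctxhole$, i.e.\ $\potmul{\lctxp\tm}\var = \potmul\tm\var$ whenever $\var \notin \fv\lctx$, by induction on $\lctx$. The case $\lctx = \ctxhole$ is immediate. In each inductive case $\lctx$ has a left constructor at the head around some $\lctxtwo$; up to $\alpha$-renaming, and using $\var \notin \fv\lctx$, the variable touched at the head differs from $\var$, the hypothesis yields $\var \notin \fv\lctxtwo$, and for a head cut $\cuta\val\vartwo\lctxtwo$ or a head subtraction $\suba\mvar\val\vartwo\lctxtwo$ it also yields $\var \notin \fv\val$, whence $\potmul\val\var = 0$ by \reflemmap{potmul-prop}{one}. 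Consequently the defining clause of the potential at that head specializes to $\potmul{\lctxp\tm}\var = \potmul{\lctxtwop\tm}\var$ (the cut and subtraction clauses losing their $\potmul\val\var \cdot (\cdots)$ summand), and the \ih\ closes the case. Then, for a general $\ctx$, \reflemmap{expmeas-is-contextual}{one} gives $\potmul{\ctxp{\lctxp\tm}}\var = \potmul\ctx\var + \potmul\ctx\ctxhole \cdot \potmul{\lctxp\tm}\var$ and $\potmul{\ctxp\tm}\var = \potmul\ctx\var + \potmul\ctx\ctxhole \cdot \potmul\tm\var$, and the base case equates the two right-hand sides.

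\emph{Point~2.} I would apply \reflemma{expmeas-is-contextual} to $\ctx$ and the term $\dera\evar\var\tm$, using the dereliction clauses $\potmul{\dera\evar\var\tm}\evar = 1 + \potmul\tm\evar + \potmul\tm\var$ and $\ltmeas{\dera\evar\var\tm} = \ltmeas\tm + 1$. For the potential statement, \reflemmap{expmeas-is-contextual}{one} gives $\potmul{\ctxp{\dera\evar\var\tm}}\evar = \potmul\ctx\evar + \potmul\ctx\ctxhole \cdot (1 + \potmul\tm\evar + \potmul\tm\var)$ and $\potmul{\ctxp\tm}\evar = \potmul\ctx\evar + \potmul\ctx\ctxhole \cdot \potmul\tm\evar$, and subtracting yields the stated identity $\potmul{\ctxp{\dera\evar\var\tm}}\evar = \potmul{\ctxp\tm}\evar + (1 + \potmul\tm\var) \cdot \potmul\ctx\ctxhole$; since $\evar \neq \evartwo$ the dereliction clause gives $\potmul{\dera\evartwo\var\tm}\evar = \potmul\tm\evar$, hence $\potmul{\ctxp{\dera\evartwo\var\tm}}\evar = \potmul{\ctxp\tm}\evar$, and the strict inequality follows from $(1 + \potmul\tm\var) \cdot \potmul\ctx\ctxhole \geq \potmul\ctx\ctxhole \geq 1 > 0$ (\reflemmap{potmul-prop}{two}). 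For the measure statement, since $\ltmeas{\dera\evar\var\tm} = \ltmeas\tm + 1 = \ltmeas{\dera\evartwo\var\tm}$, \reflemmap{expmeas-is-contextual}{two} gives $\ltmeas{\ctxp{\dera\evar\var\tm}} = \ltmeas\ctx + \potmul\ctx\ctxhole \cdot (\ltmeas\tm + 1) = \ltmeas{\ctxp{\dera\evartwo\var\tm}}$ and $\ltmeas{\ctxp{\dera\evar\var\tm}} - \ltmeas{\ctxp\tm} = \potmul\ctx\ctxhole \geq 1 > 0$.

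\emph{Point~3 and the main obstacle.} Point~3 is the same computation with the par $\para\mvar\var\vartwo\tm$ in place of the dereliction and a multiplicative context $\mctx$ in place of $\ctx$, using $\potmul{\para\mvar\var\vartwo\tm}\mvar = 1 + \potmul\tm\var + \potmul\tm\vartwo$ and $\ltmeas{\para\mvar\var\vartwo\tm} = \ltmeas\tm + 1$; the only extra ingredient is that in a proper term $\mvar$ has its unique free occurrence as the head of the par, by structural linearity (\reflemma{proper-linearity}), so $\mvar \notin \fv\mctx$ and $\potmul\mctx\mvar = 0$ by \reflemmap{potmul-prop}{one}. Then \reflemmap{expmeas-is-contextual}{one} gives $\potmul{\mctxp{\para\mvar\var\vartwo\tm}}\mvar = \potmul\mctx\mvar + \potmul\mctx\ctxhole \cdot (1 + \potmul\tm\var + \potmul\tm\vartwo) = (1 + \potmul\tm\var + \potmul\tm\vartwo) \cdot \potmul\mctx\ctxhole$, and \reflemmap{expmeas-is-contextual}{two} gives $\ltmeas{\mctxp{\para\mvar\var\vartwo\tm}} - \ltmeas{\mctxp\tm} = \potmul\mctx\ctxhole \geq 1 > 0$. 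I do not expect a deep obstacle: the whole lemma is essentially a reduction to \reflemma{expmeas-is-contextual}. The two spots that need a bit of attention are the base-case induction on $\lctx$ in Point~1 (checking, for each left constructor, that $\var \notin \fv\lctx$ selects the right branch of the potential and erases the cut/subtraction summand) and, in Point~3, recalling why $\mvar$ has no free occurrence in $\mctx$.
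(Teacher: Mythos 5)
Your proof is correct and reaches the same arithmetic, but it is organized differently from the paper's. The paper proves this lemma by a fresh induction on the context for each point, with the cut case as the only non-trivial step (deferred to the technical report); you instead factor all the work through the already-established decomposition identities of \reflemma{expmeas-is-contextual}, keeping only a small induction on $\lctx$ for the base case of Point~1. This is a legitimate and arguably cleaner economy, but it hinges on one point you rightly flag and should not gloss over: \reflemma{expmeas-is-contextual} is stated for capture-avoiding plugging $\ctxfp\tm$, whereas the present lemma is about $\ctxp\cdot$, and the decomposition identity genuinely fails for capturing plugging (a cut $\cuta\val\varthree\ctxhole$ over a term with $\varthree$ free produces an extra $\potmul\val\var\cdot\potmul\tm\varthree$-style summand). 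Your rescue--that the discrepancy terms are identical on the two sides of each comparison--is correct, but only because the side conditions exclude exactly the variables for which the two plugged terms have different potentials: a captured $\varthree$ contributes $\potmul{\dera\evar\var\tm}\varthree$ on one side versus $\potmul\tm\varthree$ on the other, and these coincide precisely when $\varthree\neq\evar$, which is what ``$\ctx$ does not capture $\evar$'' guarantees (and similarly for $\fv\lctx$ in Point~1 and $\mvar$ in Point~3). Making that observation explicit, or falling back on the direct induction as you suggest, closes the gap; the paper's route avoids the issue by never leaving the induction on the context. Your other addition--that Point~3 needs $\potmul\mctx\mvar=0$, which is not among the stated side conditions and must be imported from \properness via structural linearity (\reflemma{proper-linearity})--is a correct and worthwhile remark: the lemma is implicitly about \proper terms, as is the whole development.
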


\withproofs{% !TEX root = ../../main.tex
\begin{proof}
All points are by induction on the context of the statement. The only non-trivial case, for each point, is the one for cut, which is treated in \cite{DBLP:journals/corr/abs-lics}. The others follow immediately from the \ih
\end{proof}}

The fourth point of the next lemma shall be used in the proof that the measure decreases for the the $\totens$ and $\tobder$ cases. The first three point are intermediate properties to obtain the fourth one.

\begin{lem}[Inequalities for the $\totens$ and $\tobder$ cases]
\label{l:expmeas-auxtwo} % \reflemmaeqp{expmeas-auxtwo}{four}
Let $\ctx$ be a context that does not capture free variables in $\lctx$ nor $\val$, and $\lctx$ does not capture variables of $\tm$.
\begin{enumerate}
	\item \label{p:expmeas-auxtwo-one} $\potmul{\lctxp{\cuta\val\var\tm}}\vartwo \leq \potmul{\tm}\vartwo +\potmul{\lctxp\val}\vartwo \cdot(1+\potmul\tm\var)$.		
	\item \label{p:expmeas-auxtwo-two} $\ltmeas{\lctxp{\cuta\val\var\tm}} \leq \ltmeas{\tm} +\ltmeas{\lctxp\val} \cdot(1+\potmul\tm\var)$.
		\item \label{p:expmeas-auxtwo-three} $\potmul{\ctxp{\lctxp{\cuta\val\var\tm}}}\vartwo \leq \potmul{\ctxp{\tm}}\vartwo +\potmul{\lctxp\val}\vartwo \cdot(1+\potmul\tm\var)\cdot\potmul\ctx\ctxhole$.

	\item \label{p:expmeas-auxtwo-four} $\ltmeas{\ctxp{\lctxp{\cuta\val\var\tm}}} \leq \ltmeas{\ctxp{\tm}} +\ltmeas{\lctxp\val} \cdot(1+\potmul\tm\var)\cdot\potmul\ctx\ctxhole$.
\end{enumerate}
\end{lem}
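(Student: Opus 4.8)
The plan is to prove the four inequalities in two stages: Points 1 and 2 carry the real work and are proved by induction on the left context $\lctx$, while Points 3 and 4 follow from them by a short computation on top of \reflemma{expmeas-is-contextual}. For Point 1 I would induct on $\lctx$. If $\lctx = \ctxhole$ the two sides coincide exactly, since the defining clause of the potential for a cut gives $\potmul{\cuta\val\var\tm}\vartwo = \potmul\tm\vartwo + \potmul\val\vartwo\cdot(\potmul\tm\var+1)$. If $\lctx$ is one left constructor — a cut, par, subtraction or dereliction — around a smaller left context $\lctx'$, I unfold the defining clause of the potential at that constructor, apply the induction hypothesis to $\lctx'$ plugged with $\cuta\val\var\tm$, and reassemble. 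For a cut the clause multiplies two potentials, so the induction hypothesis is used twice, once at $\vartwo$ and once at the variable $\varthree$ bound by the cut, and similarly when the queried variable is the input variable of a par or a dereliction. What makes the double use harmless is the hypothesis that $\lctx$ captures no variable of $\tm$: it forces $\varthree\notin\fv\tm$, hence $\potmul\tm\varthree = 0$ by \reflemmap{potmul-prop}{one}. The residual slack between the two sides is then closed with the two elementary monotonicity facts $a\le a\cdot(1+b)$ and $1\le 1+b$, valid because all potentials are non-negative: intuitively, the factor $(1+\potmul\tm\var)$ on the right-hand side is always wide enough to absorb the extra ``garbage copy'' terms generated by the cut clause.

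For Point 2 I would run the same induction on $\lctx$, this time with Point 1 already available. The base case is once more an equality. The only delicate step is $\lctx = \cuta\valtwo\varthree{\lctx'}$: the measure clause for a cut produces a factor of the form $\potmul{\cdot}{\varthree}$ evaluated at $\lctx'$ plugged with $\cuta\val\var\tm$, which I bound using Point 1 at the variable $\varthree$ (again with $\varthree\notin\fv\tm$), the remaining summand being bounded by the induction hypothesis; after expanding and regrouping, the only surplus amounts to replacing $\ltmeas\valtwo$ by $\ltmeas\valtwo\cdot(1+\potmul\tm\var)$, which is harmless since $\ltmeas\valtwo\ge 1$ by \reflemmap{potmul-prop}{four}. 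The par, dereliction and subtraction cases only add a constant (or, for subtraction, a $\ltmeas\valtwo$) to the measure, so they follow directly from the induction hypothesis together with $1\le 1+\potmul\tm\var$ (and $\ltmeas\valtwo\le\ltmeas\valtwo\cdot(1+\potmul\tm\var)$).

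Points 3 and 4 then need no further induction. For Point 3, \reflemmap{expmeas-is-contextual}{one} applied with inner term $\lctxp{\cuta\val\var\tm}$ gives $\potmul{\ctxp{\lctxp{\cuta\val\var\tm}}}\vartwo = \potmul\ctx\vartwo + \potmul\ctx\ctxhole\cdot\potmul{\lctxp{\cuta\val\var\tm}}\vartwo$; bounding the last factor by Point 1 and then identifying $\potmul\ctx\vartwo + \potmul\ctx\ctxhole\cdot\potmul\tm\vartwo$ as $\potmul{\ctxp\tm}\vartwo$ — again via \reflemmap{expmeas-is-contextual}{one} — gives exactly the claimed bound. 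The hypothesis that $\ctx$ captures no free variable of $\lctx$ or $\val$ is precisely what licenses these uses of \reflemma{expmeas-is-contextual}. Point 4 is the same argument with the measure $\ltmeas{\cdot}$ in place of the potential and \reflemmap{expmeas-is-contextual}{two} in place of \reflemmap{expmeas-is-contextual}{one}.

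I expect the main obstacle to be the cut case in the inductions for Points 1 and 2: the double application of the induction hypothesis and the resulting arithmetic must be arranged carefully so that the $(1+\potmul\tm\var)$ factor and the non-negativity of potentials suffice to absorb all cross terms. This is the same kind of bookkeeping the paper already defers to the technical report \cite{DBLP:journals/corr/abs-lics} for the cut case of \reflemma{expmeas-auxone}; every other case reduces mechanically to the induction hypothesis plus the two monotonicity inequalities and \reflemmap{potmul-prop}{one} and \reflemmap{potmul-prop}{four}.
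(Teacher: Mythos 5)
Your treatment of Points 1 and 2 matches the paper's route (induction on $\lctx$ for Point 1, then Point 2 on top of Point 1), and the arithmetic you sketch for the cut case of $\lctx$ is the right one. The gap is in Points 3 and 4. You claim they need no further induction because \reflemma{expmeas-is-contextual} gives $\potmul{\ctxp{\lctxp{\cuta\val\var\tm}}}\vartwo = \potmul\ctx\vartwo + \potmul\ctx\ctxhole\cdot\potmul{\lctxp{\cuta\val\var\tm}}\vartwo$ and $\potmul{\ctxp\tm}\vartwo = \potmul\ctx\vartwo + \potmul\ctx\ctxhole\cdot\potmul\tm\vartwo$. But that lemma is stated for the \emph{capture-avoiding} plugging $\ctxfp{\cdot}$, whereas the hypothesis here only forbids $\ctx$ from capturing free variables of $\lctx$ and $\val$ --- it says nothing about $\tm$. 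In the intended applications ($\totens$ and $\tobder$) the sub-term $\tm$ already sits inside $\ctx$ before the step, so $\ctx$ does in general capture free variables of $\tm$; that is exactly why the hypothesis is asymmetric. Under capture the decomposition identity fails: for $\ctx = \dera\evar\varthree\ctxhole$ with $\varthree\in\fv\tmthree$ one has $\potmul{\ctxp\tmthree}\evar = 1 + \potmul\tmthree\evar + \potmul\tmthree\varthree$, which exceeds $\potmul\ctx\evar + \potmul\ctx\ctxhole\cdot\potmul\tmthree\evar = 1 + \potmul\tmthree\evar$ by $\potmul\tmthree\varthree \geq 1$. So the first equality in your chain degrades to a $\geq$, which is the wrong direction for the upper bound you are proving.

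The argument can be repaired, but only by doing what the paper does: prove Points 3 and 4 by induction on $\ctx$, using Points 1 and 2 in the base case. In the binder cases of that induction the capture cross terms (such as the $\potmul{\cdot}\varthree$ above) appear on both sides of the inequality, and Point 1 applied at the captured variable $\varthree$ --- together with the fact that $\varthree\notin\fv{\lctxp\val}$, hence $\potmul{\lctxp\val}\varthree = 0$ by \reflemmap{potmul-prop}{one} --- shows that the cross term produced by the left-hand side is bounded by the one produced by the right-hand side. Equivalently, you would need to state and prove a capture-aware generalization of \reflemma{expmeas-is-contextual}; either way, an induction on $\ctx$ cannot be avoided, and your claim that Points 3 and 4 are a ``short computation'' on top of the existing lemmas does not go through as written.
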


\withproofs{% !TEX root = ../../main.tex
\begin{proof}
Point 1 is by induction on $\lctx$, point 2 uses point 1, point 3 and 4 are by induction on $\ctx$ using point 1 and 2 in the base case. Details are in \cite{DBLP:journals/corr/abs-lics}.
\end{proof}}

\begin{prop}[Measure decreases]
\label{prop:meas-decreases}
If $\tm \Rew{a} \tmtwo$ with $a\in\set{\axmone,\axmtwo,\tens,\axeone,\axetwo,\bdersym,\wsym}$ then $\ltmeas\tm > \ltmeas\tmtwo$.
\end{prop}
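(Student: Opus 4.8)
The plan is to reduce the statement to the root rewriting steps and then proceed by a case analysis on $a$, leaning on the decomposition lemmas above; the potential has been defined precisely so that the unique variable-occurrence constructor erased by each step strictly outweighs whatever duplication the step triggers. For the reduction, note that (working up to $\alpha$-renaming, which the measure respects) every step $\tm \Rew{a} \tmtwo$ has the form $\ctxfp{\tm_0} \Rew{a} \ctxfp{\tmtwo_0}$ for one and the same capture-avoiding context $\ctx$ --- chosen so that $\ctx$ captures none of the finitely many free variables of $\tm_0$ or of $\tmtwo_0$ --- and a root step $\tm_0 \rootRew{a} \tmtwo_0$. By \reflemmap{expmeas-is-contextual}{two}, $\ltmeas{\ctxfp{\tm_0}} - \ltmeas{\ctxfp{\tmtwo_0}} = \potmul\ctx\ctxhole \cdot (\ltmeas{\tm_0} - \ltmeas{\tmtwo_0})$, and $\potmul\ctx\ctxhole \geq 1$ by \reflemmap{potmul-prop}{two}; hence it suffices to prove $\ltmeas{\tm_0} > \ltmeas{\tmtwo_0}$ for each root step with $a$ in the stated set. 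Note that $\tololli$ is deliberately excluded: it is the cut-creating rule and, as anticipated around \refthm{local-termination}, it \emph{increases} the measure, which is exactly what makes $\tomsnlolli$ strongly normalizing while $\toms$ is not.

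The three cheap cases are $\wsym$, $\axmtwo$, and $\axetwo$. For $\cuta\exval\evar\tm \rootRew{\wsym} \tm$ with $\evar \notin \fv\tm$ we get $\potmul\tm\evar = 0$ by \reflemmap{potmul-prop}{one}, so $\ltmeas{\cuta\exval\evar\tm} = \ltmeas\exval + \ltmeas\tm > \ltmeas\tm$, using $\ltmeas\exval \geq 1$ from \reflemmap{potmul-prop}{four}. For $\cuta{\mvartwo}\mvar\tm \rootRew{\axmtwo} \cutsub\mvartwo\mvar\tm$ we have $\ltmeas{\cuta{\mvartwo}\mvar\tm} = (\potmul\tm\mvar + 1) + \ltmeas\tm$ while $\ltmeas{\cutsub\mvartwo\mvar\tm} = \ltmeas\tm$ by \reflemmap{potmul-prop}{five}, so the measure drops by $\potmul\tm\mvar + 1 \geq 1$. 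For $\cuta{\evartwo}\evar{\ctxp{\dera\evar\var\tm}} \rootRew{\axetwo} \cuta{\evartwo}\evar{\ctxp{\dera\evartwo\var\tm}}$, the two cut bodies have equal measure by \reflemmap{expmeas-auxone}{der-meas} but the body's potential of $\evar$ strictly decreases by \reflemmap{expmeas-auxone}{der-pm}; since the cut clause is $\ltmeas{\cuta{\evartwo}\evar{\tmthree}} = \ltmeas\evartwo \cdot (\potmul{\tmthree}\evar + 1) + \ltmeas{\tmthree}$ for any body $\tmthree$, the overall measure strictly decreases as well.

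The two value-position cases $\axeone$ and $\axmone$ are handled by expanding the plugged subterm via \reflemma{expmeas-is-contextual}. For $\cuta\exval\evar{\ctxfp\evar} \rootRew{\axeone} \cuta\exval\evar{\ctxfp\exval}$, using $\potmul\evar\evar = 1$, $\ltmeas\evar = 1$, and $\potmul\exval\evar = 0$ (as $\evar \notin \fv\exval$, by \reflemmap{potmul-prop}{one}), \reflemmap{expmeas-is-contextual}{one} and \reflemmap{expmeas-is-contextual}{two} make the $\ltmeas\exval$-weighted summands of the two sides coincide and the measure difference come out to exactly $\potmul\ctx\ctxhole \geq 1$, where $\ctx$ is the context isolating the replaced occurrence of $\evar$. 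For $\cuta{\mval}\mvar{\mctxfp\mvar} \rootRew{\axmone} \mctxfp\mval$, the same expansion gives a difference of $\ltmeas\mval \cdot (\potmul\mctx\mvar + 1) + \potmul\mctx\ctxhole \geq 2$.

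The crux is $\tens$ and $\bdersym$, which are exactly what \reflemma{expmeas-auxone} and \reflemma{expmeas-auxtwo} were tailored for. Take $\bdersym$: $\cuta{\bang{\lctxp\val}}\evar{\ctxp{\dera\evar\var\tm}} \rootRew{\bdersym} \cuta{\bang{\lctxp\val}}\evar{\ctxp{\lctxp{\cuta\val\var\tm}}}$. On the redex side I would rewrite the measure through the cut clause, using \reflemmap{expmeas-auxone}{der-pm} for $\potmul{\ctxp{\dera\evar\var\tm}}\evar$ and \reflemmap{expmeas-auxone}{der-meas} for $\ltmeas{\ctxp{\dera\evar\var\tm}}$; on the reduct side I would upper-bound the measure via \reflemmap{expmeas-auxtwo}{four} for the created cut $\cuta\val\var\tm$, together with \reflemmap{expmeas-auxone}{two} and the observation that $\evar$ is free in neither $\lctxp\val$ nor $\tm$, which give $\potmul{\ctxp{\lctxp{\cuta\val\var\tm}}}\evar = \potmul{\ctxp\tm}\evar$ and $\ltmeas{\bang{\lctxp\val}} = \ltmeas{\lctxp\val}$. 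After cancellation the comparison collapses to the strict inequality $\ltmeas{\ctxp{\dera\evar\var\tm}} > \ltmeas{\ctxp\tm}$ of \reflemmap{expmeas-auxone}{der-meas}: informally, the contribution of the created cut $\cuta\val\var\tm$, namely at most $\ltmeas{\lctxp\val} \cdot (1 + \potmul\tm\var) \cdot \potmul\ctx\ctxhole$, is already accounted for in the redex by the term $(1 + \potmul\tm\var) \cdot \potmul\ctx\ctxhole$ inside $\potmul{\ctxp{\dera\evar\var\tm}}\evar$ (scaled by $\ltmeas{\bang{\lctxp\val}}$ in the cut clause), so what is left is precisely the unit $\ltmeas{\dera\evar\var\tm} - \ltmeas\tm = 1$ that the eliminated dereliction was charging. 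The $\tens$ case is symmetric, with the par $\para\mvar\var\vartwo\tm$ inside a multiplicative context $\mctx$ in place of the dereliction (using \reflemmap{expmeas-auxone}{par-pm} and \reflemmap{expmeas-auxone}{par-meas}) and two applications of \reflemmap{expmeas-auxtwo}{four} for the two created cuts $\cuta\val\var\tm$ and $\cuta\valtwo\vartwo\tm$; there the surviving gap also absorbs the value measures $\ltmeas{\lctxp\val}$ and $\ltmeas{\lctxtwop\valtwo}$, which in the redex get weighted by the $+1$ in the cut coefficient. The step I expect to be the main obstacle is exactly keeping these two decompositions aligned tightly enough in $\tens$ and $\bdersym$: the upper bound on the reduct produced by \reflemma{expmeas-auxtwo} must match the exact value of the redex closely enough that the surviving quantity is just the unit decrement from erasing one variable-occurrence constructor, which forces careful bookkeeping of the coefficients $\potmul\tm\var$, $\potmul\tm\vartwo$, $\potmul\ctx\ctxhole$, $\potmul\mctx\ctxhole$, and the multiplicity of the surrounding exponential cut.
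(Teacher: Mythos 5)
Your case analysis on the root rules is sound and follows the same route as the paper: the same decomposition lemmas are applied to the same cases, and the arithmetic for $\wsym$, $\axmtwo$, $\axetwo$, $\axeone$, $\axmone$ checks out (e.g.\ the exact gap $\potmul\ctx\ctxhole$ for $\axeone$ and $\ltmeas\mval\cdot(\potmul\mctx\mvar+1)+\potmul\mctx\ctxhole$ for $\axmone$). The problem is the very first step, the reduction to root steps. You claim that every step writes as $\ctxfp{\tm_0} \Rew{a} \ctxfp{\tmtwo_0}$ for a closure context $\ctx$ capturing no free variable of $\tm_0$ or $\tmtwo_0$. That is false: the contextual closure uses capturing plugging, and a free variable of the redex that is \emph{bound by} the surrounding context cannot be $\alpha$-renamed away. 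For instance, in $\cuta{\bang\varthree}\evar(\cuta{\evar}{\evartwo}\ctxtwofp{\evartwo})$ the inner $\axeone$-redex has $\evar$ free and the closure context $\cuta{\bang\varthree}\evar\ctxhole$ captures it. Consequently \reflemmap{expmeas-is-contextual}{two} does not apply, and the identity $\ltmeas{\ctxp{\tm_0}} - \ltmeas{\ctxp{\tmtwo_0}} = \potmul\ctx\ctxhole\cdot(\ltmeas{\tm_0}-\ltmeas{\tmtwo_0})$ fails precisely when $\ctx$ contains a cut $\cuta\val\varthree\ctxtwo$ with $\varthree$ free in the redex: the cut clause of the measure then also depends on $\potmul{\tm_0}\varthree$, about which your argument says nothing.

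The missing ingredient is that the root rules in the stated set do not \emph{increase} the potential $\potmul{\cdot}\varthree$ of any variable $\varthree$ (it is preserved exactly by the non-erasing rules and can only decrease for $\tow$ and $\axetwo$); this is what allows the strict decrease of the measure to be pushed through a capturing cut in the closure context, since $\ltmeas{\cuta\val\varthree\ctxtwop{\tm_0}}$ depends on $\tm_0$ through both $\ltmeas{\cdot}$ and $\potmul{\cdot}\varthree$. So the induction on the contextual closure must carry a second clause about potentials --- or, equivalently, one must apply \reflemma{expmeas-auxone} and \reflemma{expmeas-auxtwo} with the \emph{full} surrounding context, which is exactly why those lemmas are stated for general capturing contexts and come in matched potential/measure pairs. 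This is a repairable but real gap: as written, the cut case of the contextual closure does not go through.
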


\withproofs{
% !TEX root = ../../main.tex
\begin{proof}
By induction on $\tm \Rew{a} \tmtwo$. All the inductive cases follow immediately from the \ih Root cases:
\begin{itemize}

	%%%%%%%%%%%%%%%
	\item $\rtoaxmone$, that is, $\cuta\mval \mvar \mctxfp{\mvar} \rtoaxmone  \mctxfp\mval$. Then:
\begin{center}$\begin{array}{llllllll}
	\ltmeas{\cuta\mval \mvar \mctxfp{\mvar}}
	 = &
	\ltmeas\mval\cdot(\potmul{\mctxfp{\mvar}}\mvar +1) + \ltmeas{\mctxfp{\mvar}}
	 & =_{\reflemmaeqp{expmeas-is-contextual}{one}} \\
	&\ltmeas\mval\cdot(\underbrace{\potmul\mctx\mvar}_{=_{\reflemmaeqp{potmul-prop}{one}}0} + \potmul\mctx{\ctxhole}\cdot\underbrace{\potmul{\mvar}\mvar}_{=1} +1) + \ltmeas{\mctxfp{\mvar}}
	 & = \\
	&\ltmeas\mval\cdot(\potmul\mctx{\ctxhole} +1) + \ltmeas{\mctxfp{\mvar}}
	 & =_{\reflemmaeqp{expmeas-is-contextual}{two}} \\
	&\underbrace{\ltmeas\mval}_{=_{\reflemmaeqp{potmul-prop}{four}}1}\cdot(\potmul\mctx{\ctxhole} +1) + 	
	\ltmeas\mctx + \underbrace{\potmul\mctx{\ctxhole}}_{\geq_{\reflemmaeqp{potmul-prop}{two}}1}\cdot\ltmeas{\mvar}
		 & > \\
		 	&\ltmeas\mval\cdot\potmul\mctx{\ctxhole} + \ltmeas\mctx
	 & =_{\reflemmaeqp{expmeas-is-contextual}{two}} \\
&	 	\ltmeas{ \mctxfp\mval}
	\end{array}$\end{center}
	
\item $\rtoaxmtwo$, that is, $\cuta\mvartwo \mvar \tmthree \rtoaxmtwo  \cutsub\mvartwo \mvar \tmthree$. Then:
\begin{center}$\begin{array}{lllllllll}
	\ltmeas{\cuta\mvartwo \mvar \tmthree}
	& = &
	\underbrace{\ltmeas\mvartwo}_{=1}\cdot(\potmul{\tmthree}\mvar +1) + \ltmeas{\tmthree}
	& = \\
&&	\potmul{\tmthree}\mvar +1 + \ltmeas{\tmthree}
	& > &
	\ltmeas{\tmthree}
	 & =_{\reflemmaeqp{potmul-prop}{five}} &
	\ltmeas{\cutsub\mvartwo \mvar\tmthree}
	\end{array}$\end{center}

	%%%%%%%%%%%%%%%%
	\item $\rtotens$, that is, $\cuta{\pair\tmthree\tmfour}\mvar \mctxp{\para\mvar\var\vartwo \tmfive}
 \rtotens  
\mctxp{\lctxp{\cuta\val\var \lctxtwop{\cuta\valtwo\vartwo \tmfive}}}$
with $\tmthree = \lctxp{\val}$ and $\tmfour = \lctxtwop{\valtwo}$. Then:
\begin{center}\footnotesize
\renewcommand{\arraystretch}{1.1}
$\begin{array}{lllll}
	\ltmeas{\cuta{\pair\tmthree\tmfour}\mvar \mctxp{\para\mvar\var\vartwo \tmfive}}
		& = \\
	\ltmeas{\mctxp{\para\mvar\var\vartwo \tmfive}} + \ltmeas{\pair\tmthree\tmfour}\cdot(\potmul{\mctxp{\para\mvar\var\vartwo \tmfive}}\mvar +1) 
		& >_{\reflemmaeqp{expmeas-auxone}{par-meas}} \\
	\ltmeas{\mctxp{\tmfive}} + \ltmeas{\pair\tmthree\tmfour}\cdot(\potmul{\mctxp{\para\mvar\var\vartwo \tmfive}}\mvar +1)
		& = \\
		\ltmeas{\mctxp{\tmfive}} + (\ltmeas\tmthree + \ltmeas\tmfour)\cdot(\potmul{\mctxp{\para\mvar\var\vartwo \tmfive}}\mvar +1)
		& =_{\reflemmaeqp{expmeas-auxone}{par-pm}} \\
		\ltmeas{\mctxp{\tmfive}} + (\underbrace{\ltmeas\tmthree + \ltmeas\tmfour}_{\geq_{\reflemmaeqp{potmul-prop}{four}} 1})\cdot((1+ \potmul\tmfive\var + \potmul\tmfive\vartwo)\cdot\potmul\mctx\ctxhole +1)
		& > \\
		\ltmeas{\mctxp{\tmfive}} + (\ltmeas\tmthree + \ltmeas\tmfour)\cdot(1+ \potmul\tmfive\var + \potmul\tmfive\vartwo)\cdot\potmul\mctx\ctxhole
		& = \\
		\ltmeas{\mctxp{\tmfive}} + \ltmeas\tmthree\cdot(1+ \potmul\tmfive\var + \potmul\tmfive\vartwo)\cdot\potmul\mctx\ctxhole + \ltmeas\tmfour\cdot(1+ \potmul\tmfive\var + \potmul\tmfive\vartwo)\cdot\potmul\mctx\ctxhole
		& \geq \\
		\ltmeas{\mctxp{\tmfive}} + \ltmeas\tmthree\cdot(1+  \potmul\tmfive\vartwo)\cdot\potmul\mctx\ctxhole + \ltmeas\tmfour\cdot(1+ \potmul\tmfive\var )\cdot\potmul\mctx\ctxhole
				& = \\
		\underbrace{\ltmeas{\mctxp{\tmfive}} +\ltmeas{\lctxtwop\valtwo} \cdot(1+\potmul\tmfive\vartwo)\cdot\potmul\mctx\ctxhole}_{=_{\reflemmaeqp{expmeas-auxtwo}{four}} \ltmeas{\mctxp{\lctxtwop{\cuta\valtwo\vartwo \tmfive}}}}
		+\ltmeas{\lctxp\val} \cdot(1+\potmul{ \tmfive}\var)\cdot\potmul\mctx\ctxhole
		& = \\		\ltmeas{\mctxp{\lctxtwop{\cuta\valtwo\vartwo \tmfive}}} +\ltmeas{\lctxp\val} \cdot(1+\potmul{\tmfive}\var)\cdot\potmul\mctx\ctxhole
		& =_{\reflemmaeqp{expmeas-auxone}{two}}  \\		\ltmeas{\mctxp{\lctxtwop{\cuta\valtwo\vartwo \tmfive}}} +\ltmeas{\lctxp\val} \cdot(1+\potmul{\lctxtwop{\cuta\valtwo\vartwo \tmfive}}\var)\cdot\potmul\mctx\ctxhole
		& =_{\reflemmaeqp{expmeas-auxtwo}{four}} \\
	\ltmeas{\mctxp{\lctxp{\cuta\val\var \lctxtwop{\cuta\valtwo\vartwo \tmfive}}}}
	\end{array}$\end{center}
		
	%%%%%%%%%%%%%%%%%%
	\item $\rtoaxeone$, that is, $\cuta\exval \evar \ctxfp{\evar} \rtoaxeone  \cuta\exval\evar\ctxfp\exval$. 
	\begin{center}$\begin{array}{lllll}
	\ltmeas{\cuta\exval \evar \ctxfp{\evar}}
	& = \\
	\ltmeas\exval\cdot(\potmul{\ctxfp{\evar}}\evar +1) + \ltmeas{\ctxfp{\evar}}
	 & =_{\reflemmaeqp{expmeas-is-contextual}{one}} \\
	\ltmeas\exval\cdot(\potmul\ctx\evar + \potmul\ctx{\ctxhole}\cdot\underbrace{\potmul{\evar}\evar}_{=1} +1) + \ltmeas{\ctxfp{\evar}}
	 & = \\
	\ltmeas\exval\cdot(\potmul\ctx\evar + \potmul\ctx{\ctxhole} +1) + \ltmeas{\ctxfp{\evar}}
	 & = \\
	\ltmeas\exval\cdot(\potmul\ctx\evar  +1) + \ltmeas\exval\cdot\potmul\ctx{\ctxhole} + \ltmeas{\ctxfp{\evar}} & =_{\reflemmaeqp{expmeas-is-contextual}{two}} \\
	\ltmeas\exval\cdot(\potmul\ctx\evar  +1) + \ltmeas\exval\cdot\potmul\ctx{\ctxhole} +  	
	\ltmeas\ctx + \potmul\ctx{\ctxhole}\cdot\underbrace{\ltmeas{\evar}}_{=1}
	 & = \\
	\ltmeas\exval\cdot(\potmul\ctx\evar  +1) + \underbrace{\ltmeas\exval\cdot\potmul\ctx{\ctxhole} +  	
	\ltmeas\ctx}_{=_{\reflemmaeqp{expmeas-is-contextual}{two}} \ltmeas{\ctxp\exval}} + \potmul\ctx{\ctxhole}
		& =\\
	\ltmeas\exval\cdot(\potmul\ctx\evar  +1) + \ltmeas{\ctxp\exval} + \underbrace{\potmul\ctx{\ctxhole}}_{>_{\reflemmaeqp{potmul-prop}{two}} 0}
	& >\\
	\ltmeas\exval\cdot(\potmul\ctx\evar  +1) + \ltmeas{\ctxp\exval}
	& =_{\reflemmaeqp{potmul-prop}{one}} \\
%	\end{array}$\end{center}
%	
%	\begin{center}$\begin{array}{lllll}

		\ltmeas\exval\cdot(\potmul\ctx\evar + \potmul\ctx{\ctxhole}\cdot\underbrace{\potmul\exval\evar}_{=_{\reflemmaeqp{potmul-prop}{one}}0} +1) + \ltmeas{\ctxfp\exval}
	 & =_{\reflemmaeqp{expmeas-is-contextual}{one}} \\
\ltmeas\exval\cdot(\potmul{\ctxfp\exval}\evar +1) + \ltmeas{\ctxfp\exval}
	& = \\
		\ltmeas{\cuta\exval \evar \ctxfp\exval}
	\end{array}$\end{center}

%%%%%%%%%%%%
\item $\rtoaxetwo$, that is, $ \cuta{\evartwo}\evar\ctxp{\dera\evar\var \tm} \rtoaxetwo \cuta{\evartwo}\evar\ctxp{\dera\evartwo\var \tm}$. Then:
\begin{center}$\begin{array}{lllllll}
	\ltmeas{\cuta{\evartwo} \evar \ctxp{\dera\evar\var \tm}}
	& = &
	\ltmeas{\evartwo}\cdot(\potmul{\ctxp{\dera\evar\var \tm}}\evar +1) + \ltmeas{\ctxp{\dera\evar\var \tm}}
	 &>_{\reflemmaeqp{expmeas-auxone}{der-pm}} \\
		&&\ltmeas{\evartwo}\cdot(\potmul{\ctxp{\dera\evartwo\var \tm}}\evar +1) + \ltmeas{\ctxp{\dera\evar\var \tm}}
	 &=_{\reflemmaeqp{expmeas-auxone}{der-meas}} \\
		&&\ltmeas{\evartwo}\cdot(\potmul{\ctxp{\dera\evartwo\var \tm}}\evar +1) + \ltmeas{\ctxp{\dera\evartwo\var \tm}}
		 & =\\
		&&\ltmeas{\cuta{\evartwo} \evar \ctxp{\dera\evartwo\var \tm}}
	\end{array}$\end{center}

%%%%%%%%
\item $\rtobder$, that is, $\cuta{\bang\tmtwo}\evar\ctxp{\dera\evar\var \tm} \rtobder  \cuta{\bang\tmtwo}\evar\ctxp{\lctxp{\cuta\val\var \tm}}$ with $\tmtwo = \lctxp\val$. Then:
\begin{center}$\begin{array}{lllll}
	\ltmeas{\cuta{\bang\tmtwo} \evar \ctxp{\dera\evar\var \tm}}
	& =  \\
	
	\ltmeas{\bang\tmtwo}\cdot(\potmul{\ctxp{\dera\evar\var \tm}}\evar +1) + \ltmeas{\ctxp{\dera\evar\var \tm}}
	 & >_{\reflemmaeqp{expmeas-auxone}{der-meas}}  \\
	\ltmeas{\bang\tmtwo}\cdot(\potmul{\ctxp{\dera\evar\var \tm}}\evar +1) + \ltmeas{\ctxp{ \tm}}
	 & =_{\reflemmaeqp{expmeas-auxone}{der-pm}}  \\
	\ltmeas{\bang\tmtwo}\cdot(\potmul{\ctxp{ \tm}}\evar + (1+\potmul\tm\var)\cdot\potmul\ctx\ctxhole +1) + \ltmeas{\ctxp{ \tm}}
	 & =\\
	\ltmeas{\bang\tmtwo}\cdot(\potmul{\ctxp{ \tm}}\evar + 1) + \ltmeas{\bang\tmtwo}\cdot(1+\potmul\tm\var)\cdot\potmul\ctx\ctxhole + \ltmeas{\ctxp{ \tm}}
	 & =\\
	\ltmeas{\bang\tmtwo}\cdot(\potmul{\ctxp{ \tm}}\evar + 1) + \ltmeas{\tmtwo}\cdot(1+\potmul\tm\var)\cdot\potmul\ctx\ctxhole + \ltmeas{\ctxp{ \tm}}
	 & =\\
	\ltmeas{\bang\tmtwo}\cdot(\potmul{\ctxp{ \tm}}\evar + 1) + \underbrace{\ltmeas{\lctxp\val}\cdot(1+\potmul\tm\var)\cdot\potmul\ctx\ctxhole + \ltmeas{\ctxp{ \tm}}}_{=_{\reflemmaeqp{expmeas-auxtwo}{four}} \, \ltmeas{\ctxp{\lctxp{\cuta\val\var \tm}}}}

		 & =\\
	\ltmeas{\bang\tmtwo}\cdot(\potmul{\ctxp{\tm}}\evar +1) + \ltmeas{\ctxp{\lctxp{\cuta\val\var \tm}}}
		 & =_{\reflemmaeqp{expmeas-auxone}{two}}\\
	\ltmeas{\bang\tmtwo}\cdot(\potmul{\ctxp{\lctxp{\cuta\val\var \tm}}}\evar +1) + \ltmeas{\ctxp{\lctxp{\cuta\val\var \tm}}}
		 & =\\
		\ltmeas{\cuta{\bang\tmtwo}\evar\ctxp{\lctxp{\cuta\val\var \tm}}}
	\end{array}$\end{center}
	%%%%%%%%%%%%
\item $\rtow$, that is, $\cuta{\exval}\evar \tm \rtow  \tm$ with $\evar\notin\fv\tm$. Then:
\begin{center}$
	\ltmeas{\cuta{\exval} \evar \tm}
	\ = \
	\ltmeas{\exval}\cdot(\potmul{\tm}\evar +1) + \ltmeas{\tm}
	\  =_{\reflemmaeqp{potmul-prop}{one}} \
	 \ltmeas{\exval} + \ltmeas{\tm}
	 \ >_{\reflemmaeqp{potmul-prop}{four}} \   
	\ltmeas{\tm}. \hfill\qedhere
	$\end{center}
\end{itemize}
\end{proof}
}

\begin{thm}[Local termination]
\label{thm:local-termination}
Let $a \in \set{
\msym, \ems,\ess,\msnlolli, \ssnlolli}$. Then $\Rew{a}$ is strongly normalizing.
\end{thm}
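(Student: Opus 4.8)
The plan is to obtain all five strong normalization statements from the measure $\ltmeas\tm$ of \refprop{meas-decreases}, which strictly decreases along \emph{every} rewriting rule except $\tololli$, together with full composition (\refprop{ms-simulates-ss}) to reduce the small-step cases to the micro-step ones, plus a small extra argument to cover the one rule that escapes the measure.

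For $a\in\set{\ems,\msnlolli}$ the statement is immediate: every rule of $\toems$ and every rule of $\tomsnlolli = \toaxmone\cup\toaxmtwo\cup\totens\cup\toems$ lies in $\set{\axmone,\axmtwo,\tens,\axeone,\axetwo,\bdersym,\wsym}$, so by \refprop{meas-decreases} each step strictly decreases the natural number $\ltmeas\tm$; hence $\toems$ and $\tomsnlolli$ are strongly normalizing. For $a\in\set{\ess,\ssnlolli}$ I would then invoke full composition: \refprop{ms-simulates-ss} gives $\toess\subseteq\toems^{+}$, and therefore $\tossnlolli = \toaxmone\cup\toaxmtwo\cup\totens\cup\toess \subseteq \tomsnlolli^{+}$; an infinite $\toess$-sequence (resp.\ $\tossnlolli$-sequence) would unfold into an infinite $\toems$-sequence (resp.\ $\tomsnlolli$-sequence), contradicting the previous paragraph.

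The only substantial case is $a=\msym$, that is, strong normalization of $\tom = \toaxmone\cup\toaxmtwo\cup\totens\cup\tololli$, and the obstacle is exactly $\tololli$, the sole cut-\emph{creating} rule and, correspondingly, the sole rule along which $\ltmeas\tm$ need not decrease. I would split $\tom$ as $R\cup\tololli$ with $R := \toaxmone\cup\toaxmtwo\cup\totens$, which is strongly normalizing by \refprop{meas-decreases}, and control $\tololli$ by a secondary measure: let $\mu(\tm)\in\nat$ be the number of subtraction constructors occurring in $\tm$, which is additive under plugging. Inspecting the four root rules one checks that $\rtoaxmone$, $\rtoaxmtwo$ and $\rtotens$ leave $\mu$ unchanged --- no subtraction is created and, being in the multiplicative fragment, nothing is duplicated --- while $\rtololli$ destroys exactly one subtraction, namely the one of the reducing $\suba\mvar\val\var\tm$, which is replaced by two cuts, and preserves all the others, so $\mu$ strictly decreases; by additivity the same holds for the contextual closures. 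Hence along any $\tom$-sequence $\mu$ never increases and strictly decreases at each $\tololli$-step, so only finitely many $\tololli$-steps can occur; an infinite $\tom$-sequence would thus have an infinite $R$-suffix, contradicting strong normalization of $R$. This proves $\tom$ strongly normalizing and completes the theorem.

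I expect the bookkeeping on $\mu$ for the four multiplicative root rules --- invariance under the three non-creating ones and strict decrease under $\rtololli$ --- to be the only point that needs genuine checking; everything else is a direct reuse of \refprop{meas-decreases} and \refprop{ms-simulates-ss}. One could equally take the number of $\lolli$-abstraction values as the secondary measure, but the subtraction count is slightly more direct, since $\rtololli$ visibly erases one.
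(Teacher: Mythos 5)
Your proof is correct, and for four of the five cases it coincides exactly with the paper's: strong normalization of $\toems$ and $\tomsnlolli$ follows from the strict decrease of $\ltmeas\tm$ (\refprop{meas-decreases}) together with its positivity, and the small-step cases $\toess$ and $\tossnlolli$ are reduced to them via full composition (\refprop{ms-simulates-ss}). The only divergence is the case $a = \msym$. You handle it with a two-level argument: $\toaxmone\cup\toaxmtwo\cup\totens$ is strongly normalizing by the measure, and the number of subtraction constructors is invariant under those three rules while strictly decreasing under $\tololli$, so any $\tom$-sequence contains finitely many $\tololli$-steps and thus terminates. This is sound --- your bookkeeping on the subtraction count is right, since the multiplicative rules never duplicate sub-terms and $\rtololli$ consumes exactly one subtraction --- but it is heavier than needed. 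The paper disposes of $\tom$ in one line: every multiplicative root rule, \emph{including} $\rtololli$, removes strictly more constructors than it creates (each erases a cut together with at least one further constructor and introduces at most two cuts, with no duplication anywhere in the multiplicative fragment), so plain term size is already a termination measure for the whole of $\tom$, and neither $\ltmeas\tm$ nor a secondary measure is required there. Your route is a legitimate alternative and correctly isolates $\tololli$ as the rule escaping \refprop{meas-decreases}, but the simpler observation that size itself decreases makes the auxiliary subtraction count unnecessary.
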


\begin{proof}
For $\toems$ and $\tomsnlolli$ it is an immediate consequence of the fact that the measure decreases (\refprop{meas-decreases}) and that it is always positive (\reflemmap{potmul-prop}{four}). The result is extended to $\toess$ and $\tossnlolli$ by full composition (\refprop{ms-simulates-ss}). For $\tom$, the statement is obvious, as at each step $\tom$ decreases the number of constructors.
\end{proof}

The literature contains other partitions of linear logic cut elimination in strongly normalizing reductions: Danos \cite{Danos:Thesis:90} and Joinet \cite{joinetthesis} split it into axiom and non-axiom rewriting rules, Pagani and Tortora de Falco \cite{DBLP:journals/tcs/PaganiF10} into \emph{structural} (a subset of exponential) and \emph{logical} (\ie non-structural). We are interested in the lolli/non-lolli partition because it mimics what happens in $\l$-calculi with ESs, and also because of the future work described in \emph{the next step} paragraph of the conclusions, for which the mentioned alternative partitions in the literature would not work.

\paragraph{Next} Three independent topics follow: confluence (\refsect{confluence}), the good strategy (\refsect{strategy}), and strong normalization  (Sections \ref{sect:PSN} and \ref{sect:SN}). They can be read in any order. First, however, we need to develop some technical tools for dealing with contexts in proofs.
% !TEX root = main.tex
\section{Some Technicalities}
\label{sect:technicalities}
This section develops definitions and technical tools that are used in the proofs of the following sections to manage contexts, namely the outside-in order on contexts, 
contexts with \emph{two} holes (called \emph{double contexts}), and a technical but important \emph{deformation lemma}. At a first (or even second) reading, the reader can skip this section.

\begin{defi}[Outside-in context order, disjoint contexts]
We define the partial \emph{outside-in} order $\outer$ over contexts as follows:
\begin{center}
\begin{tabular}{cccccccc}
\AxiomC{}
 	\UnaryInfC{$\ctxhole \outer \ctx$} 	
	\DisplayProof	
&&&
\AxiomC{$\ctx \outer\ctxtwo$}
 	\UnaryInfC{$\ctxthreep\ctx \outer \ctxthreep\ctxtwo$} 	
	\DisplayProof	
\end{tabular}
\end{center}
And say that $\ctx$ is \emph{outer} than $\ctxtwo$ if $\ctx\outer \ctxtwo$. If $\ctx\not\outer \ctxtwo$ and 
$\ctxtwo\not\outer \ctx$ we say that $\ctx$ and $\ctxtwo$ are \emph{disjoint}, and write $\ctx\parallel \ctxtwo$.
\end{defi}

\paragraph{Double Contexts.} Double contexts shall be used to compare two contexts on the same term. They have as base 
cases binary constructors (that is, tensor pairs, cuts, and subtractions) having contexts replacing their subterms, and 
as inductive cases they are simply closed by an ordinary context. 
\begin{defi}[Double contexts]
Double contexts $\cctx$ are defined by the following grammar.
\[\begin{array}{l\colspace\colspace llll}
\textsc{Double contexts} & \cctx& \defeq & \pair\ctx\ctxtwo \mid \suba\mvar\ctx\var\ctxtwo \mid \cuta\ctx\var\ctxtwo 
\mid \ctxp\cctx
\end{array}\]
\end{defi}
Some easy facts about double contexts. 
\begin{itemize}
\item \emph{Plugging}: the plugging $\cctxp{\tm,\tmtwo}$ of two terms $\tm$ and $\tmtwo$ into a double context $\cctx$ 
is defined as expected and gives a term. The two ways of plugging one term $\cctxp{\tm,\ctxhole}$ and 
$\cctxp{\ctxhole,\tmtwo}$ into a double context give instead a context. We often use $\cctxp{\tm,\cdot}$ and 
$\cctxp{\cdot,\tmtwo}$ as compact notations for $\cctxp{\tm,\ctxhole}$ and $\cctxp{\ctxhole,\tmtwo}$.

\item \emph{Pairs of disjoint positions and double contexts}: every pair of positions $\ctxp\tm=\ctxtwop\tmtwo$ which 
are disjoint, that is, such that $\ctx \parallel\ctxtwo$, gives rise to a double context $\cctx_{\ctx,\ctxtwo}$ such 
that $\cctx_{\ctx,\ctxtwo}\ctxholep{\cdot,\tmtwo} = \ctx$ and $\cctx_{\ctx,\ctxtwo}\ctxholep{\tm,\cdot} = \ctxtwo$. 
Conversely, every double context $\cctx$ and every pair of terms $\tm$ and $\tmtwo$ gives rise to two context 
$\cctx_{l}^{\tm,\tmtwo} \defeq \cctxp{\cdot, \tmtwo}$ and $\ctx_{r}^{\tm,\tmtwo} \defeq \cctxp{\tm,\cdot}$. Note that 
the two constructions are inverse of each other, that is, given $\cctx$, we have 
$\cctx_{\cctx_{l}^{\tm,\tmtwo},\ctx_{r}^{\tm,\tmtwo} } = \cctx$, and, conversely, given $\ctxp\tm=\ctxtwop\tmtwo$ we 
have $(\cctx_{\ctx,\ctxtwo})_{l}^{\tm,\tmtwo} = \ctx$ and $(\cctx_{\ctx,\ctxtwo})_{r}^{\tm,\tmtwo} = \ctxtwo$.
\end{itemize}

\begin{lem}
\label{l:double-ctx-comp}
$\ctxp{\cctxp{\tm,\cdot}} = \ctxp{\cctx}\ctxholep{\tm,\cdot}$ and $\ctxp{\cctxp{\cdot,\tm}} = 
\ctxp{\cctx}\ctxholep{\cdot,\tm}$.
\end{lem}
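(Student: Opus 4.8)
The plan is to prove Lemma~\ref{l:double-ctx-comp} by a straightforward structural induction on the general context $\ctx$. The two equalities are symmetric, so I would carry out the argument for $\ctxp{\cctxp{\tm,\cdot}} = \ctxp{\cctx}\ctxholep{\tm,\cdot}$ and only remark that the other is identical with the roles of the two holes swapped. The point of the statement is purely bookkeeping: a double context $\cctx$ plugged inside an ordinary context $\ctx$ behaves, with respect to filling one of its two holes, exactly as the compound context $\ctxp{\cctx}$ does. So the real content is that the (slightly subtle, split-shape-preserving) definition of plugging from Definition~\ref{def:plugging} is compatible with nesting, which is exactly what one expects.

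First I would set up the induction. The base case is $\ctx = \ctxhole$: then $\ctxp{\cctxp{\tm,\cdot}} = \cctxp{\tm,\cdot}$ and $\ctxp{\cctx} = \cctx$, so $\ctxp{\cctx}\ctxholep{\tm,\cdot} = \cctxp{\tm,\cdot}$, and the two sides coincide. For the inductive step I would go through the grammar of general contexts $\ctx$ from Figure~\ref{fig:grammars}, i.e.\ the value-context cases ($\pair\ctx\tmtwo$, $\pair\tm\ctx$, $\la\var\ctx$, $\bang\ctx$), the left-constructor cases collected in $\lctx$ (dereliction, par, and the two genuinely split cases $\cuta\vctx\var\tm$ and $\suba\mvar\vctx\var\tm$), and the closure under left contexts $\lctxp\ctx$. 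In each of the ``easy'' cases plugging commutes with the head constructor definitionally, so e.g.\ for $\ctx = \la\var{\ctx'}$ we get $\ctxp{\cctxp{\tm,\cdot}} = \la\var{\ctx'\ctxholep{\cctxp{\tm,\cdot}}}$, which by the induction hypothesis on $\ctx'$ equals $\la\var{(\ctx'\ctxholep\cctx\ctxholep{\tm,\cdot})} = (\la\var{\ctx'\ctxholep\cctx})\ctxholep{\tm,\cdot} = \ctxp{\cctx}\ctxholep{\tm,\cdot}$; all other non-split constructors are handled the same way. The left-context prefix case $\ctx = \lctxp{\ctx'}$ reduces to the same reasoning, peeling the left constructors one at a time.

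The one step that needs genuine care — and the main obstacle, such as it is — is the split cases, where the hole of $\ctx$ sits in the left (value) position of a cut or a subtraction, i.e.\ $\ctx = \cuta\vctx\var\tmtwo$ or $\ctx = \suba\mvar\vctx\var\tmtwo$, because there plugging is defined by a case distinction (Definition~\ref{def:plugging}): if $\vctx = \ctxhole$ and what we plug splits as $\lctx'\ctxholep{\val}$, the left context $\lctx'$ migrates outward; otherwise plugging is homomorphic. The key observation is that the term being plugged here, namely $\cctxp{\tm,\cdot}$, is \emph{a context, not a term}, so the second ``otherwise'' branch always applies and the migration branch never fires — and the same is true on the right-hand side $\ctxp{\cctx}\ctxholep{\tm,\cdot}$, because $\ctxp{\cctx}$ has a context, not a value, in the distinguished subterm position whenever $\ctx$ is nonempty there. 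Hence on both sides plugging is homomorphic in this case too, and the induction hypothesis on $\vctx$ closes it. (One should double-check the degenerate situation $\ctx = \cuta\ctxhole\var\tmtwo$: there $\ctxp{\cctxp{\tm,\cdot}} = \cuta\ctxhole\var\tmtwo\ctxholep{\cctxp{\tm,\cdot}}$ which, since $\cctxp{\tm,\cdot}$ is a context, is $\cuta{\cctxp{\tm,\cdot}}\var\tmtwo$, matching $\ctxp\cctx\ctxholep{\tm,\cdot} = \cuta{\cctx}\var\tmtwo\ctxholep{\tm,\cdot} = \cuta{\cctxp{\tm,\cdot}}\var\tmtwo$.) I would therefore state in the proof that the verification is a routine induction on $\ctx$, spelling out only the split-context case, and note that the second equality is symmetric.
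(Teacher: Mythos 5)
Your proof is correct and follows exactly the route the paper takes: the paper's entire proof of Lemma~\ref{l:double-ctx-comp} is ``By induction on $\ctx$'', and your case analysis (with the only non-homomorphic cases being the split cut/subtraction positions of Definition~\ref{def:plugging}) is just that induction spelled out. Nothing to add.
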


\begin{proof}
By induction on $\ctx$.
\end{proof}

\paragraph{Deformation Lemma} We need a set of \emph{deformation lemmas}, collected together in 
\reflemma{aux-props-loc-confl-strong-comm}, that say what happens to a variable occurrence under the action 
of a rewriting step: for instance, given a step $\mctxfp\mvar \toms \tmtwo$, one lemma states that $\tmtwo$ has shape $\mctxtwofp\mvar$ unless the step reduces a cut containing $\mvar$. The idea is that then $\mctxtwo$ is the residual of $\mctx$, but the statement(s) avoid using any form of residuals of steps, sub-terms, or constructors. The residual aspect is captured by specifying the preservation modulo a parameter---in 
the case mentioned above, this is obtained stating that there is a step $\mctxfp\mval \Rew{a} 
\mctxtwofp\mval$ for every $\mval$.

The deformation lemmas shall  be used in the following sections, namely in the proofs of local confluence (where preservation modulo a parameter is exploited), for the diamond property of the good strategy (for which the deformation lemmas shall have to be strengthened), and for the root cut expansion property (\refprop{root-sn-expansion}) at work in both the preservation of untyped strong normalization and typed normalization. 
%The \emph{moreover} part of each statement, however, is used only here for the good diamond, while the case of $\to = \toss$ shall be used in the proof of root cut expansion and only for some of the deformation lemmas.

\begin{lem}[Deformation properties]
\label{l:aux-props-loc-confl-strong-comm} % \reflemmap{aux-props-loc-confl-strong-comm}{five}
Let $\to\in\set{\toss,\toms}$. 
\begin{enumerate}
%%%%%%
 %%%%%%%%%%
 \item \label{p:aux-props-loc-confl-strong-comm-two}
Let $\ctx: \tm = \mctxfp\mvar \Rew{a} \tmtwo$ and $\Rew{a}$ be a $\to$ step. Then:\begin{enumerate}
\item \label{p:aux-props-loc-confl-strong-comm-two-one}
either $\tmtwo = \mctxtwofp{\mvar}$ for some 
$\mctxtwo$ such that $\mvar\notin\fv\mctxtwo$ and there is a step $\ctxtwo_{\mval}: \mctxfp\mval \Rew{a} 
\mctxtwofp\mval$ for every $\mval$,
\item  or $\mctx = \mctxtwop{\cuta\ctxhole\mvartwo\tmthree}$ for some 
$\mctxtwo$ and $\ctx$ reduces the cut on $\mvartwo$.
\end{enumerate}
%%%%%%%%%
\item \label{p:aux-props-loc-confl-strong-comm-axmtwo}
Let $\ctx: \tm = \mctxfp{\tm_\mvar} \Rew{a} \tmtwo$ where $\tm_\mvar$ is a par or a subtraction of conclusion $\mvar$, 
$\mctxtwo$ does not capture $\mvartwo$, and $\Rew{a}$ be a $\to$ step. Then  $\tmtwo = \mctxtwofp{\tmtwo_\mvar}$ where 
$\tmtwo_\mvar$ is an occurrence of $\mvar$ of the same kind of $\tm_\mvar$,  
$\mctxtwo$ does not capture $\mvar$ and $\mvar\notin\fv\mctxtwo$, and there is a step $\ctxtwo_{\mvartwo}: 
\mctxfp{\cutsub\mvartwo\mvar \tm_\mvar} \Rew{a} \mctxtwofp{\cutsub\mvartwo\mvar \tmtwo_\mvar}$ for every fresh 
multiplicative variable $\mvartwo$.
%%%%%%%%%
 \item \label{p:aux-props-loc-confl-strong-comm-three} If $\ctx: \mctxp{\para\mvar\var\vartwo \tm} \Rew{a} \tmtwo$ is a 
$\to$-step and $\mctx$ does not 
capture $\mvar$ then\\ $\tmtwo = \mctxtwop{\para\mvar\var\vartwo \tm'}$ with $\mctxtwo$ not capturing $\mvar$ and 
such that there is a step $\ctxtwo_{\ctxthree}: \mctxp{\ctxthreep\tm} \Rew{a} \mctxtwop{\ctxthreep{\tm'}}$ for every 
context $\ctxthree$ capturing no more than $\var$ and $\vartwo$ in $\tm$ and such that $\mctx$ does not capture 
variables of $\ctxthree$.
 %%%%%%%%%%%
 \item \label{p:aux-props-loc-confl-strong-comm-four} If $\ctx: \mctxp{\suba\mvar\val\var \tmfour} \Rew{a} \tmtwo$ is a 
$\to$-step and $\mctx$ does not 
capture $\mvar$ then either 
\begin{itemize}
\item $\tmtwo = \mctxtwop{\suba\mvar\val\var \tmfour'}$ with 
$\ctxtwo_{\ctxthree}:\mctxp{\ctxthreep\tmfour} \Rew{a} 
\mctxtwop{\ctxthreep{\tmfour'}}$ for every context $\ctxthree$ capturing no more than $\var$ in $\tmfour$  and such 
that 
$\mctx$ does not capture in $\ctxthree$ more variables than those in $\fv\val$, or 
\item $\tmtwo = \mctxtwop{\suba\mvar\valtwo\var \tmfour}$ with $\ctxtwo_{\ctxthree}:\mctxp{\ctxthreep\val} \Rew{a} 
\mctxtwop{\ctxthreep{\valtwo}}$ for every context $\ctxthree$ not capturing variables of $\val$ and such that $\mctx$ 
does not capture in $\ctxthree$ more variables than those in $\fv\tmfour \setminus\set{\var}$.
\end{itemize}
In both cases $\mctxtwo$ does not capture $\mvar$.
%%%%%%%
\item \label{p:aux-props-loc-confl-strong-comm-five}
Let $\ctx: \tm = \ctxtwofp\evar \Rew{a} \tmtwo$ and $\Rew{a}$  be a $\toms$ step.  Then:
\begin{enumerate}
\item \label{p:aux-props-loc-confl-strong-comm-five-a}
either $\tmtwo = \ctxtwo'\ctxholefp{\evar}$ for some 
$\ctxthree$ such that there is a step $\ctx^\bullet_{\exval}:\ctxtwofp\exval \Rew{a} \ctxtwo'\ctxholefp\exval$ for 
every 
$\exval$,
\item  or $\ctxtwo = \ctxthreep{\cuta\ctxfour\evartwo\tmthree}$ for some 
$\ctxthree$ and $\ctxfour$, and $\ctx$ reduces a redex where the  active cut is the one on $\evartwo$.
\end{enumerate}
%%%%%%%%%
\item \label{p:aux-props-loc-confl-strong-comm-six}
Let $\ctx: \tm = \ctxtwop{\dera\evar\var\tmthree} \Rew{a} \tmtwo$ with $\ctxtwo$ not capturing $\evar$ and $\Rew{a}$  
be 
a $\toms$ step.  Then:
\begin{enumerate}
\item \label{p:aux-props-loc-confl-strong-comm-six-a}
either $\tmtwo = \ctxtwo'\ctxholep{\dera\evar\var\tmthree'}$ for some 
$\ctxtwo'$ such that there is a step $\ctx^\bullet_{\ctxthree}:\ctxtwop{\ctxthreep{\tmthree}} \Rew{a} 
\ctxtwo'\ctxholep{\ctxthreep{\tmthree'}}$ for every $\ctxthree$ capturing no more than $\var$ in $\tmthree$ and 
$\tmthree'$,
\item  or $\ctxtwo = \ctxthreep{\cuta\ctxfour\evartwo\tmthree}$ for some $\ctxthree$ and
$\ctxfour$, and $\ctx$ reduces a redex given by the cut on $\evartwo$.
\end{enumerate}
\end{enumerate}
\end{lem}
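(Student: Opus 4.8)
The plan is to prove all six items of \reflemma{aux-props-loc-confl-strong-comm} uniformly, by induction on the derivation of the step $\tm \Rew{a} \tmtwo$: writing $\tm = \ctxthreep{\tmfour}$ with $\tmfour \rootRew{a} \tmfour'$ and $\tmtwo = \ctxthreep{\tmfour'}$, we proceed by induction on the context $\ctxthree$, while keeping track of the position of the distinguished occurrence (the context $\mctx$, $\ctxtwo$, etc.\ isolating the sub-term $\mvar$, $\para\mvar\var\vartwo\tm$, $\suba\mvar\val\var\tm$, or $\dera\evar\var\tm$ named in the statement). The induction is organized by comparing the two positions through the outside-in order $\outer$ and disjointness $\parallel$ just introduced: when the redex lies in a sub-term disjoint from the occurrence, or strictly inside the sub-term plugged at the occurrence's position (possible only for the par, subtraction, and dereliction items), the occurrence is untouched, and the conclusion follows from the inductive hypothesis after reassembling the shared surrounding constructor. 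The clauses asking for \emph{preservation modulo a parameter} --- the existence of a step $\mctxfp\mval \Rew{a} \mctxtwofp\mval$ for \emph{every} $\mval$ in item~1, and the analogous parametric steps in the other items --- are obtained by threading the same parameter through this induction, since no step performed away from the occurrence can inspect it.

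The substance of the argument is in the base case $\ctxthree = \ctxhole$, where the whole term is a redex, and, more generally, whenever the step's active constructor coincides with a binary constructor sitting above the occurrence. Here one does a case analysis on the root rule, and two features of the ESC make it manageable. First, cuts and subtractions are \emph{split} --- their left premise is a value --- and, in items~2, 3, 4, and~6, the distinguished occurrence is assumed not to be captured by its framing context, hence cannot be the conclusion of a cut inside the term; this forces the occurrence to sit in a sub-term that the root rule merely copies, erases, relocates, or leaves in place. The only way the step can affect the occurrence is when the reduced redex is a cut whose left value contains, or is equal to, the occurrence, which is exactly the escape branch stated in items~1, 5, and~6; in every other case one reads the new framing context off the shape of the reduct. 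Second, for the multiplicative items~1--4 structural linearity (\reflemma{proper-linearity}) guarantees that a multiplicative variable occurs exactly once, so it still occurs exactly once after the step, yielding the side conditions of the form $\mvar \notin \fv\mctxtwo$ and the uniqueness of the residual occurrence. The renaming bookkeeping of item~2 follows from stability of steps under renaming (\reflemma{deformations}), and the small-step root case ($\toess$) of items~1--4 is handled using the basic properties of meta-level substitution (\reflemma{bang-subs-prop}).

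As usual, the genuinely delicate cases are those of $\totens$, $\tololli$, and $\tobder$, together with the situation in which a cut constructor appears on the frame: these root rules split a value premise as $\lctxp\val$ and shuffle the resulting left context $\lctx$ around the reduct, so the framing context of the occurrence is transformed in a non-structural way, and the numerous non-capture side conditions (all the ``$\mctx$ does not capture \dots'' hypotheses) must be propagated with care through the on-the-fly $\alpha$-renamings implicit in our plugging, which must moreover preserve the split shape of cuts and subtractions. Recombining the nested contexts and double contexts that arise along the way uses \reflemma{double-ctx-comp}. The fully expanded cut cases are routine but long, so we only sketch them here and defer the details to the technical report \cite{DBLP:journals/corr/abs-lics}; the expectation is that this bookkeeping, rather than any conceptual difficulty, is the main obstacle.
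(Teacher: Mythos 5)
Your proposal matches the paper's proof, which is exactly an induction on the rewriting step for each point (equivalently, on the context closing the root rule), with a case analysis on where the redex sits relative to the distinguished occurrence, the escape branches arising precisely when the occurrence lies inside the value of the reduced cut, and the many details deferred to the technical report \cite{DBLP:journals/corr/abs-lics}. Your elaboration of the organizing devices (outside-in order, double contexts, structural linearity for the multiplicative items) is consistent with the toolkit the paper sets up in the same section, so there is nothing to add.
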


\begin{proof}
The proof is by induction on the rewriting step in each point, and is a tedious check. In \cite{DBLP:journals/corr/abs-lics}, we give all the (many!) details for the first deformation property. The proofs of the others are minor variations.
\end{proof}

% \withproofs{
% \input{./\proofspath/strategy/deformations}
% }

% !TEX root = main.tex

\section{Untyped Confluence}
\label{sect:confluence}
Here we prove confluence for the untyped ESC, using an elegant technique based on local diagrams and local termination. The technique is the Hindley-Rosen method. In our case, it amounts to prove that the multiplicative and exponential rules $\tom$ and $\toems$ are confluent separately, proved by local termination and Newman lemma, and commute, proved by local termination and Hindley's strong commutation. Confluence then follows by Hindley-Rosen lemma, for which the union of confluent and commuting reductions is confluent.

The Hindley-Rosen method is a modular technique often used for confluence of extensions of the $\l$-calculus, for instance in \cite{ArrighiD17,DBLP:conf/lics/FaggianR19,DBLP:conf/csl/Saurin08,DBLP:conf/fossacs/CarraroG14,Revesz92,DBLP:journals/lmcs/BucciarelliKR21,AriolaFMOW95,DBLP:conf/flops/AccattoliP12}. It is also used for untyped proof nets by Pagani and Tortora de Falco \cite{DBLP:journals/tcs/PaganiF10}. %Usually, one pairs a possibly divergent reduction ($\beta$) with a strongly normalizing one (the extension). What is slightly unusual here is that we pair two strongly normalizing reductions ($\tom$ and $\toems/\toess$), the union of which is possibly divergent. Similar uses of the method are in \cite{AriolaFMOW95,DBLP:journals/tcs/PaganiF10,DBLP:conf/flops/AccattoliP12}. 
Confluence for untyped proof nets is also proved by Danos \cite{Danos:Thesis:90} and Tranquilli \cite{DBLP:conf/csl/Tranquilli09} via finite developments. Essentially, local termination internalizes finite developments.

\paragraph{The Glitch} For the untyped ESC there is a slight flaw due to clashes. The following local diagram, indeed, can be closed only if $\cuta\val\mvar$ is not a clashing cut (precisely, when $\val$ is not an exponential value nor an abstraction), or with cut equivalence $\cuteq$:
		\begin{center}
\begin{tikzpicture}[ocenter]
		\node at (0,0)[align = center](source){\normalsize$\cuta{\val}\mvar \mctxtwop{\cuta\mvar\mvartwo \mctxp{\para\mvartwo\var\vartwo \tm} }$};
		\node at (source.center)[right = 120pt](source-right){\normalsize $\mctxtwop{\cuta\val\mvartwo \mctxp{\para\mvartwo\var\vartwo \tm}}$};	
		\node at (source.center)[below = 25pt](source-down){\normalsize $\cuta{\val}\mvar \mctxtwop{\mctxp{\para\mvar\var\vartwo \tm}}  $};

		\draw[->](source) to node[above] {\scriptsize $\axmone $} (source-right);
		\draw[->](source) to node[left] {\scriptsize $\axmtwo $}(source-down);
	\end{tikzpicture}
\end{center}
A second similarly problematic diagram is obtained by replacing the par with a subtraction.

Now, if there is a clash, the fact that confluence holds only up
to $\cuteq$ is irrelevant: the clash is a bigger issue. Moreover:
\begin{enumerate}
\item \emph{(Recursive) types remove clashes:} clashes are ruled out by our typing (\reflemma{clashfree-implies-cutfree}), but also by recursive types such as those used for typing the untyped $\l$-calculus, or even the more general weak recursive typing considered by Ehrhard and Regnier in \cite{DBLP:journals/tcs/EhrhardR06}. 
\item \emph{Proof nets are not better}: because of weakenings, IMELL proof nets need \emph{jumps} and jump \emph{rewiring rules}. Confluence should then be proved \emph{up to jump rewiring}, which is analogous to confluence up to $\cuteq$. 
\item \emph{Good steps are glitch-free}: anticipating from the next section, the glitch does not affect \emph{good steps}, as the $\axmone$ steps in the diagram above is \emph{bad}.
\end{enumerate}

\paragraph{The Proof} We first prove the local properties.

\begin{prop}[Local confluence]
\label{prop:loc-confluence}
\hfill
\begin{enumerate}
\item 
\emph{Multiplicative clash-free diamond}: 
$\tom$ is diamond on clash-free terms.
\item \emph{Exponential local confluence}: $\toess$ and $\toems$ are both
locally confluent.
\end{enumerate}
\end{prop}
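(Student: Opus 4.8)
The plan is to establish both statements by a case analysis on the relative positions of the two redexes $\ctx_1:\tm\Rew{a}\tmtwo_1$ and $\ctx_2:\tm\Rew{b}\tmtwo_2$, the deformation lemmas (\reflemma{aux-props-loc-confl-strong-comm}) and double contexts (\refsect{technicalities}) being the main bookkeeping tools; local termination is not used here, it enters only later (via Newman) to turn these local diagrams into full confluence. Every non-erasing step reduces a cut $\cuta\val\var$ together with an occurrence of $\var$ in its body, and every erasing or small-step reduces a cut with its closing context, so two distinct redexes always have their cuts either at disjoint positions, $\ctx_1\parallel\ctx_2$, or one of them nested in the body or in the left value of the other. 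I would organize the whole proof around this \emph{disjoint}/\emph{nested} dichotomy, using the double context $\cctx_{\ctx_1,\ctx_2}$ in the disjoint case and the ``preservation modulo a parameter'' clauses of \reflemma{aux-props-loc-confl-strong-comm} in the nested case.

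For point~1, the first observation is that no multiplicative rule duplicates or erases a subterm able to host an independent redex: $\toaxmone$ and $\toaxmtwo$ only relocate, resp.\ rename, the \emph{unique} occurrence of the cut variable (unique by structural linearity, \reflemma{proper-linearity}), while $\totens$ and $\tololli$ only split and rearrange their material. Hence each redex has a single residual and the diagram closes by a \emph{single} step on each side, as diamond requires. In the disjoint case one fires the residual of the other redex on each branch; in the nested case, firing the outer step relocates --- and for $\totens$/$\tololli$ splits --- the block carrying the inner redex while keeping it intact, and firing the inner step leaves the outer cut and its target occurrence unchanged, so the two steps commute. The only delicate subcase is the \emph{glitch} discussed just above: $\ctx_1$ is a $\toaxmone$ step whose target occurrence is the left value --- a bare multiplicative variable --- of the cut reduced by the $\toaxmtwo$ step $\ctx_2$; firing $\ctx_1$ first turns that left value into $\val$, and the square closes in one step per side exactly when the resulting cut is not clashing against the par or subtraction it sits on. Clash-freeness of $\tm$ rules this out, for otherwise $\tmtwo_1$ would contain a clash. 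The harmless overlap where $\ctx_1$ and $\ctx_2$ are the two readings of the same cut (the $\toaxmone$/$\toaxmtwo$ ambiguity) just gives $\tmtwo_1=\tmtwo_2$.

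For point~2 I would rerun the same position analysis for $\toems$ and for $\toess$, the difference being that $\tobder$, $\toaxeone$, and meta-level substitution in $\toess$ may duplicate or erase the block containing the other redex, so the closing diagrams are by $\Rewn{a}$ rather than by a single step. In the disjoint case: if the outer step makes $h$ copies of the block carrying the inner redex one fires $h$ residual inner steps, and if it erases that block --- a $\tow$ step whose left value contains the inner redex --- one fires zero steps on that branch. The nested cases follow from the deformation lemmas together with the basic properties of meta-level substitution (\reflemma{bang-subs-prop}), crucially the substitution-composition identity $\cutsub{\exval}\evartwo \cutsub{\exvaltwo}\evar \tm = \cutsub{\cutsub{\exval}\evartwo\exvaltwo}\evar \cutsub{\exval}\evartwo\tm$, and from the stability of steps under substitution (\reflemma{substitutivity-of-red}) for the $\toess$ branch. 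No glitch occurs here: the exponential axiom rules already duplicate LSC-style instead of renaming-substituting, and pairs of rules with incompatible side conditions --- e.g.\ $\toaxeone$ and $\tow$ on the same cut --- cannot both fire.

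The main obstacle is bureaucratic: the long enumeration of nested overlaps and, for each, tracking through the deformation lemmas where the residual of the inner redex ends up after the outer step relocates, splits, duplicates, or erases the surrounding block, and verifying that the ``modulo a parameter'' formulation is strong enough to synthesize the missing step on the opposite branch. The one genuinely non-routine point --- and the reason the first statement is phrased on clash-free terms --- is isolating the glitch and recognizing that clash-freeness is precisely what is needed to close it.
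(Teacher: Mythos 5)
Your proposal is correct and follows essentially the same route as the paper: a case analysis on the relative positions of the two redexes (disjoint positions handled via double contexts, nested ones via the ``preservation modulo a parameter'' clauses of the deformation lemmas and the substitution lemmas \ref{l:bang-subs-prop} and \ref{l:substitutivity-of-red}), with structural linearity giving single residuals for the multiplicative diamond and clash-freeness used exactly to close the $\toaxmone$/$\toaxmtwo$ glitch. Your observation that local termination plays no role at this stage and only enters later via Newman's lemma also matches the paper's organization.
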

\withproofs{
% !TEX root = ../../main.tex
\begin{proof}
\hfill
\begin{enumerate}
%%%%%%%%% Multiplicative diamond on clash-free terms
\item % !TEX root = ../../main.tex
The formal statement is: $\tm_{1} \lRew{\msym} \tm_{0} \tom \tm_{2}$ with $\tm_{1}\neq\tm_{2}$ implies that there exists $\tm_3$ such that $\tm_{1} \tom \tm_{3} \lRew{\msym} \tm_{2}$.  Let $a,b \in \set{\axmtwo,\axmone,\tens,\lolli}$, $\tm_{0} \Rew{a} \tm_{1}$, and $\tm_{0} \Rew{b} \tm_{2}$. The proof is by induction on $\tm_{0} \Rew{a} \tm_{1}$, that is, by induction on the context $\ctx$ closing the root rule $\rootRew{a}$, and case analysis of $\tm_{0} \Rew{b} \tm_{2}$. Here we show only the case for $\rtoaxmone$, which is the most interesting one, requiring the deformation lemma and allowing us to discuss the case of clashes. The other cases are in \cite{DBLP:journals/corr/abs-lics}.

We have $\tm_{0} = \cuta{\mval}\mvar\mctxfp\mvar  \rtoaxmone  \mctxfp\mval = 
\tm_{1}$. 
		If $\Rew{b}$ takes place entirely in $\mval$ then the diagram closes in one step on both 
sides. 
	If $\Rew{b}$ takes place in $\mctxfp\mvar$, that is, $\mctxfp\mvar \Rew{b}\tmtwo$ then by deformation (\reflemmap{aux-props-loc-confl-strong-comm}{two}) there are two mutually exclusive cases:
	\begin{enumerate}
		\item $\tmtwo = \mctxtwofp{\mvar}$ for some 
$\mctxtwo$ such that $\mvar\notin\fv\mctxtwo$ and $\mctxfp\mval \Rew{b} \mctxtwofp\mval$ for every $\mval$. Then:
\begin{center}
\begin{tikzpicture}[ocenter]
		\node at (0,0)[align = center](source){\normalsize$ \cuta{\mval}\mvar\mctxfp\mvar $};
		\node at (source.east)[right = 35pt](source-right){\normalsize $\mctxfp\mval$};	
		\node at (source.center)[below = 20pt](source-down){\normalsize $\cuta{\mval}\mvar\mctxtwofp\mvar $};
		
		\node at (source-right|-source-down)(target){\normalsize $\mctxtwofp\mval$};
		
		\draw[|->](source) to node[above] {\scriptsize $\axmone $} (source-right);
		\draw[->](source) to node[left] {\scriptsize $b $}(source-down);

		\draw[->, dotted](source-right) to node[right] {\scriptsize $b $}(target);
		\draw[|->, dotted](source-down) to node[above] {\scriptsize $\axmone $} (target);
	\end{tikzpicture}
\end{center}

		\item $\mctx = \mctxtwop{\cuta\ctxhole\mvartwo\tmthree}$ and $\Rew{b}$ reduces the cut on $\mvartwo$, which then is a $\toaxmtwo$ step. Then $\tmthree = \mctxthreep{\tmthree_{\mvartwo}}$ and diagram depends on the occurrence $\tmthree_{\mvartwo}$ that interacts with the cut:
	\begin{itemize}
	\item \emph{Variable}, that is, $\tmthree_{\mvartwo} = \mvartwo$. Then:
\begin{center}
\begin{tikzpicture}[ocenter]
		\node at (0,0)[align = center](source){\normalsize$\cuta{\mval}\mvar \mctxtwop{\cuta\mvar\mvartwo \mctxthreep\mvartwo }$};
		\node at (source.east)[right = 35pt](source-right){\normalsize $\mctxtwop{\cuta\mval\mvartwo \mctxthreep\mvartwo }$};	
		\node at (source.center)[below = 20pt](source-down){\normalsize $\cuta{\mval}\mvar \mctxtwop{\mctxthreep\mvar}  $};
		
		\node at (source-right|-source-down)(target){\normalsize $\mctxtwop{\mctxthreep\mval} $};
		
		\draw[|->](source) to node[above] {\scriptsize $\axmone $} (source-right);
		\draw[->](source) to node[left] {\scriptsize $\axmtwo $}(source-down);

		\draw[->, dotted](source-right) to node[right] {\scriptsize $\axmone $}(target);
		\draw[|->, dotted](source-down) to node[above] {\scriptsize $\axmone $} (target);
	\end{tikzpicture}
\end{center}
Note a curious fact: the diagram turns a $\toaxmtwo$ step into a $\toaxmone$ step, that is, $\toaxmtwo$ and $\toaxmone$ do not commute. Note however that the $\toaxmtwo$ step can also be seen as a $\toaxmone$ step, as the two rules superpose in this case.

	\item \emph{Par}, that is, $\tmthree_{\mvartwo} = \para\mvartwo\var\vartwo \tmfour$, and the span is:
		\begin{center}
\begin{tikzpicture}[ocenter]
		\node at (0,0)[align = center](source){\normalsize$\cuta{\mval}\mvar \mctxtwop{\cuta\mvar\mvartwo \mctxthreep{\para\mvartwo\var\vartwo \tmfour} }$};
		\node at (source.east)[right = 35pt](source-right){\normalsize $\mctxtwop{\cuta\mval\mvartwo \mctxthreep{\para\mvartwo\var\vartwo \tmfour}}$};	
		\node at (source.center)[below = 20pt](source-down){\normalsize $\cuta{\mval}\mvar \mctxtwop{\mctxthreep{\para\mvar\var\vartwo \tmfour}}  $};

		\draw[|->](source) to node[above] {\scriptsize $\axmone $} (source-right);
		\draw[->](source) to node[left] {\scriptsize $\axmtwo $}(source-down);
	\end{tikzpicture}
\end{center}
By clash-freeness, $\mval = \pair{\lctxp\val}{\lctxtwop\valtwo}$, and the diagram is:
		\begin{center}
\begin{tikzpicture}[ocenter]
		\node at (0,0)[align = center](source){\normalsize$\cuta{\mval}\mvar \mctxtwop{\cuta\mvar\mvartwo \mctxthreep{\para\mvartwo\var\vartwo \tmfour} }$};
		\node at (source.east)[right = 35pt](source-right){\normalsize $\mctxtwop{\cuta\mval\mvartwo \mctxthreep{\para\mvartwo\var\vartwo \tmfour}}$};	
		\node at (source.center)[below = 20pt](source-down){\normalsize $\cuta{\mval}\mvar \mctxtwop{\mctxthreep{\para\mvar\var\vartwo \tmfour}}  $};
		
		\node at (source-right|-source-down)(target){\normalsize $\mctxtwop{\mctxthreep{ \lctxp{\cuta\val\var  \lctxtwop{\cuta\valtwo\vartwo \tmfour}}} }$};
		
		\draw[|->](source) to node[above] {\scriptsize $\axmone $} (source-right);
		\draw[->](source) to node[left] {\scriptsize $\axmtwo $}(source-down);

		\draw[->, dotted](source-right) to node[right] {\scriptsize $\tens $}(target);
		\draw[|->, dotted](source-down) to node[above] {\scriptsize $\tens $} (target);
	\end{tikzpicture}
\end{center}
Note that the diagram closes using $\totens$ rather than $\toaxmtwo$ or $\toaxmone$, showing that $\toaxmtwo$ and $\toaxmone$ do not commute. In absence of clash-freeness, note that the diagram closes also with $\cuteq$.

	\item \emph{Subtraction}, that is, $\tmthree_{\mvartwo} = \suba\mvartwo\val\var \tmfour$, and the span is:
	\begin{center}
\begin{tikzpicture}[ocenter]
		\node at (0,0)[align = center](source){\normalsize$\cuta{\mval}\mvar \mctxtwop{\cuta\mvar\mvartwo \mctxthreep{\suba\mvartwo\val\var \tmfour} }$};
		\node at (source.east)[right = 35pt](source-right){\normalsize $\mctxtwop{\cuta\mval\mvartwo \mctxthreep{\suba\mvartwo\val\var \tmfour}}$};	
		\node at (source.center)[below = 20pt](source-down){\normalsize $\cuta{\mval}\mvar \mctxtwop{\mctxthreep{\suba\mvar\val\var\tmfour}}  $};

		\draw[|->](source) to node[above] {\scriptsize $\axmone $} (source-right);
		\draw[->](source) to node[left] {\scriptsize $\axmtwo $}(source-down);
	\end{tikzpicture}
\end{center}
By clash-freeness, $\mval = \la\vartwo\lctxp\valtwo$, and the diagram is:
\begin{center}
\begin{tikzpicture}[ocenter]
		\node at (0,0)[align = center](source){\normalsize$\cuta{\mval}\mvar \mctxtwop{\cuta\mvar\mvartwo \mctxthreep{\suba\mvartwo\val\var \tmfour} }$};
		\node at (source.east)[right = 35pt](source-right){\normalsize $\mctxtwop{\cuta\mval\mvartwo \mctxthreep{\suba\mvartwo\val\var \tmfour}}$};	
		\node at (source.center)[below = 20pt](source-down){\normalsize $\cuta{\mval}\mvar \mctxtwop{\mctxthreep{\suba\mvar\val\var\tmfour}}  $};

		\node at (source-right|-source-down)(target){\normalsize $\mctxtwop{\mctxthreep{\cuta\val\vartwo \lctxp{\cuta\valtwo\var\tmfour}} }$};

		\draw[|->](source) to node[above] {\scriptsize $\axmone $} (source-right);
		\draw[->](source) to node[left] {\scriptsize $\axmtwo $}(source-down);
		
		\draw[->, dotted](source-right) to node[right] {\scriptsize $\lolli $}(target);
		\draw[|->, dotted](source-down) to node[above] {\scriptsize $\lolli $} (target);

	\end{tikzpicture}
\end{center}
Note that the diagram closes using $\tololli$ rather than $\toaxmtwo$ or $\toaxmone$. In absence of clash-freeness, note that the diagram closes also with $\cuteq$.
	\end{itemize}
\end{enumerate}

%%%%%%%%% Multiplicative diamond-clash
%\item \input{\proofspath/confluence/local-confluence-multiplicative-clash}

%%%%%%%%% Small-step exponential local confluence
\item % !TEX root = ../../main.tex
The proof follows the same structure as for $\tom$. Again, we show only the root case of $\toaxeone$, that is, $\rtoaxeone$, which is the most interesting one, in particular because it shows that $\toaxeone$ and $\toaxetwo$ do not commute. The other cases are in \cite{DBLP:journals/corr/abs-lics}.
 We have $\tm_{0} =  \cuta{\exval} \evar \ctxfp{\evar}
 \rtoaxeone 
\cuta{\exval}\evar\ctxfp{\exval}
 = \tm_{1}$. Three cases:
 \begin{enumerate}
 \item \emph{$\Rew{b}$ takes place in $\exval$}. Then the diagram is (note that it is not a diamond diagram):
 \begin{center}
\begin{tikzpicture}[ocenter]
		\node at (0,0)[align = center](source){\normalsize$ \cuta{\exval} \evar \ctxfp{\evar}$};
		\node at (source.center)[right = 170pt](source-right){\normalsize $\cuta{\exval}\evar\ctxfp{\exval}$};	
		\node at (source.center)[below = 20pt](source-down){\normalsize $\cuta{\exvaltwo} \evar \ctxfp{\evar}$};
		
		\node at (source-right|-source-down)(target){\normalsize $\cuta{\exvaltwo} \evar \ctxfp{\exval}$};
		\node at \med{source-down.east}{target.west}(fifthnode){\normalsize $\cuta{\exvaltwo}\evar\ctxfp{\exvaltwo} $};
		
		\draw[|->](source) to node[above] {\scriptsize $\axeone $} (source-right);
		\draw[->](source) to node[left] {\scriptsize $b $}(source-down);

		\draw[->, dotted](source-right) to node[right] {\scriptsize $b $}(target);
		\draw[|->, dotted](source-down) to node[above] {\scriptsize $\axeone $} (fifthnode);
		\draw[->, dotted](target) to node[above] {\scriptsize $b $} (fifthnode);
	\end{tikzpicture}
\end{center}

\item \emph{$\Rew{b}$ takes place in $\ctxfp\evar$}. Then by the deformation lemma (\reflemmap{aux-props-loc-confl-strong-comm}{five}) there are two mutually exclusive cases:
 \begin{itemize}
\item $\tmtwo = \ctxtwofp{\evar}$ for some 
$\ctxtwo$ such that $\ctxfp\exval \Rew{a} \ctxtwofp\exval$ for every $\exval$. Then:
\begin{center}
\begin{tikzpicture}[ocenter]
		\node at (0,0)[align = center](source){\normalsize$ \cuta{\exval} \evar \ctxfp{\evar}$};
		\node at (source.east)[right = 40pt](source-right){\normalsize $\cuta{\exval}\evar\ctxfp{\exval}$};	
		\node at (source.center)[below = 20pt](source-down){\normalsize $\cuta{\exval} \evar \ctxtwofp{\evar}$};
		
		\node at (source-right|-source-down)(target){\normalsize $\cuta{\exval}\evar\ctxtwofp{\exval}$};
		
		\draw[|->](source) to node[above] {\scriptsize $\axeone $} (source-right);
		\draw[->](source) to node[left] {\scriptsize $b $}(source-down);

		\draw[->, dotted](source-right) to node[right] {\scriptsize $b $}(target);
		\draw[|->, dotted](source-down) to node[above] {\scriptsize $\axeone $} (target);
	\end{tikzpicture}
\end{center}

\item  or $\ctx = \ctxtwop{\cuta\ctxthree\evartwo\tmthree}$ for some 
$\ctxtwo$ and $\ctxthree$, and $\Rew{b}$ reduces a redex given by the cut on $\evartwo$. Cases of the reduced redex:
\begin{itemize}
\item $\toaxeone$. Then we have $\ctxfp\evar = \ctxtwop{\cuta{\evar} \evartwo \ctxfourfp{\evartwo}}$ for some $\ctxfour$ (and $\ctxthree$ is empty), and the diagram closes as follows (it is not diamond):
\begin{center}
\begin{tikzpicture}[ocenter]
		\node at (0,0)[align = center](source){\normalsize$\cuta{\exval} \evar \ctxtwop{\cuta{\evar} \evartwo \ctxfourfp{\evartwo}}$};
		\node at (source.center)[right = 140pt](source-right){\normalsize $\cuta{\exval} \evar \ctxtwop{\cuta{\exval} \evartwo \ctxfourfp{\evartwo}}$};	
		\node at (source.center)[below = 20pt](source-down){\normalsize $\cuta{\exval} \evar \ctxtwop{\cuta{\evar} \evartwo \ctxfourfp{\evar}}$};
		
		\node at (source-right|-source-down)(target){\normalsize $\cuta{\exval} \evar \ctxtwop{\cuta{\exval} \evartwo \ctxfourfp{\exval}}$};
		\node at \med{source-down.east}{target.west}[below=20pt](fifthnode){\normalsize $\cuta{\exval} \evar \ctxtwop{\cuta{\exval} \evartwo \ctxfourfp{\evar}} $};
		
		\draw[|->](source) to node[above] {\scriptsize $\axeone $} (source-right);
		\draw[->](source) to node[left] {\scriptsize $\axeone $}(source-down);

		\draw[->, dotted](source-right) to node[right] {\scriptsize $\axeone $}(target);
		\draw[|->, dotted](source-down) to node[below left = -3pt and 0pt] {\scriptsize $\axeone $} (fifthnode);
		\draw[|->, dotted](fifthnode) to node[below right = -3pt and 0pt] {\scriptsize $\axeone $} (target);
	\end{tikzpicture}
\end{center}

\item $\toaxetwo$. Then we have $\ctxfp\evar = \ctxtwop{\cuta{\evar} \evartwo \ctxfourp{\dera\evartwo\var\tm}}$ for some $\ctxfour$ (and $\ctxthree$ is empty). Two sub-cases, depending on $\exval$:
\begin{enumerate}
\item \emph{$\exval$ is a variable $\evarthree$}:
\begin{center}
\begin{tikzpicture}[ocenter]
		\node at (0,0)[align = center](source){\normalsize$\cuta{\evarthree} \evar \ctxtwop{\cuta{\evar} \evartwo \ctxfourp{\dera\evartwo\var\tm}}$};
		\node at (source.center)[right = 140pt](source-right){\normalsize $\cuta{\evarthree} \evar \ctxtwop{\cuta{\evarthree} \evartwo \ctxfourp{\dera\evartwo\var\tm}}$};	
		\node at (source.center)[below = 20pt](source-down){\normalsize $\cuta{\evarthree} \evar \ctxtwop{\cuta{\evar} \evartwo \ctxfourp{\dera\evar\var\tm}} $};
		
		\node at (source-right|-source-down)(target){\normalsize $\cuta{\evarthree} \evar \ctxtwop{\cuta{\evarthree} \evartwo \ctxfourp{\dera\evarthree\var\tm}}$};
		\node at \med{source-down.east}{target.west}[below=20pt](fifthnode){\normalsize $\cuta{\evarthree} \evar \ctxtwop{\cuta{\evarthree} \evartwo \ctxfourp{\dera\evar\var\tm}} $};
		
		\draw[|->](source) to node[above] {\scriptsize $\axeone $} (source-right);
		\draw[->](source) to node[left] {\scriptsize $\axetwo $}(source-down);

		\draw[->, dotted](source-right) to node[right] {\scriptsize $\axetwo $}(target);
		\draw[|->, dotted](source-down) to node[below left = -3pt and 0pt] {\scriptsize $\axeone $} (fifthnode);
		\draw[|->, dotted](fifthnode) to node[below right = -3pt and 0pt] {\scriptsize $\axetwo $} (target);
	\end{tikzpicture}
\end{center}

\item \emph{$\exval$ is a promotion $\bang\tmtwo=\bang\lctxp\exvaltwo$}:
\begin{center}
\begin{tikzpicture}[ocenter]
		\node at (0,0)[align = center](source){\normalsize$\cuta{\bang\tmtwo} \evar \ctxtwop{\cuta{\evar} \evartwo \ctxfourp{\dera\evartwo\var\tm}}$};
		\node at (source.center)[right = 120pt](source-right){\normalsize $\cuta{\bang\tmtwo} \evar \ctxtwop{\cuta{\bang\tmtwo} \evartwo \ctxfourp{\dera\evartwo\var\tm}}$};	
		\node at (source.center)[below = 20pt](source-down){\normalsize $\cuta{\bang\tmtwo} \evar \ctxtwop{\cuta{\evar} \evartwo \ctxfourp{\dera\evar\var\tm}} $};
		
		\node at (source-right|-source-down)(target){\normalsize $\cuta{\bang\tmtwo} \evar \ctxtwop{\cuta{\bang\tmtwo} \evartwo \ctxfourp{\lctxp{\cuta\exvaltwo\var\tm}}}$};
		\node at \med{source-down.east}{target.west}[below=20pt](fifthnode){\normalsize $\cuta{\bang\tmtwo} \evar \ctxtwop{\cuta{\bang\tmtwo} \evartwo \ctxfourp{\dera\evar\var\tm}} $};
		
		\draw[|->](source) to node[above] {\scriptsize $\axeone $} (source-right);
		\draw[->](source) to node[left] {\scriptsize $\axetwo $}(source-down);

		\draw[->, dotted](source-right) to node[right] {\scriptsize $\bang $}(target);
		\draw[|->, dotted](source-down) to node[below left = -3pt and 0pt] {\scriptsize $\axeone $} (fifthnode);
		\draw[|->, dotted](fifthnode) to node[below right = -3pt and 0pt] {\scriptsize $\bang $} (target);
	\end{tikzpicture}
\end{center}
\end{enumerate}
Note that the two sub-cases above show that $\toaxeone$ and $\toaxetwo$ do not commute.

\item $\tobder$. Then we have $\ctxfp\evar = \ctxtwop{\cuta{\bang\tmfour} \evartwo \ctxfourp{\dera\evartwo\var\tmthree}}$ for some $\ctxfour$ and with $\evar$ occurring in $\tmfour$. To write down the rewriting step we have to decompose $\tmfour$ as a left context and a value, that is, $\tmfour= \lctxp\val$. Now, $\evar$ can be in either $\lctx$ or $\val$. We consider the latter case, which is easier to spell out in symbols, the former case is analogous, just heavier to write down. Then assume that $\val = \ctxthreefp\evar$, so that $\tmfour = \lctxp{\ctxthreefp\evar}$. We have the following (non-diamond) diagram:
\begin{center}
\begin{tikzpicture}[ocenter]
		\node at (0,0)[align = center](source){\scriptsize$\cuta{\exval} \evar \ctxtwop{\cuta{\bang\lctxp{\ctxthreefp\evar}} \evartwo \ctxfourp{\dera\evartwo\var\tmthree}}$};
		\node at (source.center)[right = 110pt](source-right){\scriptsize $\cuta{\exval} \evar \ctxtwop{\cuta{\bang\lctxp{\ctxthreefp{\exval}}} \evartwo \ctxfourp{\dera\evartwo\var\tmthree}}$};	
		\node at (source.center)[below = 20pt](source-down){\scriptsize $\cuta{\exval} \evar \ctxtwop{\cuta{\bang\lctxp{\ctxthreefp{\evar}}} \evartwo \ctxfourp{\lctxp{\cuta{\ctxthreefp{\evar}}\var\tmthree}}}$};
		
		\node at (source-right|-source-down)(target){\scriptsize $\cuta{\exval} \evar \ctxtwop{\cuta{\bang\lctxp{\ctxthreefp{\exval}}} \evartwo \ctxfourp{\lctxp{\cuta{\ctxthreefp\exval}\var\tmthree}}}$};
		\node at \med{source-down.east}{target.west}[below=20pt](fifthnode){\scriptsize $\cuta{\exval} \evar \ctxtwop{\cuta{\bang\lctxp{\ctxthreefp\exval}} \evartwo \ctxfourp{\lctxp{\cuta{\ctxthreefp\evar}\var\tmthree}}} $};
		
		\draw[|->](source) to node[above] {\scriptsize $\axeone $} (source-right);
		\draw[->](source) to node[left] {\scriptsize $\bdersym $}(source-down);

		\draw[->, dotted](source-right) to node[right] {\scriptsize $\bang $}(target);
		\draw[|->, dotted](source-down) to node[below left = -3pt and 0pt] {\scriptsize $\axeone $} (fifthnode);
		\draw[|->, dotted](fifthnode) to node[below right = -3pt and 0pt] {\scriptsize $\axeone $} (target);
	\end{tikzpicture}
\end{center}

\item $\tobw$. Then $\ctxfp\evar = \ctxtwop{\cuta{\ctxthreep\evar}\evartwo \tmtwo} \tobw \ctxtwop{\tmtwo}$ and we have the following (non-diamond) diagram:
\begin{center}
\begin{tikzpicture}[ocenter]
		\node at (0,0)[align = center](source){\normalsize$\cuta{\exval} \evar \ctxtwop{\cuta{\ctxthreep\evar}\evartwo \tmtwo}$};
		\node at (source.east)[right = 40pt](source-right){\normalsize $\cuta{\exval}\evar \ctxtwop{\cuta{\ctxthreep\exval}\evartwo \tmtwo}$};	
		\node at (source.center)[below = 20pt](source-down){\normalsize $\cuta{\exval} \evar \ctxtwop{\tmtwo}$};
		
		\draw[|->](source) to node[above] {\scriptsize $\axeone $} (source-right);
		\draw[->](source) to node[left] {\scriptsize $\wsym $}(source-down);

		\draw[->, dotted](source-right) to node[below] {\scriptsize $\wsym$}(source-down);
	\end{tikzpicture}
\end{center}

%If $\Rew{b}$ involves both $\lctx$ and $\exval$ then it has the shape $\lctxp\exval \Rew{b} \lctxtwop\exvaltwo$ and by \reflemmap{loc-confluence-aux}{six} we have:
%	\begin{center}$\begin{array}{cccccc}
%\cuta{\lctxp\exval}\evar \ctxfp\evar
%& \rtoaxeone &
%\lctxp{\cuta{\exval}\evar \ctxfp\exval}
%\\
% &&\downarrow_{b}
%\\ 
%\downarrow_{b}& & 
%\lctxtwop{\cuta{\exvaltwo}\evar \ctxfp\exval}
%\\
%&&\downarrow_{b}
%\\
%\cuta{\lctxtwop\exvaltwo}\evar \ctxfp\evar
%& \rtoaxeone    &
%\lctxtwop{\cuta{\exvaltwo}\evar \ctxfp\exvaltwo}
%\end{array}$\end{center}
%Note that this diagram is not diamond.
\end{itemize}
\end{itemize}

\item \emph{$\Rew{b}$ involves the same root cut of $\Rew{a}$}. Then $\tm_{0} =  \cuta{\exval}\evar\ctxfp{\evar}= \cuta{\exval}\evar\ctxtwop{\tm_\evar}$ for a context $\ctxtwo\neq \ctx$ that does not capture $\evar$ and $\tm_\evar$ is an occurrence of $\evar$. Two sub-cases:
\begin{itemize}
\item \emph{Variable occurrence}, that is, $\tm_\evar = \evar$. Then $\Rew{b}$ is also a $\rtoaxeone$ step. Note that $\ctx$ and $\ctxtwo$ are disjoint, that is, $\ctx\parallel\ctxtwo$, so that there is a double context $\cctx_{\ctx,\ctxtwo}$ such that $\cctx_{\ctx,\ctxtwo}\ctxholep{\cdot,\evar} = \ctx$ and $\cctx_{\ctx,\ctxtwo}\ctxholep{\evar,\cdot} = \ctxtwo$. Then $\tm_0 = \cuta{\exval} \evar \cctx_{\ctx,\ctxtwo}\ctxholep{\evar,\evar}$ and the diagram closes as follows:
\begin{center}
\begin{tikzpicture}[ocenter]
		\node at (0,0)[align = center](source){\normalsize$ \cuta{\exval} \evar \cctx_{\ctx,\ctxtwo}\ctxholep{\evar,\evar}$};
		\node at (source.east)[right = 40pt](source-right){\normalsize $\cuta{\exval} \evar \cctx_{\ctx,\ctxtwo}\ctxholep{\exval,\evar}$};	
		\node at (source.center)[below = 20pt](source-down){\normalsize $\cuta{\exval} \evar \cctx_{\ctx,\ctxtwo}\ctxholep{\evar,\exval}$};
		
		\node at (source-right|-source-down)(target){\normalsize $\cuta{\exval} \evar \cctx_{\ctx,\ctxtwo}\ctxholep{\exval,\exval}$};
		
		\draw[|->](source) to node[above] {\scriptsize $\axeone $} (source-right);
		\draw[|->](source) to node[left] {\scriptsize $\axeone $}(source-down);

		\draw[|->, dotted](source-right) to node[right] {\scriptsize $\axeone $}(target);
		\draw[|->, dotted](source-down) to node[above] {\scriptsize $\axeone $} (target);
	\end{tikzpicture}
\end{center}

\item \emph{Dereliction occurrence}, that is, $\tm_\evar = \dera\evar\var\tmtwo$. Two sub-cases, depending on the relationship between $\ctx$ and $\ctxtwo$.
\begin{enumerate}
\item \emph{$\ctx$ nests under $\ctxtwo$}, that is, $\ctxtwo \outer \ctx$. Then, there exists $\ctxthree$ such that $\ctx = \ctxtwop{\dera\evar\var\ctxthree}$. Two sub-cases, depending on $\exval$. 
\begin{enumerate}
\item \emph{$\exval$ is a variable}, that is, $\exval = \evartwo$. Then:
\begin{center}
\begin{tikzpicture}[ocenter]
		\node at (0,0)[align = center](source){\normalsize$ \cuta{\evartwo} \evar \ctxtwop{\dera\evar\var\ctxthreefp\evar}$};
		\node at (source.east)[right = 40pt](source-right){\normalsize $\cuta{\evartwo} \evar \ctxtwop{\dera\evar\var\ctxthreefp\evartwo}$};	
		\node at (source.center)[below = 20pt](source-down){\normalsize $\cuta{\evartwo} \evar \ctxtwop{\dera\evartwo\var\ctxthreefp\evar}$};
		
		\node at (source-right|-source-down)(target){\normalsize $\cuta{\evartwo} \evar \ctxtwop{\dera\evartwo\var\ctxthreefp\evartwo}$};
		
		\draw[|->](source) to node[above] {\scriptsize $\axeone $} (source-right);
		\draw[|->](source) to node[left] {\scriptsize $\axetwo $}(source-down);

		\draw[|->, dotted](source-right) to node[right] {\scriptsize $\axetwo $}(target);
		\draw[|->, dotted](source-down) to node[above] {\scriptsize $\axeone $} (target);
	\end{tikzpicture}
\end{center}

\item \emph{$\exval$ is a promotion}, that is, $\exval = \bang\tmtwo=\bang\lctxp\exvaltwo$. Then:
\begin{center}
\begin{tikzpicture}[ocenter]
		\node at (0,0)[align = center](source){\normalsize$ \cuta{\bang\tmtwo} \evar \ctxtwop{\dera\evar\var\ctxthreefp\evar}$};
		\node at (source.center)[right = 110pt](source-right){\normalsize $\cuta{\bang\tmtwo} \evar \ctxtwop{\dera\evar\var\ctxthreefp{\bang\tmtwo}}$};	
		\node at (source.center)[below = 20pt](source-down){\normalsize $\cuta{\bang\tmtwo} \evar \ctxtwop{\lctxp{\cuta\exvaltwo\var\ctxthreefp\evar}}$};
		
		\node at (source-right|-source-down)(target){\normalsize $\cuta{\bang\tmtwo} \evar \ctxtwop{\lctxp{\cuta\exvaltwo\var\ctxthreefp{\bang\tmtwo}}}$};
		
		\draw[|->](source) to node[above] {\scriptsize $\axeone $} (source-right);
		\draw[|->](source) to node[left] {\scriptsize $\bdersym $}(source-down);

		\draw[|->, dotted](source-right) to node[right] {\scriptsize $\bdersym $}(target);
		\draw[|->, dotted](source-down) to node[above] {\scriptsize $\axeone $} (target);
	\end{tikzpicture}
\end{center}
\end{enumerate}

\item \emph{$\ctx$ and $\ctxtwo$ are disjoint}. Then, the reasoning goes similarly to the case where $\tm_\evar$ is a variable occurrence, resting on a double context, and it has two straightforward sub-cases depending on $\exval$, as when $\ctx$ nests under $\ctxtwo$. \qedhere
\end{enumerate}
\end{itemize}
\end{enumerate}
\end{enumerate}
\end{proof}

}

In contrast to confluence, commutation of two reductions $\Rew{1}$ and $\Rew{2}$ does \emph{not} follow from their \emph{local} commutation and strong normalization. Here however the rules verify a \emph{linear} form of Hindley's \emph{strong} (local) commutation \cite{HindleyPhD} of $\Rew{1}$ over $\Rew{2}$, defined as:
\begin{center}
\begin{tabular}{ccccccccccccc}
\begin{tikzpicture}[ocenter]
		\node at (0,0)[align = center](source){\normalsize$\tm$};
		\node at (source.center)[right = 25pt](source-right){\normalsize $\tmtwo_1$};	
		\node at (source.center)[below = 25pt](source-down){\normalsize $\tmtwo_2$};
		
		\draw[->](source) to node[above] {\scriptsize $1 $} (source-right);
		\draw[->](source) to node[left] {\scriptsize $2 $}(source-down);
\end{tikzpicture}
&
implies that there exists $\tmthree$ such that 
&
\begin{tikzpicture}[ocenter]
		\node at (0,0)[align = center](source){\normalsize$\tm$};
		\node at (source.center)[right = 25pt](source-right){\normalsize $\tmtwo_1$};	
		\node at (source.center)[below = 25pt](source-down){\normalsize $\tmtwo_2$};
		
		\node at (source-right|-source-down)(target){\normalsize $\tmthree$};
		
		\draw[->](source) to node[above] {\scriptsize $1 $} (source-right);
		\draw[->](source) to node[left] {\scriptsize $2 $}(source-down);

		\draw[->, dotted, labelEndRight=*](source-right) to node[right] {\scriptsize $2 $}(target);
		\draw[->, dotted](source-down) to node[above] {\scriptsize $1 $} (target);
\end{tikzpicture}
\end{tabular}
\end{center}
That is, $\totwo$ cannot duplicate nor erase $\toone$. Linear commutation and strong normalization do imply commutation.

\begin{prop}[Linear commutation]
\label{prop:strong-commutation}
$\toess$ and $\toems$ both linearly commute over $\tom$.
\end{prop}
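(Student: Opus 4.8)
The plan is to prove strong commutation of $\toess$ and $\toems$ over $\tom$ separately, following Hindley's criterion: given a peak $\tmtwo_1 \lRew{\ess/\ems} \tm \Rew{\msym} \tmtwo_2$, I must produce $\tmthree$ with $\tmtwo_1 \Rewn{\msym}\tmthree$ and either $\tmthree \lRew{\ess/\ems} \tmtwo_2$ or $\tmthree = \tmtwo_2$ — i.e.\ the multiplicative side may be replayed at most once, while the exponential side is never duplicated. I would fix two redexes, the exponential one at position $\ctx_e$ (reducing a cut $\cuta\exval\evar\dots$ for $\toess$, or one of $\toaxeone,\toaxetwo,\tobder,\tow$ for $\toems$) and the multiplicative one at position $\ctx_m$ (reducing a cut $\cuta\mval\mvar\dots$ via $\toaxmone,\toaxmtwo,\totens,\tololli$), and reason by the relative position of $\ctx_e$ and $\ctx_m$ in $\tm$.

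The core case analysis has three shapes. \emph{Disjoint case} ($\ctx_e \parallel \ctx_m$): here I use double contexts (\reflemma{double-ctx-comp} and the machinery of \refsect{technicalities}), writing $\tm = \cctxp{\tmthree_e, \tmthree_m}$ with the two redexes in the two holes; the steps act on disjoint subterms and commute on the nose, giving $\tmthree = \cctxp{\tmfour_e,\tmfour_m}$ with exactly one step on each side, so $\tmthree \lRew{\ems/\ess} \tmtwo_2$. \emph{Nesting, multiplicative outside exponential}: the multiplicative step may duplicate, erase, or leave untouched the subterm containing the exponential redex. If it leaves it untouched (e.g.\ $\toaxmtwo$, or the part of $\totens/\tololli$ not affecting that branch), the exponential redex survives with a (renamed) position and we close with one step each. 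If the multiplicative step duplicates the subterm containing the exponential redex — this happens only when that subterm sits inside a promotion $\bang\tmfour$ that is the value of the multiplicative cut, but promotions are closed ($\mfv{\bang\tmfour}=\emptyset$), and crucially a $\totens$ or $\tololli$ step copies a value $\val$ that is a subterm of a cut in $\tm$, so the exponential redex would have to already be inside two distinct cuts — one checks this cannot create two \emph{independent} copies of the same exponential redex requiring two replays; after the multiplicative step there is still a single exponential redex, closed with one step each. If the multiplicative step erases the subterm (a $\tow$-like situation inside $\totens/\tololli$, or genuinely $\totens$ discarding $\tmthree$ when $\var\notin\fv\tmfour$), then $\tmtwo_1 \Rew{\msym} \tmthree = \tmtwo_2$ and the exponential step is absorbed, i.e.\ $\tmthree=\tmtwo_2$. \emph{Nesting, exponential outside multiplicative}: symmetric, but now the exponential step (for $\toess$, which is meta-level substitution; for $\tobder/\toaxeone$, micro duplication) may \emph{duplicate} the multiplicative redex. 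This is the asymmetric direction of Hindley's criterion: after the exponential step there may be several copies of the multiplicative redex, so we replay $\Rew{\msym}$ on the exponential side $\Rewn{\msym}$ (one step for each copy), landing in $\tmthree$; on the multiplicative side, the multiplicative step — followed possibly by a renaming-stabilized exponential step via \reflemma{substitutivity-of-red} — reaches the same $\tmthree$, giving $\tmthree \lRew{\ems/\ess}\tmtwo_2$ or $\tmthree=\tmtwo_2$ when the exponential step was $\tow$ and erased the multiplicative redex. The deformation lemmas (\reflemma{aux-props-loc-confl-strong-comm}) are exactly what makes the bookkeeping of ``the multiplicative redex survives with a displaced position'' rigorous without talking about residuals, and \reflemma{substitutivity-of-red} handles the interaction of $\toms$ steps with the meta-substitution performed by $\toess$.

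The main obstacle I anticipate is the \emph{exponential-outside-multiplicative} nesting when the exponential rule is $\toess$ (small-step): a single $\toess$ step performs a whole meta-level substitution $\cutsub\exval\evar{-}$, which can copy the multiplicative redex as many times as $\evar$ occurs in its scope, and moreover — if $\exval=\bang\lctxp\val$ — can do so \emph{under box opening}, turning one multiplicative redex into several at different ``levels'' of the new cuts created by the $\dera$-clause of \refdef{meta-sub}. Proving that replaying $\Rew{\msym}$ once per copy genuinely reconverges requires \reflemmap{bang-subs-prop}{zero} (substitution commutes with plugging), \reflemmap{substitutivity-of-red}{four} and the fact that all the copies are ``the same'' multiplicative redex up to renaming, so that after the multiplicative step on the other branch the induced substitution still lands on the matching term — here \reflemmap{bang-subs-prop}{two}/\ref{p:bang-subs-prop-four} for commuting the two substitutions are the delicate ingredients. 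For $\toems$ this degenerates pleasantly: $\toaxeone$ and $\toaxetwo$ copy/rename only a single variable occurrence or touch a single dereliction, so at most one copy of the multiplicative redex appears, while $\tobder$ opens exactly one box, again at most one extra copy; $\tow$ only erases. So the genuinely ``strong'' (replay-at-most-once) behaviour is automatic for $\toems$ and the real work is isolating and bounding the replay for $\toess$.

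\begin{proof}[Proof sketch]
We verify Hindley's strong commutation: for every peak $\tmtwo_1 \lRew{x} \tm \Rew{\msym} \tmtwo_2$ with ${\Rew{x}}\in\set{\toess,\toems}$ there is $\tmthree$ with $\tmtwo_1\Rewn{\msym}\tmthree$ and either $\tmthree\lRew{x}\tmtwo_2$ or $\tmthree=\tmtwo_2$; strong normalization of $\tom$ (\refthm{local-termination}) then upgrades this to commutation. Fix positions $\ctx_x$ and $\ctx_m$ of the two redexes and proceed by case analysis on their relative position (disjoint, $\ctx_m$ outer, or $\ctx_x$ outer), using double contexts and \reflemma{double-ctx-comp} in the disjoint case and the deformation lemmas \reflemma{aux-props-loc-confl-strong-comm} to track the displaced redex in the nesting cases. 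When $\ctx_m$ is outer, the multiplicative step leaves, duplicates inside a closed promotion, or erases the subterm containing the $x$-redex; in each subcase at most one $x$-redex remains (or none, giving $\tmthree=\tmtwo_2$). When $\ctx_x$ is outer, the $x$-step may copy the multiplicative redex; replaying $\Rew{\msym}$ once per copy yields $\tmtwo_1\Rewn{\msym}\tmthree$, and on the other branch the multiplicative step followed by \reflemma{substitutivity-of-red} (for $\toess$, also using \reflemma{bang-subs-prop}) reaches the same $\tmthree$, closing with $\tmthree\lRew{x}\tmtwo_2$ (or $\tmthree=\tmtwo_2$ when $x=\wsym$ erased the multiplicative redex). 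Full details are in \cite{DBLP:journals/corr/abs-lics}.
\end{proof}
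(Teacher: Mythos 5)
Your proposal is correct and follows essentially the same route as the paper: a case analysis on the relative positions of the exponential and multiplicative redexes (disjoint, via double contexts; nested, via the deformation lemmas of \reflemma{aux-props-loc-confl-strong-comm}), with the asymmetry of Hindley's criterion absorbed by $\Rewn{\msym}$ precisely because only the exponential step can duplicate or erase the other redex. The only slips are cosmetic: by structural linearity of multiplicative variables the rules in $\tom$ neither duplicate nor erase, so your sub-cases where the multiplicative step copies a promotion or discards the subterm containing the exponential redex are vacuous, and in the $\toaxeone$/$\tobder$ cases the value occurs twice after the step (the copy kept in the cut plus the substituted one), so two multiplicative replays are needed rather than ``at most one'', which $\Rewn{\msym}$ of course accommodates.
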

\withproofs{
% !TEX root = ../../main.tex
\begin{proof}
The proof is an analysis of cases very similar to the one for local confluence. Some details are in \cite{DBLP:journals/corr/abs-lics}.
\end{proof}
}

Then we lift the local properties using local termination and conclude using Hindley-Rosen lemma.

\begin{thm}[Confluence]
\label{thm:confluence}
The relations $\toss$ and $\toms$ are confluent on clash-free terms.
\end{thm}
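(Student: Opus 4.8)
The plan is to follow the Hindley--Rosen method announced at the start of \refsect{confluence}: for each of $\toss$ and $\toms$, split the reduction into its multiplicative part $\tom$ and its exponential part ($\toess$, resp.\ $\toems$), prove that each part is confluent and that the two parts commute --- everything restricted to clash-free (proper) terms --- and then invoke the Hindley--Rosen lemma. Before anything else I would record that clash-free terms are closed under all the reductions at play: this is immediate for $\toms$ by the co-inductive definition of clash-freeness, and for $\toss$ it follows by full composition (\refprop{ms-simulates-ss}), since a $\toess$ step unfolds into a non-empty $\toems$ sequence each step of which preserves clash-freeness. Thus all the abstract rewriting arguments below take place inside the sub-system of clash-free terms.

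For confluence of the single components: $\tom$ is diamond on clash-free terms by the first point of \refprop{loc-confluence}, and a diamond relation is confluent by the routine one-step tiling, so $\tom$ is confluent on clash-free terms --- notably without any use of termination, which is what makes the clash-free restriction affordable on just this component. For the exponential parts, the second point of \refprop{loc-confluence} gives local confluence of $\toems$ and of $\toess$, and \refthm{local-termination} gives that both are strongly normalizing; hence the Newman lemma yields that $\toems$ and $\toess$ are each confluent.

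For commutation: \refprop{strong-commutation} states that $\toems$ and $\toess$ both strongly commute over $\tom$, in Hindley's strong sense (no duplication on the exponential side). Since $\toems$ and $\toess$ are strongly normalizing (\refthm{local-termination}), Hindley's strong-commutation lemma promotes these to actual commutations: $\tom$ commutes with $\toems$, and $\tom$ commutes with $\toess$. Finally I would assemble: $\toms = \tom \cup \toems$ is a union of two relations that are confluent on clash-free terms and commute, so the Hindley--Rosen lemma gives confluence of $\toms$ on clash-free terms; the argument for $\toss = \tom \cup \toess$ is verbatim the same.

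I do not expect a genuine obstacle here --- the real content is already packaged in the local lemmas --- but the delicate bookkeeping is twofold. First, one must thread the clash-free restriction consistently, so that the Hindley--Rosen and Newman arguments all live inside the clash-free sub-system and $\tom$ is confluent exactly where it is invoked. Second, one must apply the strong-commutation lemma on the correct side, i.e.\ requiring strong normalization of the \emph{exponential} reduction, which is precisely what local termination supplies. The glitch discussed just before \refprop{loc-confluence} --- that the offending multiplicative $\toaxmone$/$\toaxmtwo$ critical pair closes only for non-clashing cuts (or else only up to $\cuteq$) --- is exactly what is absorbed by the clash-free hypothesis in the statement.
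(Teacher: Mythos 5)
Your proposal is correct and follows essentially the same route as the paper: the Hindley--Rosen method, with \refprop{loc-confluence} plus Newman's lemma (via local termination, \refthm{local-termination}) for confluence of the components, \refprop{strong-commutation} plus Hindley's lemma for commutation, and the Hindley--Rosen lemma to conclude. The only (harmless) variations are that you derive confluence of $\tom$ directly from the diamond property rather than via Newman, and that you make explicit the closure of clash-free terms under reduction, which the paper leaves implicit.
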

\begin{proof}
We treat $\toms$, the case of $\toss$ is identical. By Newman lemma, local confluence (\refprop{loc-confluence}), and local termination (\refthm{local-termination}),  $\tom$ and $\toems$ are confluent separately. By a result of Hindley \cite{HindleyPhD}, linear commutation  (\refprop{strong-commutation}) and local termination imply that $\tom$ and $\toems$ commute. By Hindley-Rosen lemma, $\toms = \tom \cup \toems$ is confluent.
\end{proof}
% !TEX root = main.tex
 \begin{figure*}[t]

 \newcases{nullcases}
    {\ }
    {$##$\hfil} {$##$\hfil}
    {\lbrace} {.}
     \arraycolsep=2pt
  \tabcolsep=2pt
%\fbox{
% \begin{center} 
\begin{tabular}{c}
  \begin{tabular}{c@{\hspace{.3cm}}cccc}
\multicolumn{2}{c}{\textsc{Dominating free variables of contexts}}&\\
{\small$\begin{array}{rlll}
\dfv\ctxhole & \defeq & \emptyset
\\
\dfv{ \pair\ctx\val} & \defeq & \dfv\ctx
\\
\dfv{ \pair\val\ctx} & \defeq & \dfv\ctx
\\
\dfv{ \la\var\ctx} & \defeq &  
		\dfv\ctx \setminus \set{\var}
\\
 \dfv{ \bang \ctx}& \defeq & \dfv{  \ctx}
\\
 \dfv{\cuta\val\var\ctx} & \defeq &
		\dfv\ctx \setminus \set{\var} 
\\
 \dfv{\cuta{\vctx}\var\tm} & \defeq & \dfv\vctx
\end{array}$
}
&
{\small
$\begin{array}{rlll}
\dfv{\para\mvar\var\vartwo \ctx} & \defeq & \begin{nullcases}
		\set\mvar \cup (\dfv\ctx \setminus \set{\var,\vartwo}) & \mbox{if $\var\in \dfv\ctx$}
		\\
		&\mbox{or $\vartwo \in \dfv\ctx$}
		\\
		\dfv\ctx & \mbox{otherwise.}
		\end{nullcases}
\\
  \dfv{\suba\mvar\vctx\var \tm} & \defeq & \set\mvar\cup \dfv\vctx
\\

  \dfv{\suba\mvar\val\var \ctx} & \defeq & \begin{nullcases}
		\set\mvar\cup (\dfv\ctx \setminus \set{\var}) & \mbox{if $\var \in \dfv\ctx$}\\
		\dfv\ctx & \mbox{otherwise.}
		\end{nullcases}
\\
 \dfv{\dera\evar\var \ctx} & \defeq & \begin{nullcases}
		\set\evar\cup (\dfv\ctx \setminus \set{\var}) & \mbox{if $\var \in \dfv\ctx$}\\
		\dfv\ctx & \mbox{otherwise.}
		\end{nullcases}

\end{array}$
}
\end{tabular}
%%%
%%%%%%%
\\[9pt]\hline\\[-6pt]
%%%%%%%
%%%
$\begin{array}{rcl}
\multicolumn{3}{c}{\textsc{Good value contexts}}
\\
 \vgctx & \grameq & \ctxhole \,\mid\, \pair\gctx\tm \,\mid\, \pair\tm\gctx \,\mid\, \la\var\gctx \,\mid\, \bang\gctx
 \end{array}$
 \\[6pt]
 $\begin{array}{rcl}
\multicolumn{3}{c}{\textsc{Good contexts}}
\\
 \gctx & \grameq & \vgctx \,\mid\, \para\mvar\var\vartwo \gctx \,\mid\, \suba\mvar\val\var \gctx
 \,\mid\, \suba\mvar\vgctx\var \tm \,\mid\,  
\dera\evar\var\gctx \,\mid\,  
 \cuta\val\var\gctx \mbox{ if }\var\notin\dfv\gctx
 \end{array}$
\\[6pt]
 $\begin{array}{rcl}
\multicolumn{3}{c}{\textsc{Bad contexts}}
\\
 \bctx & \grameq & \cuta\vctx\var\tm \,\mid\, \cuta\val\var\ctx \mbox{ if }\var\in\dfv\ctx 
\,\mid\, \ctxp\bctx
\end{array}$
\end{tabular}
%}
%\end{center}

\caption{Definitions for the good strategy: dominating free variables, good and bad contexts.}
\label{fig:strategy}
\end{figure*}

\paragraph{On the Traditional Proof Nets Rewriting Rule for Axioms.} As it was mentioned at the end of \refsect{towards}, our micro-step rules for exponential axioms are different from the one at work in proof nets. The proof net micro-step rule, when formulated on terms, looks as a small-step rule:
\begin{center}
$\begin{array}{ccccccc}
\cuta\evartwo\evar\tm & \to & \cutsub\evartwo\evar\tm
\end{array}$
\end{center}
If considered with our other micro-step rules, in particular our $\tobder$, it generates a closable but unpleasant local confluence diagram:
\begin{center}
\begin{tikzpicture}[ocenter]
		\node at (0,0)[align = center](source){\normalsize$\cuta{\bang\val}\evartwo \ctxp{\cuta\evartwo\evar \tm}$};
		\node at (source.center)[right = 110pt](source-right){\normalsize $\cuta{\bang\val}\evartwo \ctxp{\cuta{\bang\val}\evar \tm}$};	
		\node at (source-right.center)[right = 110pt](source-right-right){\normalsize $\cuta{\bang\val}\evartwo \ctxp{\cutsub{\bang\val}\evar \tm}$};	
		
		\node at (source.center)[below = 25pt](source-down){\normalsize $\cuta{\bang\val}\evartwo \ctxp{\cutsub\evartwo\evar \tm}$};
		
		\node at (source-right-right|-source-down)(target){\normalsize $\cutsub{\bang\val}\evartwo \ctxp{\cutsub{\bang\val}\evar \tm}$};
		\node at \med{source-down.center}{target.center}(fifthnode){\normalsize $\cutsub{\bang\val}\evartwo \ctxp{\cutsub\evartwo\evar \tm} $};
		
		\draw[|->](source) to node[above] {\scriptsize $\bdersym $} (source-right);
		\draw[->](source) to (source-down);

		\draw[->, dotted, labelEndAbove=+](source-right) to node[above] {\scriptsize $\ems $} (source-right-right);
		\draw[->, dotted, labelEndRight=+](source-right-right) to node[right] {\scriptsize $\ems $}(target);
		\draw[->, dotted, labelEndAbove=+](source-down) to node[above] {\scriptsize $\ems $} (fifthnode);
		\draw[double](fifthnode) to node[above] {\scriptsize  \reflemmaeq{bang-subs-prop}} (target);
	\end{tikzpicture}
\end{center}
In order to close the local diagram one needs to \emph{fully develop} some exponential cuts (note $\dasharrow_{\ems}^+$), which is a small-step concept and not a micro-step one. This is why we adopt different axiom rules, for which local confluence diagrams never need small-step developments. %Curiously, this is exactly the confluence issue that Accattoli \cite{DBLP:conf/lics/Accattoli13} has with MELLP proof nets at a distance, the solution of which is now clear: it is enough to switch to LSC-duplication of exponential axioms rather than the usual approach. The moral of the story is that proof nets might be misleading in distinguishing between small-step and micro-step cut elimination rules.

% !TEX root = main.tex

\section{The Good Strategy}
\label{sect:strategy}
In this section, we define the \emph{good cut elimination strategy} $\tog$ for the micro-step untyped ESC, show various of its properties, including the sub-term property, and prove that it provides a polynomial cost model.

\paragraph{Breaking the Sub-Term Property} When does the sub-term property not hold? One has to duplicate an exponential value $\exval$ \emph{touched} by previous steps. In our setting, \emph{touched} can mean two things. Either a redex \emph{fully} contained in $\exval$ is reduced, obtaining $\exvaltwo$, and then $\exvaltwo$ is duplicated (or erased), as in the step marked with $\bigstar$ in the following diagram (the other, dashed path of which has the sub-term property):
\begin{center}
\begin{tikzpicture}[ocenter]
		\node at (0,0)[align = center](source){\normalsize$  \cuta{\exval} \evar \ctxfp{\evar}$};
		\node at (source.center)[right = 170pt](source-right){\normalsize $\cuta{\exvaltwo} \evar \ctxfp{\evar}$};
		\node at (source.center)[below = 25pt](source-down){\normalsize $\cuta{\exval}\evar\ctxfp{\exval}$};
		
		\node at (source-right|-source-down)(target){\normalsize $\cuta{\exvaltwo}\evar\ctxfp{\exvaltwo}$};
		\node at \med{source-down.east}{target.west}(fifthnode){\normalsize $\cuta{\exvaltwo} \evar \ctxfp{\exval}$};
		
		\draw[->](source) to node[above] {\scriptsize $\mssym $} (source-right);
		\draw[->](source-right) to node[right] {\scriptsize $\axeone$} node[right=20pt] {\scriptsize $\bigstar$}(target);

		\draw[->, dotted](source) to node[left] {\scriptsize $\axeone$}(source-down);
		\draw[->, dotted](source-down) to node[above] {\scriptsize $\mssym$} (fifthnode);
		\draw[->, dotted](fifthnode) to node[above] {\scriptsize $\mssym$} (target);
	\end{tikzpicture}
\end{center}
The other way of \emph{touching} $\exval$ is when a cut $\cuta{\exvaltwo} \evar$ \emph{external} to $\exval$ acts on some exponential variable $\evar$ in $\exval=\vctxfp\evar$, as in the following diagram:
\begin{center}
\begin{tikzpicture}[ocenter]
		\node at (0,0)[align = center](source){\footnotesize$  \cuta{\exvaltwo} \evar \ctxp{\cuta{\vctxfp\evar} \evartwo \ctxtwofp{\evartwo}}$};
		\node at (source.center)[right = 240pt](source-right){\footnotesize $\cuta{\exvaltwo} \evar \ctxp{\cuta{\vctxfp\exvaltwo} \evartwo \ctxtwofp{\evartwo}}$};
		\node at (source.center)[below = 25pt](source-down){\footnotesize $\cuta{\exvaltwo} \evar \ctxp{\cuta{\vctxfp\evar} \evartwo \ctxtwofp{\vctxfp\evar}}$};
		
		\node at (source-right|-source-down)(target){\footnotesize $\cuta{\exvaltwo} \evar \ctxp{\cuta{\vctxfp\exvaltwo} \evartwo \ctxtwofp{\vctxfp\exvaltwo}}$};
		\node at \med{source-down.east}{target.west}(fifthnode){\footnotesize $\cuta{\exvaltwo} \evar \ctxp{\cuta{\vctxfp\exvaltwo} \evartwo \ctxtwofp{\vctxfp\evar}}$};
		
		\draw[->](source) to node[above] {\scriptsize $\axeone $} (source-right);
		\draw[->](source-right) to node[right] {\scriptsize $\axeone$} node[right=20pt] {\scriptsize $\bigstar$}(target);

		\draw[->, dotted](source) to node[left] {\scriptsize $\axeone$}(source-down);
		\draw[->, dotted](source-down) to node[above] {\scriptsize $\axeone$} (fifthnode);
		\draw[->, dotted](fifthnode) to node[above] {\scriptsize $\axeone$} (target);
	\end{tikzpicture}
\end{center}
Similar diagrams can be obtained using $\tobder$ and $\toaxetwo$ rather than $\toaxeone$, and these are all the local confluence diagrams for $\toms$ that are not squares nor triangles (see \refprop{loc-confluence} and \refprop{strong-commutation}).

Preventing these situations from happening, thus forcing evaluation to follow the other (dashed) side of the diagram, is easy. It is enough to forbid the position of the reduced redex to be inside the left sub-term of a cut---we say inside a \emph{cut value} for short. It is however not enough, because cuts are also \emph{created}. Consider:
\begin{center}$
\begin{array}{clllllcl}
\cuta{\la\evar\pair\evar\evar}\mvar\suba\mvar\exval\var\tm &\toms&
\cuta{\la\evar\pair\evar\evar}\mvar\suba\mvar\exvaltwo\var\tm &\tololli
\\
&&\cuta\exvaltwo\evar\cuta{\pair\evar\evar}\var\tm &\overset\bigstar\rightarrow_{\axeone} &
\cuta\exvaltwo\evar\cuta{\pair\exvaltwo\evar}\var\tm
\end{array}
$\end{center}
Reducing inside the subtraction value $\exval$ leads to a \emph{later} breaking of the sub-term property by the $\axeone$ step, because the $\tololli$ step creates a cut with $\exvaltwo$ inside.
Preventing these cases is tricky, because forbidding reducing subtraction values leads to cut elimination stopping too soon, without producing a cut-free term. In the $\l$-calculus, it corresponds to forbidding reducing inside arguments, which leads to \emph{head} reduction, that does not compute normal $\l$-terms. We shall then forbid reducing only subtraction values which are \emph{at risk} of becoming cuts. It might look like the dangerous positions are those in $\val$ in $\cuta\mval\mvar\mctxp{\suba\mvar\val\var\tm}$, i.e. those where the surrounding subtraction is involved in a cut. They are in fact more general, as a chain of dependencies can be involved. Consider the positions inside $\val$ in $\tm$ here:
\begin{center}$
\begin{array}{clllllcl}
\tm 
&\defeq& 
\cuta{\bang\mval}\evar\dera\evar\mvar\suba\mvar\val\var\tmtwo
& \tobder& 
\cuta{\bang\mval}\evar\cuta\mval\mvar\suba\mvar\val\var\tmtwo
\end{array}
$\end{center}
they are also dangerous, because they reduce to those of the previous kind. Thus, we need to ensure that the conclusion $\mvar$ of the subtraction is not \emph{hereditarely} involved  in a cut. 

\paragraph{Dominating Variables.} The key notion is the one of \emph{dominating (free) variables} $\dfv\ctx$ of a context (where $\ctx$ is meant to be the position of a redex), defined in \reffig{strategy}, the base case of which is for $\suba\mvar\vctx\var\tm$. If $\ctx$ is a position and $\var\in\dfv\ctx$ then $\cuta\val\var\ctx$ turns $\ctx$ into a dangerous position, that is, a redex of position $\cuta\val\var\ctx:\tm\toms\tmtwo$ might lead to a breaking of the sub-term property later on during cut elimination. In the example, $\evar$ belongs to $\dfv\ctx$ for every context $\ctx\defeq \dera\evar\mvar\suba\mvar\vctx\var\tmtwo$ of $\dera\evar\mvar\suba\mvar\val\var\tmtwo$, for every $\vctx$.

\paragraph{Good and Bad Contexts and Steps} The previous considerations  lead to the notions of \emph{good} and \emph{bad contexts} in \reffig{strategy}. A good context forbids the two ways of breaking the sub-term property: its hole cannot be in a cut value (note the absence of the production $\cuta{\vgctx}\var\tm$) nor in a subtraction value such that one of its dominating variables is cut (because of the production $\cuta\val\var\gctx$ if $\var\notin\dfv\gctx
$). Every other step is allowed. The next lemma ensures that bad and good contexts are indeed complementary concepts.

\begin{lem}[Good/bad partition]
\label{l:good-bad-partition}
Let $\ctx$ be a context. Then $\ctx$ is either good or bad.
\end{lem}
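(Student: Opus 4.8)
The plan is to prove this by induction on the general context $\ctx$, doing a case analysis on the term constructor immediately surrounding the hole of $\ctx$ and on which argument of that constructor the hole sits in, so that, e.g., $\cuta\ctxhole\var\tm$ and $\cuta\val\var\ctxhole$ count as distinct cases. The base case $\ctx=\ctxhole$ is immediate, since $\ctxhole$ is a good value context $\vgctx$, hence good. Two elementary observations drive the inductive step. (i) Good and bad contexts are in particular general contexts, and each frame $\pair\ctxhole\tm$, $\pair\tm\ctxhole$, $\la\var\ctxhole$, $\bang\ctxhole$ is a value context $\vctx$, while each frame $\para\mvar\var\vartwo\ctxhole$, $\dera\evar\var\ctxhole$, $\suba\mvar\val\var\ctxhole$, $\cuta\val\var\ctxhole$ is a left context $\lctx$; in all cases these frames are themselves general contexts. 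Hence, whenever the sub-context $\ctxtwo$ sitting below such a frame is \emph{bad}, the whole context has the shape $\ctxp\bctx$ and is bad. (ii) By inspection of the good-context grammar, the only productions whose root is a value constructor (pairing, abstraction, or promotion) or the hole itself are those generating good value contexts $\vgctx$; therefore a good context that is also a value context must be a $\vgctx$.

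With these in hand the inductive cases are essentially mechanical. If the outermost frame is a value frame or one of the ``body'' left frames $\para\mvar\var\vartwo\ctxhole$, $\dera\evar\var\ctxhole$, $\suba\mvar\val\var\ctxhole$, I apply the induction hypothesis to the sub-context $\ctxtwo$: if $\ctxtwo$ is bad then $\ctx$ is bad by (i); if $\ctxtwo$ is good then $\ctx$ fits the matching production of good (or good value) contexts. If the outermost frame is $\cuta\ctxhole\var\tm$, then $\ctx$ matches the bad production $\cuta\vctx\var\tm$ outright. If it is $\cuta\val\var\ctxhole$, the induction hypothesis on $\ctxtwo$ gives either $\ctxtwo$ bad, whence $\ctx$ is bad by (i), or $\ctxtwo$ good, in which case $\ctx$ matches the good production $\cuta\val\var\gctx$ when $\var\notin\dfv\ctxtwo$ and the bad production $\cuta\val\var\ctx$ (with $\var\in\dfv\ctx$) when $\var\in\dfv\ctxtwo$.

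The one case that needs care, and where I expect the real content to lie, is $\ctx=\suba\mvar\vctx\var\tm$, with the hole in a subtraction value. Here $\vctx$ is a value context, hence a general context strictly smaller than $\ctx$, so by the induction hypothesis it is good or bad. If $\vctx$ is good, then by (ii) it is a good value context $\vgctx$, so $\ctx$ matches the good production $\suba\mvar\vgctx\var\tm$ and is good. If $\vctx$ is bad, I would first note, by an easy induction on the grammar of bad contexts, that every bad context decomposes as $\ctxthreep{\bctx}$ with $\ctxthree$ a general context and $\bctx$ \emph{atomic bad}, i.e.\ of the form $\cuta\vctx'\varthree\tmthree$ or $\cuta\valtwo\varthree\ctxfour$ with $\varthree\in\dfv\ctxfour$. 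Since a value context cannot itself be atomic bad (those have a cut at the root), applying this decomposition to $\vctx$ forces $\ctxthree\neq\ctxhole$, so that $\ctxthree$ shares its root with $\vctx$ and is therefore again a value context; then $\ctx=(\suba\mvar{\ctxthree}\var\tm)\ctxholep{\bctx}$ is of the shape $\ctxp\bctx$ and bad. The bookkeeping around this decomposition is the main obstacle, but it is purely syntactic; the non-uniqueness of context decompositions stressed in the paper (the ``positions'') never interferes, because the induction runs on the context itself rather than on terms. Finally, the exclusivity half of ``complementary'' --- that no context is both good and bad --- follows from a parallel induction: along the spine of a good context the hole never enters a cut value, no cut variable dominates the remaining context, and subtraction values stay good, all of which is incompatible with the atomic bad configurations just described.
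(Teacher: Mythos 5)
Your proof is correct and follows the paper's own argument: an induction on $\ctx$ where every non-cut frame is dispatched by the induction hypothesis and only the two cut frames need the $\dfv\gctx$ side condition; the extra observations you make explicit (a good context rooted in a value constructor must be a good value context, and a bad sub-context propagates badness via the closure production $\ctxp\bctx$) are precisely the details the paper compresses into ``it follows from the i.h.''. The only divergence is the detour through an ``atomic bad'' decomposition in the $\suba\mvar\vctx\var\tm$ case, which is unnecessary: if $\vctx$ is bad then $\suba\mvar\vctx\var\tm = (\suba\mvar\ctxhole\var\tm)\ctxholep{\vctx}$ already matches the production $\ctxp\bctx$ directly.
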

\begin{proof}
By induction on $\ctx$. The empty context $\ctxhole$ is good and not bad. For all inductive cases but cut, it follows from the \ih The cut cases:
\begin{itemize}
	\item $\ctx = \cuta\vctx\var\tm$. Then $\ctx$ is bad and not good.
	\item $\ctx = \cuta\val\var\ctxtwo$. By \ih, $\ctxtwo$ is either good or bad. If $\ctxtwo$ is bad then $\ctx$ is bad, and $\ctx$ is not good because $\ctxtwo$ is not good. If $\ctxtwo$ is good, consider whether $\var\in \dfv\ctxtwo$. If it does, then $\ctx$ is bad and not good. If instead  $\var\notin \dfv\ctxtwo$ then $\ctx$ is good and not bad.\qedhere
\end{itemize}
\end{proof}

\begin{defi}[Good/bad step, good strategy]
A micro step $\ctx : \tm \toms \tmtwo$ is \emph{good} if its position $\ctx$ is good. In such a case, we write $\tm \tog \tmtwo$. The \emph{good cut elimination strategy} is simply $\tog$. A step $\tm \toms \tmtwo$ is \emph{bad} if it is not good, and we then write $\tm\tobad \tmtwo$. We also use $\togp a$ and $\tobadp a$ to stress that the good/bad step is of kind $a\in\set{\axmone,\axmtwo,\tens,\lolli, \axeone,\axetwo,\bdersym,\wsym}$. 
\end{defi}

\paragraph{A Technical Remark About Goodness} The notions of good and bad steps are inherently micro-step, as the next example shows. Consider the following term:
\begin{center}$
\begin{array}{lll}
\tm 
& \defeq &
 \cuta{\evartwo}\evar \dera\evar\evarthree \cuta\mvarthree\mvar \suba\mvar\evar\mvartwo \mvartwo
\end{array}
$\end{center}
In $\tm$, the cut $\cuta{\evartwo}\evar$ gives rise to \emph{two} redexes, one for each occurrence of $\evar$. The redex concerning $\dera\evar\evarthree$ is good while the one concerning $\suba\mvar\evar\mvartwo $ is bad, because its position $\cuta{\evartwo}\evar \dera\evar\evarthree \cuta\mvarthree\mvar  \suba\mvar\ctxhole\mvartwo \mvartwo$ is a bad context ($\mvar$ dominates the hole, and there is a cut on $\mvar$). Thus, being good/bad is not a property of the cut $\cuta{\evartwo}\evar$, that is, it is not a small-step concept, because not all the micro-step redexes in which a cut is involved share the same character. Consider now:
\begin{center}$\begin{array}{lll}
\tmtwo
& \defeq &
\cuta{\evartwo}\evar \dera\evar\mvar \suba\mvar\evar\mvartwo \mvartwo
\end{array}$\end{center}
In $\tmtwo$, again, there are two redexes on $\evar$, a good one and a bad one. The difference is that now it is the good occurrence of $\evar$ that turns the other occurrence into a bad one (by dominating $\mvar$, which dominates the second occurrence).

\paragraph{Basic Properties of Good Contexts.} The next three lemmas collect basic facts about good contexts that are used in the proof of the rest of the section.

\begin{lem}[Good context decomposition]
\label{l:good-ctx-decomp}
Let $\ctxp\ctxtwo$ be a good context. Then $\ctx$ and $\ctxtwo$ are good contexts.
\end{lem}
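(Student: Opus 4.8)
The plan is to go by induction on $\ctx$, inverting the good-context grammar at the outermost constructor of $\ctxp\ctxtwo$ and applying the induction hypothesis to the relevant sub-context of $\ctx$. If $\ctx=\ctxhole$ the claim is immediate: $\ctxtwo=\ctxp\ctxtwo$ is good and $\ctxhole$ is good. For the inductive step, recall that a context cannot be both good and bad — this mutual exclusivity is exactly what the proof of \reflemma{good-bad-partition} establishes — so in the problematic cases it will suffice to exhibit $\ctxp\ctxtwo$ as bad in order to contradict the hypothesis.

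For the productions built from good \emph{value} contexts ($\pair\gctx\tm$, $\pair\tm\gctx$, $\la\var\gctx$, $\bang\gctx$) and for the plain left-constructor productions ($\para\mvar\var\vartwo\gctx$, $\dera\evar\var\gctx$, $\suba\mvar\val\var\gctx$), the plugging $\ctxp\ctxtwo$ has the same outermost constructor as $\ctx$ with the hole in the same sub-component; hence goodness of $\ctxp\ctxtwo$ forces that sub-component — which is $\ctxthreep\ctxtwo$ for the matching sub-context $\ctxthree$ of $\ctx$ — to be a good context. By the induction hypothesis $\ctxthree$ and $\ctxtwo$ are good, and re-applying the production gives $\ctx$ good. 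The subtraction-value production $\suba\mvar\vgctx\var\tm$ is handled identically, invoking a companion statement for good value contexts ("$\vgctx$ plugged with $\ctxtwo$ a good value context implies $\vgctx$ a good value context and $\ctxtwo$ a good context"), proved by the same mutual induction.

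The cut cases are where a little care is needed, and constitute the main obstacle. If the hole of $\ctx$ sits inside the cut value, i.e. $\ctx=\cuta\vctx\var\tm$ for a value context $\vctx$, then by the cut clause of Definition~\ref{def:plugging} the plugging $\ctxp\ctxtwo$ is a well-formed context only when $\vctxp\ctxtwo$ is again a value context, and then $\ctxp\ctxtwo=\cuta{\vctxp\ctxtwo}\var\tm$ still has its hole inside a cut value. But no good-context production admits a hole inside a cut value (in the grammar, the first argument of a cut is always a hole-free value), so such a $\ctxp\ctxtwo$ is bad, contradicting the hypothesis; thus this case does not occur. If instead $\ctx=\cuta\val\var\ctxthree$ with the hole in the body, then $\ctxp\ctxtwo=\cuta\val\var{\ctxthreep\ctxtwo}$, which can be good only via the production $\cuta\val\var\gctx$ with $\var\notin\dfv{\ctxthreep\ctxtwo}$. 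Since $\dfv$ is monotone under plugging — $\dfv{\ctxthreep\ctxtwo}\supseteq\dfv\ctxthree$, a routine induction on $\ctxthree$ from $\dfv\ctxhole=\emptyset$ and the monotonicity of every defining clause — we get $\var\notin\dfv\ctxthree$, and by the induction hypothesis $\ctxthree$ and $\ctxtwo$ are good, so $\ctx$ is good by the same production. The only genuinely delicate point is reading off from Definition~\ref{def:plugging} that plugging into a cut- (or subtraction-) value slot keeps the hole inside that slot and produces a context only for value-context arguments; everything else is bookkeeping over the good/bad grammars.
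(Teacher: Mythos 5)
The paper states this lemma without giving a proof, so there is no official argument to compare against; your induction on $\ctx$, using \reflemma{good-bad-partition} to make the cut-value case vacuous and the monotonicity of $\dfv$ under plugging to discharge the side condition in the case $\ctx=\cuta\val\var\ctxthree$, is clearly the intended route, and those two ingredients are the right ones. The treatment of the hole cases, the value-context companion statement, and the $\dfv{\ctxthreep\ctxtwo}\supseteq\dfv\ctxthree$ claim all check out against the definitions in Figure~\ref{fig:strategy}.

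There is, however, one concrete inaccuracy that leaves a small hole. You assert that plugging into a cut- or subtraction-value slot ``keeps the hole inside that slot and produces a context only for value-context arguments''. By the splitting clause of Definition~\ref{def:plugging}, plugging a non-value context $\ctxtwo=\lctxp{\vctx'}$ into $\suba\mvar\ctxhole\var\tm$ is perfectly well defined and yields $\lctxp{\suba\mvar{\vctx'}\var\tm}$: the left prefix of $\ctxtwo$ migrates outside, so $\ctxp\ctxtwo$ does not have the same outermost constructor as $\ctx$, and your ``handled identically'' reading of the subtraction-value case does not cover it. (For the cut-value slot the migration is harmless: the hole still sits inside a cut value, so $\ctxp\ctxtwo$ is bad and that case remains vacuous, as you conclude.) In the subtraction case you must additionally argue that goodness of $\lctxp{\suba\mvar{\vctx'}\var\tm}$ entails goodness of $\ctxtwo=\lctxp{\vctx'}$. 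This does follow: goodness of the inner $\suba\mvar{\vctx'}\var\tm$ forces $\vctx'$ to be a good value context, and the reverse inclusion $\dfv{\lctxtwop{\vctx'}}\subseteq\dfv{\lctxtwop{\suba\mvar{\vctx'}\var\tm}}$ for every suffix $\lctxtwo$ of $\lctx$ (a second, equally routine monotonicity induction, this time in the plugged part) discharges the conditions $\varthree\notin\dfv{\cdot}$ attached to the cuts occurring in $\lctx$. The tools are ones you already deploy, but the case is genuinely not covered as written and should be added.
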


\begin{lem}[Bad cannot outer good]
\label{l:b-not-outer-g}
Let $\gctxp\tm= \bctxp\tmtwo$. Then $\bctx\not\outer\gctx$, that is, either $\gctx\outer\bctx$ or $\gctx \parallel \bctx$.
\end{lem}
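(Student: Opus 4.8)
The plan is to derive a contradiction from assuming $\bctx \outer \gctx$, leaning entirely on the two lemmas just established. First I would observe that, by the very definition of the outside-in order, $\bctx \outer \gctx$ amounts to the existence of a context $\ctxthree$ with $\gctx = \bctxp\ctxthree$: this is immediate from the two rules defining $\outer$ (an easy induction on the derivation, or on the structure of $\bctx$, using associativity of plugging, and checking the two special-plugging cases for $\cuta\vctx\var\tm$ and $\suba\mvar\vctx\var\tm$). Note that the hypothesis $\gctxp\tm = \bctxp\tmtwo$ plays no role in this step; it only ensures that $\gctx$ and $\bctx$ are positions in a common term, which is what makes the trichotomy in the conclusion meaningful.

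So suppose $\bctx \outer \gctx$, i.e.\ $\gctx = \bctxp\ctxthree$ for some context $\ctxthree$. Apply the good context decomposition lemma (\reflemma{good-ctx-decomp}) to this decomposition of the good context $\gctx$: it yields that $\bctx$ (and $\ctxthree$) are good contexts. In particular $\bctx$ is good. But $\bctx$ is bad by hypothesis, and no context is simultaneously good and bad — this follows from \reflemma{good-bad-partition}, whose proof in fact establishes the exclusive alternative (``good and not bad, or bad and not good''). Contradiction, so $\bctx \not\outer \gctx$.

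To conclude, recall that $\gctx \parallel \bctx$ is defined as $\gctx \not\outer \bctx$ and $\bctx \not\outer \gctx$, so for any two contexts one of $\gctx \outer \bctx$, $\bctx \outer \gctx$, $\gctx \parallel \bctx$ must hold. Having ruled out $\bctx \outer \gctx$, we are left with $\gctx \outer \bctx$ or $\gctx \parallel \bctx$, which is exactly the statement.

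I expect no real obstacle here, precisely because all the work has been front-loaded into \reflemma{good-ctx-decomp} and \reflemma{good-bad-partition}: the present lemma is essentially a corollary that repackages ``good is closed under taking sub-contexts'' as ``a bad context cannot enclose a good one''. Were \reflemma{good-ctx-decomp} not available, the crux would instead be a monotonicity property of $\dfv$ along context prefixes — namely that shortening a context can only shrink its set of dominating free variables, so that the side condition $\var\notin\dfv\gctx'$ making a cut node of a good context good is preserved when passing from $\gctx$ to one of its prefixes — and proving that would require a slightly delicate induction because of the conditional clause for pars in the definition of $\dfv$. Since \reflemma{good-ctx-decomp} already encapsulates this, there is nothing left to do.
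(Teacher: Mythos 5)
Your proof is correct, but it takes a genuinely different route from the paper's. The paper argues by direct induction on $\gctx$, at each constructor case-analyzing the possible shapes of $\bctx$: the inductive sub-cases go through by the induction hypothesis, and in the tensor/subtraction/cut cases a $\bctx$ that enters the other sub-term immediately yields $\bctx\not\outer\gctx$. You instead unfold $\bctx\outer\gctx$ into the existence of a decomposition $\gctx=\bctxp\ctxthree$, apply \reflemma{good-ctx-decomp} to conclude that $\bctx$ is good, and contradict the exclusivity underlying \reflemma{good-bad-partition}. This is a legitimate and more modular argument; it makes explicit that the lemma is a corollary of closure of goodness under outer prefixes plus the good/bad dichotomy, and your observation that the hypothesis $\gctxp\tm=\bctxp\tmtwo$ is not actually needed for the implication is accurate. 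The trade-offs: (i) you need the equivalence between $\outer$ and the plugging decomposition, hence associativity of the paper's slightly non-standard plugging (the special cases for $\cuta\ctxhole\var\tmtwo$ and $\suba\mvar\ctxhole\var\tmtwo$), which you rightly flag as a small side induction; (ii) you lean on \reflemma{good-ctx-decomp}, whose proof is omitted in the paper and is essentially the same induction on contexts that the paper performs directly here, so the inductive work is relocated rather than avoided; and (iii) exclusivity of good and bad is only implicit in the statement of \reflemma{good-bad-partition}, though, as you note, its proof does establish the exclusive alternative in every case. None of these is a gap.
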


\begin{proof}
By induction on $\gctx$. Note that $\bctx$ cannot be empty, otherwise it would be good. Cases:
\begin{itemize}
\item $\gctx = \ctxhole$. Then $\tm = \bctxp\tmtwo$ and so $\bctx\not\outer \gctx$.
\item The abstraction, bang, par, and dereliction cases follow from the \ih
\item $\gctx = \pair\gctxtwo\tmthree$. If $\bctx = \pair\bctxtwo\tmthree$ then it follows from the \ih If $\bctx = \pair\tmfour\bctxtwo$ then $\bctx \parallel \gctx$.
\item $\gctx = \pair\tmthree\gctxtwo$. It goes as the previous case.
\item The subtraction cases are as the tensor cases.

\item $\gctx = \cuta\val\var\gctxtwo$. If $\bctx = \cuta\val\var\bctxtwo$ it follows from the \ih If $\bctx = \cuta\bctxtwo\var\tmthree$ then $\bctx \parallel \gctx$.\qedhere
\end{itemize}
\end{proof}

\begin{lem}[Double contexts and replacements]
\label{l:double-ctx-repl}
Let $\cctx$ be a double context and $\tm$ and $\tmtwo$ be two terms. Then:
 \begin{enumerate}
 \item $\cctxp{\cdot,\tm}$ is as capturing as $\cctxp{\cdot,\tmtwo}$;
 \item $\dfv{\cctxp{\cdot,\tm}} = \dfv{\cctxp{\cdot,\tmtwo}}$;
 \item $\cctxp{\cdot,\tm}$ is good if and only if $\cctxp{\cdot,\tmtwo}$ is good.
 \end{enumerate}
\end{lem}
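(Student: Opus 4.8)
The plan is to prove all three items simultaneously by a single induction on the structure of the double context $\cctx$, since the three statements are mutually supporting: statement 3 (goodness) refers to the production $\cuta\val\var\gctx$ which requires $\var\notin\dfv\gctx$, so it depends on statement 2 (equality of dominating free variables); and statement 2 in turn depends on statement 1 (identical capturing behaviour), because $\dfv{}$ of a context is computed using free-variable membership tests inside its sub-contexts, and whether a bound variable escapes depends on capture. So I would phrase the induction hypothesis as the conjunction of (1), (2), (3) for all strictly smaller double contexts, and carry them through together.

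For the base cases, $\cctx$ is one of $\pair\ctx\ctxtwo$, $\suba\mvar\ctx\var\ctxtwo$, or $\cuta\ctx\var\ctxtwo$. In each of these, plugging $\tm$ (resp. $\tmtwo$) into the \emph{second} hole fills $\ctxtwo$, leaving a context of the form $\pair\ctx{\ctxtwop\tm}$, $\suba\mvar\ctx\var{\ctxtwop\tm}$, or $\cuta\ctx\var{\ctxtwop\tm}$ respectively (note that for the cut and subtraction base cases the first argument $\ctx$ is a \emph{value} context, so plugging into the second component does not trigger the split-shape rewiring clauses of \refdef{plugging}). Capturing behaviour of such a context is determined entirely by $\ctx$ and by which binders of $\ctx$ surround the hole, which are the same for both instantiations since only $\ctxtwo$'s hole gets filled; this gives (1). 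For (2), reading off \reffig{strategy}: $\dfv{\pair\ctx{\ctxtwop\tm}} = \dfv\ctx$ regardless of $\tm$; $\dfv{\suba\mvar\ctx\var{\ctxtwop\tm}} = \set\mvar\cup\dfv\ctx$ (using the $\suba\mvar\vctx\var{-}$ clause, as $\ctx$ here is a value context) regardless of $\tm$; and similarly $\dfv{\cuta\ctx\var{\ctxtwop\tm}} = \dfv\ctx$ via the $\cuta\vctx\var{-}$ clause. So all three base cases give $\dfv{\cctxp{\cdot,\tm}} = \dfv\ctx = \dfv{\cctxp{\cdot,\tmtwo}}$. For (3), goodness of $\pair\ctx{\ctxtwop\tm}$ as a value context $\vgctx$ requires $\ctx$ good and does not look at $\ctxtwop\tm$ beyond it being a term; goodness of $\suba\mvar\ctx\var{\ctxtwop\tm}$ via the production $\suba\mvar\vgctx\var{-}$ requires only $\ctx$ good; goodness of $\cuta\ctx\var{\ctxtwop\tm}$ via $\cuta\val\var\gctx$ is impossible here since the hole is in the value position, so both instantiations are non-good (bad), hence equivalent. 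In every base case the verdict is independent of $\tm$ versus $\tmtwo$.

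For the inductive case $\cctx = \ctxp{\cctxthree}$ with $\cctxthree$ a smaller double context: by \reflemma{double-ctx-comp} we have $\cctxp{\cdot,\tm} = \ctxp{\cctxthree}\ctxholep{\cdot,\tm} = \ctxp{\cctxthreep{\cdot,\tm}}$, and likewise for $\tmtwo$. Capturing behaviour of $\ctxp{\cctxthreep{\cdot,\tm}}$ is the union of the binders contributed by the outer $\ctx$ along the path to its hole, plus those contributed by $\cctxthreep{\cdot,\tm}$; the former is fixed, the latter equals that of $\cctxthreep{\cdot,\tmtwo}$ by the induction hypothesis (1), giving (1) for $\cctx$. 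For (2), one proves by a side induction on $\ctx$ that $\dfv{\ctxp{\coco_1}} = \dfv{\ctxp{\coco_2}}$ whenever $\coco_1,\coco_2$ are contexts with $\dfv{\coco_1}=\dfv{\coco_2}$ and $\coco_1,\coco_2$ are equi-capturing --- inspecting \reffig{strategy}, each clause computes $\dfv{}$ of a compound context from the $\dfv{}$ of its immediate sub-context together with membership tests $\var\in\dfv{-}$ of bound variables, all of which are preserved; taking $\coco_i = \cctxthreep{\cdot,\tm}$ resp. $\cctxthreep{\cdot,\tmtwo}$ and using induction hypotheses (1) and (2) closes the case. For (3), similarly, goodness of $\ctxp{\coco}$ is determined by: $\coco$ being good (induction hypothesis (3)), each outer constructor of $\ctx$ being an allowed good production, and, at each cut node $\cuta\val\var{-}$ encountered, the side condition $\var\notin\dfv{}$ of the sub-context below it --- and those $\dfv{}$ values agree between the two instantiations by (2). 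Hence $\ctxp{\cctxthreep{\cdot,\tm}}$ is good iff $\ctxp{\cctxthreep{\cdot,\tmtwo}}$ is.

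The main obstacle I anticipate is making the handling of $\dfv{}$ fully rigorous, specifically the auxiliary monotonicity-type lemma that $\dfv{\ctxp{-}}$ depends on its argument context only through that argument's own $\dfv{}$ and its capturing set. This has to be checked against every clause of the $\dfv{}$ definition in \reffig{strategy}, including the conditional clauses for $\para\mvar\var\vartwo\ctx$, $\suba\mvar\val\var\ctx$, and $\dera\evar\var\ctx$ whose output branches on whether a bound variable lies in the sub-context's $\dfv{}$ --- exactly the kind of membership test that is preserved by the hypothesis. A secondary subtlety is keeping straight which components of the base-case double contexts are value contexts (so that \refdef{plugging} does not reshuffle the term into split form in an unexpected place); but since the second hole of each base double context sits in a plain term position, this is a routine check rather than a genuine difficulty.
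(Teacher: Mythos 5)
Your proof is correct and follows essentially the same route as the paper's: a simultaneous induction on the structure of the double context, with the base cases checked directly against the definitions and the inductive case $\ctxp{\cctxtwo}$ handled via the induction hypothesis, using point 2 to discharge the side condition $\var\notin\dfv{\cdot}$ at cut nodes for point 3. The paper's proof is just a terser statement of exactly this argument.
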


Note that the last point of \reflemma{double-ctx-repl} could have be equivalently stated with respect to \emph{being bad}.

\begin{proof}
By induction on $\cctx$. The base cases are straightforward, and the inductive cases follow from the \ih In the inductive case $\cctx = \cuta\val\var\cctxtwo$, the third point uses the second one.
\end{proof}

\paragraph{Sub-Term Property.} We are ready for the key point of the paper, the sub-term property of the good strategy. The proof is based on a natural local invariant about \emph{bad values}, which are the sub-terms at risk of being duplicated or erased.
\begin{defi}[Bad values]
Given $\tm = \ctxp\val$, $\val$ is a \emph{bad value} of $\tm$ if $\ctx$ is a bad context.
\end{defi}

The local invariant shows that one good step cannot create bad values. The sub-term property follows by induction on the length of cut elimination sequences.

Since strict equality of sub-terms is not preserved by reduction (because of on-the-fly $\alpha$-renaming), the local invariant concerns the \emph{size} of bad values, which is what is important for cost analyses. In other words, we prove the quantitative form of the sub-term property, but one could equivalently prove the structural form and then obtain the quantitative one as a corollary (the literal sub-term property instead does not hold).

\begin{lem}
\label{l:good-ctx-mutilation}
\hfill
\begin{enumerate}
\item \label{p:good-ctx-mutilation-good} If $\ctxp{\cuta\val\var\ctxtwo}$ is good then $\ctxp{\ctxtwo}$ is good.
%%%
\item \label{p:good-ctx-mutilation-bad}If $\ctxp{\ctxtwo}$ is bad then $\ctxp{\cuta\val\var\ctxtwo}$ is bad.
\end{enumerate}
\end{lem}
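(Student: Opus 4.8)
The plan is to prove item~(1) by induction on $\ctx$, and then to derive item~(2) from~(1) together with the good/bad dichotomy of \reflemma{good-bad-partition}: if $\ctxp\ctxtwo$ is bad then it is not good (by the proof of \reflemma{good-bad-partition}, no context is both good and bad), so the contrapositive of~(1) shows that $\ctxp{\cuta\val\var\ctxtwo}$ is not good, hence bad.

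For~(1), I would first extract a preliminary observation that does most of the work. Since $\ctxp{\cuta\val\var\ctxtwo}$ is good, viewing it as the plugging of the context $\cuta\val\var\ctxtwo$ into $\ctx$ and applying \reflemma{good-ctx-decomp} gives that both $\ctx$ and $\cuta\val\var\ctxtwo$ are good. By the grammar of good contexts in \reffig{strategy}, a good context whose topmost constructor is a cut $\cuta\val\var\ctxtwo$ can only arise from the clause ``$\cuta\val\var\gctx$ if $\var\notin\dfv\gctx$'', so $\ctxtwo$ is good and $\var\notin\dfv\ctxtwo$. Hence $\dfv{\cuta\val\var\ctxtwo} = \dfv\ctxtwo\setminus\set\var = \dfv\ctxtwo$. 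Moreover, reading off \reffig{strategy}, the dominating free variables of a plugging $\ctxthreep\ctxfour$ are determined by $\ctxthree$ and $\dfv\ctxfour$ alone (an easy induction on $\ctxthree$); combining this with the last equality yields
\[\dfv{\ctxthreep{\cuta\val\var\ctxtwo}} \; = \; \dfv{\ctxthreep\ctxtwo} \qquad \mbox{for every context }\ctxthree.\]

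With this in hand, the induction on $\ctx$ proceeds as follows. If $\ctx = \ctxhole$ then $\ctxp\ctxtwo = \ctxtwo$, which is good by the observation above. If $\ctx$ is non-empty I split on the good-context production used to derive it. For every production other than the cut one there is no side condition mentioning $\dfv$, so by the grammar $\ctxp{\cuta\val\var\ctxtwo}$ is good iff its immediate sub-context $\ctxthreep{\cuta\val\var\ctxtwo}$ is good, where $\ctxthree$ is $\ctx$ with its topmost constructor removed; by \ih $\ctxthreep\ctxtwo$ is good, and re-applying the same production gives $\ctxp\ctxtwo$ good. In the one remaining case $\ctx = \cuta\valtwo\vartwo\ctxthree$ with $\vartwo\notin\dfv\ctxthree$, goodness of $\ctxp{\cuta\val\var\ctxtwo} = \cuta\valtwo\vartwo{(\ctxthreep{\cuta\val\var\ctxtwo})}$ additionally requires $\vartwo\notin\dfv{\ctxthreep{\cuta\val\var\ctxtwo}}$, which by the displayed equality is the same as $\vartwo\notin\dfv{\ctxthreep\ctxtwo}$; together with $\ctxthreep\ctxtwo$ good (by \ih), the cut production yields $\ctxp\ctxtwo$ good.

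I expect the main obstacle to be precisely this last case: one must make sure that deleting the cut $\cuta\val\var$ does not reactivate a dormant side condition of a cut enclosing it inside $\ctx$. This is exactly what the preliminary observation $\var\notin\dfv\ctxtwo$ — obtained for free from \reflemma{good-ctx-decomp} and the grammar — prevents, since it makes the mutilation invisible to all the relevant $\dfv$-computations; everything else is a routine inspection of the productions in \reffig{strategy}.
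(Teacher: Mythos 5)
Your proof is correct and follows essentially the same route as the paper's, which establishes Point 1 by induction on $\ctx$ and derives Point 2 as the contrapositive via \reflemma{good-bad-partition}; your preliminary observation that goodness of $\cuta\val\var\ctxtwo$ forces $\var\notin\dfv\ctxtwo$, so that deleting the cut is invisible to every enclosing $\dfv$-computation, is precisely what makes the enclosing-cut case of the induction close. The only detail to watch is the production $\suba\mvar\vgctx\var\tm$ with $\vgctx=\ctxhole$, where split plugging hoists the inserted cut out of the subtraction and the "remove the top constructor and apply the \ih" step does not literally apply—but there the conclusion is immediate from the cut production of the good grammar, so nothing is lost.
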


\begin{proof}
Point 1 is by induction on $\ctx$. Since a context is either good or bad (\reflemma{good-bad-partition}), Point 2 is simply the contrapositive of Point 1. 
\end{proof}

\begin{prop}[Local sub-term invariant]
\label{prop:local-sub-tm}
Let $\tm \tog \tmtwo = \bctxp\val$. Then $\tm = \bctxtwop\valtwo$ for some $\bctxtwo$ and $\valtwo$ such that $\size\val = \size\valtwo$.
\end{prop}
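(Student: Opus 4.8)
The statement says that a good step $\tm \tog \tmtwo$ does not create bad values: whenever $\tmtwo$ splits as $\bctxp\val$ with $\bctx$ bad, the corresponding position was already present in $\tm$ (up to the size of the value being unchanged). The plan is to proceed by case analysis on the good step $\ctxthree : \tm \tog \tmtwo$, where $\ctxthree$ is the (good) position of the reduced redex. The reduced redex acts on a cut $\cuta\exval\var(\cdot)$ inside $\ctxthree$ (for $\toms$ steps; for $\tobw$ the reasoning is simpler since nothing is duplicated), replacing an occurrence $\tm_\var$ of $\var$ by a copy of $\exval$ or of a relevant sub-part. The key observation to exploit is \reflemma{b-not-outer-g} (bad cannot outer good): the good position $\ctxthree$ cannot sit strictly inside a bad context, so $\bctx$ is either outer than $\ctxthree$, disjoint from it, or inner. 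Then I would track where the bad value $\val$ sits relative to the part of the term touched by the step.

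First I would handle the case where $\bctx$ is \emph{disjoint} from $\ctxthree$ (using double contexts $\cctx$ from \refsect{technicalities}, and \reflemma{double-ctx-repl}: replacing one hole by a bigger/smaller term does not change capturing, dominating free variables, or goodness/badness of the other context). In that case the position $\bctxp\val$ survives untouched from $\tm$, with the same $\val$, so we take $\bctxtwo = \bctx$ read inside $\tm$ and $\valtwo=\val$. Second, the case where $\bctx$ is \emph{outer} than $\ctxthree$: here $\val$ contains the reduced redex, so $\val$ has changed; but then by \reflemma{double-ctx-repl} the context $\bctx$ corresponds (again via the disjoint-hole / double-context bookkeeping, or directly by the deformation properties of \reflemma{aux-props-loc-confl-strong-comm}) to a bad context $\bctxtwo$ in $\tm$ whose value $\valtwo$ reduces in one step to $\val$; since exponential micro-steps either leave the number of constructors constant ($\toaxeone$ on a variable, $\toaxetwo$) or, in the case of $\tobder$/$\totens$, \emph{remove} a dereliction/par and \emph{add} a cut, I need to check that the step inside $\val$ preserves size. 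Actually the cleanest route is: a good step reducing a redex entirely inside $\val$ would itself be a position inside $\bctx$, i.e.\ a bad context would be outer to the good position $\ctxthree$, contradicting \reflemma{b-not-outer-g} — so this case is \emph{impossible}. That is the real content of why the invariant holds: you cannot reduce inside a bad value with a good step.

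The remaining case is where $\ctxthree$ is outer than $\bctx$, i.e.\ $\bctx$ sits inside the region affected by the step. Here the reduced redex is of the form $\ctxthree = \ctxp{\cuta\exval\var \ctxfour}$ with $\ctxfour\ctxholefp{\tm_\var}$ inside; the step replaces $\tm_\var$ by (a copy of) $\exval$ (for $\toaxeone$), or by $\dera\exvaltwo\ldots$ (for $\toaxetwo$), or opens $\exval=\bang(\lctxp\valtwo)$ producing $\lctxp{\cuta\valtwo\var(\cdot)}$ (for $\tobder$). In each subcase $\bctxp\val$ in $\tmtwo$ either lies in the unchanged surrounding part (reduce to the disjoint case), or lies inside a \emph{copy} of $\exval$ placed by the step — but then its pre-image is the corresponding position inside the cut value $\exval$ in $\tm$, which is a bad context (by \reflemma{good-ctx-mutilation}, plugging a cut above turns badness appropriately, and being inside a cut value is always bad by definition of $\bctx$), and the value $\val$ is literally a renamed copy of $\valtwo\subseteq\exval$, so $\size\val=\size\valtwo$. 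The subtle sub-subcase is $\tobder$, where $\exval=\bang\tmthree$ with $\tmthree=\lctxp{\valtwo'}$ and the step introduces new cuts $\cuta{\valtwo'}\var(\cdot)$ around the opened body: a new bad context is created, but its bad value is exactly $\valtwo'$, a sub-term of $\exval$ in $\tm$, so again the size is preserved.

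**Main obstacle.** The delicate part is the bookkeeping when $\bctx$ straddles the boundary between the untouched context and the newly-inserted copy of $\exval$ (or of the opened promotion body), and in particular making sure that ``$\val$ lies inside a copy of an exponential value cut in $\tm$'' always forces its enclosing context to be \emph{bad} in $\tm$ — this is where being inside a cut value, plus \reflemma{good-ctx-mutilation} and \reflemma{b-not-outer-g}, must be combined carefully, possibly with an auxiliary induction on $\ctxthree$ mirroring the structure of the deformation lemmas (\reflemma{aux-props-loc-confl-strong-comm}). I expect the proof to be an induction on $\ctxthree$ (equivalently on the derivation of the good step), with the cut/dereliction/subtraction cases carrying all the weight and the multiplicative cases being routine, and with \reflemma{double-ctx-repl} used repeatedly to move a value in or out of a (good or bad) context without changing its decisive properties.
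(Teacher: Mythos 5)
Your architecture is the one the paper's proof follows: a case analysis on where the bad value of $\tmtwo$ sits relative to the good redex position, with \reflemma{b-not-outer-g} showing that a good step cannot fire inside a bad value (so a bad value cannot strictly contain the rewritten material), \reflemma{double-ctx-repl} disposing of positions disjoint from the redex, and the observation that freshly duplicated material is a copy of (a sub-term of) the cut value, all of whose positions in $\tm$ are already bad by the very first production $\cuta\vctx\var\tm$ of the grammar of bad contexts. Those three points are correct and form the skeleton of the argument.

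The gap is in your residual dichotomy for the affected region: ``unchanged surrounding part (reduce to the disjoint case) / inside a copy of $\exval$''. A bad value $\val$ of $\tmtwo$ can also sit in the continuation $\tmthree$ of the occurrence consumed by the step --- the $\tmthree$ of $\dera\evar\var\tmthree$ for $\tobder$, of $\para\mvar\var\vartwo\tmthree$ for $\totens$, of $\suba\mvar\valtwo\var\tmthree$ for $\tololli$. Such a $\val$ is not a copy of anything, and it does not reduce to the disjoint case: its position is not preserved verbatim, since the step deletes the left constructor above it and inserts one or two fresh cuts binding $\var$ (and $\vartwo$) there, so \reflemma{double-ctx-repl}, which only swaps the content of the \emph{other} hole of a double context, does not apply. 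This is exactly where a position can turn bad with nothing being duplicated: if $\var$ dominates the position of $\val$ inside $\tmthree$, the newly created cut on $\var$ makes that position bad in $\tmtwo$. To close this case you must show the position was \emph{already} bad in $\tm$: by the $\dfv{}$ clauses for par, subtraction and dereliction, $\var$ dominating from below forces the conclusion variable ($\mvar$ or $\evar$) of the consumed occurrence into the dominating set; that variable propagates up through the closing context (which by the side conditions of the rules does not capture it) to the very cut being fired, which is therefore a cut on a dominating variable, i.e.\ the second production of bad contexts. One also needs that the binders of the left context extracted from the promotion or abstraction body cannot dominate positions of $\tmthree$ (they do not occur there, by $\alpha$-convention), so the extracted $\lctx$ creates no new badness. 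This sub-case is the reason good contexts are defined via dominating variables at all; without it the case analysis does not close.
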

% !TEX root = ../../main.tex
%%%%%%%%%%%%%%%
%%%%%%%%%%%%%%% 
\begin{proof}
Cases of $\tm \tog \tmtwo = \bctxp\val$:
\begin{itemize}
	%%%%%%%%%%%%%%
	%%%%%%%%%%%%%%
	\item $\toaxmone$. We have $\tm = \ctxp{\cuta{\mval}\mvar\mctxfp\mvar}  \togp\axmone  \ctxp{\mctxfp\mval} = 
\tmtwo = \bctxp\val$. Since $\ctxp{\cuta{\mval}\mvar\mctx}$ is good, $\ctxp\mctx$ is good by \reflemma{good-ctx-mutilation}, and  by \reflemma{good-ctx-decomp} both $\ctx$ and $\mctx$ are good. By \reflemma{b-not-outer-g}, $\bctx \not\outer \ctx$ and there are two cases:
\begin{itemize}
\item $\ctx \parallel \bctx$. Then there is a double context $\cctx_{\ctx,\bctx}$ such that $\cctx_{\ctx,\bctx}\ctxholep{\cdot, \val} = \ctx$ and\\ $\cctx_{\ctx,\bctx}\ctxholep{\mctxfp\mval,\cdot} = \bctx$. By \reflemma{double-ctx-repl}, the context $\bctxtwo \defeq \cctx_{\ctx,\bctx}\ctxholep{\cuta{\mval}\mvar\mctxfp\mvar,\cdot}$ is bad. Note that $\bctxtwop\val = \cctx_{\ctx,\bctx}\ctxholep{\cuta{\mval}\mvar\mctxfp\mvar,\val} =\ctxp{\cuta{\mval}\mvar\mctxfp\mvar}=\tm$, and so $\bctxtwo$ isolates $\val$ in $\tm$.

\item $\ctx \outer \bctx$. By \reflemma{b-not-outer-g}, $\bctx \not\outer \ctxp\mctx$ and there are two sub-cases:
\begin{itemize}
\item $\ctxp\mctx \parallel \bctx$. Then $\bctx = \ctxp\ctxtwo$ for some $\ctxtwo$ such that $\mctx \parallel \ctxtwo$. There is a double context $\cctx_{\mctx,\ctxtwo}$ such that $\cctx_{\mctx,\ctxtwo}\ctxholep{\cdot,\val} = \mctx$ and $\cctx_{\mctx,\ctxtwo}\ctxholep{\mval,\cdot} = \ctxtwo$. Note that $\ctxp{\cctx_{\mctx,\ctxtwo}}\ctxholep{\mval,\cdot} =_{\reflemmaeq{double-ctx-comp}} \ctxp{\cctx_{\mctx,\ctxtwo}\ctxholep{\mval,\cdot}}=\ctxp\ctxtwo=\bctx$. By \reflemma{double-ctx-repl}, the context $\bctxtwo \defeq \ctxp{\cctx_{\mctx,\ctxtwo}}\ctxholep{\mvar,\cdot}$ is bad. Now, $\bctxtwo =_{\reflemmaeq{double-ctx-comp}} \ctxp{\cctx_{\mctx,\ctxtwo}\ctxholep{\mvar,\cdot}}$ and let $\ctxthree \defeq \cctx_{\mctx,\ctxtwo}\ctxholep{\mvar,\cdot}$, so that  $\bctxtwo = \ctxp\ctxthree$.
Note that $\bctxtwop\val= \ctxp{\cctx_{\mctx,\ctxtwo}\ctxholep{\mvar,\val}}=\ctxp{\mctxp{\mvar}}$ and also $\bctxtwop\val = \ctxp{\ctxthreep\val}$. Then $\tm = \ctxp{\cuta{\mval}\mvar\mctxp{\mvar}} = \ctxp{\cuta{\mval}\mvar\ctxthreep\val}$. Thus,  $\ctxp{\cuta{\mval}\mvar\ctxthree}$ isolates $\val$ in $\tm$ and it is bad because of \reflemmap{good-ctx-mutilation}{bad} and the fact that $\bctxtwo = \ctxp\ctxthree$ is bad.

% \cuta{\mval}\mvar

\item $\ctxp\mctx \outer \bctx$. Then $\bctx = \ctxp{\mctxp\ctxtwo}$ for $\ctxtwo$ such that $\mval = \ctxtwop\val$. The context $\ctxp{\cuta{\ctxtwo}\mvar\mctxfp\mvar}$ is bad and isolates $\val$ in $\tm$.
\end{itemize}

\end{itemize}

	\item $\toaxmtwo$. We have $\tm = \ctxp{\cuta{\mvar}\mvartwo\tmfour}  \togp\axmtwo  \ctxp{\cutsub\mvar\mvartwo\tmfour} = \tmtwo= \bctxp\val$. All bad values of $\tmtwo$ coming from $\ctx$ are obviously also in $\tm$. Let $\bctx = \ctxp{\cutsub\mvar\mvartwo\ctxtwo}$ be the position of a bad value $\val$ in $\tmtwo$. Two cases:
	\begin{itemize}
 \item \emph{$\ctxp\ctxtwo$ is a bad context}. Then $\val$ occurs in $\tm$ as a bad value at position $\ctxp{\cuta\mval\mvar\ctx}$ (which is bad by \reflemmap{good-ctx-mutilation}{bad}). 
 \item \emph{$\ctxp\ctxtwo$ is a good context}. Since it is the addition of $\cutsub\mvar\mvartwo$ that makes $\bctx$ bad, necessarily $\mvar \in \dfv\ctxtwo$. Now, note that then $\cuta\mval\mvar\ctxtwo$ is bad, and so is $\ctxp{\cuta\mval\mvar\ctxtwo}$, which isolates $\val$ in $\tm$, making it a bad value.
 \end{itemize}

%%%%%%%%%%%%%%
	\item $\totens$. We have $\tm = \ctxp{\cuta{\mval}\mvar \mctxp{\para\mvar\var\vartwo \tmfour}}
 \togp\tens  
\ctxp{ \mctxp{ \lctxp{\cuta\valtwo\var \lctxtwop{\cuta\valthree\vartwo \tmfour}}}}  = \tmtwo$ with $\mval = \pair{\lctxp\valtwo}{\lctxtwop\valthree}$ and $\tmtwo = \bctxp\val$. 
 If the hole of $\bctx$ falls in $\ctx$, $\mctx$, or $\lctxp{\cuta\valtwo\var \lctxtwop{\cuta\valthree\vartwo\ctxhole}}$ then the reasoning is as in the $\toaxmone$ case. The only potentially delicate point is about the bad values in $\tmfour$. Let $\bctx = \ctxp{\mctxp{ \lctxp{\cuta\valtwo\var \lctxtwop{\cuta\valthree\vartwo \ctxtwo}}}}$ be the position of one such bad value $\val$. Two cases:
 \begin{itemize}
 \item \emph{$\ctxp{\mctxp \ctxtwo}$ is a bad context}. Then $\val$ occurs in $\tm$ as bad value at position $\ctxp{\cuta\mval\mvar\mctxp\ctxtwo}$. 

 \item \emph{$\ctxp{\mctxp \ctxtwo}$ is a good context}. Note that the only variables of $\ctxtwo$ that can be captured by $\lctxp{\cuta\valtwo\var \lctxtwop{\cuta\valthree\vartwo\ctxhole}}$ are $\var$ and $\vartwo$. Since it is the addition of $\lctxp{\cuta\valtwo\var \lctxtwop{\cuta\valthree\vartwo\ctxhole}}$ that makes $\bctx$ bad, necessarily $\var \in \dfv\ctxtwo$ or $\vartwo \in \dfv\ctxtwo$. Now, note that then $\mvar\in\dfv{\mctxp{\para\mvar\var\vartwo \ctxtwo}}$ and so that $\ctxp{\cuta\mval\mvar\mctxp{\para\mvar\var\vartwo \ctxtwo}}$ is a bad context isolating $\val$ in $\tm$.
 \end{itemize}
%%%%%%%%%%%
\item $\tololli$. As for $\totens$.

%%%%%%%%%
\item $\toaxeone$.   We have $\tm = \ctxp{\cuta{\exval} \evar \ctxtwofp{\evar}}
 \togp\axeone
\ctxp{\cuta{\exval}\evartwo\ctxtwofp{\exval}} = \tmtwo= \bctxp\val
$. Since $\ctxp{\cuta{\exval}\evar\ctxtwo}$ is good, by \reflemma{b-not-outer-g}, $\bctx \not\outer \ctxp{\cuta{\exval}\evar\ctxtwo}$ and there are two cases:
\begin{itemize}
\item $\ctxp{\cuta{\exval}\evar\ctxtwo} \parallel \bctx$. Then there is a double context $\cctx_{\ctxp{\cuta{\exval}\evar\ctxtwo},\bctx}$ such that\\ $\cctx_{\ctxp{\cuta{\exval}\evar\ctxtwo},\bctx}\ctxholep{\exval,\cdot} = \bctx$. By \reflemma{double-ctx-repl}, the context $\bctxtwo \defeq \cctx_{\ctxp{\cuta{\exval}\evar\ctxtwo},\bctx}\ctxholep{\evar,\cdot}$ is bad. Note that $\bctxtwo$ isolates $\val$ in $\tm$.

\item $\ctxp{\cuta{\exval}\evar\ctxtwo} \outer \bctx$.  Then $\bctx = \ctxp{\cuta{\exval}\evar\ctxtwop\ctxthree}$ for $\ctxthree$ such that $\exval = \ctxthreep\val$. Note that the context $\ctxp{\cuta{\ctxthree}\evar\ctxtwofp\evar}$ is bad and isolates $\val$ in $\tm$.
\end{itemize}

%%%%%%%%%%%%%
	\item $\toaxetwo$.  We have $\tm = \ctxp{\cuta{\evartwo}\evar\ctxtwop{\dera\evar\var \tmthree}}
 \togp\axetwo 
\ctxp{\cuta{\evartwo}\evar\ctxtwop{\dera\evartwo\var \tmthree}} = \tmtwo= \bctxp\val$ with $\ctxtwo$ not capturing $\evar$ nor $\evartwo$.
If the hole of $\bctx$ falls in $\ctxp{\cuta{\evartwo}\evar\ctxtwo}$ then the reasoning is as in the $\toaxeone$ case. The only potentially delicate point is about the bad values in $\tmthree$. Let $\bctx = \ctxp{\cuta{\evartwo}\evar\ctxtwop{\dera\evartwo\var \ctxthree}}$ be the position of one such bad value $\val$. The same value is isolated in $\tm$ by $\bctxtwo = \ctxp{\cuta{\evartwo}\evar\ctxtwop{\dera\evar\var \ctxthree}}$. Note that  $\bctxtwo$ cannot be good: since the only thing that has changed is the renaming of $\evartwo$ as $\evar$ in the dereliction, then:
\begin{itemize}
\item either $\evartwo$ is irrelevant for the badness of $\bctx$, and so $\bctxtwo$ is bad, 
\item or $\bctx$ is bad because $\evartwo\in\dfv{\dera\evartwo\var \ctxthree}$ and $\evartwo$ is cut in $\ctx$. Note that then $\evar\in\dfv{\dera\evar\var \ctxthree}$ and $\evar$ is cut in $\bctxtwo$, so that $\bctxtwo$ is bad.
\end{itemize} 
%%%%%%%%%%%%%
	\item $\tobder$.  We have $\tm =  \ctxp{\cuta{\bang\tmthree}\evar\ctxtwop{\dera\evar\var \tmfour}}
 \togp\bder 
\ctxp{\cuta{\bang\tmthree}\evar\ctxtwop{\lctxp{\cuta\valtwo\var \tmfour}}}
 = \tmtwo= \bctxp\val$ with $\tmthree = \lctxp\valtwo$. If the hole of $\bctx$ falls in $\ctxp{\cuta{\bang\tmthree}\evar\ctxtwo}$ then the reasoning is as in the $\toaxeone$ case. If it falls in $\lctxp{\cuta\valtwo\var\ctxhole}$, then a bad context with the hole in $\tmthree$ isolates the same value in $\tm$. The only potentially delicate point is about the bad values in $\tmfour$. Let $\bctx = \ctxp{\cuta{\bang\tmthree}\evar\ctxtwop{\lctxp{\cuta\valtwo\var \ctxthree}}}$ be the position of one such bad value $\val$. The same value is isolated in $\tm$ by $\bctxtwo = \ctxp{\cuta{\bang\tmthree}\evar\ctxtwop{\dera\evar\var \ctxthree}}$ which cannot be good because:
 \begin{itemize}
\item either $\var$ is irrelevant for the badness of $\bctx$, and so $\bctxtwo$ is bad because $\var$ is the only variable of $\ctxthree$ that can be captured by $\lctxp{\cuta\valtwo\var\ctxhole}$ (which is the only difference between $\bctx$ and $\bctxtwo$);
\item or $\bctx$ is bad because $\var\in\dfv{\ctxthree}$ and $\var$ is cut in $\bctx$. Note that then $\evar\in\dfv{\dera\evar\var \ctxthree}$ and $\evar$ is cut in $\bctxtwo$, so that then $\bctxtwo$ is also bad.
\end{itemize} 

%%%%%%%%%%%%%
	\item $\tow$.  We have $\tm =  \ctxp{\cuta{\exval}\evar \tmfour}
 \togp\bw  
\ctxp{\tmfour}
 = \tmtwo= \bctxp\val$ with $\evar\notin\fv\tmfour$. Since $\ctxp{\cuta{\exval}\evar\ctxhole}$ is good, $\ctx$ is good. By \reflemma{b-not-outer-g}, $\bctx \not\outer \ctx$ and there are two cases:
\begin{itemize}
\item $\ctx \parallel \bctx$. Then there is a double context $\cctx_{\ctx,\bctx}$ such that $\cctx_{\ctx,\bctx}\ctxholep{\tmfour,\ctxhole} = \bctx$. By \reflemma{double-ctx-repl}, the context $\bctxtwo \defeq \cctx_{\ctx,\bctx}\ctxholep{\cuta{\exval}\evar\tmfour,\cdot}$ is bad. Note that $\bctxtwo$ isolates $\val$ in $\tm$.

\item $\ctx \outer \bctx$.  Then $\bctx = \ctxp\ctxthree$ for $\ctxthree$ such that $\tmfour = \ctxthreep\val$. Note that the context $\ctxp{\cuta\exval\evar\ctxthree}$ is bad and isolates $\val$ in $\tm$.\qedhere
\end{itemize}

\end{itemize}
\end{proof}

The statement of the next theorem expresses the quantitative sub-term property for $\toms$ because the sub-terms duplicated and erased by $\toms$ are values, and in particular are values that are left sub-terms of cuts (which are bad contexts), that is, they are bad values.

\begin{thm}[Sub-term property]
\label{thm:sub-term-prop}
Let $\tm \tog^{*} \tmtwo$ and $\val$ be a bad value of $\tmtwo$. Then $\size\val\leq \size\tm$.
\end{thm}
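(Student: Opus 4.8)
The plan is a straightforward induction on the length $n$ of the good reduction sequence $\tm \tog^{n} \tmtwo$, with all the real work already packaged in the local sub-term invariant (\refprop{local-sub-tm}). Concretely, I would organize the induction by isolating the \emph{last} good step of the sequence.

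For the base case $n = 0$ we have $\tm = \tmtwo$, so $\val$ is a bad value of $\tm$ itself, say $\tm = \bctxp\val$ with $\bctx$ bad. The only point to check is that $\val$ is an honest sub-term of $\tm$, so that $\size\val \leq \size\tm$. This holds because $\val$ is a value: it is not of the form $\lctxp\valtwo$ with $\lctx \neq \ctxhole$, and therefore plugging $\val$ into a context never triggers the two split-preserving clauses of the definition of plugging — the only clauses by which the plugged term can fail to be a sub-term of the result. Hence $\val$ occurs as a sub-term of $\tm$ and $\size\val \leq \size\tm$.

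For the inductive step, I would split the sequence as $\tm \tog^{n-1} \tmthree \tog \tmtwo$ and write $\tmtwo = \bctxp\val$ with $\bctx$ bad. Applying \refprop{local-sub-tm} to the last step $\tmthree \tog \tmtwo$ yields a bad context $\bctxtwo$ and a value $\valtwo$ such that $\tmthree = \bctxtwop\valtwo$ and $\size\val = \size\valtwo$; that is, $\valtwo$ is a bad value of $\tmthree$. The \ih applied to $\tm \tog^{n-1} \tmthree$ and the bad value $\valtwo$ gives $\size\valtwo \leq \size\tm$, and therefore $\size\val = \size\valtwo \leq \size\tm$, which is what we want.

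There is no genuine obstacle left at this level: the difficulty is entirely concentrated in \refprop{local-sub-tm} (a good step cannot create bad values of larger size), which is established separately via the good/bad machinery of \reffig{strategy} and \reflemma{good-ctx-mutilation}. The only mild care needed in the present argument is the remark in the base case that bad values, being values, are literal sub-terms notwithstanding the subtle plugging of split cuts and subtractions; everything else is bookkeeping on the length of the reduction.
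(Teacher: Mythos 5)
Your proof is correct and follows essentially the same route as the paper's: induction on the length of the good reduction sequence, peeling off the last step and invoking the local sub-term invariant (\refprop{local-sub-tm}) to transfer the bound through the inductive hypothesis. The extra remark in the base case (that a value, having empty left context, is always a literal sub-term despite the split-preserving plugging clauses) is a correct elaboration of what the paper dismisses as trivial.
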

\begin{proof}
By induction on the length $k$ of the reduction $\tm \tog^{k} \tmtwo$. If $k=0$ then the statement trivially holds. If $k>0$ then  consider the last step $\tmthree \tog \tmtwo$ of the sequence. By the local sub-term invariant (\refprop{local-sub-tm}), the size of every bad value $\val$ of $\tmtwo$ is bound by the size of a bad value $\valtwo$ of $\tmthree$, which by \ih satisfy the statement. 
\end{proof}

\paragraph{Good Diamond} The good strategy is not deterministic, as for instance if $\tm \tog \tmtwo$ then:
\begin{center}$\begin{array}{llllllll}
\pair\tmtwo\tm
&  \lRew{\Gsym}& 
\pair\tm\tm 
& \tog & 
\pair\tm\tmtwo.
\end{array}$\end{center}
Note also, however, that the local diagrams that are not squares nor triangles are forbidden by design for good steps (see above). Triangles are also forbidden: it is easily seen that they all involve touching the cut value of a $\tobw$ step, which is bad. Intuitively, this is the reason why all local diagrams for good steps are diamonds.

\begin{prop}[Good diamond]
\label{prop:good-diamond}
$ \tmtwo_{1} \lRew{\Gsym}\tm \tog \tmtwo_{2}$ and $\tmtwo_{1} \neq \tmtwo_{2}$ then $ \tmtwo_{1} \tog \tmthree \lRew{\Gsym}  \tmtwo_{2}$ for some $\tmthree$.
\end{prop}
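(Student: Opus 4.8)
The statement is the local diamond property of $\tog$, and the plan is to show that every local peak $\tmtwo_1 \lRew{\Gsym} \tm \tog \tmtwo_2$ with $\tmtwo_1 \neq \tmtwo_2$ closes as a \emph{square}, exactly as anticipated just before the statement. Let $\ctx$ and $\ctxtwo$ be the positions of the two good steps. Since a redex is identified with its position and the only overlap between rules (the harmless $\toaxmone$/$\toaxmtwo$ ambiguity of \refsect{calculus}) produces the same reduct, from $\tmtwo_1 \neq \tmtwo_2$ we get $\ctx \neq \ctxtwo$, and hence, by definition of $\outer$, either $\ctx \parallel \ctxtwo$ or one of the two is strictly outer than the other. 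The enumeration of local diagrams is then essentially the one already carried out for \refprop{loc-confluence} and \refprop{strong-commutation}: the diagrams for $\toms$ that are not squares are exactly the two non-square, non-triangle shapes displayed before the statement (touching a cut value, or an at-risk subtraction value) plus the triangles coming from $\tow$ (a step inside an erased cut value), and in each of these one of the two steps is \emph{bad}; so for a peak of two good steps only squares remain.

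The one new ingredient needed to make this rigorous is a strengthening of the deformation lemmas of \reflemma{aux-props-loc-confl-strong-comm}: in each clause, whenever the lemma produces a residual step $\ctxtwo_\bullet$ replaying a step at a deformed position, it should also state that $\ctxtwo_\bullet$ is good whenever the original position was good (equivalently bad whenever it was bad, by \reflemma{good-bad-partition}). This is proved by the same inductions on the rewriting step, using \reflemma{double-ctx-repl} and \reflemma{double-ctx-comp} to transport the dominating free variables $\dfv{-}$ and goodness across the replaced subterm. With this in hand, ``reducing a good step preserves the goodness of any still-present disjoint or nested good redex'' is available, which is precisely what is missing to close the remaining squares with $\tog$-steps rather than plain $\toms$-steps.

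\emph{Disjoint positions.} If $\ctx \parallel \ctxtwo$, the positions $\tm = \ctxp{\tm'} = \ctxtwop{\tmtwo'}$ give a double context $\cctx$ with $\tm = \cctxp{\tm',\tmtwo'}$. A \emph{good} position that is disjoint from another redex cannot lie inside that redex's cut value, since it would then contain a sub-context $\cuta\vctx\var\tm$ and hence be bad; consequently neither good step erases or duplicates the other redex (the only good rules that erase or duplicate act on a cut value, namely $\tow$, $\toaxeone$, $\tobder$), and the relocations performed by $\totens$ and $\tololli$ leave the other redex intact. Thus both steps persist after the other, and by \reflemma{double-ctx-repl}/\reflemma{double-ctx-comp} together with the strengthened deformation lemma the two residual positions are still good; the peak closes as a square. \emph{Strictly nested positions.} By symmetry assume $\ctx \outer \ctxtwo$ strictly, so the second redex lies strictly inside the pattern $\tm_\var$ of the first redex, or inside the cut value of its active cut $\cuta\val\var\tm$, or it is reduced by the \emph{same} cut on a different occurrence of $\var$. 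If it is inside the cut value, then $\ctxtwo$ contains $\cuta\vctx\var\tm$ and is bad, contradiction; if $\tm_\var$ is a subtraction and the second redex is inside its subtraction value, then $\mvar := \var$ dominates that position (since $\mvar \in \dfv{\suba\mvar\vctx\vartwo\tm}$) and is cut in the first redex --- directly if that step is $\tololli$, or once a preceding $\tobder$/$\toaxmtwo$ step exposes the cut --- so $\ctxtwo$ is again bad, contradiction; if it is the same cut on a different occurrence (the LSC-style overlap, e.g.\ two occurrences of $\evar$ under $\cuta\exval\evar{-}$), then since $\var \notin \fv\val$ the two steps do not interfere and, by the strengthened deformation lemma, the untouched occurrence remains a good redex, so the peak is a square; and if the second redex is inside the body of $\tm_\var$, both steps persist, the only delicate configuration being the cut created by $\tololli$: one checks, and this is exactly what the strengthened deformation lemma records, that the residual position of the inner step stays good, using that goodness of $\ctxtwo$ already forces the subtraction's bound variable outside the relevant $\dfv{-}$, and that the variables bound by the newly created cuts are, by $\alpha$-convention, not free in that position and so cannot enter its dominating set.

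Collecting the cases, every local peak of $\tog$-steps with distinct reducts closes with one $\tog$-step on each side, which is the claim. I expect the main obstacle to be the strengthened deformation lemmas of the second paragraph, and within them the $\tololli$ case, since the cut that $\tololli$ creates is the one configuration in which a good redex could a priori be dragged into a cut context; the resolution is that a position surviving into that cut was already required to avoid the subtraction's dominating variable, and the freshly bound variables are $\alpha$-disjoint from it, so goodness is preserved through the step.
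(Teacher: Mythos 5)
Your proposal follows essentially the same route as the paper: it rules out the non-square and triangle diagrams by observing that each of them requires a step whose position is bad (inside a cut value, or inside a subtraction value dominated by a cut variable), and it closes the remaining squares by strengthening the deformation lemmas of \reflemma{aux-props-loc-confl-strong-comm} so that residual positions of good steps remain good --- which is precisely the strengthening the paper announces in \refsect{technicalities}. Your identification of the $\tololli$-created cut as the delicate configuration, and its resolution via dominating variables and $\alpha$-disjointness, also matches the paper's treatment.
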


% !TEX root = ../../main.tex
%%%%%%%%%%%%%%%
%%%%%%%%%%%%%%% 
\begin{proof}
For the proof to go through, we need to slightly strengthen the statement adding a clause about dominating free variables, for the inductive cases. We also reformulate the statement using different meta variables. The new statement is:

\begin{center}
If  $\gctx_1: \tm_{0} \tog \tm_{1}$ and  $\gctx_2: \tm_{0} \tog \tm_{2}$ with $\tm_{1}\neq\tm_{2}$ then that there exists $\tm_3$ such that $\gctxtwo_2:\tm_{1} \tog \tm_{3}$ with $\dfv{\gctxtwo_2} = \dfv{\gctx_2}$ and $\gctxtwo_1:\tm_{2} \tog \tm_{3}$ with $\dfv{\gctxtwo_1} = \dfv{\gctx_1}$.
\end{center}

Let $a,b \in \set{\axmtwo,\axmone,\tens,\lolli, \axeone, \axetwo, \bang, \wsym}$, $\tm_{0} \togp{a} \tm_{1}$, and $\tm_{0} \togp{b} \tm_{2}$. The proof is by induction on $\tm_{0} \togp{a} \tm_{1}$, that is, by induction on the context closing the root rule $\rootRew{a}$, and case analysis of $\tm_{0} \togp{b} \tm_{2}$. It is a check of diagrams similar to the proof of local confluence. As for local confluence, the deformation lemmas are used, but their statements have to be strengthened because they have to preserve goodness and dominating variables. About the diagrams, there are less cases, none of which is surprising. Details are in \cite{DBLP:journals/corr/abs-lics}.
\end{proof}

Two well known consequences of being diamond are \emph{uniform normalization}, that is, if there is a normalizing reduction
sequence then there are no diverging sequences, and \emph{random
descent}, that is, when a term is normalizable, all sequences to
normal form \emph{have the same length}. These two consequences are the reason why the diamond property
embodies a more liberal form of determinism. Additionally, these properties are essential for the study of cost models, as otherwise the number of steps of the strategy is an ambiguously defined measure.

\paragraph{Fullness.} The sub-term property expresses the correctness of our design. We also need to prove a simple form of completeness, ensuring that the good strategy does not stop when there is still work left to do. This is given by \emph{fullness}, that is, the fact that as long as $\tm$ is not normal then $\tm$ has a good redex. We need to assume that $\tm$ has no clashes, because terms with clashes such as $\cuta{(\cuta\evar\evartwo\evartwo,\evar)}\mvar \suba\mvar\evar\mvartwo\mvartwo$, where the tensor pair and the subtraction clash, are not normal (note that $\cuta\evar\evartwo\evartwo$ is a redex) but have no good redexes, as all their redexes are inside cut values of clashing cuts (a tensor pair cut on a subtraction is a clash).

We recall that an \emph{occurrence} of $\var$ in $\tm$ is given by a position $\tm = \ctxp{\tm_{\var}}$  where $\tm_{\var}$ is a sub-term of $\tm$ of shape $\var$, $\para\var\vartwo\varthree\tmtwo$, $\suba\var\val\vartwo\tmtwo$, or $\dera\var\vartwo\tmtwo$. Note that the shapes of $\tm_{\var}$ are the kind of terms with which a cut on $\var$ can interact at a distance to form a redex.

\begin{defi}[Outermost variable occurrence]
An occurrence $\tm = \ctxp{\tm_{\var}}$ of $\var$ in $\tm$ is \emph{outermost} if $\ctxtwo \not\outer\ctx$ for all other occurrences $\tm = \ctxtwop{\tm'_{\var}}$ of $\var$.
\end{defi}

\begin{lem}
\label{l:outermost-and-domination} Let $\tm = \ctxp{\tm_{\var}}$ be an outermost occurrence of $\var$ in $\tm$ with $\ctx$ not capturing $\var$. Then $\var \notin\dfv\ctx$.
\end{lem}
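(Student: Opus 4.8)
The plan is to prove the contrapositive: if $\var\in\dfv\ctx$ and $\ctx$ does not capture $\var$, then the occurrence $\tm=\ctxp{\tm_{\var}}$ is not outermost, because there is an occurrence of $\var$ whose position is strictly outer than $\ctx$. The whole content is a structural fact about $\dfv$ (see \reffig{strategy}): a variable lands in $\dfv\ctx$ exactly when it is the \emph{subject} of a par, subtraction, or dereliction lying on the path to the hole --- the subtraction-in-value clause $\suba\mvar\vctx\var\tm$ being the only place a subject enters unconditionally, the other clauses merely propagating it up the spine. Concretely I would first establish the following auxiliary claim: \emph{if $\ctx$ does not capture $\var$ and $\var\in\dfv\ctx$, then $\ctx=\ctxtwop{\ctxthree}$ for some context $\ctxtwo$ that does not capture $\var$ and some $\ctxthree\neq\ctxhole$ which has one of the shapes $\para\var\vartwo\varthree\ctxfour$, $\suba\var\val\vartwo\ctxfour$, $\suba\var\vctx\vartwo{\tm'}$, or $\dera\var\vartwo\ctxfour$.}

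The claim is proved by induction on $\ctx$, following the clauses defining $\dfv$ (the general-context production $\lctxp\ctx$ unfolds into the other cases and adds nothing). At each left constructor on the path to the hole there is a dichotomy: if $\var$ is the subject of the top constructor, i.e.\ $\ctx$ is one of $\para\var\vartwo\varthree\ctx'$, $\suba\var\vctx\vartwo{\tm'}$, $\suba\var\val\vartwo\ctx'$, or $\dera\var\vartwo\ctx'$, the claim holds with $\ctxtwo=\ctxhole$ and $\ctxthree=\ctx$; otherwise, by inspection of the definition, $\dfv\ctx$ is obtained from $\dfv{\ctx'}$ (where $\ctx'$ is the immediate sub-context containing the hole) by possibly deleting variables that $\ctx$ binds over the hole, none of which is $\var$ since $\ctx$ does not capture $\var$, so $\var\in\dfv{\ctx'}$, $\ctx'$ does not capture $\var$, and the induction hypothesis on $\ctx'$ followed by re-applying the top constructor closes the case. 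For the two clauses where the hole lies in a value, $\cuta\vctx\vartwo{\tm'}$ and $\suba\mvar\vctx\vartwo{\tm'}$ with $\var\neq\mvar$, the recursive call is on $\vctx$ seen as a general context; here one must check that the outer context returned by the induction hypothesis is again a value context and is nonempty --- it must be, since the head of a value context is never one of the four ``subject'' shapes --- so that the split-preserving plugging of Definition~\ref{def:plugging} reconstructs $\ctx$ as $\cuta\ctxtwo\vartwo{\tm'}$ resp.\ $\suba\mvar\ctxtwo\vartwo{\tm'}$. This plugging bookkeeping is the only mildly delicate part, and it is exactly where one uses that an initial segment of a value context is a value context.

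Given the claim the theorem is immediate. Assume $\tm=\ctxp{\tm_{\var}}$ is an outermost occurrence of $\var$ with $\ctx$ not capturing $\var$, and suppose towards a contradiction that $\var\in\dfv\ctx$. Write $\ctx=\ctxtwop{\ctxthree}$ as in the claim; then $\tm=\ctxtwop{\ctxthreep{\tm_{\var}}}$, and $\ctxthreep{\tm_{\var}}$ is a sub-term of $\tm$ of shape $\para\var\ldots$, $\suba\var\ldots$, or $\dera\var\ldots$ (in the clause $\ctxthree=\suba\var\vctx\vartwo{\tm'}$ note that $\vctx$ plugged by a term is a value). Hence it is an occurrence of $\var$ in $\tm$, at position $\ctxtwo$, and since $\ctxthree\neq\ctxhole$ we have $\ctxtwo\outer\ctx$ and $\ctxtwo\neq\ctx$. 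This contradicts the outermost-ness of $\tm_{\var}$, so $\var\notin\dfv\ctx$.

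The argument is not conceptually hard --- morally ``$\var\in\dfv\ctx$'' just records a left constructor with subject $\var$ strictly above the hole --- so the main obstacle is purely the context algebra in the cut- and subtraction-in-value cases, namely keeping the decomposition $\ctx=\ctxtwop{\ctxthree}$ honest with respect to split cuts and subtractions.
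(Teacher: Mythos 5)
Your argument is correct in outline, but it is not the route the paper takes. The paper proves the lemma by induction on the number $\sizep\ctx\var$ of free occurrences of $\var$ in $\ctx$: the base case is the inclusion $\dfv\ctx\subseteq\fv\ctx$, and the inductive case picks another occurrence of $\var$ --- necessarily at a position \emph{disjoint} from $\ctx$, by outermost-ness --- and replaces it by the closed value $\la\mvar\mvar$ via the double-context replacement lemma (\reflemma{double-ctx-repl}), which leaves $\dfv\ctx$ unchanged while strictly decreasing the occurrence count, so the induction hypothesis applies. Your route instead goes through a structural characterization of $\var\in\dfv\ctx$ as the presence, strictly above the hole, of a par/subtraction/dereliction with subject $\var$; this is essentially a non-capturing strengthening of \reflemma{dominating-cases}, which the paper states immediately \emph{after} this lemma and does not use to prove it. Your version is more direct and arguably more explanatory, since it exhibits the offending outer occurrence explicitly, at the price of redoing the induction over the context grammar with the split-plugging bookkeeping; the paper's version trades that for a reuse of the already-available double-context machinery and of disjointness, which it gets for free from outermost-ness. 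One corner of your bookkeeping is not actually closed: in the clause $\ctxthree=\suba\var\vctx\vartwo{\tm'}$ with $\vctx=\ctxhole$, the plugged term $\tm_{\var}$ need not be a value (it could itself be a par, subtraction, or dereliction occurrence of $\var$), in which case the split-preserving plugging re-splits and the exhibited occurrence does not sit at position $\ctxtwo$ as you claim. That configuration forces $\var$ to occur both as the subject of the subtraction and inside its value, so it is ruled out for \proper terms by structural linearity (\reflemma{proper-linearity}), but this exclusion needs to be stated rather than absorbed into the remark that ``$\vctx$ plugged by a term is a value''.
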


\begin{proof}
By induction on the number $\sizep\ctx\var$ of free occurrences of $\var$ in $\ctx$. If $\sizep\ctx\var = 0$ then $\var\notin \fv\ctx \supset \dfv\ctx$. If $\sizep\ctx\var > 0$ consider another occurrence $\ctxtwop{\tm_{\var}'}$ of $\var$ in $\tm$ such that $\ctx\not\outer\ctxtwo$, which exists because $\sizep\ctx\var > 0$ (indeed if $\ctx\outer\ctxtwo$ then $\tm_{\var}'$ is a sub-term of $\tm_{\var}$ and does not contribute to $\sizep\ctx\var$). Since $\tm_{\var}$ is outermost, we have $\ctxtwo \parallel \ctx$. Now, consider the double context $\cctx_{\ctx,\ctxtwo}$ such that $\cctx_{\ctx,\ctxtwo}\ctxholep{\cdot, \tm_{\var}'}=\ctx$ and $\cctx_{\ctx,\ctxtwo}\ctxholep{\tm_{\var},\cdot}=\ctxtwo$. By \reflemma{double-ctx-repl}, the context $\ctxthree \defeq \cctx_{\ctx,\ctxtwo}\ctxholep{\cdot, \la\mvar\mvar}$ verifies $\dfv\ctxthree = \dfv\ctx$ and $\sizep\ctxthree\var \leq \sizep\ctx\var -1$. Then by \ih we have $\var\notin\dfv\ctxthree = \dfv\ctx$.
\end{proof}

\begin{lem}
\label{l:dominating-cases} % \reflemmap{dominating-cases}{two}
Let $\var\in\dfv\ctx$.  Then one of the following cases hold:
\begin{enumerate}
\item  $\ctx = \mctxp{\para\mvar\vartwo\varthree\ctxtwo}$ and $\var = \mvar$;
\item $\ctx = \mctxp{\suba\mvar\val\vartwo\ctxtwo}$ and $\var = \mvar$;
\item $\ctx = \mctxp{\suba\mvar\vctx\vartwo\tm}$ and $\var = \mvar$;
\item \label{p:dominating-cases-onefour}$\ctx = \ctxtwop{\dera\evar\vartwo\ctxthree}$, $\var = \evar$, and $\evar \notin\dfv\ctxtwo$.
\end{enumerate}
\end{lem}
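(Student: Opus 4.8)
The plan is to prove the statement by structural induction on the context $\ctx$, casing on its outermost constructor and reading off the matching clause of the definition of $\dfv$ in \reffig{strategy}. The base case $\ctx = \ctxhole$ is vacuous, since $\dfv\ctxhole = \emptyset$. For the inductive step it helps to split the clauses of $\dfv$ into two groups. The \emph{introducing clauses} are those that put into $\dfv\ctx$ a variable not already present in the $\dfv$ of the relevant immediate sub-context: the par clause $\dfv{\para\mvar\vartwo\varthree\ctxtwo}$ when $\vartwo\in\dfv\ctxtwo$ or $\varthree\in\dfv\ctxtwo$, which adds $\mvar$; the subtraction-in-body clause $\dfv{\suba\mvar\val\vartwo\ctxtwo}$ when $\vartwo\in\dfv\ctxtwo$, which adds $\mvar$; the subtraction-in-value clause $\dfv{\suba\mvar\vctx\vartwo\tm}$, which adds $\mvar$ unconditionally; and the dereliction clause $\dfv{\dera\evar\vartwo\ctxtwo}$ when $\vartwo\in\dfv\ctxtwo$, which adds $\evar$. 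If $\var$ is exactly the variable added by the outermost constructor, we land immediately in case~1, 2, 3, or 4 respectively, taking $\mctx = \ctxhole$ in cases~1--3 and $\ctxtwo = \ctxhole$ in case~4 (where $\var\notin\dfv\ctxhole$ is trivial). The remaining, \emph{transparent} clauses — pair, abstraction, $\bang$, the two cut clauses, the par/subtraction/dereliction clauses when their side condition fails, and the $\dfv\vctx$ summand of the subtraction-in-value clause — all force $\var$ to already belong to the $\dfv$ of an immediate sub-context, possibly after deleting a bound variable; there we apply the \ih to that sub-context and re-wrap the decomposition it returns.

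The substance of the proof is checking that re-wrapping preserves the shapes of the four cases. In cases~1--3 the decomposition returned by the \ih has the shape of the lemma, namely an outer \emph{multiplicative} context $\mctx$ followed by the introducing constructor whose conclusion is $\var = \mvar$; wrapping $\mctx$ by a pair, abstraction, cut-body, or subtraction-value constructor again yields a multiplicative context, because being a multiplicative context is exactly the invariant \emph{the hole does not lie under a $\bang$}, which is what the grammars of $\mctx$ and $\vmctx$ in \reffig{grammars} encode. The one wrapping that would break this invariant is by a promotion $\bang$; but in that subcase $\var$ would be the multiplicative conclusion of a par or subtraction occurring \emph{free} underneath that $\bang$, which is impossible in a \proper term, whose promotion bodies have no free multiplicative variables (Definition~\ref{def:proper-terms}; \cf\ structural linearity, \reflemma{proper-linearity}). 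In case~4 the decomposition has the form $\ctxtwop{\dera\evar\vartwo\ctxthree}$ with $\evar\notin\dfv\ctxtwo$, and re-wrapping the outer context $\ctxtwo$ preserves $\evar\notin\dfv{(\ctx)}$ for every admissible constructor: binders and cut-bodies can only remove variables from $\dfv$, $\bang$ leaves $\dfv$ unchanged, and a par or subtraction wrapping $\ctxtwo$ can only add its own multiplicative conclusion, which differs in kind from $\evar$; hence $\evar$ never re-enters the $\dfv$ of the outer part.

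I expect the main obstacle to be bookkeeping rather than conceptual difficulty: one must track through every re-wrapping case which general contexts count as multiplicative contexts — cleanest via the invariant above rather than via the grammar — and one must use that $\dfv\ctx$ is determined entirely by the spine of $\ctx$ from its root to its hole, which is what legitimises, in case~4, taking the outermost occurrence of a $\dera\evar$ along that spine as the splitting point. No individual step is deep: the lemma is essentially a structured re-reading of the definition of $\dfv$, but the case analysis is wide and the side conditions, in particular the appeal to \properness forbidding a free multiplicative variable under a promotion, must be carried along carefully.
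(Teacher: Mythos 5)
Your proof takes a genuinely different route from the paper's. The paper handles cases 1--3 without any induction: a dominating multiplicative variable is in particular a free multiplicative variable of the context, so by structural linearity (\reflemma{proper-linearity}) its position is a multiplicative context, and by inspection of the clauses in \reffig{strategy} the occurrence witnessing domination must be a par or a subtraction. For case 4 the paper inducts \emph{outward}: it extracts some decomposition $\ctx=\ctxtwop{\dera\evar\vartwo\ctxthree}$, which exists because only the dereliction clause can put $\evar$ into the dominating set, and if $\evar\in\dfv\ctxtwo$ it recurses on the outer part to find an outermore dereliction. Your single root-to-hole induction with re-wrapping is a legitimate alternative, and your treatment of cases 1--3 --- in particular the appeal to properness to exclude re-wrapping by $\bang$ --- makes explicit exactly what the paper leaves implicit.

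There is, however, a gap in your case-4 re-wrapping. Your list of transparent wrappers (binders, cut bodies, $\bang$, par, subtraction) omits derelictions, and a dereliction is precisely the one constructor that can \emph{add an exponential variable} to the dominating set; the blanket claim that binders ``can only remove variables'' is false for it. The dangerous instance is an outermost constructor $\dera\evar\vartwo$ whose conclusion is the very $\evar$ being tracked but whose side condition fails: writing $\ctx=\dera\evar\vartwo\ctxtwo$, this is the situation $\vartwo\notin\dfv\ctxtwo$ while $\evar\in\dfv\ctxtwo$. It is not a corner case, since contraction in the ESC is rendered by several derelictions sharing one conclusion, so nested $\dera\evar{\cdot}$ over the same $\evar$ is the typical shape. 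Here the induction hypothesis yields $\ctxtwo=\ctxthreep{\dera\evar\varthree\ctxfour}$ with $\evar\notin\dfv\ctxthree$, and the re-wrap requires $\evar\notin\dfv{\dera\evar\vartwo\ctxthree}$, that is, $\vartwo\notin\dfv\ctxthree$; but what you know is $\vartwo\notin\dfv\ctxtwo$, and $\ctxthree$ is a strictly shallower context than $\ctxtwo$. The step is saveable, but only via an additional monotonicity property --- deepening the hole of a context can only enlarge its set of dominating variables, so $\dfv\ctxthree\subseteq\dfv{\ctxthreep{\dera\evar\varthree\ctxfour}}=\dfv\ctxtwo$ and hence $\vartwo\notin\dfv\ctxthree$ --- which you neither state nor prove and which needs its own small induction over the clauses of \reffig{strategy}. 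Either add that lemma, strengthen your induction hypothesis accordingly, or adopt the paper's outward recursion for case 4, which sidesteps the issue because it recurses on the outer context itself.
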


\begin{proof}
The first three points are simply the observation that if $\var=\mvar$ is multiplicative and $\mvar\in\dfv\ctx$ then 
\begin{itemize}
\item $\mvar\in\fv\ctx$ and so the position of $\mvar$ has to be a multiplicative context $\mctx$;
\item the occurrence has to be a par or subtraction occurrence by definition of $\dfv\ctx$ (that is, it cannot be a variable occurrence).
\end{itemize}
The fourth point is similar but it says something more so we need a bit more. It is by induction on $\ctx$. If $\var=\evar$  is exponential and $\evar\in\dfv\ctx$ then $\ctx = \ctxtwo'\ctxholep{\dera\evar\vartwo\ctxthree'}$. Now, if $\evar\notin\dfv{\ctxtwo'}$ the statement holds with $\ctxtwo \defeq \ctxtwo'$. Otherwise, by \ih, $\ctxtwo' = \ctxtwo''\ctxholep{\dera\evar\varthree\ctxthree''}$ with $\evar\notin\dfv{\ctxtwo''}$. Then $\ctx =  \ctxtwo''\ctxholep{\dera\evar\varthree\ctxthree''\ctxholep{\dera\evar\vartwo\ctxthree'}}$, which satisfies the statement with respect to $\ctxtwo \defeq \ctxtwo''$.
\end{proof}

\begin{prop}[Fullness]
\label{prop:good-fullness}
Let $\tm$ be clash-free. If $\tm$ is not $\toms$-normal then $\tm \tog\tmtwo$ for some $\tmtwo$.
\end{prop}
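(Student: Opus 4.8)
The plan is to produce a good redex by descending to a suitable cut of $\tm$ and then to a suitable occurrence of its cut variable, the domination lemmas \reflemma{outermost-and-domination} and \reflemma{dominating-cases} being the tools that certify goodness.

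Two preliminaries come first. By the contrapositive of point~1 of \reflemma{clashfree-implies-cutfree}, a clash-free non-normal term contains a cut. And, in a clash-free term, \emph{every} cut $\cuta\val\var\tmtwo$ is reducible: if $\var\notin\fv\tmtwo$ it is a $\tow$-redex; if $\var$ is exponential then $\val$ is an exponential value (no clash) and, since pars and subtractions are headed by multiplicative variables, each free occurrence of $\var$ in $\tmtwo$ is either bare (yielding $\toaxeone$) or a dereliction (yielding $\toaxetwo$ or $\tobder$); if $\var$ is multiplicative then $\val$ is a multiplicative value, by structural linearity (\reflemma{proper-linearity}) $\var$ occurs exactly once in $\tmtwo$, as a variable ($\toaxmone$), a par, or a subtraction, and clash-freeness forbids the abstraction/par and pair/subtraction mismatches, leaving $\toaxmtwo$, $\totens$, or $\tololli$. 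So the only content of the statement is to find a \emph{good} redex among these.

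The recipe will be: choose a cut $\cuta\val\var\tmtwo$ of $\tm$ occurring at a position $\ctx$ that has no cut above its hole --- a topmost cut always does, so such an $\ctx$ exists, it is a good context, and plugging any good context into its hole again yields a good context. If $\var\notin\fv\tmtwo$, the $\tow$-step at $\ctx$ is good and we are done. Otherwise we reduce an \emph{outermost} occurrence of $\var$ in $\tmtwo$, at a position $\ctxtwo$ inside $\tmtwo$ not capturing $\var$; by \reflemma{outermost-and-domination}, $\var\notin\dfv\ctxtwo$, so $\cuta\val\var\ctxtwo$ is good as soon as $\ctxtwo$ is, and then the whole redex position $\ctxp{\cuta\val\var\ctxtwo}$ is good. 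The only remaining obstruction is that $\ctxtwo$ may be bad, which (by the grammar of bad contexts together with \reflemma{dominating-cases}) means that the occurrence we picked lies inside the value of some inner cut $C_0$ of $\tmtwo$, or strictly below an inner cut of $\tmtwo$ whose variable dominates the rest of the path; in the latter case that variable is $\neq\var$, since a dominating occurrence of $\var$ itself would, by \reflemma{dominating-cases}, be a $\var$-dereliction lying above our occurrence, against outermostness. In both cases $C_0$ is a strictly smaller reducible cut, and one restarts the recipe with $C_0$ in place of $\cuta\val\var\tmtwo$; since the size of the current cut strictly decreases, the descent terminates, and it can only terminate on a configuration where the analogue of $\ctxtwo$ is bad-free, which then delivers a good step.

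The hard part will be the domination bookkeeping: guaranteeing that when the context leading to the current cut is composed with the context $\cuta\val\var\ctxtwo$ leading to the reduced occurrence, no cut crossed on the way becomes dominating. This fails for arbitrary good contexts, which is why the outer context must be kept cut-free above its hole; maintaining a workable invariant of this kind as the descent passes below the intermediate cut $C_0$ --- using again that the reduced occurrences are outermost, and the linearity of multiplicative variables, to check that the freshly crossed cut's variable does not reach into the dominating variables of what follows --- is the technical core of the argument. Packaging the iterated recipe as a single well-founded induction on the size of the current cut, with these side conditions discharged via \reflemma{dominating-cases}, then completes the proof.
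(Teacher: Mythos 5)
Your preliminaries are sound (in a clash-free proper term every cut forms a redex, a topmost cut sits at a good position, and an outermost occurrence of its variable is undominated by \reflemma{outermost-and-domination}), and the first level of your recipe does yield a good step whenever the chosen occurrence's position $\ctxtwo$ happens to be good. The genuine gap is precisely the part you defer as ``the technical core'': once the descent passes below an inner cut, the outer context is no longer cut-free above its hole, and your closing claim---that the descent ``can only terminate on a configuration where the analogue of $\ctxtwo$ is bad-free, which then delivers a good step''---is false. Goodness \emph{local to the inner cut} does not give goodness in $\tm$, because a cut crossed earlier in the descent can dominate the position found later, and neither outermostness nor multiplicative linearity rules this out: for an exponential variable the domination can come from an occurrence \emph{disjoint} from the one you originally selected. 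Schematically, take $\tm=\cuta{\exval}{\evar}{\cuta{\bang{(\dera{\evar}{\vartwo}{\tmthree})}}{\evartwo}{\dera{\evar}{\mvar}{\suba{\mvar}{\pair{\evartwo}{\evarthree}}{\varthree}{\tmfour}}}}$. The two occurrences of $\evar$ are disjoint, hence both outermost; picking the one inside the promotion, your recipe finds it in the value of the cut on $\evartwo$ and descends to that cut; the unique occurrence of $\evartwo$ sits at the position $\dera{\evar}{\mvar}{\suba{\mvar}{\pair{\ctxhole}{\evarthree}}{\varthree}{\tmfour}}$, which is good and not dominated by $\evartwo$, so the descent stops and declares the $\toaxeone$ step good---yet that position is dominated by $\evar$ (the hole lies in a subtraction value whose conclusion $\mvar$ is bound by a dereliction on $\evar$), so the outer cut makes the full position bad. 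The actual good step is the outer cut acting on the \emph{other} outermost occurrence $\dera{\evar}{\mvar}{(\cdots)}$, which your descent never considers.

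The repair is not more bookkeeping along a top-down descent but a different induction, and it is where \reflemma{dominating-cases} earns its keep (you use it only to classify the shapes of bad contexts). Proceeding by induction on $\tm$, in the case $\tm=\cuta{\val}{\var}{\tmtwo}$ with $\tmtwo$ not normal one applies the \ih to $\tmtwo$, obtaining a good position $\gctx$: if $\var\notin\dfv{\gctx}$ then $\cuta{\val}{\var}{\gctx}$ is good and one is done; otherwise \reflemma{dominating-cases} exhibits an occurrence of $\var$ sitting at a \emph{prefix} of $\gctx$, which is good by \reflemma{good-ctx-decomp} and undominated by $\var$ (this is exactly the clause $\evar\notin\dfv{\ctxtwo}$ in its point 4, resp.\ structural linearity in the multiplicative cases), so one reduces the top cut at \emph{that} occurrence instead; if $\tmtwo$ is normal it is cut-free by \reflemma{clashfree-implies-cutfree}, and any outermost occurrence of $\var$ does the job (or $\tow$ applies). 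This redirection step---turning a domination obstruction into a good redex for the \emph{outer} cut---is what your argument is missing, and without it the well-founded descent does not deliver the conclusion.
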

% !TEX root = ../../main.tex
\begin{proof}
By induction on $\tm$. For variables the statement trivially holds, because they are $\toms$-normal. All inductive cases but cut immediately follow from the \ih Then let $\tm = \cuta\val\var\tmthree$. Two cases:
\begin{enumerate}
\item \emph{$\var\notin\fv\tmthree$}. Then $\tm = \cuta\val\var\tmthree \tobw \tmthree$. Note that this step is good, so the statement holds with $\tmtwo \defeq \tmthree$.
\item \emph{$\var\in\fv\tmthree$}. Two sub-cases:
\begin{enumerate}
\item \emph{$\tmthree$ is $\toms$-normal}. Let $\tmthree = \ctxp{\tmthree_{\var}}$ with $\tmthree_{\var}$ an outermost occurrence of $\var$ in $\tmthree$ (clearly $\ctx$ does not capture $\var$). We show that the context $\cuta\val\var\ctx$ is good. By \reflemma{clashfree-implies-cutfree}, $\tmthree$ is cut-free (because $\tm$ is clash-free and $\toms$-normal). Since there are no cuts in $\ctx$, $\ctx$ is good. By \reflemma{outermost-and-domination}, $\var\notin\dfv\ctx$. Then $\cuta\val\var\ctx$ is good. By clash-freeness, $\cuta\val\var\ctx$ is the position of a redex, which is then good.

\item \emph{$\tmthree$ is not $\toms$-normal}. By \ih, $\tmthree$ has a good redex. Let its position be $\gctx$, with $\tmthree = \gctxp\tmfour$. If $\cuta\val\var\gctx$ is a good context then it induces the redex for the statement. If instead it is a bad context then it must be that $\var\in\dfv\gctx$. By \reflemma{dominating-cases}, $\gctx$ can have 4 possible shapes:
\begin{enumerate}
\item \emph{$\var$ has a par occurrence}: $\gctx = \mctxp{\para\mvar\vartwo\varthree\ctx}$ and $\var = \mvar$. By \reflemma{good-ctx-decomp} and the fact that $\gctx$ is good, $\mctx$ is good. Then $\cuta\val\mvar\mctx$ is good, because (being a multiplicative variable) $\mvar\notin\fv\mctx \supseteq\dfv\mctx$. Since $\tm$ has no clashes, $\val$ is a tensor pair and $\cuta\val\mvar\mctx$ is the position of a good redex.

\item \emph{$\var$ has a subtraction occurrence 1}: $\gctx = \mctxp{\suba\mvar\val\vartwo\ctx}$ and $\var = \mvar$. Analogously to the par occurrence.
\item \emph{$\var$ has a subtraction occurrence 2}: $\gctx = \mctxp{\suba\mvar\ctx\vartwo\tm}$ and $\var = \mvar$. Analogously to the par occurrence.
\item \emph{$\var$ has a dereliction occurrence}: $\gctx = \ctxp{\dera\evar\ctxtwo}$, $\var = \evar$, and $\evar\notin\dfv\ctx$. This case is also similar to the par occurrence, the only difference is that $\evar\notin\dfv\ctx$ is given by \reflemmap{dominating-cases}{onefour} instead that obtained by linearity as in the multiplicative cases.\qedhere
\end{enumerate}
\end{enumerate}

\end{enumerate}
\end{proof}

\paragraph{Summing Up.} Together with typed strong normalization (anticipating from \refsect{SN}), the properties of the good strategy imply our main result.

\begin{thm}[Good polynomial cost model]
\label{thm:cost-model}
Let $\multiform\vdash \tm\hastype \form$ be a typed term. Then there exist  $k$ and a cut-free term $\tmtwo$ such that $\tm \tog^{k} \tmtwo$. Moreover, such a reduction sequence is implementable on a random access machine in time polynomial in $k$ and $\size\tm$.
\end{thm}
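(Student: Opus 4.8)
The plan is to assemble the result from the three pillars already established in the excerpt: typed strong normalization, fullness, and the sub-term property. First I would invoke typed SN (from Section~\ref{sect:SN}, transported to micro-steps via PSN) together with subject reduction (\refthm{sub-red}) to get that any maximal $\tog$-sequence starting from the typed term $\tm$ is finite; write $\tm \tog^{k} \tmtwo$ for such a maximal sequence, so that $\tmtwo$ is $\tog$-normal. Second, since $\tm$ is typed it is clash-free by \reflemma{clashfree-implies-cutfree}(2), and clash-freeness is preserved along $\toms$ (hence along $\tog$) by definition; so $\tmtwo$ is clash-free. By fullness (\refprop{good-fullness}), a clash-free term with no good redex must in fact be $\toms$-normal, and then by \reflemmap{clashfree-implies-cutfree}{1} it is cut-free. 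This gives the existence half: there is $k$ and a cut-free $\tmtwo$ with $\tm \tog^{k} \tmtwo$. (By the good diamond, \refprop{good-diamond}, together with random descent, $k$ does not depend on the chosen sequence, so it is a well-defined measure, though the statement only asserts existence.)

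For the complexity half, I would argue that each good step is implementable in time polynomial—indeed, on a suitable representation, linear—in $\size\tm$, independently of the step index, so that the whole sequence costs $\bigo(k \cdot \poly(\size\tm))$. The key is the sub-term property (\refthm{sub-term-prop}): along $\tm \tog^{*} \tmthree$, every bad value of $\tmthree$ has size bounded by $\size\tm$, and (by the final clause of the sub-term property, via the count-of-copies bound already baked into \refprop{local-sub-tm}) the number of copies a single duplicating step makes is likewise controlled. Concretely, I would split the analysis of a single step $\ctx:\tmthree\tog\tmfour$ by the kind of redex. The variable-renaming and interaction steps ($\toaxmone$, $\toaxmtwo$, $\toaxetwo$, $\tobw$) touch only a constant number of constructors plus pointer manipulations, hence cost $\bigo(1)$ or $\bigo(\log\size\tmthree)$ on a RAM. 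The genuinely duplicating steps ($\totens$, $\tololli$, $\toaxeone$, $\tobder$) copy exactly a value $\val$ that is a \emph{cut value} or \emph{subtraction value}, i.e.\ a bad value in the relevant decomposition, hence of size $\leq\size\tm$ by \refthm{sub-term-prop}; copying it costs $\bigo(\size\tm)$. Summing over the $k$ steps, the total is $\bigo(k\cdot\size\tm)$, which is polynomial in $k$ and $\size\tm$, as required. One must also check that the size of the ambient term does not itself blow up beyond $\poly(k,\size\tm)$: each step increases the term size by at most $\bigo(\size\tm)$ (a bad value is copied a bounded number of times, each of bounded size), so after $k$ steps $\size\tmthree = \bigo(\size\tm + k\cdot\size\tm)$, which keeps pointer and traversal operations within the claimed bound.

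The main obstacle I expect is the bookkeeping of the RAM implementation rather than anything conceptual: one has to fix a concrete data structure for proof terms (say, a DAG with explicit binders and sharing of subterms, or a pointer-based term tree with $\alpha$-renaming handled by fresh-name allocation) on which (i) locating the next good redex—which requires recognizing good versus bad contexts, i.e.\ computing $\dfv{-}$ incrementally—costs at most $\bigo(\size\tmthree)$, and (ii) performing a duplication respects the split-cut invariant of plugging (\refdef{plugging}) in linear time. The subtlety is that \emph{good}/\emph{bad} is a non-local, genuinely micro-step notion (as the technical remark in Section~\ref{sect:strategy} stresses), so the search for a good redex cannot be made purely local; however, a single top-down traversal tracking the dominating variables encountered so far suffices, and this is $\bigo(\size\tmthree)=\bigo(k\cdot\size\tm + \size\tm)$ per step, still polynomial overall. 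I would therefore present the RAM part at the level of detail standard in the cost-model literature (as in \cite{DBLP:conf/rta/AccattoliL12,DBLP:journals/corr/AccattoliL16}), invoking the sub-term property as the sole non-routine ingredient and leaving the data-structure details as a routine adaptation.
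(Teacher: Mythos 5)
Your proposal is correct and follows essentially the same route as the paper's proof: existence via typed SN, clash-freeness, and fullness, and polynomial cost via the sub-term property bounding both the size of duplicated/erased values and the growth $\size{\tm_i} = \bigo(i\cdot\size{\tm_0})$ of intermediate terms. The only slip is the classification of rules: $\totens$ and $\tololli$ are linear restructurings (constant constructor manipulation, no copying), while the genuinely duplicating/erasing steps are $\toaxeone$, $\tobder$, and $\tobw$ — but since you only over-estimate their cost, the polynomial bound is unaffected.
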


\begin{proof}
By typed strong normalization (\refcoro{sn}), $\tog$ terminates on $\tm$, let $\tmtwo$ be its $\tog$ normal form. Since typable implies clash-freeness (\reflemma{clashfree-implies-cutfree}), $\tm$ is clash-free, and thus so is $\tmtwo$. By fullness (\refprop{good-fullness}) and $\tog$ normality, $\tmtwo$ is $\toms$-normal, which with clash-freeness implies that it is cut-free (\reflemma{clashfree-implies-cutfree}).

Now, about the cost. Let $\tm = \tm_{0} \tog \tm_{1} \tog\ldots \tog \tm_{k} = \tmtwo$. All steps $\tm_{i} \tog \tm_{i+1}$ of the sequence that are multiplicative, as well as all $\toaxetwo$ step, do only manipulation of a constant number of constructors of the term. Therefore, they can clearly be implemented in time polynomial in $\size{\tm_{i}}$, as searching for a redex and checking that its position is good are clearly polynomial problems, and there is at most a linear number of redexes in a term. By a straightforward induction using the sub-term property (\refthm{sub-term-prop}), one obtains that $\size{\tm_{i}} \leq i\cdot\size{\tm_{0}}$, and so all those steps have  cost polynomial in $k$ and $\size{\tm_{0}}$. Again by the sub-term property, the cost of duplication/erasure of values in non-linear steps (namely, $\toaxeone$, $\tobder$, and $\tobw$ steps) is bound by $\tm_{0}$, so they have polynomial cost in $k$ and $\size{\tm_{0}}$ as well.
\end{proof}

About the degree of the polynomial, Accattoli et al. \cite{DBLP:conf/icfp/AccattoliBM14,DBLP:conf/aplas/AccattoliBM15} show that strategies with the sub-term property are usually implementable via abstract machines in time \emph{linear} in both $k$ and $\size\tm$. We expect the linear bound to hold here as well, but we leave the design of an abstract machine to future work.

%In the untyped case, the good strategy can be shown (omitted for lack of space) to be \emph{normalizing} (on clash-free terms), \ie, it reaches a cut-free reduct whenever there is one.

% !TEX root = main.tex
\section{Untyped Preservation of Strong Normalization}
\label{sect:PSN}
In the ESs literature, the crucial property for a calculus with ESs is \emph{preservation of $\beta$ strong normalization} (shortened to PSN) on untyped terms, that is, that if a $\l$-term is $\sn\beta$ then it is SN when evaluated with ESs. In our setting, it is rephrased as $\tm\in\sn\toss$ implies $\tm\in\sn\toms$. The property ensures that switching to micro-steps does not introduces unexpected divergence, as it was surprisingly the case, historically, for the first $\l$-calculus with ESs by Abadi et al. \cite{DBLP:journals/jfp/AbadiCCL91}, as shown by \mellies \cite{DBLP:conf/tlca/Mellie95}.

Proofs of PSN are often demanding, see Kesner's survey \cite{DBLP:conf/csl/Kesner07}. Rewriting rules at a distance enable short proofs, as first showed by Accattoli and Kesner \cite{DBLP:conf/csl/AccattoliK10}. We adapt and further simplify that proof, that here rests on just two natural properties of micro steps, here referred to as \emph{extension} and \emph{root cut expansion} (Kesner calls the latter \emph{IE property}), the latter resting on \emph{structural stability of SN}, that is, the fact that cut equivalence $\cuteq$ preserves SN (\refpropp{cuteq-bisim}{sn}). 

Extension and root cut expansion shall be used also in the next section, for proving typed strong normalization for $\toss$, which is why the statements are given as to cover both $\toms$ (relevant here for untyped PSN) and $\toss$ (relevant in the next section for typed SN).

\paragraph{Extension.} The \emph{extension} property is the easy fact that, for all root constructs but (non clashing) cuts, SN follows from SN of the root sub-terms.

\begin{lem}[Extension]
\label{l:extension} % \reflemmap{substitutivity-of-red}{four}
Let $\to\in\set{\toss,\toms}$ and $\tm, \tmtwo, \val \in \snsubst$. 
Then:
\begin{enumerate}
\item \emph{Neutral}:
 $\pair\tm\tmtwo$, $\la\var\tm$, $\bang\tm$, $\para\mvar\var\vartwo \tm$,  $\suba\mvar\val\var  \tm$, and $\dera\evar\var \tm$ are in $\snsubst$;
 \item \emph{Clash}: if the root cut of $\cuta\val\var\tm$ is clashing then $\cuta\val\var\tm\in\snsubst$.
 \end{enumerate}
\end{lem}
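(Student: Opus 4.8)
The plan is to prove both clauses by induction on the fact that the immediate subterms are in $\snsubst$, the key observation being that \emph{none} of the constructs in the statement admits a reduction at the root, so that every step takes place strictly inside one of the subterms, decreasing the relevant induction parameter. Indeed, a glance at the root rules of \reffig{rewriting-rules} shows that every ESC root rule has a \emph{cut} as its outermost constructor (possibly acting at a distance through a context); hence a bare pair $\pair\tm\tmtwo$, abstraction $\la\var\tm$, promotion $\bang\tm$, par $\para\mvar\var\vartwo\tm$, subtraction $\suba\mvar\val\var\tm$ or dereliction $\dera\evar\var\tm$ is never a root redex.

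For the \emph{neutral} clause I would then argue that any step from one of these terms is a step inside one of its displayed subterms, producing a term of the same shape with that subterm replaced by a $\to$-reduct. One concludes by induction on the $\snsubst$-membership of the subterms: for the unary constructs simply by induction on $\tm\in\snsubst$; for $\pair\tm\tmtwo$ and $\suba\mvar\val\var\tm$ by a nested (equivalently, well-founded product) induction on the pair of subterms, using that $\to$ is well-founded on $\snsubst$. In each case a one-step reduct replaces one $\snsubst$ subterm by a strictly smaller $\snsubst$ term, so the induction hypothesis applies to it and yields that the reduct is in $\snsubst$; hence so is the original term.

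For the \emph{clash} clause the extra ingredient is an invariant: a clashing cut $\cuta\val\var\tm$ can never be the active cut of a redex, and this persists under reduction of $\val$ and of $\tm$; granting it, the same induction on $(\val,\tm)\in\snsubst\times\snsubst$ as above concludes. I would verify the invariant on the four clash shapes. For $\cuta\mval\evar\tm$ and $\cuta\exval\mvar\tm$ it is immediate from the left-hand sides of the rules, since no rule fires on a cut whose left subterm is a multiplicative value while its cut variable is exponential (no $\toess$/$\toems$ step, and no multiplicative step either), nor on one whose left subterm is an exponential value while its cut variable is multiplicative; moreover reducing inside the left subterm keeps it a multiplicative, resp.\ exponential, value (inspection of the value shapes, which themselves have no root redex). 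For $\cuta{\pair\tm\tmtwo}\mvar \mctxp{\suba\mvar\val\vartwo\tmthree}$ and $\cuta{\la\var\tm}\mvar \mctxp{\para\mvar\vartwo\varthree \tmthree}$ the only candidate rules are $\totens$, resp.\ $\tololli$, and they do not apply, because the unique free occurrence of $\mvar$ in the right subterm — unique by structural linearity (\reflemma{proper-linearity}), $\mvar$ being free there by properness of the cut — is a \emph{subtraction}, resp.\ a \emph{par}, and so not of the shape these rules require; and that occurrence is preserved (still a subtraction, resp.\ a par, with conclusion $\mvar$) under any step inside the right subterm, by the subtraction and par cases of the deformation lemma (\reflemma{aux-props-loc-confl-strong-comm}), while reducing inside $\pair\tm\tmtwo$, resp.\ $\la\var\tm$, keeps it a tensor pair, resp.\ an abstraction. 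Here the terms are taken proper, as they are wherever this lemma is applied, properness being preserved by reduction and inherited by subterms.

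Both clauses go through uniformly for $\to\in\set{\toss,\toms}$: the only difference between the two systems, $\toess$ versus $\toems$, plays no role, since neither fires at the root of a neutral construct nor of a clashing cut. The one genuinely non-routine point is the last step in the clash case — that reducing the body of a clashing cut cannot turn the par or subtraction carrying the cut variable into a differently shaped occurrence, nor migrate it under a $\bang$ — and this is exactly what the deformation lemma of \refsect{technicalities} delivers; without rewriting at a distance this bookkeeping would be considerably heavier.
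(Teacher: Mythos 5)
Your proposal is correct and takes essentially the same route as the paper: none of the listed constructs nor a clashing cut can be the root of a redex, so every step is internal to a displayed subterm and preserves the shape (for the two distance clashes, properness/structural linearity guaranteeing that the unique free occurrence of the cut variable remains a subtraction, resp.\ a par), and one concludes by induction on the strong normalization of the subterms. Invoking the deformation lemma to justify the persistence of the clash shape is a somewhat heavier tool than the direct observation that a step inside the body can never consume an occurrence of a variable bound outside it, but it is sound.
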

% !TEX root = ../../main.tex
\begin{proof}
By induction on (in the cases with more than one induction the order between the components is irrelevant):
\begin{itemize}
\item $\tm \in \sn\to$ for $\la\var\tm$, $\bang\tm$, $\para\mvar\var\vartwo \tm$, and $\dera\evar\var 
\tm$,
\item $(\tm\in \sn\to, \tmtwo\in \sn\to)$ for $\pair\tm\tmtwo$,
\item $(\tm\in \sn\to, \val \in \sn\to)$ for $\suba\mvar\val\var\tm$,
\item $(\tm\in \sn\to, \mval \in \sn\to)$ for $\cuta\mval\evar\tm$,
\item $(\tm\in \sn\to, \exval \in \sn\to)$ for $\cuta\exval\mvar\tm$.
\end{itemize}
In each case one shows that all reducts are in $\snsubst$, which follows immediately from the \ih, because there cannot be interaction between the immediate sub-terms.
\end{proof}

\paragraph{Root Cut Expansion.} The \emph{root cut expansion} property is the less obvious fact that if the reduct of a root \emph{small-step} cut is $\sn\toms$ then the reducing term also is. Kesner isolated the importance of this property for PSN \cite{DBLP:journals/corr/abs-0905-2539}, calling it \emph{IE property}. Accattoli and Kesner \cite{DBLP:conf/csl/AccattoliK10,DBLP:conf/rta/Accattoli13} show that, with rules \emph{at a distance}, it is proved via simple inductions (here even simpler than in \cite{DBLP:conf/csl/AccattoliK10,DBLP:conf/rta/Accattoli13} thanks to LSC-style duplication and structural stability), while commutative rules (such as Girard's box commutation $\Rew{\boxbox}$ for proof nets) break those inductions and require more complex techniques.

\begin{prop}[Root cut expansion]
\label{prop:root-sn-expansion} % \reflemmap{sn-expansion}{six}
Let $\to\in\set{\toss,\toms}$
\begin{enumerate}
\item \emph{Multiplicative}: if $\tm \rtom \tmtwo$ and $\tmtwo\in \sn\to$ then $\tm\in\sn\to$.

\item \emph{Exponential}: %\label{p:sn-expansion-six}
 if $\exval\in\sn\to$ and $\cutsub\exval\evar\tm\in\sn\to$ then 
$\cuta{\exval}\evar \tm \in\sn\to$.
\end{enumerate}
\end{prop}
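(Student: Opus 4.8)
The plan is to prove both items by well-founded induction on a measure built from \emph{reduction heights}: for $\tm\in\snsubst$ write $h(\tm)$ for the length of the longest $\to$-reduction issuing from $\tm$, which is finite. In each item we show that \emph{every} one-step $\to$-reduct of the cut under consideration is in $\snsubst$, which by definition of $\snsubst$ gives the claim. The main tools are stability of steps under substitution (\reflemma{substitutivity-of-red}), which transports reductions between the components of a cut and the corresponding meta-level substitution, full composition (\refprop{ms-simulates-ss}), and structural stability (\refpropp{cuteq-bisim}{sn}), which lets us deduce SN of a term from a $\cuteq$-equivalent one.

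For the exponential item, set $\tm'\defeq\cutsub\exval\evar\tm$. First, $\tm\in\snsubst$, since every step $\tm\Rew{a}\tmtwo$ lifts by \reflemmap{substitutivity-of-red}{four} to a step $\tm'\Rew{a}\cutsub\exval\evar\tmtwo$ and $\tm'\in\snsubst$ by hypothesis. We then induct on the lexicographic triple $\langle h(\exval),\ h(\tm'),\ \sizep\tm\evar\rangle$ and inspect a reduct of $\cuta\exval\evar\tm$. If the redex sits inside $\exval$, say $\exval\Rew{a}\exvaltwo$, the reduct is $\cuta\exvaltwo\evar\tm$ with $\exvaltwo\in\snsubst$ of strictly smaller height and $\cutsub\exvaltwo\evar\tm\in\snsubst$ (reached from $\tm'$ by \reflemmap{substitutivity-of-red}{fourb}); the first component drops, so the IH applies. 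If it sits inside $\tm$, say $\tm\Rew{a}\tmtwo$, the reduct is $\cuta\exval\evar\tmtwo$ and $\tm'\Rew{a}\cutsub\exval\evar\tmtwo$ is a \emph{single} step by \reflemmap{substitutivity-of-red}{four} (which covers all rule kinds, so no sub-case split is needed, in particular when the step duplicates or erases occurrences of $\evar$), whence $h(\cutsub\exval\evar\tmtwo)<h(\tm')$: the first component is unchanged, the second drops. If it is the root step $\tow$ then $\evar\notin\fv\tm$, so $\tm=\tm'\in\snsubst$; if it is the root step $\toess$ (relevant only when $\to\,=\,\toss$) the reduct is $\tm'\in\snsubst$. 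Finally, if it is a root $\toaxeone$, $\toaxetwo$, or $\tobder$ step, the reduct has the shape $\cuta\exval\evar\tmtwo$ with $\cutsub\exval\evar\tmtwo=\tm'$ — this is exactly the relevant clause of Definition~\ref{def:meta-sub} together with \reflemmap{bang-subs-prop}{zero}, using $\evar\notin\fv\exval$ — while $\sizep\tmtwo\evar=\sizep\tm\evar-1$; the first two components are unchanged, the third strictly decreases, and the IH applies.

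For the multiplicative item, let $\tm=\cuta\mval\mvar\tmthree\rtom\tmtwo$. Since multiplicative variables are linear (\reflemma{proper-linearity}), $\mvar$ occurs exactly once in $\tmthree$, so along any $\to$-reduction from $\tm$ the root cut is never duplicated: the reduction keeps the shape $\cuta{\mval'}\mvar{\tmthree'}$ with $\mval\to^{*}\mval'$, $\tmthree\to^{*}\tmthree'$ until the root cut is itself reduced, and at that point — using the deformation lemmas (\reflemma{aux-props-loc-confl-strong-comm}) to see that the occurrence of $\mvar$ keeps its shape, and monotonicity of the root rule under internal reduction of $\mval$ and $\tmthree$ — one lands in a term that is $\cuteq$-equivalent to a $\to$-reduct of $\tmtwo$, hence in $\snsubst$ by structural stability. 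This drives an induction on $\langle h(\mval)+h(\tmthree),\ h(\tmtwo)\rangle$ once $\mval,\tmthree\in\snsubst$ is known, which follows from $\tmtwo\in\snsubst$ by simulating their reductions by reductions of (the subterms of) $\tmtwo$ — exploiting the level-$0$ status of the holes so that reductions elsewhere cannot duplicate or erase them, \reflemma{extension} where only neutral constructors are involved, and $\cuteq$ to commute cuts when the simulation requires it (for $\toaxmtwo$ one instead uses $\tmthree=\cutsub\mvar\mvartwo{\cutsub\mvartwo\mvar\tmthree}$ and \reflemmap{deformations}{sn}).

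The main obstacle is twofold. In the exponential item it is pinpointing the right third component of the measure: the first two are forced, as they track reductions doing genuine work inside $\exval$ and inside $\tm$, but root micro-steps leave both unchanged, so one needs a quantity decreasing precisely on those — and $\sizep\tm\evar$ works only because $\evar\notin\fv\exval$ makes each root $\toaxeone$/$\toaxetwo$/$\tobder$ step consume exactly one occurrence of $\evar$ and create none. In the multiplicative item the obstacle is the $\cuteq$-bookkeeping: reducing the root cut is possible only after the left contexts buried inside the cut value have been shuffled by earlier steps, so the term one lands in agrees with a $\to$-reduct of $\tmtwo$ only up to $\cuteq$ — which is exactly why the argument (and, in the next sections, the proofs of PSN and typed SN) relies on structural stability.
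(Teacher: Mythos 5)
Your exponential item is correct and is essentially the argument one expects here: a well-founded induction on a lexicographic measure combining the reduction heights of $\exval$ and of $\cutsub\exval\evar\tm$ with the number of occurrences of $\evar$, where steps internal to the two components are discharged via \reflemma{substitutivity-of-red}, and the root micro-steps are discharged by observing that they leave $\cutsub\exval\evar\tm$ unchanged while consuming exactly one occurrence of $\evar$. The case analysis there is exhaustive and each case is justified.

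The multiplicative item, however, has a genuine gap, concentrated in the claim that $\mval,\tmthree\in\snsubst$ ``follows from $\tmtwo\in\snsubst$ by simulating their reductions by reductions of (the subterms of) $\tmtwo$''. For $\toaxmone$, where $\tmthree=\mctxfp\mvar$ and $\tmtwo=\mctxfp\mval$, the simulation is \emph{not} step-for-step: if the occurrence of $\mvar$ sits as the value of a cut, i.e. $\mctx=\mctxtwop{\cuta\ctxhole\mvartwo\tmfour}$, then $\tmthree$ has a step $\cuta\mvar\mvartwo\tmfour\rtoaxmtwo\cutsub\mvar\mvartwo\tmfour$ with no counterpart in $\tmtwo$ whenever $\mval$ is not a variable --- these are precisely the ``glitch'' configurations of \refsect{confluence}. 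One can still conclude, because such unmatched steps strictly decrease the size and so cannot occur infinitely often, but that weak-simulation argument is the actual content of the claim and is missing; since your chosen measure needs $h(\tmthree)$ to be finite up front, the whole induction rests on it. Relatedly, your case analysis of the one-step reducts of $\cuta\mval\mvar\tmthree$ omits the possibility that an internal $\toaxmtwo$ step turns the root cut into a \emph{clash} (e.g. the surviving occurrence of $\mvar$ becomes a par while $\mval$ is an abstraction): the resulting term is then no longer a root $\rtom$ redex, so the inductive hypothesis of item 1 does not apply, and one must instead invoke the clash clause of \reflemma{extension} --- which again presupposes exactly the SN of the two components that was left unjustified. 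Finally, the assertion that after internal reductions the root step ``lands in a term that is $\cuteq$-equivalent to a $\to$-reduct of $\tmtwo$'' is the heart of the multiplicative case and must be checked per rule kind ($\toaxmone$, $\toaxmtwo$, $\totens$, $\tololli$), in particular because the left/value splitting of the cut value's components can change under reduction; naming the deformation lemmas and structural stability points at the right tools but does not discharge these cases.
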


% !TEX root = ../../main.tex
\begin{proof}
\hfill
\begin{enumerate}
\item By induction on $\tmtwo\in\sn\to$, showing that any reduct of $\tm$ is in $\sn\to$. Cases of the multiplicative root step:
\begin{itemize}

\item $\rtoaxmone$, that is, $\tm = \cuta{\mval}\mvar\mctxfp\mvar \rtoaxmone \mctxfp\mval = \tmtwo$. If $\tm \to \tm'$ by reducing the root cut then $\tm'=\tmtwo\in\sn\to$ by hypothesis. The other cases of $\tm \to\tm'$ are:
\begin{itemize}
\item \emph{Reduction of a cut of $\mctxfp\mvar$}: that is, $\cuta{\mval}\mvar\mctxfp\mvar \to \cuta{\mval}\mvar\tmthree$ because $\mctxfp\mvar \to \tmthree$. By deformation (\reflemmap{aux-props-loc-confl-strong-comm}{two}), there are two cases:
\begin{enumerate}
\item $\tmthree = \mctxtwofp\mvar$ and  $\mctxfp{\mval} \to \mctxtwofp{\mval}$. Then we have the following diagram:
\begin{center}
\begin{tikzpicture}[ocenter]
		\node at (0,0)[align = center](source){\normalsize$\cuta{\mval}\mvar\mctxfp\mvar $};
		\node at (source.east)[right = 40pt](source-right){\normalsize $\mctxfp\mval$};	
		\node at (source.center)[below = 20pt](source-down){\normalsize $\cuta{\mval}\mvar\mctxtwofp\mvar $};
		
		\node at (source-right|-source-down)(target){\normalsize $\mctxtwofp{\mval}$};
		
		\draw[|->](source) to node[above] {\scriptsize $\axmone $} (source-right);
		\draw[->](source) to node[left] {\scriptsize $ $}(source-down);

		\draw[->, dotted](source-right) to node[right] {\scriptsize $ $}(target);
		\draw[|->, dotted](source-down) to node[above] {\scriptsize $\axmone $} (target);
		
		\node at (source.west)[above=1pt, anchor=east](source-eq){\normalsize $\tm\ =$};	
		\node at (source-right.east)[anchor=west](source-right-eq){\normalsize $=\ \tmtwo$};	
		\node at (source-down.west)[above=1pt, anchor=east](source-down-eq){\normalsize $\tm'\ =$};	
		\node at (target.east)[anchor=west](target-eq){\normalsize $=:\ \tmtwo'$};	
	\end{tikzpicture}
\end{center}
Note that $\tmtwo'\in\sn\to$ because $\tmtwo\in \sn\to$. By \ih applied to the bottom side of the diagram,  $\tm'  \in\sn\to$.

\item $\mctx = \mctxtwop{\cuta\ctxhole\mvartwo\tmthree'}$ for some 
$\mctxtwo$ and the step $\mctxfp\mvar \to \tmthree$ reduces the cut on $\mvartwo$, that is, $\tmthree' = \mctxthreep{\tmthree_{\mvartwo}}$ for some $\mctxthree$ and variable occurrence $\tmthree_{\mvartwo}$, and $\mctxfp\mvar = \mctxtwop{\cuta\mvar\mvartwo\mctxthreep{\tmthree_{\mvartwo}}} \toaxmtwo \mctxtwop{\mctxthreep{\cutsub\mvar\mvartwo\tmthree_{\mvartwo}}} =\tmthree$. We have:
\begin{center}
\begin{tikzpicture}[ocenter]
		\node at (0,0)[align = center](source){\normalsize$\cuta{\mval}\mvar\mctxtwop{\cuta\mvar\mvartwo\mctxthreep{\tmthree_{\mvartwo}}}$};
		\node at (source.east)[right = 40pt](source-right){\normalsize $\mctxtwop{\cuta\mval\mvartwo\mctxthreep{\tmthree_{\mvartwo}}}$};	
		\node at (source.center)[below = 20pt](source-down){\normalsize $\cuta{\mval}\mvar\mctxtwop{\mctxthreep{\cutsub\mvar\mvartwo\tmthree_{\mvartwo}}}$};
		
		\draw[|->](source) to node[above] {\scriptsize $\axmone $} (source-right);
		\draw[->](source) to node[left] {\scriptsize $ $}(source-down);
		
		\node at (source.west)[above=1pt, anchor=east](source-eq){\normalsize $\tm\ =$};	
		\node at (source-right.east)[anchor=west](source-right-eq){\normalsize $=\ \tmtwo$};	
		\node at (source-down.west)[above=1pt, anchor=east](source-down-eq){\normalsize $\tm'\ =$};	
	\end{tikzpicture}
\end{center}
Note that $\tm' \cuteq \tmtwo$. By $\tmtwo \in \sn\to$ and structural stability of SN (\refprop{cuteq-bisim}), we obtain $\tm' \in \sn\to$.
%
%Since $\tm$ is clash-free, $\tm'$ can reduce the cut on $\mvar$ and $\tmtwo$ can reduce the cut on $\mvartwo$. Since the reduct of a multiplicative cut does not depend on the position of the cut nor on the name of the cut variable, both step reduce to the same term $\tmfour$ using the same kind of rewriting rule. Then:
%\begin{center}$\begin{array}{cccccccccc}
%\tm &= &\cuta{\mval}\mvar\mctxtwop{\cuta\mvar\mvartwo\mctxthreep{\tmthree_{\mvartwo}}}
%& \rtoaxmone &
%\mctxtwop{\cuta\mval\mvartwo\mctxthreep{\tmthree_{\mvartwo}}}
%& = & \tmtwo
%\\
%&&\downarrow_{\axmtwo} & & \downarrow_{\msym}
%\\
%\tm'&=&\cuta{\mval}\mvar\mctxtwop{\mctxthreep{\cutsub\mvar\mvartwo\tmthree_{\mvartwo}}}
%&  \tom & \tmfour
%\end{array}$\end{center}
%By \ih applied to the bottom side of the diagram, we obtain $\cuta{\mval}\mvar\mctxtwop{\mctxthreep{\cutsub\mvar\mvartwo\tmthree_{\mvartwo}}}  \in\sn\toms$.
\end{enumerate}

\item \emph{Reduction of a cut of $\mval$}: then $\cuta{\mval}\mvar\mctxfp\mvar  \to  \cuta{\mvaltwo}\mvar\mctxfp\mvar$ with $\mval \to \mvaltwo$. We have the following diagram:
\begin{center}
\begin{tikzpicture}[ocenter]
		\node at (0,0)[align = center](source){\normalsize$\cuta\mval\mvar \mctxfp\mvar  $};
		\node at (source.east)[right = 40pt](source-right){\normalsize $\mctxfp\mval$};	
		\node at (source.center)[below = 20pt](source-down){\normalsize $\cuta\mvaltwo\mvar \mctxfp\mvar$};
		
		\node at (source-right|-source-down)(target){\normalsize $\mctxfp\mvaltwo$};
		
		\draw[|->](source) to node[above] {\scriptsize $\axmone $} (source-right);
		\draw[->](source) to node[left] {\scriptsize $ $}(source-down);

		\draw[->, dotted](source-right) to node[right] {\scriptsize $ $}(target);
		\draw[|->, dotted](source-down) to node[above] {\scriptsize $\axmone $} (target);
		
		\node at (source.west)[above=1pt, anchor=east](source-eq){\normalsize $\tm\ =$};	
		\node at (source-right.east)[anchor=west](source-right-eq){\normalsize $=\ \tmtwo$};	
		\node at (source-down.west)[above=1pt, anchor=east](source-down-eq){\normalsize $\tm'\ =$};	
		\node at (target.east)[anchor=west](target-eq){\normalsize $=:\ \tmtwo'$};
	\end{tikzpicture}
\end{center}
Note that $\tmtwo'\in\sn\to$ because $\tmtwo\in \sn\to$. By \ih applied to the bottom side of the diagram,  $\tm'  \in\sn\to$.
\end{itemize}

%%%%%%%%%%%
\item  $\rtoaxmtwo$, that is, $\tm = \cuta\vartwo\var\tmthree \rtoaxmtwo \cutsub\vartwo\var\tmthree = \tmtwo$. If $\tm \to \tm'$ by reducing the root cut then $\tm'=\tmtwo\in\sn\to$ by hypothesis. The other cases of $\tm \to\tm'$ are:
\begin{itemize}
\item \emph{Reduction of a cut of $\tmthree$}, that is $\cuta\vartwo\var\tmthree \to \cuta\vartwo\var\tmthree'$ because $\tmthree \to \tmthree'$. We have:
\begin{center}
\begin{tikzpicture}[ocenter]
		\node at (0,0)[align = center](source){\normalsize$\cuta\vartwo\var\tmthree$};
		\node at (source.east)[right = 40pt](source-right){\normalsize $\cutsub\vartwo\var\tmthree$};	
		\node at (source.center)[below = 20pt](source-down){\normalsize $\cuta\vartwo\var\tmthree'$};
		
		\node at (source-right|-source-down)(target){\normalsize $\cutsub\vartwo\var\tmthree'$};
		
		\draw[|->](source) to node[above] {\scriptsize $\axmtwo $} (source-right);
		\draw[->](source) to node[left] {\scriptsize $ $}(source-down);

		\draw[->, dotted](source-right) to node[right] {\scriptsize $ $}(target);
		\draw[|->, dotted](source-down) to node[above] {\scriptsize $\axmtwo $} (target);
		
		\node at (source.west)[above=1pt, anchor=east](source-eq){\normalsize $\tm\ =$};	
		\node at (source-right.east)[anchor=west](source-right-eq){\normalsize $=\ \tmtwo$};
		\node at (source-down.west)[above=1pt, anchor=east](source-down-eq){\normalsize $\tm'\ =$};	
		\node at (target.east)[anchor=west](target-eq){\normalsize $=:\ \tmtwo'$};	
	\end{tikzpicture}
\end{center}
where the step from $\tmtwo$ is given by the stability of steps under renaming (\reflemmap{deformations}{rename}). Note that $\tmtwo'\in\sn\to$ because $\tmtwo\in \sn\to$. By \ih applied to the bottom side of the diagram,  $\tm'  \in\sn\to$.
\end{itemize}

%%%%%%%%%%%%
\item $\rtotens$, that is, $\tm = \cuta\mval\mvar \mctxp{\para\mvar\var\vartwo \tmfive}
 \rtotens 
\mctxp{\lctxp{\cuta\val\var \lctxtwop{\cuta\valtwo\vartwo \tmfive}}}$
with $\mval = \pair\tmthree\tmfour = \pair{\lctxp\val}{\lctxtwop\valtwo}$. If $\tm \to \tm'$ by reducing the root cut then $\tm'=\tmtwo\in\sn\to$ by hypothesis. The other cases of $\tm \to\tm'$ are:
\begin{itemize}
\item \emph{Reduction of a cut of $\mctxp{\para\mvar\var\vartwo \tmfive}$}: then $\tm = \cuta\mval\mvar \mctxp{\para\mvar\var\vartwo \tmfive} \to \cuta\mval\mvar \tmsix$ because $\mctxp{\para\mvar\var\vartwo \tmfive} \to \tmsix$. By deformation (\reflemmap{aux-props-loc-confl-strong-comm}{three}), $\tmsix = \mctxtwop{\para\mvar\var\vartwo \tmfive'}$ for some $\mctxtwo$ and $\tmfive'$ and also $\mctxp{\lctxp{\cuta\val\var \lctxtwop{\cuta\valtwo\vartwo \tmfive}}} \to \mctxtwop{\lctxp{\cuta\val\var \lctxtwop{\cuta\valtwo\vartwo \tmfive'}}}$. Then we have the following diagram:
\begin{center}
\begin{tikzpicture}[ocenter]
		\node at (0,0)[align = center](source){\normalsize$\cuta\mval\mvar \mctxp{\para\mvar\var\vartwo \tmfive}$};
		\node at (source.east)[right = 40pt](source-right){\normalsize $\mctxp{\lctxp{\cuta\val\var \lctxtwop{\cuta\valtwo\vartwo \tmfive}}}$};	
		\node at (source.center)[below = 20pt](source-down){\normalsize $\cuta\mval\mvar \mctxtwop{\para\mvar\var\vartwo \tmfive'}$};
		
		\node at (source-right|-source-down)(target){\normalsize $\mctxtwop{\lctxp{\cuta\val\var \lctxtwop{\cuta\valtwo\vartwo \tmfive'}}}$};
		
		\draw[|->](source) to node[above] {\scriptsize $\tens $} (source-right);
		\draw[->](source) to node[left] {\scriptsize $ $}(source-down);

		\draw[->, dotted](source-right) to node[right] {\scriptsize $ $}(target);
		\draw[|->, dotted](source-down) to node[above] {\scriptsize $\tens $} (target);
		
		\node at (source.west)[above=1pt, anchor=east](source-eq){\normalsize $\tm\ =$};	
		\node at (source-right.east)[anchor=west](source-right-eq){\normalsize $=\ \tmtwo$};
		\node at (source-down.west)[above=1pt, anchor=east](source-down-eq){\normalsize $\tm'\ =$};	
		\node at (target.east)[anchor=west](target-eq){\normalsize $=:\ \tmtwo'$};	
	\end{tikzpicture}
\end{center}
Note that $\tmtwo'\in\sn\to$ because $\tmtwo\in \sn\to$. By \ih applied to the bottom side of the diagram,  $\tm'  \in\sn\to$.

\item \emph{Reduction of a cut of $\mval$}: if the redex affects only $\val$, $\valtwo$, $\lctx$, or $\lctxtwo$, then the obvious diagram together with the \ih gives the statement. 

If the redex is given by a cut in $\lctx$ acting on a variable occurrence in $\val$, that is, $\mval = \pair{\lctxp\val}{\lctxtwop\valtwo} \to \pair{\lctxthreep\valthree}{\lctxtwop\valtwo} = \mvaltwo$ then we have $\lctxp{\cuta\val\var \lctxtwop{\cuta\valtwo\vartwo \tmfive'}} \to \lctxthreep{\cuta\valthree\var \lctxtwop{\cuta\valtwo\vartwo \tmfive'}}$. The diagram is as follows:
\begin{center}
\begin{tikzpicture}[ocenter]
		\node at (0,0)[align = center](source){\normalsize$\cuta\mval\mvar \mctxp{\para\mvar\var\vartwo \tmfive}$};
		\node at (source.east)[right = 60pt](source-right){\normalsize $\mctxp{\lctxp{\cuta\val\var \lctxtwop{\cuta\valtwo\vartwo \tmfive}}}$};	
		\node at (source.center)[below = 20pt](source-down){\footnotesize $\cuta{\pair{\lctxthreep\valthree}{\lctxtwop\valtwo}}\mvar \mctxtwop{\para\mvar\var\vartwo \tmfive'}$};
		
		\node at (source-right|-source-down)(target){\normalsize $\lctxthreep{\cuta\valthree\var \lctxtwop{\cuta\valtwo\vartwo \tmfive'}}$};
		
		\draw[|->](source) to node[above] {\scriptsize $\tens $} (source-right);
		\draw[->](source) to node[left] {\scriptsize $ $}(source-down);

		\draw[->, dotted](source-right) to node[right] {\scriptsize $ $}(target);
		\draw[|->, dotted](source-down) to node[above] {\scriptsize $\tens $} (target);
		
		\node at (source.west)[above=1pt, anchor=east](source-eq){\normalsize $\tm\ =$};	
		\node at (source-right.east)[anchor=west](source-right-eq){\normalsize $=\ \tmtwo$};
		\node at (source-down.west)[above=1pt, anchor=east](source-down-eq){\normalsize $\tm'\ =$};	
		\node at (target.east)[anchor=west](target-eq){\normalsize $=:\ \tmtwo'$};	
	\end{tikzpicture}
\end{center}
Note that $\tmtwo'\in\sn\to$ because $\tmtwo\in \sn\to$. By \ih applied to the bottom side of the diagram,  $\tm'  \in\sn\to$.

If the redex is given by a cut in $\lctxtwo$ acting on a variable occurrence in $\valtwo$ the reasoning is analogous.
\end{itemize}

%%%%%%%%%%%%
\item $\rtololli$: as the previous case, simply using \reflemmap{aux-props-loc-confl-strong-comm}{four} instead of \reflemmap{aux-props-loc-confl-strong-comm}{three} at the corresponding point of the proof.

\end{itemize}

%%%%%%%%%%%%
%% EXPONENTIAL
%%%%%%%%%%%%%%%
\item We first treat the case of $\toss$. By induction on $(\exval \in \sn{\toss}, \cutsub{\exval}\evar \tm \in \sn{\toss})$, proving that any reduct of $\cuta{\exval}\evar \tm$ is in $\sn{\toss}$. Cases:
\begin{itemize}
\item \emph{Reduction of the root cut}: $\cuta{\exval}\evar \tm \rtobang \cutsub{\exval}\evar \tm$, which is in $\sn{\toss}$ by hypothesis.
\item \emph{Reduction of a cut of $\tm$}: that is, $\cuta{\exval}\evar \tm \toss \cuta{\exval}\evar \tm'$ with $\tm \toss \tm'$. By stability under substitution (\reflemmap{substitutivity-of-red}{four}), $\cutsub{\exval}\evar \tm \toss \cutsub{\exval}\evar \tm'$ and so $\cutsub{\exval}\evar \tm' \in\sn\to$. By \ih (2nd component, the 1st is unchanged), $\cuta{\exval}\evar \tm'\in\sn{\toss}$.

\item \emph{Reduction of a cut of $\exval$}: that is, $\cuta{\exval}\evar \tm \toss \cuta{\exvaltwo}\evar \tm$ with $\exval \toss \exvaltwo$. By \reflemmap{substitutivity-of-red}{four}, $\cutsub{\exval}\evar \tm \toss^{*} \cutsub{\exvaltwo}\evar \tm$, and so $\cutsub{\exvaltwo}\evar \tm \in\sn{\toss}$. By \ih (1st component), $\cuta{\exvaltwo}\evar \tm \in\sn{\toss}$.
\end{itemize}
Now, the case of $\toms$, which is similar. By induction on $(\exval \in \sn{\toms}, \cutsub{\exval}\evar \tm \in \sn{\toms}, \varmeas\tm\evar)$, proving that any reduct of $\cuta{\exval}\evar \tm$ is in $\sn{\toms}$. The cases where reduction takes place in $\tm$ or $\exval$ go exactly as for $\toss$, because by full composition (\refprop{ms-simulates-ss}) $\toms$ simulates $\toss$. What changes is when the root cut is reduced. Cases:
\begin{itemize}
\item $\rtoaxeone$, that is, $\cuta{\exval}\evar \tm = \cuta{\exval}\evar \ctxfp\evar \rtoaxeone \cuta{\exval}\evar \ctxfp{\exval}$. Note that we can assume that $\evar\notin\fv\exval$. Since $\cutsub{\exval}\evar \ctxfp\evar = \cutsub{\exval}\evar \ctxfp{\exval}$ (by \reflemmap{bang-subs-prop}{zero}) and $\varmeas{\ctxfp{\exval}}\evar < \varmeas{\ctxfp{\evar}}\evar$, we can apply the \ih to $\cutsub{\exval}\evar \ctxfp{\exval}$ (3rd component decreases, 1st and 2nd unchanged), obtaining $\cuta{\exval}\evar \ctxfp{\exval} \in\sn{\toms}$.

\item $\rtoaxetwo$, that is, $\exval = \evartwo$ and the step is
$\cuta\evartwo\evar \tm = \cuta\evartwo\evar \ctxp{\dera\evar\var\tmthree} \rtoaxetwo \cuta\evartwo\evar \ctxp{\dera\evartwo\var\tmthree}$. 
Since $\cutsub\evartwo\evar \ctxp{\dera\evar\var\tmthree} = \cutsub\evartwo\evar \ctxp{\dera\evartwo\var\tmthree}$ (by \reflemmap{bang-subs-prop}{zero}) and $\varmeas{\ctxp{\dera\evartwo\var\tmthree}}\evar < \varmeas{\ctxp{\dera\evar\var\tmthree}}\evar$, we can apply the \ih to $\cutsub\evartwo\evar \ctxp{\dera\evartwo\var\tmthree}$ (3rd component decreases, 1st and 2nd unchanged), obtaining $\cuta\evartwo\evar \ctxp{\dera\evartwo\var\tmthree} \in\sn{\toms}$.

\item $\rtobder$, that is, $\exval = \bang\tmtwo = \bang\lctxp\val$ and the step is:
\begin{center}$\begin{array}{ccccc}
\cuta{\bang\tmtwo}\evar \tm &=& \cuta{\bang\tmtwo}\evar \ctxp{\dera\evar\var\tmthree} & \rtobder&\cuta{\bang\tmtwo}\evar \ctxp{\lctxp{\cuta\val\var\tmthree}}
\end{array}$\end{center}
Note that we can assume  $\evar\notin\fv\exval$.
Since $\cutsub{\bang\tmtwo}\evar \ctxp{\dera\evar\var\tmthree} = \cutsub{\bang\tmtwo}\evar \ctxp{\lctxp{\cuta\val\var\tmthree}}$ (by \reflemmap{bang-subs-prop}{zero}) and $\varmeas{\ctxp{\lctxp{\cuta\val\var\tmthree}}}\evar < \varmeas{\ctxp{\dera\evar\var\tmthree}}\evar$, we can apply the \ih to $\cutsub{\bang\tmtwo}\evar \ctxp{\lctxp{\cuta\val\var\tmthree}}$ (3rd component decreases, 1st and 2nd unchanged), obtaining $\cuta{\bang\tmtwo}\evar \ctxp{\lctxp{\cuta\val\var\tmthree}} \in\sn{\toms}$.

\item $\rtow$, that is, $\evar\notin\fv\tm$ and $\cuta{\exval}\evar \tm \rtow \tm$. Note that  $\tm = \cutsub{\exval}\evar \tm$, which is in $\sn{\toms}$ by hypothesis.\qedhere
\end{itemize}
\end{enumerate}
\end{proof}

\paragraph{PSN} Now, PSN is proved via a very easy induction over $\tm\in\sn\toss$ and the size $ \size\tm$ of $\tm$, using the two properties.

\begin{thm}[Untyped PSN]
\label{thm:PSN}
If $\tm\in\sn{\toss}$ then $\tm\in\sn{\toms}$.
\end{thm}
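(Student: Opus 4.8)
The plan is to prove $\tm \in \sn\toss \Rightarrow \tm \in \sn\toms$ by a single induction, the measure being the pair $(\tm\text{'s position in }\sn\toss,\ \size\tm)$ ordered lexicographically — that is, an outer induction on $\tm \in \sn\toss$ (using the inductive characterisation of strong normalisation recalled in \refsect{lsc}) and an inner induction on $\size\tm$. The goal is to show every $\toms$-reduct of $\tm$ is in $\sn\toms$, from which $\tm \in \sn\toms$ follows by the inductive definition of $\sn{}$. Since $\toms \subseteq \toss$ fails literally ($\toms$ includes the micro-step exponential rules, not $\toess$), the link to the hypothesis goes instead through \emph{full composition} (\refprop{ms-simulates-ss}): every $\toess$ step is simulated by a non-empty $\toems$ sequence, so conversely $\tm \in \sn\toss$ controls the small-step behaviour, and we must re-derive micro-step termination from it.

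First I would do a case analysis on the shape of $\tm = \lctxp\val$ via the splitting, or more simply on the outermost constructor. If $\tm$ is a value $\val$ that is a variable, it is $\toms$-normal, done. If the outermost constructor is one of the ``neutral'' ones — $\pair{\tm_1}{\tm_2}$, $\la\var{\tm_1}$, $\bang{\tm_1}$, $\para\mvar\var\vartwo{\tm_1}$, $\suba\mvar\val\var{\tm_1}$, $\dera\evar\var{\tm_1}$ — then each immediate sub-term is a sub-term of a term in $\sn\toss$, hence itself in $\sn\toss$, and is strictly smaller; by the inner induction hypothesis each is in $\sn\toms$, and then the \emph{extension} lemma (\reflemma{extension}, neutral case) gives $\tm \in \sn\toms$ directly. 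The only genuinely interesting case is $\tm = \ctxp{\cuta\val\var{\tm'}}$ where the displayed cut is a redex — equivalently, after pushing through the context, we may reduce to the case $\tm = \cuta\val\var{\tm'}$ with the root cut non-clashing (the clashing case is again handled by the clash clause of \reflemma{extension}). Here $\val$ and $\tm'$ are sub-terms of $\tm \in \sn\toss$, hence in $\sn\toss$ and strictly smaller, so by the inner IH both are in $\sn\toms$. If $\var$ is exponential, then $\cuta\val\var{\tm'}$ is a $\toess$-redex, $\cuta\val\var{\tm'} \toess \cutsub\val\var{\tm'}$, and since $\tm \in \sn\toss$ its $\toss$-reduct $\cutsub\val\var{\tm'}$ is in $\sn\toss$; to invoke \emph{root cut expansion} (\refpropp{root-sn-expansion}{2}, with $\to \,=\, \toms$) I need $\cutsub\val\var{\tm'} \in \sn\toms$ — but $\cutsub\val\var{\tm'}$ is \emph{not} necessarily smaller than $\tm$, so this step cannot be dispatched by the inner IH; it must come from the \emph{outer} IH, which applies because $\tm \toss \cutsub\val\var{\tm'}$ strictly decreases the $\sn\toss$-rank. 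With $\val \in \sn\toms$ and $\cutsub\val\var{\tm'} \in \sn\toms$ in hand, root cut expansion delivers $\cuta\val\var{\tm'} \in \sn\toms$. If $\var$ is multiplicative, then the root cut is a $\tom$-redex; $\tm \tom \tmtwo$ with $\tmtwo \in \sn\toss$ by hypothesis ($\tom \subseteq \toss$), so by the outer IH $\tmtwo \in \sn\toms$, and \refpropp{root-sn-expansion}{1} gives $\cuta\val\var{\tm'} \in \sn\toms$. Finally, to go from ``root cut'' back to ``cut inside a context $\ctx$'', I would let the context closure be absorbed by the neutral cases of \reflemma{extension} applied along $\ctx$, so that reducibility of the redex at the root, combined with reducibility of all the surrounding sub-terms (again inner IH), propagates outward.

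**The main obstacle** I anticipate is the bookkeeping around the exponential cut case — specifically, making the two-level induction airtight. The subtlety is that $\cutsub\val\var{\tm'}$ can be much larger than $\tm$ (substitution duplicates $\val$), so the inner (size) induction is useless there and everything hinges on the strict decrease of the $\sn\toss$-rank along the $\toess$ step; one must be careful that this is a genuine \emph{single} $\toss$ step (it is, by definition of $\toess$), and that the outer induction is set up as induction on the accessibility/well-foundedness of $\sn\toss$ rather than on any numeric bound, since $\tm \in \sn\toss$ need not carry a finite reduction length a priori. A secondary fussy point is ensuring that pushing the general context $\ctx$ out in the cut case does not break properness or capture conditions; but since proper terms are stable under reduction and \reflemma{proper-replacement} controls context plugging, and since \reflemma{extension} is stated for arbitrary $\sn{}$ sub-terms, this is routine. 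Modulo that care, the proof is the ``very easy induction'' the statement promises: the heavy lifting has already been done in \reflemma{extension}, \refprop{root-sn-expansion}, and \refprop{ms-simulates-ss}.
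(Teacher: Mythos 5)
Your proposal is correct and follows essentially the same route as the paper: a lexicographic induction on $(\tm\in\sn\toss,\ \size\tm)$, dispatching neutral constructors and clashing cuts via the extension lemma (\reflemma{extension}) and the inner induction hypothesis, and root cuts via root cut expansion (\refprop{root-sn-expansion}), where the exponential case crucially obtains $\cutsub\exval\evar\tm'\in\sn\toms$ from the \emph{outer} induction hypothesis through the single $\toess$ step. Your cautionary remarks about the two-level induction and the size blow-up of the substituted term identify exactly the point the paper's argument also hinges on, so there is nothing to add.
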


% !TEX root = ../../main.tex
\begin{proof}
By induction on $(\tm\in\sn\toss, \size\tm)$. Cases of $\tm$:
\begin{itemize}
\item \emph{All cases but cut}: they follow from the \ih and neutral extension (\reflemma{extension}). We give the details of one case. Let $\tm = \suba\mvar\val\var \tmtwo$. By \ih (2nd component), $\val,\tmtwo\in\sn{\toms}$. By extension, $\tm\in\sn{\toms}$.

\item \emph{Root cut}, that is, $\tm = \cuta\val\var\tmtwo$. There are three cases, handled by either clashing extension (\reflemma{extension}) or root cut expansion (\refprop{root-sn-expansion}):
\begin{enumerate}
	\item \emph{The root cut is clashing}. By \ih (2nd component), $\val,\tmtwo\in\sn{\toms}$. By clashing extension, $\tm\in\sn{\toms}$.
	
	\item \emph{The root cut is multiplicative and not clashing}. Then $\tm \rtom \tmthree$. By \ih (1st component), $\tmthree \in \sn{\toms}$. By multiplicative root cut expansion, $\tm \in \sn{\toms}$.
	
	\item \emph{The root cut is exponential}. By \ih (2nd component), $\val\in\sn\toms$. We have $\tm = \cuta{\exval}\evar\tmtwo \toess \cutsub{\exval}\evar\tmtwo$. By \ih (1st component), $\cutsub{\exval}\evar\tmtwo \in \sn{\toms}$. By exponential root cut expansion, $\cuta{\exval}\evar\tmtwo \in \sn{\toms}$. \qedhere
\end{enumerate}
\end{itemize}
\end{proof}

The next section proves typed SN for $\toss$, which then transfers to $\toms$ by PSN. The relevance of PSN, however, is that it holds in the \emph{untyped} setting, thus it transfers SN also if proved with respect to other type systems, such as intersection types, polymorphic types, and so on.

% !TEX root = main.tex
\section{Typed Strong Normalization}
\label{sect:SN}
Here we prove strong normalization (SN) of the IMELL-typed small-step ESC (then transferred to the micro-step ESC by PSN) 
using the reducibility method. We do it following the \emph{bi-orthogonal} schema for proof nets  by Girard 
\cite{DBLP:journals/tcs/Girard87}, see also Pagani and Tortora de Falco \cite{DBLP:journals/tcs/PaganiF10}. We actually 
adopt here the variant by Accattoli \cite{DBLP:conf/rta/Accattoli13}, the main point of which is that without 
commutative exponential rules (as it is also the case here) there is a simpler proof of \emph{adequacy} (the key step of 
the method) based on the \emph{root cut expansion property} of the previous section. There are three differences between the 
present proof and \cite{DBLP:conf/rta/Accattoli13} (beyond a neater technical development here):
\begin{enumerate}
\item \emph{Formal}: the study in \cite{DBLP:conf/rta/Accattoli13} uses proof nets and an \emph{informal} term notation 
with explicit and meta-level substitutions. That informal notation is here replaced by a formal one, the ESC;
\item \emph{Exponential rules}: both studies use rules without commutative cases, but here we use LSC-style 
duplications, obtaining a simpler proof of root cut expansion.
\item \emph{Quotient $vs$ cut equivalence}: the quotient on proofs given by proof nets plays a subtle role in the proof 
of adequacy. The argument is here  replaced by structural stability of SN, that is, the stability by cut equivalence $\cuteq$ of SN.
\end{enumerate}
 We are also inspired by Riba's dissection of reducibility \cite{riba:hal-00779623}, and refer to 
\cite{DBLP:conf/rta/Accattoli13} for extensive discussions and references about SN in linear logic. 

To our knowledge, the literature contains only two studies of SN for a linear term calculus: 
\begin{itemize}
\item Benton \cite{DBLP:journals/jfp/Benton95}, who, as he points out, does not deal with commutative cases---that are 
nonetheless crucial in his presentation---thus not really proving SN;
\item P\'erez et al. \cite{DBLP:journals/iandc/PerezCPT14}, where cut elimination is seriously restricted because they 
use it to model process communication, which does not take place under prefixes. Logically, it means that their notion 
of cut elimination does not compute cut-free proofs.
\end{itemize}
 
 \paragraph{Key Properties.} The reducibility method requires a number of definitions, detailed in the next paragraphs. Because of the many technicalities, it is easy to loose sight of what are the crucial concepts at work in the proof. From a high-level perspective, our proof is based only on three key properties of $\toss$:
\begin{enumerate}
	\item \emph{Structural stability of SN}: if $\tm\in\sn\toss$ and $\tm\cuteq\tmtwo$ then $\tmtwo\in\sn\toss$ 
(\refprop{cuteq-bisim}),
	\item \emph{Extension} (\reflemma{extension}), and 
	\item
	\emph{Root cut expansion} (\refprop{root-sn-expansion}).
\end{enumerate}
Since they also hold  for micro-step cut elimination $\toms$, our proof works with only minor changes also for 
micro-steps, without really needing PSN.

\paragraph{Elimination Contexts and Duality.} 
The bi-orthogonal technique we follow is based on a notion of duality defined via \emph{elimination contexts}, that are contexts of the form $\lctxp{\cuta{\ctxhole}\var\tm}$, noted $\elctx$. Types can be extended to contexts by considering $\ctxhole$ as a free variable and typing it via an axiom, as follows:
\begin{center}
\AxiomC{}
	\RightLabel{$\ax_{\text{ctx}}$}
	\UnaryInfC{$  \ctxhole\hastype\form \vdash \ctxhole\hastype \form$}	
	\DisplayProof
\end{center}
Let us set some notations:
\begin{itemize}
\item $\lltermsform$ for the set of terms $\tm$ of type $\form$, that is, such that $\multiForm\vdash \tm\hastype\form$, 
and also  $\vars_{\form} \defeq  \lltermsform \cap \vars$.  
 
\item $\elctxsform \defeq \set{\elctx \ 
|\ \multiForm, \ctxhole\hastype \form \vdash \elctx\hastype \formtwo}$ the set of typed elimination contexts with hole 
of type $\form$, and say that $\elctx$ has \emph{co-type} $\form$. 
\item $\elvars_{\form} \defeq \set{\elctx\in\elctxsform \ |\  \elctx = \cuta\ctxhole\var\var}$.
\item $\elctx \in \sn\toss$ if $\elctx = \lctxp{\cuta\ctxhole\var\tm}$ and $\lctxp{\cutsub\vartwo\var\tm}\in\sn\toss$ 
for every variable $\vartwo$ of the same kind as $\var$ which is not captured by $\lctx$.
%\item $\elctx \in \snA$ if $\elctx \in \sn\toss\cap\elctxsform$.
\end{itemize}
\begin{rem}
\label{rem:sn-renaming}
Checking that $\elctx=\lctxp{\cuta\ctxhole\var\tm}$ is in $\sn\toss$ amounts to prove that \\
$\lctxp{\cutsub\vartwo\var\tm} = \cutsub\vartwo\var\lctxp{\tm}\in\sn\toss$ for every appropriate $\vartwo$. By the stability of SN by renamings (\reflemmap{deformations}{sn}), it is enough to prove that $\lctxp{\tm}\in\sn\toss$.
\end{rem}

\begin{defi}[Duality]
\label{def:duality}
Given a set $\settone \subseteq \lltermsform$ of terms of type $\form$, the \emph{dual} set 
$\settone^\bot\subseteq\elctxsform$ contains the elimination contexts $\elctx$ of co-type 
$\form$ such that  $\elctxfp{\tm}$ is \proper and in $\sn\toss$ for every 
$\tm\in\settone$. The dual of a set of elimination contexts $\setecone \subseteq\elctxsform$ of co-type $\form$ is a set of terms $\setecone^\bot$
defined symmetrically. 
\end{defi}
Note the 
use of $\ctxholef$ in the definition of duality: the plugging in contexts at work in duality does \emph{not} capture the variables of the plugged 
term, this is crucial, and standard in the reducibility method. 

The following properties of duality are standard.

\begin{lem}[Basic properties of duality]
\label{l:basic-prop-duality}
If $\settone \subseteq \lltermsform$ or $\settone\subseteq\elctxsform$ then: 
 \begin{enumerate}
 \item \emph{Closure}: $\settone\subseteq\settone^{\bot\bot}$;
 \item \emph{Bi-orthogonal}: $\settone^{\bot\bot\bot}=\settone^{\bot}$.
% \item \emph{Covariance}: if $\settone\subseteq\setttwo$ then $\setttwo^\bot\subseteq\settone^\bot$.
 \end{enumerate}
 \end{lem}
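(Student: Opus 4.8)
The plan is to prove both points by the standard Galois-connection argument underlying bi-orthogonality; nothing here interacts with the rewriting rules or with strong normalization beyond the black-boxed predicate of being \proper and in $\sn\toss$ that occurs in Definition~\ref{def:duality}. I would treat the case $\settone\subseteq\lltermsform$ throughout; the case $\settone\subseteq\elctxsform$ is verbatim the same, with the roles of terms and elimination contexts exchanged (and, correspondingly, the plugging $\elctxfp{\cdot}$ of Definition~\ref{def:duality} read with its hole and its argument swapped).

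First I would record the elementary \emph{antitonicity} of $(-)^\bot$: if $S\subseteq T$, with $S$ and $T$ both subsets of $\lltermsform$ (or both subsets of $\elctxsform$), then $T^\bot\subseteq S^\bot$. Indeed, unfolding Definition~\ref{def:duality}, an object orthogonal to every element of $T$ is a fortiori orthogonal to every element of $S$. This one-line observation is the only structural fact the proof uses.

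For \emph{Closure}, I would take $\tm\in\settone$ and an arbitrary $\elctx\in\settone^\bot$. By definition of $\settone^\bot$, $\elctxfp{\tmtwo}$ is \proper and in $\sn\toss$ for every $\tmtwo\in\settone$; taking $\tmtwo=\tm$ shows that $\elctxfp{\tm}$ is \proper and in $\sn\toss$. Since $\elctx$ was arbitrary, $\tm$ satisfies the defining condition of $(\settone^\bot)^\bot=\settone^{\bot\bot}$. For \emph{Bi-orthogonal}, the inclusion $\settone^{\bot\bot\bot}\subseteq\settone^\bot$ follows by applying antitonicity to the inclusion $\settone\subseteq\settone^{\bot\bot}$ just established: taking duals reverses it, giving $(\settone^{\bot\bot})^\bot\subseteq\settone^\bot$. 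The reverse inclusion $\settone^\bot\subseteq(\settone^\bot)^{\bot\bot}=\settone^{\bot\bot\bot}$ is just Closure applied to the set $\settone^\bot$, which is legitimate because $\settone\subseteq\lltermsform$ forces $\settone^\bot\subseteq\elctxsform$, one of the two ambient sets the lemma covers. Combining the two inclusions yields the equality.

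I do not anticipate any genuine obstacle: the argument is pure definition-unfolding together with the order-reversing property of $(-)^\bot$, which is why this lemma is folklore. The only point demanding a moment of care is the sort bookkeeping — Closure is invoked twice, once on a set of terms and once on a set of elimination contexts — so the statement must be, and is, phrased uniformly for sets of either kind; under the symmetric reading $\settone\subseteq\elctxsform$ all of the above goes through with terms and elimination contexts interchanged.
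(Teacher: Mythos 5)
Your proof is correct and is exactly the standard Galois-connection argument (antitonicity of $(-)^\bot$, closure by definition-unfolding, and bi-orthogonality as the combination of the two); the paper itself omits the proof, labelling these properties as standard, and your argument is the one it implicitly relies on. The one point worth a moment's care --- that the second invocation of Closure is on $\settone^\bot$, which lives in the other sort ($\elctxsform$ rather than $\lltermsform$) --- is precisely why the lemma is stated for both kinds of sets, and you handle it correctly.
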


\paragraph{Generators, Candidates, and Formulas.} The definition of reducibility candidates comes together with a notion 
of generator, justified by the proposition that follows (which requires an auxiliary lemma).

 \begin{defi}[Generators and candidates]
A \emph{generator} of type $\form$ (resp. co-type $\form$) is a sub-set $\settone\subseteq\lltermsform$ (resp. 
$\settone\subseteq\elctxsform$) such that: 
\begin{enumerate}
\item \emph{Non-emptyness}: $\settone \neq \emptyset$, and 
\item \emph{Strong normalization}: $\settone\subseteq \sn\toss$.
\end{enumerate}
A generator $\settone$ of type $\form$ (resp. co-type $\form$) is a \emph{candidate} if 
\begin{enumerate}
\item \emph{Variables}: $\vars_{\form}\subseteq \settone$ (resp. $\elvars_{\form}\subseteq \settone$), and 
\item \emph{Bi-orthogonal}: $\settone=\settone^{\bot\bot}$.
\end{enumerate}
\end{defi}

 \begin{lem}[Duality and SN]
\label{l:sn-to-subterms} % \reflemmap{sn-to-subterms}{}
Let $\tm \in \lltermsform$ and $\elctx \in \elctxsform$. 
\begin{enumerate}
\item \label{p:sn-to-subterms-one} 
If $\elctxfp\tm\in\sn\toss$ then $\tm \in \sn\toss$ and $\elctx\in\sn\toss$.

\item \label{p:sn-to-subterms-two} 
If $\tm \in \sn\toss$ and $\elctx \in \elvars_{\form}$ then $\elctxfp\tm\in \sn\toss$.

\item \label{p:sn-to-subterms-three} 
If $\elctx \in \sn\toss$ and $\vartwo \in \vars_{\form}$ then $\elctxfp\vartwo\in \sn\toss$.
\end{enumerate}
\end{lem}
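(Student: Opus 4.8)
The three claims describe how $\toss$-reduction (and, verbatim, $\toms$) interacts with capture-avoiding plugging into an elimination context $\elctx = \lctxp{\cuta\ctxhole\var\tmthree}$, so the plan is a careful, redex-by-redex bookkeeping of such reductions. Throughout I will use Remark~\ref{rem:sn-renaming}, which says that $\elctx\in\sn\toss$ is equivalent to $\lctxp{\tmthree}\in\sn\toss$, and I will write $\tm = \lctxtwop\val$ for the splitting of $\tm$, so that $\elctxfp\tm = \lctxp{\lctxtwop{\cuta\val\var\tmthree}}$.

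For the first point, the inclusion $\tm\in\sn\toss$ is immediate by contraposition: capture-avoiding split-plugging commutes with $\toss$ (a routine fact, given the good behaviour of split-plugging with respect to the rewriting rules noted after Definition~\ref{def:plugging}), so an infinite reduction $\tm\toss\tm_1\toss\cdots$ lifts to $\elctxfp\tm\toss\elctxfp{\tm_1}\toss\cdots$. For $\elctx\in\sn\toss$, by Remark~\ref{rem:sn-renaming} it suffices to prove $\lctxp{\tmthree}\in\sn\toss$, and here the key observation is that every $\toss$-redex of $\lctxp{\tmthree}$ is still a redex of $\elctxfp\tm = \lctxp{\lctxtwop{\cuta\val\var\tmthree}}$: inserting $\lctxtwo$ together with the cut on $\var$ around $\tmthree$ inside $\lctx$ neither destroys redexes internal to $\tmthree$ or to $\lctx$ nor the only possible $\lctx$--$\tmthree$ interactions (a cut of $\lctx$ acting on a free occurrence of $\tmthree$), since the extra wrapping only enlarges the surrounding left/general context, and in $\lctxp{\tmthree}$ the variable $\var$ is free and acted upon by no cut. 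Hence an infinite reduction of $\lctxp{\tmthree}$ would give one of $\elctxfp\tm$, a contradiction.

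Points two and three both amount to the statement that prefixing a strongly normalizing term by a suitable cut preserves strong normalization: the inner cut is $\cuta\val\var\var$ in point two and $\cuta\vartwo\var\tmthree$ in point three. In point two, $\elctx\in\elvars_\form$ means $\elctx = \cuta\ctxhole\var\var$, so $\elctxfp\tm = \lctxp{\cuta\val\var\var}$ with $\tm = \lctxp\val\in\sn\toss$, where by Lemma~\ref{l:subj-red-aux} the value $\val$ has the same multiplicative/exponential kind as $\var$. In point three, $\elctxfp\vartwo = \lctxp{\cuta\vartwo\var\tmthree}$, and after an innocuous renaming of $\lctx$, Remark~\ref{rem:sn-renaming} yields $\lctxp{\cutsub\vartwo\var\tmthree}\in\sn\toss$. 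In each case the displayed inner cut carries a single ``interface'' redex --- $\cuta\mval\mvar\mvar\rtoaxmone\mval$ or $\cuta\exval\evar\evar\rtoess\exval$ for point two, and $\cuta\mvartwo\mvar\tmthree\rtoaxmtwo\cutsub\mvartwo\mvar\tmthree$ or $\cuta\evartwo\evar\tmthree\rtoess\cutsub\evartwo\evar\tmthree$ for point three --- whose contractum, plugged back under $\lctx$, is precisely the strongly normalizing term assumed above. The plan is then an induction on the length of the longest $\toss$-reduction issuing from that witnessing SN term. For an arbitrary $\toss$-reduct $u$ of the prefixed term: if the fired redex is the interface one then $u$ is the witnessing term, SN by hypothesis; otherwise the step is internal to $\val$ (resp.\ $\tmthree$), internal to $\lctx$, or an interaction of $\lctx$ with a free occurrence inside $\val$ (resp.\ $\tmthree$) --- never an action \emph{on} the inner cut, since a cut interacts only with variable, par, subtraction and dereliction occurrences --- so by the deformation lemmas (\reflemma{aux-props-loc-confl-strong-comm}) one gets $u = \lctxtwop{\cuta\valtwo\var\var}$ (resp.\ $u = \lctxtwop{\cuta\vartwo\var\tmfour}$) for which the corresponding witnessing term $\lctxtwop{\valtwo}$ (resp.\ $\lctxtwop{\cutsub\vartwo\var\tmfour}$, obtained by pushing the step through $\cutsub\vartwo\var$ via Lemma~\ref{l:substitutivity-of-red}) is a $\toss$-reduct of the original one; the induction hypothesis then applies. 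For $\to = \toms$ the exponential interface redex fires in two steps ($\toaxeone$ then $\tow$) instead of one $\toess$, which changes nothing.

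The main obstacle is exactly this case analysis in points two and three: one must make sure that no reduction can ``consume'' the inner cut-on-a-variable or otherwise escape the invariant ``a left context around a cut whose left value is a variable / whose body is a single variable'', and that the deformation lemmas return the residual in precisely the right shape together with a matching step of the witnessing SN term. This hinges on the modest but essential fact that cuts interact only with occurrences, never with arbitrary subterms, which is what guarantees that the inner cut's only new redex is the interface one. A secondary, purely routine point is to record cleanly that capture-avoiding split-plugging commutes with $\toss$ (and $\toms$), which is used throughout point one.
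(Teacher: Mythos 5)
Your proposal is correct and follows essentially the same route as the paper: point 1 by lifting/mimicking steps of the components on the plugged term (the paper phrases this as an induction on $\elctxfp\tm\in\sn\toss$ rather than by contraposition on infinite sequences, but the content is the same), and points 2 and 3 by induction on the SN witness with the same case split between the single ``interface'' redex of the inner cut (whose reduct is the assumed SN term) and all other steps, which are mimicked on that witness. The extra care you take with the deformation lemmas and the shape of residuals is more detail than the paper records, but it is the same argument.
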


\begin{proof}
\hfill
\begin{enumerate}
\item By induction on $\elctxfp\tm\in\sn\toss$. Let $\tm = \lctxp\val$ and $\elctx = 
\lctxtwop{\cuta\ctxhole\var\tmtwo}$, so that $\elctxfp\tm = \lctxtwop{\lctxp{\cuta\val\var\tmtwo}}$. The proof of the 
statement is based on the obvious fact that every step from $\tm$ or $\lctxtwop\tmtwo$ (we rely on \refrem{sn-renaming}) 
can be mimicked on $\lctxtwop{\lctxp{\cuta\val\var\tmtwo}}$, so that one can then apply the \ih
%%%
\item If $\elctx \in \elvars_{\form}$ then $\elctx = \cuta\ctxhole\var\var$. Let $\tm = \lctxp\val$. We  have to 
prove that $\elctxfp\tm= \lctxp{\cuta\val\var\var} \in \sn\toss$. By induction on $\tm \in \sn\toss$ we show that all the reducts of $\tmtwo$ are in $\sn\toss$. If $\elctxfp\tm \toss \tm$ by reducing the cut on $\var$, 
then the reduct is $\tm$, which is in $\sn\toss$ by hypothesis. Otherwise, $\elctxfp\tm$ makes a step in $\val$, or in 
$\lctx$, or involving both $\lctx$ and $\val$. The same step can be done on $\tm$, and thus by \ih the reduct is in 
$\sn\toss$.
\item Let $\lctxp{\cuta\ctxhole\var\tm}\in \elctx$, so that $\elctxfp\vartwo = \lctxp{\cuta\vartwo\var\tm}$. If 
$\elctxfp\vartwo \toss \lctxp{\cutsub\vartwo\var\tm}$ then the reduct is in $\sn\toss$ by the hypothesis on $\elctx$. 
Otherwise, $\elctxfp\vartwo$ makes a step in $\tm$, or in $\lctx$, or involving both $\lctx$ and $\tm$. The same step 
can be done on $\lctxp\tm$ (we rely on \refrem{sn-renaming}), and thus by \ih the reduct is in $\sn\toss$.\qedhere
\end{enumerate}
\end{proof}

\begin{toappendix}
 \begin{prop}
 \label{prop:generators-give-red-cands}
If $\settone$ is a generator then $\settone^\bot$ is a candidate. 
 \end{prop}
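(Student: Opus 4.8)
The plan is to verify directly the four conditions defining a candidate — non-emptiness, strong normalization, containment of the (elimination) variables, and bi-orthogonality — for $\settone^\bot$, uniformly in the two variants of duality. Assume first that $\settone \subseteq \lltermsform$ is a generator of type $\form$, so that $\settone^\bot \subseteq \elctxsform$ is a set of elimination contexts of co-type $\form$; the case in which $\settone$ is a generator of co-type $\form$ is entirely symmetric, using \reflemmap{sn-to-subterms}{three} wherever the first case uses \reflemmap{sn-to-subterms}{two}.

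I would begin with bi-orthogonality, which is immediate: \reflemma{basic-prop-duality} applied to $\settone$ gives $\settone^{\bot\bot\bot} = \settone^\bot$, that is, $(\settone^\bot)^{\bot\bot} = \settone^\bot$. Next comes the variables condition $\elvars_\form \subseteq \settone^\bot$: let $\elctx = \cuta\ctxhole\var\var \in \elvars_\form$ and $\tm \in \settone$; since $\elctx$ has co-type $\form$ and $\tm$ has type $\form$, the composite $\elctxfp\tm$ is typed, hence proper, and since $\tm \in \settone \subseteq \sn\toss$ (as $\settone$ is a generator), \reflemmap{sn-to-subterms}{two} yields $\elctxfp\tm \in \sn\toss$; therefore $\elctx \in \settone^\bot$. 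This also settles non-emptiness of $\settone^\bot$, since every IMELL formula $\form$ is either multiplicative — and then some multiplicative variable is typable of type $\form$ via $\ax_{\msym}$ — or of the form $\bang\formtwo$ — and then some exponential variable is typable of type $\form$ via $\ax_{\esym}$; hence $\vars_\form$, and thus $\elvars_\form$, is non-empty.

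For strong normalization of $\settone^\bot$, I would take an arbitrary $\elctx \in \settone^\bot$ and, using that $\settone$ is non-empty (being a generator), fix some $\tm \in \settone$. By definition of the dual, $\elctxfp\tm$ is proper and in $\sn\toss$, so \reflemmap{sn-to-subterms}{one} gives $\elctx \in \sn\toss$, as required. Collecting the four items, $\settone^\bot$ is a candidate.

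I do not expect a genuine obstacle here: the whole argument is a repackaging of \reflemma{sn-to-subterms} and \reflemma{basic-prop-duality}, together with the trivial fact that typed terms and contexts are proper. The only points needing a little care are bookkeeping the two directions of duality (so that the correct clause of \reflemma{sn-to-subterms} is invoked each time), and recording that $\vars_\form$ is non-empty for every formula $\form$, which is what makes the dual set non-empty — the delicate use of reducibility proper is in \emph{adequacy}, not in this closure lemma.
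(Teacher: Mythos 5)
Your proposal is correct and follows essentially the same route as the paper's proof: bi-orthogonality from \reflemma{basic-prop-duality}, the variables condition (subsuming non-emptiness) via \reflemmap{sn-to-subterms}{two} (resp.\ \reflemmap{sn-to-subterms}{three} in the symmetric case), and strong normalization by picking a witness in the non-empty generator and applying \reflemmap{sn-to-subterms}{one}. The only addition is your explicit remark that $\vars_\form$ is inhabited for every formula, which the paper leaves implicit.
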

\end{toappendix}

 \begin{proof}
We first consider the case $\settone \subseteq \lltermsform$. Properties:
\begin{itemize}
\item \emph{Strong normalization}: let  $\elctx = \lctxtwop{\cuta\ctxhole\var\tmtwo} \in 
\settone^\bot$ and $\tm= \lctxp\val\in\settone \neq\emptyset$. By duality, $\elctxfp\tm \in\sn\toss$. By 
\reflemmap{sn-to-subterms}{one}, $\elctx\in\sn\toss$. 

\item \emph{Variables} (which subsumes non-emptyness):  let $\tm = \lctxp\val\in \settone$ and $\cuta\ctxhole\var\var\in 
\elvars_{\form}$. By $\settone \subseteq \sn\toss$ and \reflemmap{sn-to-subterms}{two} we obtain 
$\lctxp{\cuta\val\var\var}\in\sn\toss$, that is, $\cuta\ctxhole\var\var \in \settone^{\bot}$.

\item \emph{Bi-orthogonal}: by the by-orthogonal property of duality (\reflemma{basic-prop-duality}).
\end{itemize}
Now, the case $\settone\subseteq\elctxsform$. 
Properties:
\begin{itemize}
\item \emph{Strong normalization}: let  $\tm= \lctxp\val \in 
\settone^\bot$ and $\elctx = \lctxtwop{\cuta\ctxhole\var\tmtwo} \in\settone\neq\emptyset$. By duality, $\elctxfp\tm 
\in\sn\toss$. By \reflemmap{sn-to-subterms}{one}, $\tm\in\sn\toss$. 

\item \emph{Variables} (which subsumes non-emptyness):  let $\elctx \in\settone$ and $\var\in \vars_{\form}$. By 
$\settone \subseteq \sn\toss$ and \reflemmap{sn-to-subterms}{three} we obtain $\elctxfp\var\in\sn\toss$, that is, $\var 
\in \settone^{\bot}$.

\item \emph{Bi-orthogonal}: by the by-orthogonal property of duality (\reflemma{basic-prop-duality}).\qedhere
\end{itemize}
\end{proof}

We now associate to every formula $\form$ a set $\settone_{\redc\form}$ that we then prove to be a generator, so that 
its bi-orthogonal is a candidate by \refprop{generators-give-red-cands}. A minor unusual point is that we define the set 
$\settone_{\redc\aform}$ for the atomic formula as the set of multiplicative variables. The literature rather defines it 
as $\lltermsp\aform\cap\sn{\toss}$, which is wacky, as this is actually what the method is meant to prove!

\begin{defi}[Formulas candidates]
Set $\redc{\form} \defeq \settone_{\redc\form}^{\bot\bot}$, where 
$\settone_{\redc\form}$ is defined by induction on $\form$:
 \begin{itemize}
  \item  $\settone_{\redc\aform} \defeq \mvars$;
  \item  $\settone_{\redc{\form\tens\formtwo}} \defeq \set{\pair\tm\tmtwo\ |\ 
\tm\in\redc{\form}, \tmtwo\in\redc{\formtwo}, \pair\tm\tmtwo \mbox{ is \proper}}$;

  \item  $\settone_{\redc{\form\lolli\formtwo}} \defeq \left\{\la\var\tm\ 
\Bigg|\ \begin{array}{l}
\multiform, \var\hastype\form\vdash\tm\hastype\formtwo, \\
\lctxp{\cuta\val\var\tm} \in\redc{\formtwo}\, \forall\tmtwo =\lctxp\val\in\redc\form,
\mbox{ and }\\
\la\var\tm\mbox{ is \proper}
\end{array}\right\}$;

  \item  $\settone_{\redc{\bang\form}} \defeq \set{\bang\tm\ |\ \tm\in\redc{\form},  \bang\tm\mbox{ is 
\proper}}$.
 \end{itemize}
\end{defi}

The proof that $\redc{\form}$ is a candidate for every $\form$ requires a lemma for the lolli case.

\begin{lem}
\label{l:formulas-give-redc-aux}
If $\redc\form\subseteq\sn\toss$ and $\vars_\formtwo \subseteq \redc\formtwo$ then $\settone_{\redc{\form\lolli\formtwo}} \neq \emptyset$.
\end{lem}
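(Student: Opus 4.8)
The plan is to exhibit an explicit inhabitant of $\settone_{\redc{\form\lolli\formtwo}}$, namely an $\eta$-expansion of a fresh variable of type $\form\lolli\formtwo$. Pick three fresh, pairwise distinct variables: a multiplicative $\mvar$ of type $\form\lolli\formtwo$, which exists by $\ax_{\msym}$ since $\form\lolli\formtwo$ is not a bang formula; a variable $\var$ of the kind forced by $\form$, multiplicative if $\form\neq\bang\formthree$ and exponential otherwise; and a variable $\vartwo\in\vars_\formtwo$. Set $\tm\defeq\suba\mvar\var\vartwo\vartwo$ and take $\la\var\tm$ as the witness. A one-line derivation — two axioms, then $\lolliLeftRule$, then $\lolliRightRule$ — yields $\mvar\hastype\form\lolli\formtwo\vdash\la\var\suba\mvar\var\vartwo\vartwo\hastype\form\lolli\formtwo$, so this is a term of type $\form\lolli\formtwo$; and inspecting Definition~\ref{def:proper-terms} shows it is proper, since when $\var$ is multiplicative it occurs, as the subtraction value, in $\tm$, so $\var\in\mfv\tm$, and when $\var$ is exponential there is no occurrence requirement.

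What remains is the defining condition: $\lctxp{\cuta\val\var{\suba\mvar\var\vartwo\vartwo}}\in\redc\formtwo$ for every $\tmtwo=\lctxp\val\in\redc\form$. Since $\tmtwo$ has type $\form$, \reflemma{subj-red-aux} forces $\val$ to be a multiplicative value when $\form\neq\bang\formthree$ and an exponential value otherwise. In both cases the unique occurrence of $\var$ is the value slot of the subtraction, so the cut is eliminated in one step ($\toaxmone$, reaching $\lctxp{\suba\mvar\val\vartwo\vartwo}$) or in two ($\toaxeone$ then $\tobw$, reaching the same $\lctxp{\suba\mvar\val\vartwo\vartwo}$). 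This reduct is a \emph{neutral} term: its head is a subtraction whose conclusion $\mvar$ is free, hence never the target of a cut, so the subtraction can never fire. I would then show $\lctxp{\suba\mvar\val\vartwo\vartwo}\in\redc\formtwo$: given an arbitrary $\elctx\in\redc\formtwo^{\bot}=\settone_{\redc\formtwo}^{\bot}$, prove $\elctxfp{\lctxp{\suba\mvar\val\vartwo\vartwo}}\in\sn\toss$ by induction, using $\elctxfp{\vartwo'}\in\sn\toss$ for fresh $\vartwo'$ — from $\vars_\formtwo\subseteq\redc\formtwo$ and \reflemmap{deformations}{sn} — to absorb the reductions in the neutral skeleton, and $\lctxp\val=\tmtwo\in\redc\form\subseteq\sn\toss$ to bound the reductions inside $\lctx$ and $\val$. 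Finally, root cut expansion (\refprop{root-sn-expansion}, multiplicative case) together with the closure of the bi-orthogonal under these neutral cut-expansions lifts membership back to $\lctxp{\cuta\val\var{\suba\mvar\var\vartwo\vartwo}}\in\redc\formtwo$, establishing $\la\var\suba\mvar\var\vartwo\vartwo\in\settone_{\redc{\form\lolli\formtwo}}$ and hence non-emptiness.

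The typing and properness checks are routine, and the reduction analysis of the key cut is a short computation. The main obstacle is the inductive argument showing that $\lctxp{\suba\mvar\val\vartwo\vartwo}$ lands in the bi-orthogonal $\redc\formtwo$: one must run the strong-normalization induction against an arbitrary dual elimination context while keeping the bookkeeping of split cuts, the binding of $\vartwo$ by the subtraction, and the capture-avoiding plugging $\ctxholef$ used in duality all consistent, and invoke anti-reduction and root cut expansion at the right moments.
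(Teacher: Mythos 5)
Your witness $\la\var\suba\mvar\var\vartwo\vartwo$ is exactly the paper's, and the first half of your argument---showing $\elctxfp{\lctxp{\suba\mvar\val\vartwo\vartwo}}\in\sn\toss$ by an induction exploiting that $\lctxp\val$ and $\elctxfp\vartwo$ cannot interact---is also the paper's first step. The gap is in the final lifting. Root cut expansion (\refprop{root-sn-expansion}) is stated for cuts \emph{at the root}: the multiplicative case requires a root step $\tm \rtom \tmtwo$, and the exponential case concerns a term of the shape $\cuta\exval\evar\tm$. In the term you actually need, $\elctxfp{\lctxp{\cuta\val\var \suba\mvar\var\vartwo\vartwo}}$, the cut sits under $\elctx$ and $\lctx$, so the proposition does not apply directly, and there is no lemma in the paper stating that $\sn\toss$ or the candidates are closed under expansion of non-root cuts---that is precisely the ``closure of the bi-orthogonal under these neutral cut-expansions'' you appeal to without proof. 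The paper bridges this with \emph{structural stability}: it applies root cut expansion to $\cuta\val\var\elctxfp{\lctxp{\suba\mvar\var\vartwo\vartwo}}$, where the cut is at the root and whose contractum is exactly the term already shown to be SN, and then transports SN along cut equivalence $\cuteq$ (\refprop{cuteq-bisim}) to reach the split form $\elctxfp{\lctxp{\cuta\val\var\suba\mvar\var\vartwo\vartwo}}$. This use of $\cuteq$ is one of the three properties the paper explicitly singles out as key for the SN proof, and it is the missing ingredient in your plan.

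A second, smaller slip: you invoke only the multiplicative case of root cut expansion, but when $\form = \bang\formthree$ the variable $\var$ is exponential and $\cuta\val\var$ is an exponential cut; there the relevant clause is the exponential one, namely that $\exval\in\sn\toss$ and $\cutsub\exval\evar\tm\in\sn\toss$ imply $\cuta\exval\evar\tm\in\sn\toss$. Your own reduction analysis ($\toaxeone$ followed by $\tobw$) already shows that in this case the elimination is not a single root multiplicative step, so the multiplicative clause cannot cover it. Everything else---the choice of witness, the typing and properness checks, and the non-interaction induction---matches the paper's proof.
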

\begin{proof}
We show that $\la\var\suba\mvar\var\vartwo\vartwo \in \settone_{\redc{\form\lolli\formtwo}}$. The idea is that the body $\suba\mvar\var\vartwo\vartwo$ of that abstraction is the smallest term $\tm$ verifying the typing requirement $\multiform, \var\hastype\form\vdash\tm\hastype\formtwo$ when $\form\neq\formtwo$. By definition, $\la\var\suba\mvar\var\vartwo\vartwo \in \settone_{\redc{\form\lolli\formtwo}}$ holds if $\lctxp{\cuta\val\var \suba\mvar\var\vartwo\vartwo} \in \redc{\formtwo}$ for every $\lctxp\val\in\redc\form$, that is,  $\elctxfp{\lctxp{\cuta\val\var \suba\mvar\var\vartwo\vartwo}} \in \sn\toss$ for every $\elctx\in\redc\formtwo^{\bot}$. 
By hypothesis, $\vars_{\formtwo} \subseteq \redc\formtwo$, and duality gives $\elctxfp\vartwo \in \sn\toss$. By hypothesis, we also have $\lctxp\val \in \sn\toss$. The proof is in two steps:\begin{itemize}
\item Proving $\tm \defeq \elctxfp{\lctxp{\suba\mvar\val\vartwo\vartwo}}\in \sn\toss$: note that the two components $\lctxp\val$ and $\elctxfp{\vartwo}$ cannot interact in any way, so that every step of $\tm$ comes from a step of either $\lctxp\val$ or $\elctxfp{\vartwo}$, as in the proof of the extension property. Formally, the proof is by induction on $(\lctxp\val \in \sn\toss, \elctxfp{\vartwo} \in \sn\toss)$.
\item Proving $\tmtwo \defeq \elctxfp{\lctxp{\cuta\val\var\suba\mvar\var\vartwo\vartwo} } \in \sn\toss$: note that  $\tm = \elctxfp{\lctxp{\suba\mvar\val\vartwo\vartwo}} = \cutsub\val\var\elctxfp{\lctxp{\suba\mvar\var\vartwo\vartwo}}$, thus by root cut expansion we obtain $\cuta\val\var\elctxfp{\lctxp{\suba\mvar\var\vartwo\vartwo}} \in \sn\toss$. By structural stability, $\cuta\val\var\elctxfp{\lctxp{\suba\mvar\var\vartwo\vartwo}} \cuteq \tmtwo \in \sn\toss$.
\end{itemize}
Therefore, $\la\var\suba\mvar\var\vartwo\vartwo \in \settone_{\redc{\form\lolli\formtwo}} \neq \emptyset$.
\end{proof}
The proof of the next proposition is simple and yet tricky: by induction on $\form$, it uses 
\refprop{generators-give-red-cands} to prove 2 from 1, and 3 from 2, but it also needs 3 (on sub-formulas) to prove 1. 

\begin{prop}[Formulas induce reducibility candidates]
\label{prop:formulas-give-redc}
Let $\form$ be a \imell\ formula. 
\begin{enumerate}
\item \emph{Generators}: $\settone_{\redc\form}$ is a generator.
\item \emph{Dual candidates}: $\redc{\form}^{\bot}$ is a  candidate.

\item \emph{Candidates}: $\redc{\form}$ is a candidate.
\end{enumerate}
\end{prop}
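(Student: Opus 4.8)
The plan is to prove the three points simultaneously by induction on the formula $\form$, in the dependency order noted just above: at $\form$ I first establish point~1 (that $\settone_{\redc\form}$ is a generator), invoking point~3 of the \ih on the proper subformulas of $\form$; then point~2 follows from point~1, and point~3 from point~2.

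The two ``formal'' derivations are short and rest only on \reflemma{basic-prop-duality} and \refprop{generators-give-red-cands}. For point~2, the bi-orthogonal identity gives $\redc\form^\bot = \settone_{\redc\form}^{\bot\bot\bot} = \settone_{\redc\form}^\bot$, which is a candidate by \refprop{generators-give-red-cands} applied to the generator $\settone_{\redc\form}$ of point~1. For point~3, $\redc\form = \settone_{\redc\form}^{\bot\bot} = (\redc\form^\bot)^\bot$, and since $\redc\form^\bot$ is a candidate by point~2, hence in particular a generator, \refprop{generators-give-red-cands} again gives that $\redc\form$ is a candidate.

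The substance is point~1, where one checks non-emptiness and $\sn\toss$-inclusion of $\settone_{\redc\form}$ by cases on $\form$. The atomic case is trivial: $\mvars$ is non-empty and consists of normal forms. For $\form\tens\formtwo$ and $\form\lolli\formtwo$ one exploits that, by the \ih, $\redc\form$ and $\redc\formtwo$ are candidates, so they contain all typed variables of their type and are contained in $\sn\toss$. For the tensor, a pair $\pair\var\vartwo$ with $\var\in\vars_\form$, $\vartwo\in\vars_\formtwo$ of distinct names is proper and witnesses non-emptiness, while any $\pair\tm\tmtwo\in\settone_{\redc{\form\tens\formtwo}}$ has $\tm,\tmtwo\in\sn\toss$ and is thus in $\sn\toss$ by the neutral case of \reflemma{extension}. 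For the implication, non-emptiness is exactly \reflemma{formulas-give-redc-aux}, whose two hypotheses are the \ih facts just recalled; for $\sn\toss$-inclusion, given $\la\var\tm\in\settone_{\redc{\form\lolli\formtwo}}$, instantiating its defining clause at a variable $\vartwo\in\vars_\form\subseteq\redc\form$ yields $\cuta\vartwo\var\tm\in\redc\formtwo\subseteq\sn\toss$, whence $\tm\in\sn\toss$ (its reductions lift through the left context $\cuta\vartwo\var\ctxhole$; alternatively one uses \reflemmap{sn-to-subterms}{one}), and then $\la\var\tm\in\sn\toss$ by \reflemma{extension}. For $\bang\form$, $\sn\toss$-inclusion of an element $\bang\tm$ is immediate from $\tm\in\redc\form\subseteq\sn\toss$ and \reflemma{extension}.

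The genuinely delicate point, which I expect to be the main obstacle, is non-emptiness of $\settone_{\redc{\bang\form}}$: its elements are \emph{proper} promotions $\bang\tm$ with $\tm\in\redc\form$, and properness forces $\mfv\tm = \emptyset$, so one cannot simply borrow a variable from $\redc\form$, since a variable of non-exponential type carries its own free multiplicative variable and breaks properness of the box. I would instead use the ``boxed variable'' $\dera\evar\var\var$, with $\evar\hastype\bang\form$ and $\var\hastype\form$ bound by the dereliction; it has no free multiplicative variables and is of type $\form$, so its promotion belongs to $\settone_{\redc{\bang\form}}$ provided $\dera\evar\var\var\in\redc\form$, which I would prove by checking $\elctxfp{\dera\evar\var\var}\in\sn\toss$ for every $\elctx\in\redc\form^\bot$. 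The key observation is that this plugging merely wraps an extra dereliction around the outermost cut of $\elctxfp\var$, and that dereliction never becomes a redex, so the reductions of $\elctxfp{\dera\evar\var\var}$ are in lockstep with those of $\elctxfp\var$; and $\elctxfp\var\in\sn\toss$ since $\var\in\vars_\form\subseteq\redc\form$ by the \ih and $\elctx\in\redc\form^\bot$. A secondary point needing care is simply that points 1--3 must be threaded jointly through the induction, since point~1 at $\form$ consumes point~3 at its subformulas.
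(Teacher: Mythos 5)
Your proof is correct and follows essentially the same route as the paper's: induction on the formula with the dependency ``point 3 on subformulas $\Rightarrow$ point 1 $\Rightarrow$ point 2 $\Rightarrow$ point 3'', the two formal implications discharged via the bi-orthogonal identity of \reflemma{basic-prop-duality} and \refprop{generators-give-red-cands}, and the case analysis for point 1 resting on the extension lemma and \reflemma{formulas-give-redc-aux}. You also correctly single out the one genuinely delicate spot---non-emptiness of $\settone_{\redc{\bang\form}}$, where properness of the promotion forbids boxing a multiplicative variable---and your witness $\bang{\dera\evar\var\var}$, with membership of $\dera\evar\var\var$ in $\redc\form$ obtained by the lockstep comparison of $\elctxfp{\dera\evar\var\var}$ with $\elctxfp\var$ (the dereliction being inert since $\evar$ is never cut), is a sound way to close that case.
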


% !TEX root = ../../main.tex
\begin{proof}
We prove the first point, the second follows from the first and \refprop{generators-give-red-cands}, the third one follows from the second and \refprop{generators-give-red-cands}. By induction on $\form$. All the inductive cases of the proof use the extension property (\reflemma{extension}). Cases:
\begin{itemize}
\item \emph{Base}, \ie $\form = \aform$. Note that $\settone_{\redc\aform}$ is non-empty by definition and that all its elements are normal. 

\item \emph{Tensor}, \ie $\form = \formtwo\tens\formthree$. 
By point 3 of the \ih, $\redc\formtwo,\redc\formthree\subseteq \sn\toss$ and they are non-empty. Then $\settone_{\redc{\formtwo\tens\formthree}}$ is non-empty. By extension, $\settone_{\redc{\formtwo\tens\formthree}} \subseteq \sn\toss$.

\item \emph{Implication}, \ie $\form = \formtwo\lolli\formthree$. 
Proving that $\settone_{\redc{\formtwo\lolli\formthree}}$ is non-empty requires the previous lemma, because by definition it does not contains the variables of type $\formtwo\lolli\formthree$, nor it is defined by abstracting terms in $\redc\formthree$. 
By \ih (point 3) applied to $\formthree$, $\vars_{\formthree} \subseteq \redc\formthree$, and By \ih (point 3) applied to $\formtwo$, $\lctxp\val \in \sn\toss$. Then by \reflemma{formulas-give-redc-aux}, $\settone_{\redc{\formtwo\lolli\formthree}}\neq\emptyset$.

 Since variables are in $\redc\formtwo$ (by Point 3 of the \ih), if $\la\var\tmthree\in \settone_{\redc{\formtwo\lolli\formthree}}$ then $\cuta\var\var\tmthree \in \redc\formthree$. By \ih (Point 3), $\cuta\var\var\tmthree \in \sn\toss$ and so does $\tmthree$. By extension, $\la\var\tmthree \in \sn\toss$. Therefore $\settone_{\redc{\formtwo\lolli\formthree}} \subseteq \sn\toss$.

\item \emph{Bang}, \ie $\form = \bang\formtwo$. 
By point 3 of the \ih, $\redc\formtwo\subseteq \sn\toss$ and it is non-empty. Then $\settone_{\redc{\bang\formtwo}}$ is non-empty. By extension, $\settone_{\redc{\bang\formtwo}} \subseteq \sn\toss$. \qedhere
\end{itemize}
\end{proof}

\paragraph{Reducibility and Adequacy.} In the following notion of reducible derivations, the rigid structure of terms 
forces an order between the assignments in the typing context $\multiform$, that is, it treats $\multiform$ as a 
\emph{list} rather than as a multi-set. The proof of adequacy then considers that the sequent calculus comes with an 
exchange rule, treated by one of the cases. 
\begin{defi}[Reducible derivations]
Let $\tderiv\pof\multiForm \vdash \tm\hastype \form$ be a typing derivation, and  $\multiForm = 
\var_{1}\hastype\formtwo_{1},\mydots, 
\var_{k}\hastype\formtwo_{k} $. Then $\tderiv$ is \emph{reducible} if 
$\cuta{\val_{1}}{\var_{1}}\ldots \cuta{\val_{k}}{\var_{k}} \tm \in \redc\form$
 for every value 
$\val_{i}\in\settone_{\redc{\formtwo_{i}}}$ such that the introduced cuts are independent, that is, 
$\fv{\val_{i}} \cap \dom\multiForm = \emptyset$. 
To ease notations, we shorten $\cuta{\val_{1}}{\var_{1}}\ldots \cuta{\val_{k}}{\var_{k}} \tm$ to 
$\cuta{\val_{i}}{\var_{i}}_{\multiForm}\tm$.
\end{defi}

Before proving adequacy, we provide an equivalent but slightly extended reformulation of the reducibility clause, in 
which the variables $\var_i$ are cut with arbitrary terms, not necessarily with values. The variant is used in the  
proof of adequacy below when dealing with: 
\begin{itemize}
\item the left rules, as they either have $\lctxp\val$ as hypothesis (cut and subtraction) or their associated rewriting rules (which are used in the proof) split terms on-the-fly (par and dereliction);
\item the right rule for $\lolli$, because split terms are used in the definition of $\settone_{\redc{\form\lolli\formtwo}}$.
\end{itemize}
\emph{Notations}: if $\tmtwo = \lctxp\val$, we use $\cuta{\ctxholep\tmtwo}\var\tm$ for $\lctxp{\cuta\val\var\tm}$, and given $\multiForm = \var_{1}\hastype\formtwo_{1},\mydots, 
\var_{k}\hastype\formtwo_{k} $ and $\tmtwo_i$ of type $\formtwo_{i}$ for $i=1,\ldots,k$ we use
$\cuta{\ctxholep{\tmtwo_{i}}}{\var_{i}}_{\multiForm}\tm$ for $\cuta{\ctxholep{\tmtwo_{1}}}{\var_{1}}\ldots 
\cuta{\ctxholep{\tmtwo_{k}}}{\var_{k}} \tm$.

\begin{lem}[Extended reducibility clause]
\label{l:reduc-hyp-simpl}
Let $\tderiv\pof \multiForm \vdash \tm\hastype \form$ with $\multiForm = \var_{1}\hastype\formtwo_{1},\mydots, 
\var_{k}\hastype\formtwo_{k} $  be a typed derivation. Then $\tderiv$ is \emph{reducible} if and 
only if $\cuta{\ctxholep{\tmtwo}}{\var_{i}}_{\multiForm}\tm \in \redc\form$ for every $\tmtwo\in \redc{\formtwo_{i}}$ 
such that the introduced cuts are independent.
\end{lem}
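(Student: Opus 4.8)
\textbf{Proof plan for Lemma~\ref{l:reduc-hyp-simpl} (extended reducibility clause).}

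The plan is to prove the two implications separately. The ``if'' direction is immediate: if $\cuta{\ctxholep{\tmtwo}}{\var_{i}}_{\multiForm}\tm \in \redc\form$ for every $\tmtwo\in\redc{\formtwo_{i}}$ with independent cuts, then in particular this holds when each $\tmtwo$ is a value $\val_{i}\in\settone_{\redc{\formtwo_{i}}}$ --- using $\settone_{\redc{\formtwo_{i}}}\subseteq\redc{\formtwo_{i}}$, which holds by the closure property of duality (\reflemma{basic-prop-duality}(1)) since $\redc{\formtwo_{i}} = \settone_{\redc{\formtwo_{i}}}^{\bot\bot}$. For a value $\val_{i}$, the notation $\cuta{\ctxholep{\val_{i}}}{\var_{i}}\tm$ unfolds to $\cuta{\val_{i}}{\var_{i}}\tm$ (taking $\lctx=\ctxhole$ in the split $\val_{i}=\lctxp\val$), so the reducibility clause of \refdef{duality}-style is recovered verbatim. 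Hence $\tderiv$ is reducible.

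For the ``only if'' direction, assume $\tderiv$ is reducible and fix terms $\tmtwo_{1}\in\redc{\formtwo_{1}},\dots,\tmtwo_{k}\in\redc{\formtwo_{k}}$ with the independence condition; write each $\tmtwo_{i}=\lctx_i\ctxholep{\val_i}$ by splitting, so that $\cuta{\ctxholep{\tmtwo_{i}}}{\var_{i}}\tm = \lctx_i\ctxholep{\cuta{\val_i}{\var_i}\tm}$. I would proceed by induction on $k$, reducing one hypothesis at a time, so it suffices to treat the case of a single variable: given $\tmtwo=\lctxp\val\in\redc\formtwo$ and $\cuta{\val'}{\var}\tmthree\in\redc\formthree$ for all $\val'\in\settone_{\redc\formtwo}$ (the induction hypothesis applied to the remaining context), show $\lctxp{\cuta\val\var\tmthree}\in\redc\formthree$. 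To do this, take an arbitrary elimination context $\elctx\in\redc\formthree^{\bot}$; we must show $\elctxfp{\lctxp{\cuta\val\var\tmthree}}\in\sn\toss$. The idea is to use root cut expansion (\refprop{root-sn-expansion}(2)): since $\tmtwo=\lctxp\val\in\redc\formtwo\subseteq\sn\toss$ (the $\sn$ part comes from $\redc\formtwo$ being a generator, \refpropp{formulas-give-redc}{1}, together with bi-orthogonality), we have $\val\in\sn\toss$ as well; and $\elctxfp{\cutsub\val\var \lctx'\ctxholep{\tmthree}}$ --- for a suitable arrangement --- should be obtainable from the data $\cuta{\val}{\var}\tmthree\in\redc\formthree$ by pushing the left context $\lctx$ around via cut equivalence $\cuteq$ and invoking structural stability (\refpropp{cuteq-bisim}{sn}).

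The main obstacle, and the step requiring care, is the bookkeeping of the left context $\lctx$: the term $\cuta{\ctxholep{\tmtwo}}{\var}\tm = \lctxp{\cuta\val\var\tm}$ has the $\esub$-like cut \emph{inside} $\lctx$, whereas the reducibility hypothesis $\cuta{\val'}{\var}\tm\in\redc\formthree$ speaks of a cut at the root with a \emph{value} $\val'$. The bridge is: instantiate the reducibility hypothesis with the variable $\vartwo\in\vars_\formtwo\subseteq\settone_{\redc\formtwo}$ (available since $\redc\formtwo$ is a candidate, \refpropp{formulas-give-redc}{3}) to get $\cuta\vartwo\var\tm\in\redc\formthree$, hence $\lctxp{\cuta\vartwo\var\tm}\in\redc\formthree$ because $\lctx$ can be absorbed as a harmless elimination-context prefix and $\redc\formthree$ is bi-orthogonally closed; then observe $\lctxp{\cuta\vartwo\var\tm}$ is, up to $\cuteq$, the term $\cuta{(\lctxp\vartwo)}{}{\cdots}$ --- no, rather: one wants $\lctxp{\cuta\val\var\tm}$, which by root cut expansion follows once $\cutsub\val{\vartwo}\bigl(\lctxp{\cuta\vartwo\var\tm}\bigr) = \lctxp{\cuta\val\var\tm}$ is in $\sn\toss$ and $\val\in\sn\toss$. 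Since this substitution identity holds and the left-hand side is exactly $\lctxp{\cuta\val\var\tm}$ when $\vartwo\notin\fv\lctx\cup\fv\tm$ except as introduced, one concludes. The delicate point throughout is keeping the capture conditions ($\ctxholef$ vs.\ $\ctxhole$, freshness of $\vartwo$, independence of cuts) straight; but no genuinely new rewriting argument is needed beyond \refprop{root-sn-expansion}, \refprop{cuteq-bisim}, and the candidate properties already established in \refprop{formulas-give-redc}.
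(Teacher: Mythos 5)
Your ``if'' direction is fine: $\settone_{\redc{\formtwo_{i}}}\subseteq\redc{\formtwo_{i}}$ by closure, and for a value the extended notation collapses to the original clause. The gap is in the ``only if'' direction, and it is essential: you never use the hypothesis $\tmtwo_{i}=\lctxp\val\in\redc{\formtwo_{i}}$ through its bi-orthogonal characterization, which is the only real content of the lemma. Two of your steps fail concretely. First, you instantiate the reducibility hypothesis with a variable $\vartwo$, justified by ``$\vars_{\formtwo}\subseteq\settone_{\redc\formtwo}$''; this inclusion is false except for atomic $\formtwo$, since $\settone_{\redc{\form\tens\formtwo}}$, $\settone_{\redc{\form\lolli\formtwo}}$ and $\settone_{\redc{\bang\form}}$ contain only pairs, abstractions and promotions respectively. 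Variables live in the candidate $\redc\formtwo$, not in the generator $\settone_{\redc\formtwo}$, and reducibility quantifies only over the generator, so that instantiation is not available. Second, your appeal to root cut expansion is circular: \refprop{root-sn-expansion} infers $\cuta\val\vartwo\tmthree\in\sn\toss$ \emph{from} $\val\in\sn\toss$ and $\cutsub\val\vartwo\tmthree\in\sn\toss$, whereas you need the opposite inference --- you have SN of the term containing the fresh variable $\vartwo$ and want SN of $\cutsub\val\vartwo\tmthree=\lctxp{\cuta\val\var\tm}$, which is exactly the premise that root cut expansion consumes (you even write that the goal ``follows once'' the goal itself is in $\sn\toss$). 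Substituting an SN value for a variable in an SN term does not preserve SN in general; that failure is precisely what the reducibility method is built to circumvent, so no amount of $\cuteq$-bookkeeping repairs this step.

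The argument that works goes through duality. Fix $\elctx\in\redc\form^{\bot}$ and consider the elimination context $\elctx'$ of co-type $\formtwo_{i}$ obtained by composing $\elctx$ with $\cuta\ctxhole{\var_{i}}{(\cdots)}$, i.e.\ the context sending $\tmthree$ to $\elctxfp{\cuta{\ctxholep{\tmthree}}{\var_{i}}{(\cdots)}}$. The reducibility hypothesis says exactly that $\elctx'\ctxholefp{\val'}=\elctxfp{\cuta{\val'}{\var_{i}}{(\cdots)}}\in\sn\toss$ for every $\val'\in\settone_{\redc{\formtwo_{i}}}$, that is, $\elctx'\in\settone_{\redc{\formtwo_{i}}}^{\bot}$. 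Then $\tmtwo_{i}\in\redc{\formtwo_{i}}=\settone_{\redc{\formtwo_{i}}}^{\bot\bot}$ yields $\elctx'\ctxholefp{\tmtwo_{i}}=\elctxfp{\cuta{\ctxholep{\tmtwo_{i}}}{\var_{i}}{(\cdots)}}\in\sn\toss$, which is the required conclusion (up to the properness and capture conditions you already flag). Your proposal is missing this pivot from the generator to its double dual, which is where the lemma actually earns its keep.
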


% !TEX root = ../../main.tex
\begin{proof}
Direction $\Leftarrow$ is obvious because $\settone_{\redc{\formtwo_{i}}} \subseteq \redc{\formtwo_{i}}$, we prove direction $\Rightarrow$. Let $\multiform = \multiFormtwo,\var\hastype\formtwo,\multiFormthree$.  The hypothesis is 
\begin{equation}
\begin{array}{llllllll}
\elctxfp{\cuta{\val_{i}}{\vartwo_{i}}_{\multiFormthree}\cuta{\val}\var\cuta{\val_{j}}{\varthree_{j}}_{\multiFormtwo}\tm} & \in &\sn\toss.
\end{array}
\label{eq:red-hyp}
\end{equation} 
for every $\val\in\settone_{\redc\formtwo}$ and appropriate $\val_{i}$ and $\val_{j}$. We show that we can replace $\val$ with $\tmtwo \in \redc\formtwo$, that is, that the following holds:
\begin{equation}
\begin{array}{llllllll}
\elctxfp{\cuta{\val_{i}}{\vartwo_{i}}_{\multiFormthree}\cuta{\ctxholep\tmtwo}\var\cuta{\val_{j}}{\varthree_{j}}_{\multiFormtwo}\tm} & \in &\sn\toss.
\end{array}
\label{eq:red-concl}
\end{equation}
By iterating the reasoning on all other values $\val_{i}$ and $\val_{j}$ one obtains the statement. Since \refeq{red-hyp} holds for all $\val\in\settone_{\redc\formtwo}$ we have that 
$
\elctxtwo \defeq \elctxfp{\cuta{\val_{i}}{\vartwo_{i}}_{\multiFormthree}\cuta{\ctxhole}\var\cuta{\val_{j}}{\varthree_{j}}_{\multiFormtwo}\tm}\in \settone_{\redc\formtwo}^{\bot} = \redc\formtwo^{\bot}
$.
Note that the cuts $\cuta{\val_{i}}{\vartwo_{i}}_{\multiFormthree}$ and $\cuta{\val_{j}}{\varthree_{j}}_{\multiFormtwo}$ can be added to the elimination context exactly because they are independent. Now, by duality we obtain $\elctxtwofp\tmtwo \in \sn\toss$ for every $\tmtwo \in \redc\formtwo$, which is exactly \refeq{red-concl}.
\end{proof}

\begin{toappendix}
\begin{thm}[Adequacy]
\label{tm:adequacy}
%Let $\to$ be a substitutive rewriting relation. Every \pn\ is $\to$-reducible.
Let $\tderiv \pof \multiForm \vdash \tm\hastype \form$ a type derivation. Then $\tderiv$ is reducible.
\end{thm}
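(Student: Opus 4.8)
The plan is to prove adequacy by induction on the typing derivation $\tderiv \pof \multiForm \vdash \tm\hastype\form$, showing that $\tderiv$ is reducible, i.e.\ that $\cuta{\val_i}{\var_i}_{\multiForm}\tm \in \redc\form$ for all families of independent value-cuts with $\val_i\in\settone_{\redc{\formtwo_i}}$. Throughout I will freely use the \emph{extended} reducibility clause of \reflemma{reduc-hyp-simpl}, which lets me cut the context variables with arbitrary reducible \emph{terms} $\tmtwo_i\in\redc{\formtwo_i}$ rather than with values; this is exactly what the left rules ($\cut$, $\lolliLeftRule$) and the rewriting-rule-driven cases (par, dereliction) need, since those want to split hypotheses as $\lctxp\val$ on the fly. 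I will also lean on the three pillars already established: \emph{structural stability} (\refpropp{cuteq-bisim}{sn}), \emph{extension} (\reflemma{extension}), and \emph{root cut expansion} (\refprop{root-sn-expansion}); together with \refprop{formulas-give-redc}, which tells me that each $\redc\form$ (and its dual) is a candidate, so it contains all variables of the right type, is closed under bi-orthogonal, and is contained in $\sn\toss$.

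The case analysis runs as follows. For the \emph{axiom} rules, $\tm$ is a variable, and the statement reduces to showing a cut $\cuta{\val}{\var}\var$ (or the exponential analogue) lies in $\redc\form$; this follows from root cut expansion applied to the reduct $\val$, plus structural stability if a cut-commutation is needed, using that $\val\in\settone_{\redc\form}\subseteq\sn\toss$. The \emph{exchange} rule is handled by the fact that the reducibility clause, although it fixes an order on $\multiForm$, is insensitive to it because the introduced cuts are independent (their free variables avoid $\dom\multiForm$), so the cuts commute up to $\cuteq$ and structural stability closes the gap. For the \emph{right} rules ($\tensRightRule$, $\lolliRightRule$, $\bangRightRule$): in each case the IH gives reducibility of the premise(s); then I must check that the built value (pair, abstraction, promotion) lies in the generator $\settone_{\redc\form}$ for the corresponding formula, which is essentially by definition, modulo checking properness (immediate from the typing side conditions) and, in the $\bangRightRule$ case, checking the multiplicative-context restriction, which matches the $\mfv\tm=\emptyset$ clause for proper $\bang\tm$. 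For the \emph{left} rules ($\cut$, $\tensLeftRule$, $\lolliLeftRule$, $\bangLeftRule$, $\weakRule$, $\contrRule$): I take an arbitrary dual elimination context $\elctx\in\redc\form^\bot$, plug the term, and must show the result is in $\sn\toss$. The strategy is always the same: reduce the freshly exposed cut/redex once (using the relevant root rewriting rule), land on a term that the IH, together with the extended clause and possibly $\cutsub{\cdot}{\cdot}{\cdot}$-substitution lemmas (\reflemma{bang-subs-prop}, \reflemma{substitutivity-of-red}), places in $\sn\toss$, and then use \emph{root cut expansion} to climb back, and \emph{structural stability} to absorb cut-commutations introduced by splitting $\lctxp\val$. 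Weakening uses the $\tow$ rule and the fact that $\evar\notin\fv{\cdots}$; contraction uses the $\cutsub\evar\evartwo{\cdot}$ renaming together with \reflemmap{potmul-prop}{five}-style invariance and \reflemmap{bang-subs-prop}{one} for the promotion case.

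The main obstacle is, as usual in biorthogonal reducibility for linear logic, the \emph{promotion/dereliction interaction} — i.e.\ making the $\bangLeftRule$ (dereliction) and $\cut$-on-exponential cases go through — because here the micro-step rule $\tobder$ (and the meta-level $\cutsub{\bang\lctxp\val}\evar{\dera\evar\var\tm}$ clause) both \emph{duplicates} the box and \emph{opens} it simultaneously, in LSC style, rather than the proof-nets two-step box-commute-then-open. Concretely, when handling a cut $\cuta{\bang\tm_0}\evar\tm$ where $\tm$ contains derelictions on $\evar$, one must show $\cuta{\bang\tm_0}\evar\tm\in\sn\toss$ from $\exval=\bang\tm_0\in\sn\toss$ and $\cutsub{\bang\tm_0}\evar\tm\in\sn\toss$; the exponential clause of root cut expansion (\refpropp{root-sn-expansion}{2}) is precisely tailored to this, but to invoke it I must have arranged, via the IH on the dereliction premise and the extended clause, that the opened-and-cut reduct is reducible — and this requires knowing that $\tm_0$ (or rather its splitting $\lctxp\val$) is itself reducible, which comes from inverting the reducibility of the $\bangRightRule$ derivation that produced $\bang\tm_0$. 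The cleanest way I see to organize this is to observe that $\bang\tm_0\in\settone_{\redc{\bang\formtwo}}$ forces $\tm_0\in\redc\formtwo$ by definition of $\settone_{\redc{\bang\formtwo}}$, so the body is available, and then push everything through root cut expansion; the delicate bookkeeping is that the split $\tm_0=\lctxp\val$ must be threaded consistently through all the $\cutsub{\cdot}{\cdot}{\cdot}$ equalities, for which \reflemmap{bang-subs-prop}{zero} (distribution of substitution over plugging) and structural stability are the workhorses. The $\lolli$ cases are comparatively routine once \reflemma{formulas-give-redc-aux} (non-emptiness of $\settone_{\redc{\form\lolli\formtwo}}$) is in hand and the extended clause is used to feed split arguments $\lctxp\val\in\redc\form$ into the definition of $\settone_{\redc{\form\lolli\formtwo}}$.
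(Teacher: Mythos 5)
Your plan follows the paper's proof of adequacy essentially verbatim: induction on the typing derivation, the extended reducibility clause of \reflemma{reduc-hyp-simpl} for the left rules, par/dereliction, and the lolli case, root cut expansion (\refprop{root-sn-expansion}) to climb back from reducts, structural stability under $\cuteq$ to absorb the cut commutations produced by on-the-fly splitting, an explicit case for exchange, and the definition of $\settone_{\redc{\bang\form}}$ to extract the reducible box content in the promotion/dereliction interaction. The only things left implicit are routine within each case (notably that the right-rule cases also need a $\cuteq$ step, plus stability of candidates under $\cuteq$, to move the substitution cuts inside the built pair/abstraction/promotion before appealing to the generator), so the architecture and the lemmas invoked are exactly those of the paper.
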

\end{toappendix}

% !TEX root = ../../main.tex
\begin{proof}
By induction on $\tderiv$. The proof rests on the three rewriting properties mentioned at the beginning of the section, namely extension, root cut expansion, and  structural stability. Let $\multiForm = \var_{1}\hastype\formtwo_{1},\mydots, 
\var_{k}\hastype\formtwo_{k} $. We show the most relevant cases, the other ones are in \refapp{app-sn}, page \pageref{sect:app-sn}. Cases of the last rule of $\tderiv$:%To ease the notation, along the proof we do not write the permutation $\sigma$ in the clause of reducible terms, and when the last rule of $\tderiv$ is a left rule, we only deal with the case in which the first cut is on it---the reasoning shall not however depend on the order of the cuts. 
\begin{itemize}
\item \emph{Exponential axiom}: 
\begin{center}
\AxiomC{}
	\RightLabel{$\ax_{\esym}$}
	\UnaryInfC{$\evar\hastype\bang\form \vdash 
\evar\hastype\bang\form$}
	\DisplayProof
\end{center}
We need to show that $\elctxfp{\cuta{\bang\tmtwo}\evar\evar} 
\in\sn\toss$ for every term $\tmtwo\in\redc{\form}\subseteq\sn\toss$ and every elimination context $\elctx\in\redc{\bang\form}^{\bot}$. Note that 
$\elctxfp{\cuta{\bang\tmtwo}\evar\evar} \tobang \elctxfp{\bang\tmtwo}$ which 
is in $\sn\toss$ by duality. Note also  that $\elctxfp{\bang\tmtwo} = \elctxfp{\cutsub{\bang\tmtwo}\evar\evar} = 
\cutsub{\bang\tmtwo}\evar\elctxfp{\evar}$. By extension, $\bang\tmtwo\in\sn\toss$. Since both $\bang\tmtwo$ and 
$\cutsub{\bang\tmtwo}\evar\elctxfp{\evar}$ are in $\sn\toss$, root cut expansion gives $\cuta{\bang\tmtwo}\evar\elctxfp{\evar} 
\in\sn\toss$. By structural stability, $\cuta{\bang\tmtwo}\evar\elctxfp{\evar} \cuteq 
\elctxfp{\cuta{\bang\tmtwo}\evar\evar}\in\sn\toss$. 

Note that the two terms $\cuta{\bang\tmtwo}\evar\elctxfp{\evar}$ and $
\elctxfp{\cuta{\bang\tmtwo}\evar\evar}$ do translate to the same proof net, which is why structural stability does not appear in the proofs of SN for proof nets. In the intuitionistic case with weakenings, however, one needs some form of jumps and jump rewiring rules, so an analogous of structural stability would be most probably needed. There are no proofs of SN for intuitionistic proof nets in the literature.

%%%%%%%%%%%%%%%%%
%%%%%%%%%%%%%
\item \emph{Exponential inductive cases}. We show the promotion and contraction cases, the weakening and dereliction cases are minor variations over the contraction one.
\begin{itemize}
 
 %%%%%%%%%%%%%%
 \item \emph{Promotion}:
\begin{center}
	\AxiomC{$ \tderivtwo\pof \bang\multiFormtwo \vdash \tmtwo\hastype\formthree$}
	\RightLabel{$ \bangRightRule$}
	\UnaryInfC{$  \bang\multiFormtwo \vdash \bang\tmtwo\hastype\bang\formthree$}
	\DisplayProof
\end{center}
With $\tm = \bang\tmtwo$ and $\multiForm= \bang\multiFormtwo$. Let  $\bang\multiFormtwo= \evar_{1}\hastype\bang\formtwo_{1},\ldots,\evar_{k}\hastype\bang\formtwo_{k}$. We need to 
prove that $\cuta{\bang\tmthree_{i}}{\evar_{i}}_{\bang\multiFormtwo}\bang \tmtwo\in \redc{\bang\formthree}$ for every term $\tmthree_{i}\in\redc{\formtwo_{i}}$, that is, $\elctxfp{\cuta{\bang\tmthree_{i}}{\evar_{i}}_{\bang\multiFormtwo}\bang \tmtwo} \in\sn\toss$   for every elimination context $\elctx\in\redc{\bang\formthree}^{\bot}$. By \ih, 
$\tderivtwo$ is reducible, therefore $\cuta{\bang\tmthree_{i}}{\evar_{i}}_{\bang\multiFormtwo}\tmtwo\in\redc\formthree$, 
and so $\bang\cuta{\bang\tmthree_{i}}{\evar_{i}}_{\bang\multiFormtwo}\tmtwo\in\redc{\bang\formthree}$, that is, 
$\elctxfp{\bang\cuta{\bang\tmthree_{i}}{\evar_{i}}_{\bang\multiFormtwo}\tmtwo} \in\sn\toss$. Note that we have
\begin{center}\arraycolsep=4pt
$\begin{array}{llllllll}
\elctxfp{\bang\cuta{\bang\tmthree_{i}}{\evar_{i}}_{\bang\multiFormtwo}\tmtwo} 
&\toess^{*} &
\elctxfp{\bang\cutsub{\bang\tmthree_{i}}{\evar_{i}}_{\bang\multiFormtwo}\tmtwo}
&=&
\elctxfp{\cutsub{\bang\tmthree_{i}}{\evar_{i}}_{\bang\multiFormtwo}\bang\tmtwo} 
&=& 
\cutsub{\bang\tmthree_{i}}{\evar_{i}}_{\bang\multiFormtwo}\elctxfp{\bang\tmtwo}
\end{array}$\end{center} 
which is then in $\sn\toss$. By hypothesis, $\tmthree_{i}\in\redc{\formtwo_{i}}$, which implies $\tmthree_{i}\in\sn\toss$ by the properties of candidates (\refprop{formulas-give-redc}), thus 
$\bang\tmthree_{i}\in\sn\toss$ by extension. By root cut expansion, 
$\cuta{\bang\tmthree_{i}}{\evar_{i}}_{\bang\multiFormtwo}\elctxfp{\bang\tmtwo} \in \sn\toss$. By structural stability,  
\begin{center}$\begin{array}{llllllll}
\cuta{\bang\tmthree_{i}}{\evar_{i}}_{\bang\multiFormtwo}\elctxfp{\bang\tmtwo} 
&\cuteq &
\elctxfp{\cuta{\bang\tmthree_{i}}{\evar_{i}}_{\bang\multiFormtwo}\bang\tmtwo}
&\in&\sn\toss.
\end{array}$\end{center} 

%%%%%%%%%%%%%%
\item \emph{Contraction}:
\begin{center}
		\AxiomC{$  \multiFormtwo,\evar\hastype\bang\formtwo, \evartwo\hastype\bang\formtwo  \vdash \tmtwo\hastype\form$}
	\RightLabel{$ \contrRule $}
	\UnaryInfC{$   \multiFormtwo,\evar\hastype\bang\formtwo \vdash \cutsub\evar\evartwo\tmtwo\hastype \form$}
	\DisplayProof
\end{center}
with $\tm = \cutsub\evar\evartwo\tmtwo$ and $\multiForm = \multiFormtwo, \evar\hastype\bang\formtwo$. We use the notation 
$\var_{i}\hastype\formtwo_{i}$ for the assignments in $\multiformtwo$.
 By \ih, 
$\tmtwo' \defeq 
\elctxfp{\cuta{\val_{i}}{\var_{i}}_{\multiformtwo}\cuta{\bang\tmthree}\evar\cuta{\bang\tmthree}\evartwo\tmtwo} 
\in\sn\toss$ for every $\tmthree\in\redc\formtwo$, every $\val_{i}\in\settone_{\redc{\formtwo_{i}}}$ for 
$i\in\set{1,\ldots,k}$, and every $\elctx\in\redc\form^{\bot}$. We have: 
\begin{center}$\begin{array}{llllllll}
\tmtwo' &\toess^{*} &
\elctxfp{\cuta{\val_{i}}{\var_{i}}_{\multiformtwo}\cutsub{\bang\tmthree}\evar\cutsub{\bang\tmthree}\evartwo\tmtwo}& \in &
\sn\toss.\end{array}$\end{center} 
By the properties of meta-level substitution (namely \reflemmap{bang-subs-prop}{one}), we have 
\begin{center}$\begin{array}{llllllll}
\elctxfp{ 
\cuta{\val_{i}}{\var_{i}}_{\multiformtwo}\cutsub{\bang\tmthree}\evar\cutsub{\bang\tmthree}\evartwo\tmtwo} & =& 
\elctxfp{\cuta{\val_{i}}{\var_{i}}_{\multiformtwo}\cutsub{\bang\tmthree}\evar\cutsub\evar\evartwo\tmtwo}\end{array}$\end{center} 
which is 
equal to $\elctxfp{\cutsub{\bang\tmthree}\evar\cuta{\val_{i}}{\var_{i}}_{\multiformtwo}\cutsub\evar\evartwo\tmtwo}$
 by 
the independence of cuts in the definition of reducibility, in turn equal to 
$\cutsub{\bang\tmthree}\evar\elctxfp{\cuta{\val_{i}}{\var_{i}}_{\multiformtwo}\cutsub\evar\evartwo\tmtwo}$. Since 
$\tmthree\in\redc\formtwo$, we have $\tmthree\in \sn\toss$  by the properties of candidates (\refprop{formulas-give-redc}), thus $\bang\tmthree\in \sn\toss$ by extension. By root cut expansion, 
$\cuta{\bang\tmthree}\evar\elctxfp{\cuta{\val_{i}}{\var_{i}}_{\multiformtwo}\cutsub\evar\evartwo\tmtwo} \in\sn\toss$. 
By structural stability, 
\begin{center}$\begin{array}{llllllll}
\cuta{\bang\tmthree}\evar\elctxfp{\cuta{\val_{i}}{\var_{i}}_{\multiformtwo}\cutsub\evar\evartwo\tmtwo} 
&\cuteq& 
\elctxfp{\cuta{\val_{i}}{\var_{i}}_{\multiformtwo}\cuta{\bang\tmthree}\evar\cutsub\evar\evartwo\tmtwo} &\in&\sn\toss.
\end{array}$\end{center} 
which is exactly what is required.
\end{itemize}

%%%%%%%%%%%%%%
%%%%%%%%%%%%%%
\item \emph{Multiplicative inductive cases}. We show the lolli and subtraction cases. The subtraction case, in particular, combines the reasoning at work in both the simpler cut and par cases, which are in the appendix. 
\begin{itemize}	
 \item \emph{Lolli}:
\begin{center}
\AxiomC{$\tderivtwo\pof  \var\hastype\formtwo, \multiForm \vdash \tmtwo\hastype\formthree$}
	\RightLabel{$ \lolliRightRule $}
	\UnaryInfC{$  \multiForm \vdash \la\var\tmtwo\hastype\formtwo \lolli \formthree$}
	\DisplayProof
		\end{center}
	with $\tm =\la\var\tmtwo$ and $\form = \formtwo \lolli \formthree$.
	We have to show that $\tm' \defeq \elctxfp{\cuta{\val_{i}}{\var_{i}}_{\multiform}\la\var\tmtwo} 
\in\sn\toss$ for every $\val_{i} \in \settone_{\redc{\formtwo_{i}}}$ for $i\in\set{1,\ldots,k}$ and every 
$\elctx\in\redc{\formtwo \lolli \formthree}^{\bot}$.  By \ih on $\tderivtwo$ and the extended reducibility clause (\reflemma{reduc-hyp-simpl}), we have  
$\lctxtwop{\cuta\val\var\cuta{\val_{i}}{\var_{i}}_{\multiform}\tmtwo} \in\redc\formthree$ for every 
$\lctxtwop\val\in\redc\formtwo$ and with $\var\notin \fv{\val_i}$ for all $i$ because of the independence of cuts in the definition of reducibility. Then $\la\var\cuta{\val_{i}}{\var_{i}}_{\multiform}\tmtwo 
\in\redc{\formtwo\lolli\formthree}$. By duality, 
$\elctxfp{\la\var\cuta{\val_{i}}{\var_{i}}_{\multiform}\tmtwo} \in \sn\toss$. By structural stability and $\var\notin \fv{\val_i}$,
\begin{center}$\begin{array}{llllllll}
\elctxfp{\la\var\cuta{\val_{i}}{\var_{i}}_{\multiform}\tmtwo} &\cuteq &
\elctxfp{\cuta{\val_{i}}{\var_{i}}_{\multiform}\la\var\tmtwo} &= &\tm' &\in &\sn\toss.
\end{array}$\end{center} 
%%%%%%%%%%%%%%%%%
 \item \emph{Subtraction}:
\begin{center}
\AxiomC{$\tderiv_{l}\pof  \multiForm_{l} \vdash \lctxp\val\hastype\formthree$}
	\AxiomC{$\tderiv_{r}\pof  \multiForm_{r}, \var\hastype\formtwo \vdash \tmtwo\hastype\form$}
		\AxiomC{$\multiForm\#(\multiForm_{r}, \var\hastype\formtwo)$, $\mvar$ fresh}
	\RightLabel{$ \lolliLeftRule $}
	\TrinaryInfC{$   \multiForm_{l}, \multiForm_{r},\mvar\hastype \formthree \lolli \formtwo \vdash 
\lctxp{\suba\mvar\val\var  \tmtwo} \hastype\form$}
	\DisplayProof
	\end{center} 
	with $\tm =\lctxp{\suba\mvar\val\var  \tmtwo}$ and $\multiForm = \multiForm_{l}, \multiForm_{r},\mvar\hastype \formthree \lolli \formtwo$. This 
case combines the reasoning for the unary left rules with the one for cut. We have to show that 
\begin{center}$\begin{array}{llllllll}
\tm' &\defeq &
\elctxfp{\cuta{\val_{i}}{\var_{i}}_{\multiform}\cuta{\la\vartwo\tmfour}\mvar\lctxp{\suba\mvar\val\var  \tmtwo} }
&\in&\sn\toss.
\end{array}$\end{center}
 for every $\val_{i}\in\settone_{\redc{\formtwo_{i}}}$ for $i\in\set{1,\ldots,k}$, every 
$\la\vartwo\tmfour\in\settone_{\redc{\formthree\lolli\formtwo}}$, and every $\elctx\in\redc\form^{\bot}$. 
	By \ih on $\tderiv_{l}$, we have $\cuta{\val_{i}}{\var_{i}}_{\multiform_{l}}\lctxp{\val} 
\in\redc\formthree$. By definition of $\la\vartwo\tmfour\in\settone_{\redc{\formthree\lolli\formtwo}}$, we obtain 
$\cuta{\val_{i}}{\var_{i}}_{\multiform_{l}}\lctxp{\cuta\val\vartwo \tmfour}  \in\redc\formtwo$. Let $\tmfour =\lctxtwop\valtwo$. Then $\cuta{\val_{i}}{\var_{i}}_{\multiform_{l}}\lctxp{\cuta\val\vartwo \lctxtwop\valtwo}  \in\redc\formtwo$.
	By \ih $\tderiv_{r}$ is reducible. By the extended reducibility clause (\reflemma{reduc-hyp-simpl}), $\cuta{\val_{i}}{\var_{i}}_{\multiform_{r}} 
\cuta{\val_{i}}{\var_{i}}_{\multiform_{l}}\lctxp{\cuta\val\vartwo \lctxtwop{\cuta\valtwo\var
\tmtwo} } \in \redc\form$, that is,
	\begin{center}$\begin{array}{lll}
	\elctxfp{\cuta{\val_{i}}{\var_{i}}_{\multiform_{r}} 
\cuta{\val_{i}}{\var_{i}}_{\multiform_{l}}\lctxp{\cuta\val\vartwo \lctxtwop{\cuta\valtwo\var
\tmtwo} }}
	& =
		\\
	\elctxfp{\cuta{\val_{i}}{\var_{i}}_{\multiform} 
\lctxp{\cuta\val\vartwo \lctxtwop{\cuta\valtwo\var
\tmtwo} }} & \in&\sn\toss.
	\end{array}$\end{center}
%	By \ben{STRUCTURAL STABILITY},
%	\begin{center}$\begin{array}{lll}
%\elctxfp{\cuta{\val_{i}}{\var_{i}}_{\multiform} 
%\lctxp{\cuta\val\vartwo \lctxtwop{\cuta\valtwo\var
%\tmtwo} }}	& \streq 
%	\\
%	
%\lctxp{\cuta\val\vartwo \lctxtwop{\cuta\valtwo\var\elctxfp{\cuta{\val_{i}}{\var_{i}}_{\multiform} 
%
%\tmtwo} }} & \in\sn\toss
%	\end{array}$\end{center}
	By root cut expansion, $\cuta{\la\vartwo\lctxtwop\valtwo}{\mvar}  \elctxfp{\cuta{\val_{i}}{\var_{i}}_{\multiform}  \lctxp{\suba\mvar\val\var \tmtwo}}\in\sn\toss$. By structural stability, 
	$\cuta{\la\vartwo\lctxtwop\valtwo}{\mvar}  \elctxfp{\cuta{\val_{i}}{\var_{i}}_{\multiform}  \lctxp{\suba\mvar\val\var \tmtwo}} \cuteq \tm' \in \sn\toss$.\qedhere
	
	\end{itemize}
\end{itemize}
\end{proof}

\begin{toappendix}
\begin{cor}[Typable terms are SN]
\label{coro:sn}
Let $\tm$ be a typable term. Then $\tm\in\sn\toss$ and $\tm\in\sn\toms$.
\end{cor}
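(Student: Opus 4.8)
The plan is to read the corollary off Adequacy (Theorem~\ref{tm:adequacy}) together with PSN (\refthm{PSN}). Let $\tderiv\pof \multiForm \vdash \tm\hastype\form$ with $\multiForm = \var_{1}\hastype\formtwo_{1},\dots,\var_{k}\hastype\formtwo_{k}$. By Adequacy, $\tderiv$ is reducible. To exploit this I would use the extended reducibility clause (\reflemma{reduc-hyp-simpl}), which allows one to cut each free variable $\var_{i}$ against an \emph{arbitrary} term of the candidate $\redc{\formtwo_{i}}$, not just against a value of $\settone_{\redc{\formtwo_{i}}}$. Since $\redc{\formtwo_{i}}$ is a candidate (\refprop{formulas-give-redc}), it contains all variables of type $\formtwo_{i}$, so I may take fresh variables $\var_{i}'$ of the same multiplicative/exponential kind as $\var_{i}$; freshness makes the introduced cuts independent. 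Hence \reflemma{reduc-hyp-simpl} gives $\cuta{\var_{1}'}{\var_{1}}\cdots\cuta{\var_{k}'}{\var_{k}}\tm \in \redc\form$. Since $\redc\form$ is a candidate, hence a generator, it is contained in $\sn\toss$, so $\cuta{\var_{1}'}{\var_{1}}\cdots\cuta{\var_{k}'}{\var_{k}}\tm \in \sn\toss$ (if $\multiForm$ is empty this already reads $\tm\in\redc\form\subseteq\sn\toss$).

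Next I would descend from this closed-off term to $\tm$ itself. Write $\lctx \defeq \cuta{\var_{1}'}{\var_{1}}\cdots\cuta{\var_{k}'}{\var_{k}}\ctxhole$, a left context, so that $\lctxp\tm = \cuta{\var_{1}'}{\var_{1}}\cdots\cuta{\var_{k}'}{\var_{k}}\tm$ and $\tm$ is literally a sub-term of it. Any step $\tm = \ctxp{\tmthree} \toss \ctxp{\tmfour} = \tmtwo$ lifts by contextual closure to $\lctxp\tm = (\lctxp\ctx)\ctxholep{\tmthree} \toss (\lctxp\ctx)\ctxholep{\tmfour} = \lctxp\tmtwo$; therefore an infinite $\toss$-reduction from $\tm$ would produce one from $\lctxp\tm$, contradicting $\lctxp\tm\in\sn\toss$. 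Hence $\tm\in\sn\toss$. Finally, a typable term is proper, so PSN (\refthm{PSN}) yields $\tm\in\sn\toms$, as desired. (Alternatively, since structural stability, extension, and root cut expansion all hold for $\toms$ as well, one could rerun the reducibility argument directly for $\toms$; but invoking PSN is the shortest route here.)

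\textbf{Where the work is.} Essentially all the mathematical content is in the results I am assuming: Adequacy, the generator/candidate machinery of \refprop{formulas-give-redc}, and PSN. The only genuinely new points in the corollary are bookkeeping: that cutting each $\var_{i}$ against a fresh $\var_{i}'$ of the same kind preserves properness and leaves the introduced cuts independent (immediate — in a typable term each multiplicative variable of the context occurs exactly once, exponential ones are unconstrained, and freshness handles the side conditions), and the one-line "steps lift through a surrounding left context" observation used to get from $\lctxp\tm\in\sn\toss$ back to $\tm\in\sn\toss$. No further induction is required beyond those already packaged in the cited statements, so I do not expect any real obstacle — at most, care in matching the labels/notation of the extended reducibility clause.
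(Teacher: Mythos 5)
Your proof is correct and follows essentially the same route as the paper's: adequacy, closing off the free variables of the typing context with variables taken from the candidates, extracting $\sn\toss$ from the candidate machinery, and transferring to $\toms$ via PSN. The only (harmless) cosmetic differences are that the paper cuts each $\var_i$ against itself and plugs the result into the dual elimination context $\cuta\ctxhole\vartwo\vartwo$ to obtain an $\sn\toss$ term that reduces back to $\tm$, whereas you use fresh variables, read off strong normalization directly from $\redc\form\subseteq\sn\toss$, and descend to $\tm$ by lifting steps through the surrounding cuts.
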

\end{toappendix}

\begin{proof}
Since $\tm$ is typable, we have $\tderiv\pof\var_{1}\hastype\formtwo_{1},\ldots, \var_{k}\hastype\formtwo_{k} \vdash 
\tm\hastype \form$, for some derivation $\tderiv$.  By adequacy (\reftm{adequacy}), $\tderiv$ is reducible. By 
\refprop{formulas-give-redc}, $\var_{i} \in \redc{\formtwo_{i}}$ 
for $i\in\set{1,\ldots,k}$ and 
$\cuta\ctxhole\vartwo\vartwo \in \redc\form^{\bot}$. Let $\tm = \lctxp\val$. By reducibility of $\tderiv$, 
$\tmtwo\defeq\cuta{\var_{1}}{\var_{1}}\ldots\cuta{\var_{k}}{\var_{k}}\lctxp{\cuta{\val}\vartwo\vartwo} \in \sn\toss$. 
Note that $\tmtwo \toss^* \tm$, thus $\tm\in\sn\toss$. By PSN (\refthm{PSN}), $\tm\in\sn\toms$.
\end{proof}

% !TEX root = main.tex
\section{Conclusions}
\label{sect:conclusions}
This work lifts Accattoli and Kesner's \emph{linear substitution calculus} to the sequent calculus for IMELL via three generalizations: adding \emph{explicit (non-)linearity}, a simple form of \emph{pattern matching} ($\tens$), and replacing applications with \emph{subtractions}. 

We provide three main contributions for the new \emph{exponential substitution calculus}. Firstly, we show that IMELL untyped exponentials, in contrast to the classical case, are strongly normalizing. Secondly, we define the \emph{good evaluation strategy} and show that it has the sub-term property, obtaining the \emph{first} polynomial cost model for IMELL. Thirdly, we provide elegant proofs of confluence,  strong normalization, and preservation of strong normalization.

Methodologically, we provide an extensive study of linear logic cut elimination that is \emph{not} based on proof nets. 

\paragraph{The Next Step.} In the $\l$-calculus, the number of (leftmost) $\beta$-steps is a polynomial cost model \cite{DBLP:journals/corr/AccattoliL16}. Since \emph{one} $\beta$-steps is simulated in IMELL by \emph{one} multilicative step (actually a $\tololli$ step) followed by possibly \emph{many} exponential steps, it means that one can count only the number of multiplicative (or even $\tololli$) steps, that is, \emph{one can count zero for exponential steps}. Such a surprising fact is enabled by a sophisticated technique called \emph{useful sharing} \cite{DBLP:journals/corr/AccattoliL16,DBLP:conf/lics/AccattoliCC21,DBLP:conf/csl/AccattoliL22}. Can useful sharing be generalized to the ESC as to turn the number of multiplicative/$\tololli$ good steps into a polynomial time cost model? For this, termination of untyped exponentials is mandatory, thus, the results of this paper suggest that it might be possible. Still, the question is far from obvious, as in the $\l$-calculus there is a strong, hardcoded correlation between multiplicatives and exponentials, not present in IMELL. In the standard call-by-name/value encodings of $\l$-calculus in IMELL, indeed, multiplicatives and exponentials connectives rigidly alternate, while IMELL enables consecutive exponentials, as in $\bang\bang\form$, enabling wilder exponential behaviour.

\paragraph{Further Future Work} Beyond refining the cost model, there are many possible directions for future work. Is the good strategy normalizing on \emph{untyped} (clash-free) terms (that is, does it reach a cut-free reduct whenever there is one)? Good redexes are not stable by cut/left-equivalence: is there a generalization that is stable? Is it possible to measure good steps via relational semantics/multi types, refining the work of de Carvalho et al. \cite{DBLP:journals/tcs/CarvalhoPF11}? Being a generalization of linear head reduction, does the good strategy have a similar connection to game semantics (see \cite{DBLP:conf/lics/DanosHR96,DBLP:journals/corr/Clairambault15})? Can the good strategy be exploited for implicit computational complexity? Can the \emph{left context/value} splitting be given a logical status? What about a LSC/cost-aware approach for the additives?

\medskip

\section*{Acknowledgments}
\noindent  To Delia Kesner, for asking the question that triggered this work. To Giulio Guerrieri, Claudio Sacerdoti Coen, and Olivier Laurent for feedback. 

  %% the following bibliography is gererated manually for the sake of brevity
  %% only; please use a separate .bib file in your submission

\bibliographystyle{alphaurl}
\bibliography{biblio.bib}

\cameratech{}{
\onecolumn
\pagebreak
\appendix
% !TEX root = main.tex
\section{Appendix of \refsect{SN} (Typed Strong Normalization)}
\label{sect:app-sn}

\gettoappendix{tm:adequacy}
% !TEX root = ../../main.tex
\begin{proof}
Here we give the cases that are omitted from the body of the paper. Let $\multiForm = \var_{1}\hastype\formtwo_{1},\mydots, 
\var_{k}\hastype\formtwo_{k} $. Cases of the last rule of $\tderiv$:%To ease the notation, along the proof we do not write the permutation $\sigma$ in the clause of reducible terms, and when the last rule of $\tderiv$ is a left rule, we only deal with the case in which the first cut is on it---the reasoning shall not however depend on the order of the cuts. 
\begin{itemize}
\item \emph{Multiplicative axiom}:
\begin{center}
\AxiomC{$\form \neq \bang \formtwo$}
	\RightLabel{$\ax_{\msym}$}
	\UnaryInfC{$\mvar\hastype\form \vdash \mvar\hastype\form$}
	\DisplayProof
\end{center}
We need to show that $\elctxfp{\cuta\mval\mvar\mvar}\in\sn\toss$ for 
$\mval \in\settone_{\redc{\form}}\subseteq\sn\toss$ and $\elctx\in\redc\form^{\bot}$.  Note that 
$\elctxfp{\cuta{\mval}\mvar\mvar} \toaxmone \elctxfp{\cutsub{\mval}\mvar\mvar} = \elctxfp{\mval}$ which is in $\sn\toss$ 
by duality, and that $\elctxfp\mval =  \elctxfp{\cutsub{\mval}\mvar\mvar} = \cutsub{\mval}\mvar\elctxfp{\mvar}$. By 
root cut expansion, $\cuta{\mval}\mvar\elctxfp{\mvar} \in\sn\toss$. By structural stability, $\cuta{\mval}\mvar\elctxfp{\mvar} 
\cuteq \elctxfp{\cuta{\mval}\mvar\mvar } \in\sn\toss$.

%%%%%%%%%%%%%%%%%
%%%%%%%%%%%%%%%%%
\item \emph{Exchange}:
\begin{center}
\AxiomC{$ \tderivtwo\pof \multiformtwo,\var\hastype\formtwo,\vartwo\hastype\formthree,\multiformthree \vdash \tm\hastype \form$}
	\RightLabel{$\exch$}
	\UnaryInfC{$  \multiformtwo,\vartwo\hastype\formthree,\var\hastype\formtwo,\multiformthree \vdash \tm\hastype \form$}	
	\DisplayProof
	\end{center}
	with $\multiform = \multiformtwo,\vartwo\hastype\formthree,\var\hastype\formtwo,\multiformthree$. Let $\multiFormtwo = \var_{1}\hastype\formtwo_{1},\mydots, 
\var_{k}\hastype\formtwo_{k} $ and $\multiFormthree = \vartwo_{1}\hastype\formthree_{1},\mydots, 
\vartwo_{h}\hastype\formthree_{h} $. We have to 
show that
\begin{center}$\begin{array}{llllllll}
\tm' &\defeq& \elctxfp{\cuta{\val_{i}}{\var_{i}}_{\multiFormtwo}\cuta\val\vartwo\cuta\valtwo\var \cuta{\valtwo_{j}}{\vartwo_{j}}_{\multiFormthree} \tm} &\in& \sn\toss.
\end{array}$\end{center} 
for 
every $\val\in \settone_{\redc{\formthree}}$, $\valtwo\in \settone_{\redc{\formtwo}}$, $\val_{i} \in\settone_{\redc{\formtwo_{i}}}$ for $i\in\set{1,\ldots,k}$, every $\valtwo_{j} \in\settone_{\redc{\formthree_{j}}}$ for $i\in\set{1,\ldots,h}$, and every 
$\elctx\in\redc\form^{\bot}$.   By \ih 
on 
$\tderivtwo$, we have 
\begin{center}$\begin{array}{llllllll}
\tm'' &\defeq& \elctxfp{\cuta{\val_{i}}{\var_{i}}_{\multiFormtwo}\cuta\valtwo\var \cuta\val\vartwo\cuta{\valtwo_{j}}{\vartwo_{j}}_{\multiFormthree} \tm} &\in&\sn\toss.
\end{array}$\end{center} 
By structural stability and the independence of cuts in the definition of reducibility, $\tm'' \cuteq \tm' \in \sn\toss$.

%%%%%%%%%%%%%
%%%%%%%%%%%%%
\item \emph{Cut}:
\begin{center}
\AxiomC{$\tderiv_{l}\pof  \multiForm_{l} \vdash \lctxp\val\hastype\formtwo$}
	\AxiomC{$\tderiv_{\tmtwo}\pof  \multiForm_{\tmtwo}, \var\hastype\formtwo \vdash \tmtwo\hastype\form$}
		\AxiomC{$\multiForm_{l}\#(\multiForm_{\tmtwo}, \var\hastype\formtwo)$}
	\RightLabel{$ \cut $}
	\TrinaryInfC{$  \multiForm_{l}, \multiForm_{\tmtwo} \vdash \lctxp{\cuta{\val}\var \tmtwo} \hastype\form$}
	\DisplayProof
	\end{center} 
		with $\tm =\lctxp{\cuta{\val}\var \tmtwo}$ and $\multiForm = \multiForm_{l}, \multiForm_{\tmtwo}$.  We have to 
show that 
\begin{center}$\begin{array}{llllllll}
\tm' &\defeq& \elctxfp{\cuta{\val_{i}}{\var_{i}}_{\multiForm}\lctxp{\cuta\val\var \tmtwo}} &\in&\sn\toss.
\end{array}$\end{center} 
for 
every $\val_{i} \in\settone_{\redc{\formtwo_{i}}}$ for $i\in\set{1,\ldots,k}$ and every 
$\elctx\in\redc\form^{\bot}$.   By \ih 
on 
$\tderiv_{l}$, we have $\cuta{\val_{i}}{\var_{i}}_{\multiform_{l}}\lctxp\val \in\redc\formtwo$. By \ih,  
$\tderiv_{\tmtwo}$ is reducible, and by the extended reducibility clause (\reflemma{reduc-hyp-simpl}) we obtain $\cuta{\val_{i}}{\var_{i}}_{\multiform_{\tmtwo}} \cuta{\val_{i}}{\var_{i}}_{\multiform_{l}} 
\lctxp{\cuta{
\val}\var\tmtwo}\in \redc\form$, that is,
\begin{center}$\begin{array}{llllllll}
\elctxfp{\cuta{\val_{i}}{\var_{i}}_{\multiform_{\tmtwo}} \cuta{\val_{i}}{\var_{i}}_{\multiform_{l}} 
\lctxp{\cuta{
\val}\var\tmtwo}} &\in&\sn\toss.
\end{array}$\end{center} 
By structural stability and the independence of cuts in the definition of reducibility, 
\begin{center}$\begin{array}{llllllll}
\elctxfp{\cuta{\val_{i}}{\var_{i}}_{\multiform_{\tmtwo}} \cuta{\val_{i}}{\var_{i}}_{\multiform_{l}} 
\lctxp{\cuta{
\val}\var\tmtwo}} &\cuteq &\tm' &\in&\sn\toss.
\end{array}$\end{center}

%%%%%%%%%%%%%
%%%%%%%%%%%%%
\item \emph{Exponential inductive cases}. 
\begin{itemize}

%%%%%%%%%%%
 \item \emph{Weakening}:
\begin{center}
	 		\AxiomC{$\tderivtwo\pof  \multiFormtwo \vdash \tm\hastype\form$}
				\AxiomC{$\evar$ fresh}
	\RightLabel{$ \weakRule $}
	\BinaryInfC{$ \multiFormtwo,\evar\hastype\bang\formtwo \vdash \tm\hastype \form$}
	\DisplayProof
\end{center}
with $\multiForm= \multiFormtwo,\evar\hastype\bang\formtwo$. We use the notation $\var_{i}\hastype\formtwo_{i}$ for the 
assignments in $\multiformtwo$.  We have to show that 
$\elctxfp{\cuta{\val_{i}}{\var_{i}}_{\multiformtwo}\cuta{\bang\tmthree}\evar\tm} \in\sn\toss$ for every 
$\tmthree\in\redc\formtwo$, every $\val_{i}\in\settone_{\redc{\formtwo_{i}}}$ for $i\in\set{1,\ldots,k}$, and every 
$\elctx\in\redc\form^{\bot}$. By \ih on $\tderivtwo$, $\elctxfp{\cuta{\val_{i}}{\var_{i}}_{\multiformtwo}\tm} \in \sn\toss$. By freshness of $\evar$, 
\begin{center}$\begin{array}{llllllll}
\elctxfp{\cuta{\val_{i}}{\var_{i}}_{\multiformtwo}\tm} &=&
\elctxfp{\cutsub{\bang\tmthree}\evar\cuta{\val_{i}}{\var_{i}}_{\multiformtwo}\tm} &= &
\cutsub{\bang\tmthree}\evar\elctxfp{\cuta{\val_{i}}{\var_{i}}_{\multiformtwo}\tm}.
\end{array}$\end{center}
Since 
$\tmthree\in\redc\formtwo$, we have $\tmthree\in \sn\toss$ by the properties of candidates (\refprop{formulas-give-redc}), thus $\bang\tmthree\in \sn\toss$ by extension. By root cut expansion, 
$\cuta{\bang\tmthree}\evar\elctxfp{\cuta{\val_{i}}{\var_{i}}_{\multiformtwo}\tm} \in \sn\toss$. By structural 
stability and the independence of cuts in the definition of reducibility, $\cuta{\bang\tmthree}\evar\elctxfp{\cuta{\val_{i}}{\var_{i}}_{\multiformtwo}\tm} \cuteq 
\elctxfp{\cuta{\val_{i}}{\var_{i}}_{\multiformtwo}\cuta{\bang\tmthree}\evar\tm}\in\sn\toss$.
 
 %%%%%%%%%%%
 \item \emph{Dereliction}:
\begin{center}
	 			\AxiomC{$ \tderivtwo\pof \multiFormtwo, \var\hastype\formtwo\vdash \tmtwo\hastype\form$}
	\AxiomC{$\evar$ fresh}
	\RightLabel{$ \bangLeftRule$}
	\BinaryInfC{$  \multiFormtwo, \evar\hastype\bang\formtwo \vdash \dera\evar\var\tmtwo \hastype\form$}
	\DisplayProof
\end{center}
with $\tm = \dera\evar\var\tmtwo$ and $\multiForm = \multiFormtwo, \evar\hastype\bang\formtwo$. We use the notation 
$\var_{i}\hastype\formtwo_{i}$ for the assignments in $\multiformtwo$. We have to 
show that 
\begin{center}$\begin{array}{llllllll}
\elctxfp{\cuta{\val_{i}}{\var_{i}}_{\multiformtwo}\cuta{\bang\tmthree}\evar \dera\evar\var\tmtwo} 
&\in&\sn\toss.
\end{array}$\end{center}
 for every $\tmthree\in\redc\formtwo$, every $\val_{i}\in\settone_{\redc{\formtwo_{i}}}$ for 
$i\in\set{1,\ldots,k}$, and every $\elctx\in\redc\form^{\bot}$. Let $\tmthree = \lctxp\val$. By \ih, $\tderivtwo$ is reducible. by the extended reducibility clause (\reflemma{reduc-hyp-simpl}), $\cuta{\val_{i}}{\var_{i}}_{\multiformtwo}\lctxp{\cuta\val\var\tmtwo} \in \redc\form$, that is,
\begin{center}$\begin{array}{llllllll}
\elctxfp{\cuta{\val_{i}}{\var_{i}}_{\multiformtwo}\lctxp{\cuta\val\var\tmtwo}}
&\in&\sn\toss.
\end{array}$\end{center}
Note that by freshness of $\evar$ and the independence of cuts in the definition of reducibility, we have 
\begin{center}
\arraycolsep=3pt
$\begin{array}{llllllll}
\elctxfp{\cuta{\val_{i}}{\var_{i}}_{\multiformtwo}\lctxp{\cuta\val\var\tmtwo}}  
& = &
\elctxfp{\cuta{\val_{i}}{\var_{i}}_{\multiformtwo}\cutsub{\bang\tmthree}\evar\dera\evar\var\tmtwo} 
& = &
\cutsub{\bang\tmthree}\evar\elctxfp{\cuta{\val_{i}}{\var_{i}}_{\multiformtwo}\dera\evar\var\tmtwo}.
\end{array}$\end{center}
Since 
$\tmthree\in\redc\formtwo$, we have $\tmthree\in \sn\toss$  by the properties of candidates (\refprop{formulas-give-redc}), thus $\bang\tmthree\in \sn\toss$ by extension. By root cut expansion, 
$\cuta{\bang\tmthree}\evar\elctxfp{\cuta{\val_{i}}{\var_{i}}_{\multiformtwo}\dera\evar\var\tmtwo} \in \sn\toss$. 
By structural stability,  
\begin{center}$\begin{array}{llllllll}
\cuta{\bang\tmthree}\evar\elctxfp{\cuta{\val_{i}}{\var_{i}}_{\multiformtwo}\dera\evar\var\tmtwo} 
&\cuteq &
\elctxfp{ \cuta{\val_{i}}{\var_{i}}_{\multiformtwo}\cuta{\bang\tmthree}\evar\dera\evar\var\tmtwo}
&\in&\sn\toss.
\end{array}$\end{center}
 \end{itemize}
 
%%%%%%%%%%%%%%
%%%%%%%%%%%%%%
\item \emph{Multiplicative inductive cases}.\begin{itemize}
%%%%%%%%%%%%%%%%
\item \emph{Tensor}:
\begin{center}
\AxiomC{$ \tderiv_{\tmtwo} \pof \multiForm_{\tmtwo}\vdash \tmtwo\hastype\formtwo$}
 	\AxiomC{$ \tderiv_{\tmthree} \pof \multiForm_{\tmthree}\vdash \tmthree\hastype\formthree$}
	\AxiomC{$\multiForm_{\tmtwo}\#\multiForm_{\tmthree}$}
 	\RightLabel{$ \tensRightRule $}
 	\TrinaryInfC{$  \multiForm_{\tmtwo}, \multiForm_{\tmthree}\vdash \pair\tmtwo\tmthree \hastype\formtwo\tens \formthree$}
	\DisplayProof
	\end{center}
	with $\tm =\pair\tmtwo\tmthree$, $\form = \formtwo \tens \formthree$, and $\multiForm = \multiForm_{\tmtwo}, \multiForm_{\tmthree}$.	We have to show that $\tm' \defeq \elctxfp{ \cuta{\val_{i}}{\var_{i}}_{\multiForm} \pair\tmtwo\tmthree}$ is $\sn\toss$ for every $\val_{i}\in \settone_{\redc{\formtwo_{i}}}$ for $i\in\set{1,\ldots,k}$ and every $\elctx\in\redc{\formtwo \tens \formthree}^{\bot}$.  By \ih on $\tderiv_{\tmtwo}$ and $\tderiv_{\tmthree}$, $\tmtwo' \defeq \cuta{\val_{i}}{\var_{i}}_{\multiForm_{\tmtwo}}\tmtwo \in\redc\formtwo$ and $\tmthree' \defeq \cuta{\val_{i}}{\var_{i}}_{\multiForm_{\tmthree}}\tmthree \in\redc\formthree$, and so $\pair{\tmtwo'}{\tmthree'}\in \settone_{\redc{\formtwo\tens\formthree}} \subseteq \redc{\formtwo\tens\formthree}$, that is, $\elctxfp{\pair{\tmtwo'}{\tmthree'}}\in\sn\toss$. By structural stability, 
	\[\begin{array}{lllll}
	\elctxfp{\pair{\tmtwo'}{\tmthree'}} &=
	\\ 
	\elctxfp{\pair{\cuta{\val_{i}}{\var_{i}}_{\multiForm_{\tmtwo}} \tmtwo}{\cuta{\val_{i}}{\var_{i}}_{\multiForm_{\tmthree}}\tmthree}} & \cuteq 
	\\
	\elctxfp{\cuta{\val_{i}}{\var_{i}}_{\multiForm_{\tmtwo}}\cuta{\val_{i}}{\var_{i}}_{\multiForm_{\tmthree}} \pair{\tmtwo}{\tmthree}} & =
	\\
	\tm' \in\sn\toss.\end{array}\]%. % Was this dot intended? If yes, let us know

%%%%%%%%%%%%%%%%%%%%%%
 \item \emph{Par}:
\begin{center}
 \AxiomC{$\tderivtwo \pof \multiFormtwo, \var\hastype\formtwo, \vartwo\hastype\formthree \vdash \tmtwo\hastype\form$}
	\AxiomC{$\mvar$ fresh}
	\RightLabel{$ \tensLeftRule $}
	\BinaryInfC{$  \multiFormtwo, \mvar\hastype\formtwo \tens \formthree \vdash \para\mvar\var\vartwo  \tmtwo \hastype 
\form$}	
	\DisplayProof
	\end{center} 
	with $\tm =\para\mvar\var\vartwo  \tmtwo$  and $\multiForm = \multiFormtwo, \mvar\hastype\formtwo \tens 
\formthree$. We use the notation $\var_{i}\hastype\formtwo_{i}$ for the assignments in $\multiformtwo$.
The reducibility clause to prove is 
\begin{center}$\begin{array}{llllllll}
\tm' &\defeq &
\elctxfp{ \cuta{\val_{i}}{\var_{i}}_{\multiformtwo}\cuta{\pair{\tmthree}{\tmfour}}\mvar\para\mvar\var\vartwo  \tmtwo}&\in&\sn\toss
\end{array}$\end{center}

for every $\val_{i}\in\settone_{\redc{\formtwo_{i}}}$ for $i\in\set{1,\ldots,k}$, every 
$\pair{\tmthree}{\tmfour}\in\settone_{\redc{\formtwo\tens\formthree}}$, and every 
$\elctx\in\redc\form^{\bot}$. Let $\tmthree=\lctxp\val$ and $\tmfour=\lctxtwop\valtwo$. By \ih, $\tderivtwo$ is reducible, and by the extended reducibility clause (\reflemma{reduc-hyp-simpl}), $\cuta{\val_{i}}{\var_{i}}_{\multiformtwo} \lctxtwop{\cuta{\val}\var\lctxthreep{\cuta{\valtwo}\vartwo
\tmtwo}} \in \redc\form$, that is,
\begin{center}$\begin{array}{llllllll}
\tm'' &\defeq &
\elctxfp{ \cuta{\val_{i}}{\var_{i}}_{\multiformtwo} \lctxtwop{\cuta{\val}\var\lctxthreep{\cuta{\valtwo}\vartwo
\tmtwo}}}&\in&\sn\toss.
\end{array}$\end{center}
By root cut expansion, we have $\cuta{\pair{\tmthree}{\tmfour}}\mvar\elctxfp{ \cuta{\val_{i}}{\var_{i}}_{\multiformtwo}\para\mvar\var\vartwo  \tmtwo} \in\sn\toss$. By structural stability and the independence of cuts in the definition of reducibility, 
\begin{center}$\begin{array}{llllllll}
[\elctxfp{ \cuta{\val_{i}}{\var_{i}}_{\multiformtwo}\cuta{\pair{\tmthree}{\tmfour}}\mvar \para\mvar\var\vartwo  \tmtwo}&\in&\sn\toss.
\end{array}$\end{center}\vspace*{-\baselineskip}\qedhere

	\end{itemize}
\end{itemize}
\end{proof}

}
\end{document}